\documentclass[12pt]{article}
\usepackage{textcomp}
\usepackage[latin9]{inputenc}
\usepackage{geometry}
\usepackage{float}
\usepackage{booktabs}
\usepackage{amsmath}
\usepackage{amsthm}
\usepackage{amssymb}
\usepackage{graphicx}
\usepackage{setspace}
\usepackage[pdfusetitle,
 bookmarks=true,bookmarksnumbered=false,bookmarksopen=false,
 breaklinks=false,pdfborder={0 0 1},backref=false,colorlinks=false]
 {hyperref}

\makeatletter

\providecommand{\tabularnewline}{\\}

\theoremstyle{plain}
\newtheorem{assumption}{\protect\assumptionname}
\theoremstyle{remark}
\newtheorem{rem}{\protect\remarkname}
\theoremstyle{definition}
 \newtheorem{example}{\protect\examplename}
\theoremstyle{plain}
\newtheorem{lem}{\protect\lemmaname}
\theoremstyle{plain}
\newtheorem{prop}{\protect\propositionname}
\theoremstyle{plain}
\newtheorem{thm}{\protect\theoremname}
\theoremstyle{plain}
\newtheorem{cor}{\protect\corollaryname}
\theoremstyle{plain}
\newtheorem{lemapp}{\protect\lemmaname}

\AtBeginDocument{\epstopdfDeclareGraphicsRule{.eps}{pdf}{.pdf}{repstopdf --outfile="\OutputFile" "\SourceFile"}}
\usepackage{pgfplots}
\usepackage{pgfplotstable}
\usepackage{xcolor}
\usepackage{tikz}
\usepackage{tkz-graph} 
\usetikzlibrary{positioning,matrix,calc}
\usetikzlibrary{patterns}
\usetikzlibrary{decorations.pathmorphing}
\usepackage{lscape}  
\usepackage{array}  
\usepackage{caption}  
\usepackage{siunitx}  
\usepackage{threeparttable}  
\usepackage{algorithm}
\usepackage{algpseudocode} 

\newcolumntype{C}[1]{>{\centering\arraybackslash}p{#1}}
\PassOptionsToPackage{position=top}{subfig}

\makeatother

\usepackage[bibstyle=authoryear,citestyle=authoryear-comp,maxnames=4, uniquename=false, uniquelist=false]{biblatex}
\providecommand{\assumptionname}{Assumption}
\providecommand{\corollaryname}{Corollary}
\providecommand{\examplename}{Example}
\providecommand{\lemmaname}{Lemma}
\providecommand{\propositionname}{Proposition}
\providecommand{\remarkname}{Remark}
\providecommand{\theoremname}{Theorem}

\begin{document}
\title{Repairing Locally Misspecified GMM: An Empirical Bayes Approach}
\author{Patrick Kline\thanks{This manuscript was prepared for the 2026 Sargan lecture. I am grateful
to Isaiah Andrews, Kevin Chen, Andres Santos, and Jinglin Yang for
helpful feedback on earlier versions of this draft. Luan Borelli, Claude Opus, and Refine.ink provided outstanding research assistance on this project.}\\
UC Berkeley}
\maketitle
\begin{abstract}
\begin{singlespace}
Econometric models offer parsimonious but inexact approximations to
data generating processes. This paper studies the generalized method
of moments (GMM) when exchangeable specification errors of order $n^{-1/2}$
contaminate the moment conditions. I develop estimators for the mean
and variance of these specification errors, establishing their consistency
in an asymptotic framework where the number of overidentifying restrictions
grows with the sample size. These hyperparameter estimates are used
to develop a feasible bias-corrected estimator of target
parameters. I also propose an empirical Bayes estimator that weakly improves precision by subtracting a
best linear predictor of the first-order estimation error from the bias-corrected estimator. 
Using a combinatorial central limit theorem, I establish asymptotic normality of both estimators and provide
variance estimators that enable misspecification-aware frequentist
inference. Simulation exercises indicate the procedures can meaningfully
improve on standard two-stage least squares estimation when exclusion
violations are present. Revisiting the influential study of \textcite{angrist1991does}, I
consider an instrument set where exchangeable excludability violations are plausible.
Repairing the two-stage least squares estimates of the returns to
schooling moves them in the direction of ordinary least squares and
reduces sensitivity to the specification of controls.

\medskip
\noindent\textbf{Keywords:} Misspecification, exchangeability, empirical Bayes.

\noindent\textbf{JEL codes:} C11, C13, C26, C52.
\end{singlespace}
\end{abstract}
\newpage{}

The generalized method of moments (GMM) is a widely used technique
for fitting parsimonious semi-parametric models to data. The procedure
consists of specifying a function $g:\mathbb{R}^{p}\to\mathbb{R}^{m}$
whose zeros define $m$ moment restrictions on a parameter vector
$\theta\in\mathbb{R}^{p}$. A distinguishing feature of GMM, relative
to classical method of moments estimators, is that the number of moment
restrictions can exceed the number of parameters $\left(m\geq p\right)$.
The case where $m>p$ is known as an overidentified model \parencite{anderson1949estimation}.
In practice, GMM models often contain many overidentifying restrictions.
For instance, \textcite{angrist1991does} consider a specification
where $m-p=179$. The primary specification in \textcite{dellavigna2012testing}
has $m-p=55$, while \textcite{gourinchas2002consumption} consider
a specification with $m-p=36$.

Foundational work by \textcite{sargan1958estimation} and \textcite{hansen1982large}
developed a framework for testing overidentifying restrictions
by comparing the  $J$-statistic to its limiting distribution under
proper specification. In principle, $J$-tests impose valuable discipline on empirical work. Yet researchers can be wedded to their models, making statistical rejections awkward to explain to journal editors and referees. This awkwardness may explain why $J$-statistics are rarely reported in modern empirical work \parencite{andrews2025purpose}. When they are reported, rejections of overidentifying restrictions are often met with a degree of
ambivalence.\footnote{This ambivalence is not unique to the  $J$-test. In \textcite{hansen_sargent_2014_uncertainty},
Thomas Sargent recalls that ``Our good friend Robert E. Lucas, Jr.
told us in the early 1980s that our likelihood ratio tests and moment
matching tests were rejecting too many good models.'' \textcite{andrews2026misspecification} provide a decision-theoretic approach to dealing with potentially misspecified likelihood functions.} Rather than discard the model, researchers typically proceed with estimation, demurring that econometric models are ultimately approximations. However, the nature of these approximation errors and their influence on downstream conclusions are rarely quantified
precisely. 

This paper explores how to use the information in overidentifying
restrictions to build improved estimates of the model parameters  $\theta$ and their standard errors
that account for specification error. Following \textcite{andrews2017measuring},
I adopt a local misspecification framework in which the $m$
moment conditions are perturbed by errors that shrink with the sample
size. Unlike past work on local misspecification, I treat these errors as exchangeable random
variables. I propose estimators of the mean and variance of the specification
errors, the latter of which captures a form of overdispersion in the moment conditions that inflates the $J$-statistic.
I then establish consistency of these hyperparameter estimates in
an asymptotic regime where both $m$ and $n$ grow large but
$m^{2}/n\rightarrow0$, which rules out settings where
many-instruments biases become first order \parencite{altonji1996small,newey2004higher}.
This moderately overidentified regime arguably captures a large share
of settings where GMM is employed.

I use the proposed hyperparameter estimates to develop two improved
estimators of structural parameters.
The first estimator is a plug-in bias correction that subtracts an estimate of the
GMM estimator's first-order misspecification bias. In the special case of linear instrumental variables models, this correction is closely related to ``visual IV'' diagnostics \parencite{angrist2009mostly}, wherein reduced forms are regressed on first stages. The bias-corrected estimator allows for a non-zero intercept in this relationship, which effectively purges the fitted slope of an omitted-variables bias.

The second estimator is an empirical Bayes shrinkage
adjustment that subtracts a best linear predictor of the first-order
estimation error from the bias-corrected estimator. I show that the shrinkage adjustment weakly reduces leading-term estimation risk relative to bias correction by removing the predictable structure of the specification errors. This finding closely parallels the inadmissibility result of \textcite{brown1990ancillarity}, who established that adjusting for a high-dimensional nuisance regressor
can lower risk for a low-dimensional target. 

Leveraging the exchangeable structure of the specification errors,
I establish asymptotic normality of both estimators via a combinatorial central limit theorem. I use these results to develop consistent standard error
estimators that enable \emph{misspecification-aware} frequentist inference. Confidence intervals based on these standard errors are shown to yield asymptotically valid unconditional coverage (i.e., averaged over the unknown distribution of the specification errors).

A few papers take related approaches. One is \textcite{andrews2024true}
who propose an approach to inference on structural parameters when priors over moment misspecification are rotation invariant.
The distributional assumptions on specification errors considered
here nest rotation-invariance but require bounded fourth moments.
In particular, rotation-invariant priors must have mean zero, a restriction
I relax. However, \textcite{andrews2024true} deliver fixed-$m$
Bayesian inference while the frequentist misspecification-aware inferences
proposed here require that $m$ grows with $n$.

A second related contribution is \textcite{chernozhukov2025plausible} who
propose a quasi-Bayes approach to GMM in which priors are specified
over both the structural parameters and the specification errors.
As in this paper, they consider an environment in which $m$ grows
with $n$. While their Bayesian calibration and coverage statements
are contingent on an assumed prior over misspecification (e.g., Gaussianity),
I rely on exchangeability of the specification errors, alongside standard moment and regularity conditions, for estimation and inference.

Finally, earlier work by \textcite{kolesar2015identification} studies linear instrumental variables models featuring excludability violations in an environment where the number of instruments can grow with the sample size. They propose a bias-corrected k-class estimator predicated on these violations being orthogonal to the first-stage coefficients. The exchangeability assumption I invoke implies non-correlation in expectation but not almost surely. While their approach relies on homoscedastic errors in the first-stage and structural equations, I work in a more general GMM framework that accommodates both nonlinear moment conditions and heteroscedasticity.

Though the framework developed here offers new opportunities for misspecification-aware
estimation and inference, the methods proposed are not entirely automatic.
Moment conditions need to be scaled carefully to ensure violations
of the nominal model are plausibly exchangeable. These scaling decisions
are inherently context dependent, requiring commitment to a particular
model of misspecification. This limitation reflects the general observation
that misspecification can only be studied with respect to larger encompassing
models that plausibly hold in the setting under study \parencite{armstrong2025misspecification}. I develop diagnostics for detecting departures from exchangeability that can aid
in assessing the suitability of a given encompassing model.

To study the finite-sample performance of the methods, I conduct a series of simulation exercises that
consider exclusion failures in linear IV specifications. Concerns about excludability are often heightened in applied work using multiple instruments. For example, \textcite{angrist2017leveraging,angrist2024credible} considered a parametric empirical Bayes approach to remedying such failures.
The simulation design I study departs from the usual i.i.d.\ assumptions
employed in empirical Bayes modeling by considering specification
errors that are exchangeable but not independent. I find that both
of the proposed corrections can yield sizable risk improvements even when
the degree of overidentification is modest. To assess the theory's
asymptotic predictions, I simulate a setting where the number of moments
grows with the sample size. The simulations confirm that tests based
on the proposed misspecification-aware variance estimators have correct
asymptotic size. 

I then revisit the influential study of \textcite{angrist1991does},
examining schooling instruments based on season of birth by state
of birth interactions that plausibly satisfy exchangeability. I find
that moment conditions based on these instruments exhibit substantial
mean bias. The changing direction of the bias across control sets
is consistent with exogeneity failures of the sort posited by \textcite{rosenzweig2000natural}
and \textcite{buckles2013season}. The corrections consistently shift
the two-stage least squares (TSLS) estimates in the direction of ordinary
least squares (OLS), substantially narrowing the dispersion of point
estimates across alternative sets of controls. The corrected estimates
also exhibit larger standard errors than TSLS, reflecting the misspecification-aware
nature of the proposed variance estimators.

The rest of the paper is organized as follows. Section \ref{sec:GMM-preliminaries} reviews the basics of GMM under
proper specification and under local misspecification. Section \ref{sec:local_misspecification}
introduces the exchangeable misspecification framework and Section \ref{sec:Examples} discusses
some examples of models that fall into this framework.
Section \ref{sec:Identification-of-hyperparameter} studies identification
of the mean and variance of the specification errors. Section \ref{sec:A-differencing-approach}
discusses a GMM estimator based on differenced moment conditions,
which serves as a pedagogical bridge to the empirical Bayes estimators
of interest. Section \ref{sec:estimation} proposes estimators of
the mean and variance of the specification error distribution and
establishes their consistency. Section \ref{sec:bias-estimation}
proposes a plug-in bias-corrected estimator and an empirical Bayes
shrinkage estimator of target parameters that account for the specification
errors in different ways. Section \ref{sec:distribution-theory} develops
distribution theory for these estimators enabling misspecification-aware
inference. Section \ref{sec:monte-carlo} evaluates the finite-sample
performance of the estimators in a series of simulation exercises.
Section \ref{sec:empirical} revisits the study of \textcite{angrist1991does}.
Section \ref{sec:Conclusion} concludes with directions for future
work.

\section{GMM preliminaries \protect\label{sec:GMM-preliminaries}}

In this section I briefly review GMM estimation theory and state some standard
regularity conditions that are maintained throughout the analysis.
Suppose there exists a unique parameter vector $\theta_{0}\in\Theta$
obeying $g\left(\theta_{0}\right)=0$, where the vector $g:\mathbb{R}^{p}\rightarrow\mathbb{R}^{m}$ of population moment
conditions is continuously
differentiable. Assume the parameter space $\Theta$ is compact and
convex and that the true $\theta_{0}$ lies strictly in its interior.
Let the $m\times p$ matrix  $G$ denote the Jacobian $\nabla_{\theta}g\left(\theta\right)$
evaluated at  $\theta_{0}$ and
assume that $G$ has rank $p\leq m$. I will focus on the overidentified
case, assuming $m-p\geq2$. The smoothness and rank conditions are maintained
throughout my analysis. However, the nominal moment condition $g(\theta_{0})=0$
will be relaxed by allowing for local misspecification.

GMM estimates are obtained by constructing empirical moment conditions
$\hat{g}\left(\theta\right)$ that converge in probability to $g\left(\theta\right)$.
The GMM estimator $\hat{\theta}$ minimizes the quadratic form  $\hat{g}\left(\theta\right)'\hat{W}\hat{g}\left(\theta\right)$,
where $\hat{W}$ is some positive definite weighting matrix that converges
in probability to a nonsingular $m\times m$ matrix $W$. I assume
that $G'WG$ is non-singular. Letting $\hat{G}$ denote the Jacobian
of $\hat{g}\left(\theta\right)$ evaluated at $\hat{\theta}$, the
GMM estimator satisfies the first order condition $\hat{G}'\hat{W}\hat{g}\left(\hat{\theta}\right)=0.$

\subsection{Influence and sensitivity \protect\label{subsec:Influence-and-sensitivity}}

Suppose that the scaled empirical moments, when evaluated at the true
parameter vector  $\theta_{0}$, converge to normals:  
\[
\sqrt{n}\hat{g}\left(\theta_{0}\right)\overset{d}{\rightarrow}N\left(0,V\right).
\]
Standard arguments \parencite[e.g.,][Theorem 3.4]{newey1994large}
yield the influence function representation of GMM estimators 
\[
\sqrt{n}\left(\hat{\theta}-\theta_{0}\right)=-\left(G'WG\right)^{-1}G'W\sqrt{n}\hat{g}\left(\theta_{0}\right)+o_{p}\left(1\right).
\]
Hence, the GMM estimator is, to first order, a linear combination
of normals. \textcite{andrews2017measuring} term these combination
weights the ``sensitivity matrix''
\[
\Lambda:=-\left(G'WG\right)^{-1}G'W,
\]
which summarizes the local influence of the empirical moment conditions
on the parameter estimates. The limiting distribution of GMM under
proper specification can be written in terms of the sensitivity matrix
as 
\[
\sqrt{n}\left(\hat{\theta}-\theta_{0}\right)\overset{d}{\rightarrow}N\left(0,\Lambda V\Lambda'\right).
\]

Note that premultiplying a vector of moments by  $-\Lambda$ amounts
to a weighted least squares regression using the columns of the Jacobian
 $G$ as regressors. It will be useful to consider the corresponding
``hat'' matrix $H:=-G\Lambda.$ Premultiplying a vector of moments
by $H$ gives predicted values of the moment conditions generated
from a generalized least squares (GLS) fit. $H$ is an \textit{oblique}
projection matrix: it is idempotent but not generally symmetric. The
complementary residual maker matrix, which features prominently in
later sections, will be denoted $M:=I_{m}-H$. The sample analogs
of $\Lambda$,  $H$, and $M$ will be denoted by $\hat{\Lambda}=-\left(\hat{G}'\hat{W}\hat{G}\right)^{-1}\hat{G}'\hat{W}$,
 $\hat{H}=-\hat{G}\hat{\Lambda}$, and $\hat{M}=I_{m}-\hat{H}$ respectively.

\subsection{Local and global misspecification\protect\label{subsec:Misspecification-as-contaminatio}}

Describing misspecification requires enlarging the set of data generating
processes (DGPs) under consideration to include those that violate
the researcher's posited moment conditions. Fix a parameter vector
$\theta_{0}\in\Theta$ and denote by $\mathcal{P}_{0}$ the family
of DGPs under which $g\left(\theta_{0}\right)=0$.
For any DGP $P_{0}\in\mathcal{P}_{0}$ one can think of
the estimand $\theta_{0}$ as a functional $\theta_{0}=\theta\left(P_{0}\right)$.
Misspecification arises when the actual DGP lies outside $\mathcal{P}_{0}$,
in which case $g\left(\theta_{0}\right)\neq0$.

\textit{Local misspecification} refers to the case in which $g\left(\theta_{0}\right)=b/\sqrt{n}$ for a fixed vector $b\in\mathbb{R}^{m}$ of specification errors. To motivate this form, consider a family of DGPs that lie within an $n^{-1/2}$ neighborhood of $\mathcal{P}_{0}$. Let $P_{n}$ denote a DGP in this neighborhood, indexed by the sample size, and let $P_{0}\in\mathcal{P}_{0}$ denote the DGP toward which it drifts. Evaluated under $P_{n}$, the moment conditions, their Jacobian, and their covariance matrix all converge to their $P_{0}$ values. 

\textcite{andrews2017measuring} establish, using regularity conditions analogous to those in Section \ref{subsec:Influence-and-sensitivity},
that under the drifting sequence $P_n$, $\sqrt{n}(\hat{\theta}-\theta_{0})$ converges in distribution to $\Lambda$ times the limit in distribution of $\sqrt{n}\hat{g}(\theta_{0})$. With $g(\theta_{0})=b/\sqrt{n}$, write $\sqrt{n}\hat{g}(\theta_{0})=b+\sqrt{n}(\hat{g}(\theta_{0})-g(\theta_{0}))$. A central limit theorem for the second term gives $\sqrt{n}(\hat{g}(\theta_{0})-g(\theta_{0}))\overset{d}{\rightarrow}\varepsilon\sim N(0,V)$, which implies
\[
\sqrt{n}\left(\hat{\theta}-\theta_{0}\right)\overset{d}{\rightarrow}N\left(\Lambda b,\Lambda V\Lambda'\right).
\]
Hence, the limiting distribution of GMM is centered at $\Lambda b$ rather than zero, but its variance is unchanged. If the violations were to shrink at a slower rate (e.g., $g\left(\theta_{0}\right)=b/n^{1/4}$),
the scaled bias would explode relative to the variance, precluding
convergence in distribution. Thus, local misspecification is a tool
for studying specification errors that generate modest first-order
biases in GMM comparable in magnitude to sampling variation. This
sort of misspecification is difficult to detect with a standard
$J$-test: power does not grow with sample size. 

In contrast, under \textit{global misspecification}, the violations do not shrink
with the sample size. Formally, $g\left(\theta_{0}\right)=b$ for a fixed nonzero
vector $b$. In this case, GMM is generally inconsistent for $\theta_{0}$,
converging instead to a pseudo-true value that depends on the weighting
matrix. Likewise, the $J$-test generally has power that grows with the sample size.

Since the first-order bias of $\Lambda b/\sqrt{n}$ under local misspecification
vanishes as $n$ grows large, GMM remains consistent for the estimand $\theta_{0}$. Thus, the local misspecification
framework offers the potential to conduct inference on the true parameter
vector, as opposed to a pseudo-true target, by augmenting
measures of sampling uncertainty to account for specification errors.

\section{An exchangeable model of local misspecification \protect\label{sec:local_misspecification}}

Thus far, the specification errors $b$ have been treated as fixed constants, which is the approach taken in most of the literature following \textcite{andrews2017measuring}. My key departure is to model these errors as exchangeable random variables. Specifically, I assume $g\left(\theta_{0}\right)=b/\sqrt{n}$ for a \emph{random} vector $b\in\mathbb{R}^{m}$.

Conditional on a realization of $b$, the data are generated by a drifting
DGP $P_{n}$ of the kind described in
Section~\ref{subsec:Misspecification-as-contaminatio}. Unconditionally, the data are generated by a mixture of such drifting
DGPs, with mixing distribution given by the law of $b$. In what follows, expectations without a subscript, $E[\cdot]$, treat $b$ as random. In contrast, I will use
$E_{n}[\cdot]$ to denote expectations under $P_{n}$, which condition on
$b$, and $E_{0}[\cdot]$ to denote expectations under the limit DGP $P_0$, which do not depend on $b$.

The scaled sample moment conditions
decompose as 
\[
r_{n}:=\sqrt{n}\hat{g}\left(\theta_{0}\right)=b+\varepsilon_{n},
\]
where $\varepsilon_{n}:=\sqrt{n}\left(\hat{g}\left(\theta_{0}\right)-g\left(\theta_{0}\right)\right)$
captures noise in the moment conditions attributable to sampling variation. The vector $b$ encodes the population-level structure of misspecification
across moment conditions. Exchangeability implies that all entries $b_j$ of $b$ have the same marginal distribution and that all pairs $(b_j, b_k)$,
for $j \neq k$, have the same joint distribution. This is a reasonable perspective
to take when the moment conditions share the same basic form ---
i.e., they are all measured in the same units and involve averages
over similar mathematical objects --- and one does not know which
moment conditions are likely to be misspecified a priori.

The following assumption formalizes this perspective and adds regularity conditions used in later sections. In what follows, $\Vert \cdot \Vert$ denotes the Euclidean norm.
\begin{assumption}[Independence and exchangeability]
\label{assu:random_b}The pair $\left(b,\varepsilon_{n}\right)$
obeys the following conditions:

i) \textup{$b\perp\varepsilon_{n}$ for each $n$, $\Vert E[\varepsilon_{n}]\Vert\to0$, and $\varepsilon_{n}\overset{d}{\rightarrow}\varepsilon\sim N\left(0,V\right)$,}

ii) \textup{the specification errors $b=\left(b_{1},\dots,b_{m}\right)'$ are exchangeable with bounded fourth moments.}
\end{assumption}
Assumption \ref{assu:random_b}.i imposes two conditions on the relationship between
specification errors and the moment-condition noise. First, $b$ and
$\varepsilon_{n}$ are independent at every sample size. This is a
substantive economic restriction that would be violated if, for example,
a data-dependent moment selection rule induced correlation between
$b$ and $\varepsilon_{n}$. Second, the moment-condition noise is asymptotically unbiased and normal with covariance $V$.

Assumption \ref{assu:random_b}.ii is the key exchangeability restriction. The bounded fourth moments assumption will prove useful for the asymptotic analysis of later sections. Together, these conditions ensure that the specification errors share a common marginal mean $\bar \mu:=E[b_j]$ and variance $\bar \sigma^2 := \mathrm{Var}(b_j)$, both of which are finite.

\subsection{Defining hyperparameters}
The marginal mean $\bar \mu$ and variance $\bar \sigma^2$ of the specification errors describe the properties of a hypothetical superpopulation from which the specification errors were sampled. Since the goal of this analysis will be to improve GMM estimates given the specification errors actually faced by the researcher, I will condition on the mean and variance of the realized specification errors. 

Denote the realized mean and variance of the specification errors by
\[
\mu:=\frac{1}{m}\sum_{j=1}^{m}b_{j},\qquad \sigma^{2}:=\frac{1}{m-1}\sum_{j=1}^{m}\left(b_{j}-\mu\right)^{2}.
\]
In subsequent sections, I treat these objects as hyperparameters to be targeted in estimation. The following lemma derives the first two moments of the specification errors $b$ conditional on these hyperparameters. 
\begin{lem}
\label{lem:cond-moments}Assumption \ref{assu:random_b}.ii implies that:
\[
E\left[b\mid\mu,\sigma^{2}\right]=\mu\,1_{m},\qquad \mathrm{Var}\left[b\mid\mu,\sigma^{2}\right]=\sigma^{2}Q,\qquad Q:=I_{m}-\frac{1}{m}1_{m}1_{m}'.
\]
\end{lem}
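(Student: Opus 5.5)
The plan rests on the fact that $(\mu,\sigma^2)$ are symmetric functions of $b$. Exchangeability says $b\overset{d}{=}\pi b$ for every $m\times m$ permutation matrix $\pi$. Since $\mu$ and $\sigma^2$ are invariant to permutations of the coordinates, we also have $(b,\mu,\sigma^2)\overset{d}{=}(\pi b,\mu,\sigma^2)$.

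\emph{Conditional exchangeability.} From this joint equality, the conditional law of $b$ given $(\mu,\sigma^2)$ is itself exchangeable almost surely. To make this rigorous, take any bounded measurable $h$ of $(\mu,\sigma^2)$ and compare $E[b_j h]$ with $E[b_k h]$, and similarly $E[b_jb_k h]$ with $E[b_{j'}b_{k'} h]$. Bounded fourth moments guarantee that all of these expectations exist. It follows that $E[b_j\mid\mu,\sigma^2]$ does not depend on $j$. It also follows that $E[b_jb_k\mid\mu,\sigma^2]$ takes one common value $a$ on the diagonal and one common value $c$ off the diagonal.

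\emph{The mean.} Because $\mu$ is measurable with respect to the conditioning variables, $\mu=E[\mu\mid\mu,\sigma^2]=\frac1m\sum_j E[b_j\mid\mu,\sigma^2]$. Since all $m$ terms in this sum are equal, $E[b\mid\mu,\sigma^2]=\mu 1_m$.

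\emph{The variance.} By the previous step, the conditional covariance matrix has the form $\alpha I_m+\beta 1_m1_m'$, with $\alpha=a-c$ and $\beta=c-\mu^2$. I pin down $\alpha$ and $\beta$ using two deterministic identities.

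First, $1_m'(b-\mu 1_m)=0$ holds identically. Hence $\mathrm{Var}[1_m'b\mid\mu,\sigma^2]=0$, which gives $1_m'(\alpha I+\beta 11')1_m=m\alpha+m^2\beta=0$, so $\beta=-\alpha/m$.

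Second, $\sum_j(b_j-\mu)^2=(m-1)\sigma^2$ is known given the conditioning variables. Taking its conditional expectation gives $\mathrm{tr}(\mathrm{Var}[b\mid\mu,\sigma^2])=(m-1)\sigma^2$. The trace equals $m\alpha+m\beta=(m-1)\alpha$, so $\alpha=\sigma^2$.

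Together these yield $\mathrm{Var}[b\mid\mu,\sigma^2]=\sigma^2(I_m-\frac1m 1_m1_m')=\sigma^2Q$.

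The only delicate point is the measure-theoretic claim of conditional exchangeability, that is, making the invariance hold almost surely for versions of the conditional expectations. The test-function argument with bounded $h(\mu,\sigma^2)$ handles this cleanly. Everything else is algebra.
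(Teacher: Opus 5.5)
Your proof is correct and follows the paper's argument. Conditioning on the symmetric statistics $(\mu,\sigma^2)$ preserves exchangeability, so permutation invariance forces the conditional covariance to be a combination of $I_m$ and $1_m1_m'$. The adding-up identity $1_m'(b-\mu 1_m)=0$ makes it annihilate $1_m$, and the trace identity $\sum_j(b_j-\mu)^2=(m-1)\sigma^2$ fixes the scale. The differences are cosmetic: you work with $b$ rather than $Qb$, getting the mean from the averaging identity instead of the paper's ``both proportional to and orthogonal to $1_m$'' argument, and you make the conditional-exchangeability step explicit with bounded test functions, which the paper only asserts.
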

\begin{proof}
Since $\mu$ and $\sigma^{2}$ are symmetric functions of $b$, conditioning on them preserves exchangeability. Write $b=\mu1_{m}+Qb$ and note that $Qb=b-\mu1_{m}$ is conditionally exchangeable and orthogonal to $1_{m}$. Hence, the conditional mean of $Qb$ must be both proportional to and orthogonal to $1_{m}$. Consequently, its conditional mean equals zero, implying $E[b\mid\mu,\sigma^{2}]=\mu1_{m}$. 

Likewise, the conditional second moment $E[Qb(Qb)'\mid\mu,\sigma^{2}]$ must be a permutation-invariant matrix---i.e., a linear combination of $I_{m}$ and $1_{m}1_{m}'$. Since $1_{m}'Qb=0$, this second moment matrix must also annihilate $1_{m}$, forcing it to be proportional to $Q$. Hence, $\mathrm{Var}[b\mid\mu,\sigma^{2}]=\mathrm{Var}[Qb\mid\mu,\sigma^{2}] \propto Q$. Since $(Qb)'Qb=\sum_{j}(b_{j}-\mu)^{2}=(m-1)\sigma^{2}$ is a function of the conditioning variables, $\mathrm{tr}(\mathrm{Var}[b\mid\mu,\sigma^{2}])=E[(Qb)'Qb\mid\mu,\sigma^{2}]=(m-1)\sigma^{2}$. Direct calculation yields $\mathrm{tr}(Q)=m-1$. Therefore, the constant of proportionality must be $\sigma^2$.
\end{proof}

The conditional pairwise correlation between entries of $b$ is $-1/(m-1)$, which reflects the adding-up constraint imposed by conditioning on $\mu$. Thus, conditioning removes the nuisance parameter governing the marginal correlation between the entries of $b$. While conditioning on $(\mu,\sigma^{2})$ pins down the first two moments of $b$, the centered errors $b-\mu1_{m}$ may remain conditionally dependent at higher moments in ways that depend on additional parameters.

\begin{rem}[Equivariant linear transformations]\label{rem:equivariant}
    Permutation-equivariant linear transformations preserve exchangeability. Hence, if $b$ obeys Assumption~\ref{assu:random_b}.ii, so does $(c_1 I_m+c_2 1_m1_m')b$ for constants $c_1\neq0$ and $c_2$. This transformation maps $\mu$ to $(c_1+mc_2)\mu$ and $\sigma^2$ to $c_1^2\sigma^2$.
\end{rem}

\subsection{Leading term and bias}

Assumption \ref{assu:random_b}, in conjunction with the standard
GMM regularity conditions discussed in Section \ref{sec:GMM-preliminaries},
implies the asymptotic representation 
\[
\sqrt{n}\left(\hat{\theta}-\theta_{0}\right)=\Lambda r_{n}+o_{p}\left(1\right)\overset{d}{\rightarrow}\Lambda r,\qquad r:=b+\varepsilon.
\]
Conditional on the hyperparameters, the leading term $\Lambda r=\Lambda b+\Lambda\varepsilon$ has mean
\[
E\left[\Lambda r\mid\mu,\sigma^{2}\right]=\mu\,\Lambda1_{m}.
\]
When the specification errors have nonzero realized mean, the asymptotic distribution of $\sqrt{n}(\hat{\theta}-\theta_{0})$ is centered at $\mu\Lambda1_{m}$ rather than at zero. Its variance splits into two terms,
\[
\mathrm{Var}\left[\Lambda r\mid\mu,\sigma^{2}\right]=\underbrace{\Lambda V\Lambda'}_{\text{sampling uncertainty}}+\underbrace{\sigma^{2}\Lambda Q\Lambda'}_{\text{specification error}}=\Lambda\Sigma\Lambda',\qquad \Sigma:=V+\sigma^{2}Q.
\]

In Section~\ref{sec:bias-estimation}, I propose estimators that mitigate both the bias and the variance attributable to specification errors by leveraging estimates of $\mu$ and $\sigma^{2}$. To this end, I condition on $(\mu,\sigma^{2})$ throughout the subsequent analysis, treating them as fixed parameters. For brevity, I leave this conditioning implicit wherever possible.

\section{Examples \protect\label{sec:Examples}}

It is useful to walk through some examples of misspecified models that can be fit into the framework of the previous section. My discussion will center on
how Assumption~\ref{assu:random_b}.ii can be violated and
how to design moment conditions under which it can plausibly be satisfied.

\begin{example}[Equal means]
\label{exa:mean}Let  $Y_{n}$ be an $m\times1$ vector of independent
sample means with $\mathrm{Var}\left(Y_{n}\right)=\mathrm{diag}\left(v_{1},\dots,v_{m}\right)/n=V/n$,
where  $n$ is the total sample size. The researcher considers a model
that restricts the $m\times1$ vector of population means $\vartheta_{n}$
to equal a common scalar $\theta$. 

To map this example to the local misspecification framework, let $\vartheta_{n}:=\theta_{0}1_{m}+b/\sqrt{n}$
for some $\theta_{0}\in\Theta$ and specification errors $b\in\mathbb{R}^{m}$.
The $m$-vector of population moment conditions then takes the form
$g\left(\theta\right)=\vartheta_{n}-\theta1_{m}$, which implies the
population Jacobian is $G=-1_{m}$. The
scaled population moment conditions $\sqrt{n}g\left(\theta_{0}\right)=b$
are the specification errors. Since the misspecification is local, GMM is consistent. 

If $b$ is exchangeable with bounded fourth moments, then Assumption~\ref{assu:random_b}.ii
is satisfied. In contrast, Assumption~\ref{assu:random_b}.ii
will be violated if either $E\left[b_{j}\right]$ or $\mathrm{Var}\left(b_{j}\right)$
varies with $j$. For instance, if noisier means have larger deviations
then one might expect $E\left[b_{j}\right]\propto\sqrt{v_{j}}$. In this case, residualizing the means against the noise levels---e.g., by using the methods proposed by \textcite{chen2026empirical}---can potentially restore exchangeability.
\end{example}

The following example considers an overidentified instrumental variables
model in which the plausibility of the exchangeability assumption hinges on moment scaling.
\begin{example}[Exclusion violations]
\label{exa:IV_example}A researcher has cross-sectional data
on  $m>2$ instruments  $Z_{i1},\dots,Z_{im}$ for a scalar endogenous
treatment  $T_{i}$,  where  $i$ indexes individual-level observations.
Collect the instruments into the vector $Z_{i}:=\left(Z_{i1},\dots,Z_{im}\right)^{\prime}$
and denote the outcome by  $Y_{i}$. The triples  $\left\{ Y_{i},T_{i},Z_{i}'\right\} _{i=1}^{n}$
are i.i.d.\ draws from the locally misspecified DGP $P_{n}$.

The researcher's model stipulates the population moment restrictions
\[
g\left(\theta\right)=E_{n}\left[\left(Y_{i}-\theta T_{i}\right)Z_{i}\right]
\]
for scalar  $\theta$, implying $G=-E_{n}\left[T_{i}Z_{i}\right]$.
Suppose instead that for some vector $\gamma\in\mathbb{R}^{m}$ of direct effects
\[
E_{n}\left[\left(Y_{i}-\theta_{0}T_{i}-\frac{1}{\sqrt{n}}Z_{i}'\gamma\right)Z_{i}\right]=0.
\]
Excludability violations of this form were considered by \textcite{conley2012plausibly}, who studied estimation and inference given priors on the direct effects. \textcite{kolesar2015identification} considered the case where the entries of $\gamma$ are orthogonal to the instrument first stages. 

Suppose the direct effects  $\left\{ \gamma_{j}\right\} _{j=1}^{m}$
are exchangeable. The associated scaled
population moment conditions satisfy 
\[
\sqrt{n}g\left(\theta_{0}\right)=E_{n}\left[Z_{i}Z_{i}'\right]\gamma = E_{0}\left[Z_{i}Z_{i}'\right]\gamma=b.
\]
In this case, $b$ is generally not exchangeable unless $E_{0}\left[Z_{i}Z_{i}'\right]$ takes the permutation-equivariant form described in Remark \ref{rem:equivariant}. Since $E_{0}\left[Z_{i}Z_{i}'\right]$ equals $\mathrm{Var}_{0}(Z_i)+E_{0}[Z_i]E_{0}[Z_i]'$, this effectively requires the instruments to share a common mean and exhibit an exchangeable (equicorrelated and homoscedastic) variance structure.

This difficulty can be circumvented by working with rescaled moment
conditions taking the form 
\[
g\left(\theta\right)=E_{n}\left[Z_{i}Z_{i}'\right]^{-1}E_{n}\left[\left(Y_{i}-\theta T_{i}\right)Z_{i}\right]=\delta - \theta \pi,
\]
for $\delta:=E_{n}\left[Z_{i}Z_{i}'\right]^{-1}E_{n}\left[Z_{i}Y_{i}\right]$ the $m$-vector of population reduced form coefficients and $\pi:=E_{n}\left[Z_{i}Z_{i}'\right]^{-1}E_{n}\left[Z_{i}T_{i}\right]$ the corresponding vector of first-stage coefficients. These rescaled moments have $b=\gamma$, satisfying Assumption \ref{assu:random_b}.ii whenever $\gamma$ is exchangeable with bounded fourth moments.
\end{example}

Because excludability violations are often cited as a first-order
concern in applied work with instrumental variables, I will focus
on variants of Example \ref{exa:IV_example} in
the applications of Sections \ref{sec:monte-carlo} and \ref{sec:empirical}. As the following example illustrates, however, the methods developed in this paper are equally applicable to nonlinear models. 

\begin{example}[A binary choice restriction]\label{exa:binarychoice}
    Consider the binary choice model $\Pr\left(Y_{i}=1|X_{i}\right)=F\left(X_{i}'\beta\right),$ where $\beta\in\mathbb{R}^{m}$ and the link function $F:\mathbb{R}\to(0,1)$ is strictly increasing and continuously differentiable with derivative $f:\mathbb{R}\to\mathbb{R}_{>0}$. Maximum likelihood estimation of such a model is equivalent to method of moments on the $m$-vector of log-likelihood scores, the population versions of which can be written
    \[
    U(\beta)= E_{n}\left[\frac{Y_{i}-F\left(X_{i}'\beta\right)}{F\left(X_{i}'\beta\right)\left[1-F\left(X_{i}'\beta\right)\right]}f\left(X_{i}'\beta\right)X_{i}\right].
    \]
 
    Now suppose a researcher imposes the coefficient restriction $\beta=\theta 1_{m}$  for $\theta\in\mathbb{R}$. For example, in the discrete choice experiment of \textcite{mas2019labor}, job applicants were randomly presented with one of $m$ pairs of job options involving distinct earnings and hours bundles. The outcome $Y_i$ measured whether individual $i$ chose the bundle with longer hours, while the vector $X_i$ measured the interaction between an indicator for the assigned menu and the implicit wage of the additional hours required by the longer option. In this case, the restriction would require the marginal rate of substitution between leisure and consumption to be equal across the $m$ distinct choice menus. 
    
    It is natural to estimate this restricted model by GMM using the moment conditions
    \[
    U\left(\theta 1_{m}\right)=E_{n}\left[\frac{Y_{i}-F\left(\theta X_{i}'1_{m}\right)}{F\left(\theta X_{i}'1_{m}\right)\left[1-F\left(\theta X_{i}'1_{m}\right)\right]}f\left(\theta X_{i}'1_{m}\right)X_{i}\right].
    \]
    Suppose however that the true choice probability obeys a drifting DGP
    \[
        \Pr\left(Y_{i}=1|X_{i}\right)=F\left(\theta_{0}X_{i}'1_{m}\right)+\frac{1}{\sqrt{n}}X_{i}'b\cdot F\left(\theta_{0}X_{i}'1_{m}\right)\left[1-F\left(\theta_{0}X_{i}'1_{m}\right)\right],
    \]
    where $b\in\mathbb{R}^{m}$. When $X_i'b$ has bounded support (e.g., the menu indicators described above) this expression is almost surely a valid probability for large enough $n$. By iterated expectations,
    \[
        U\left(\theta_{0}1_{m}\right)=\frac{1}{\sqrt{n}}E_{n}\left[\left(X_{i}'b\right)f\left(\theta_0 X_{i}'1_{m}\right)X_{i}\right]=\frac{1}{\sqrt{n}}E_{n}\left[f\left(\theta_0 X_{i}'1_{m}\right) X_{i}X_{i}'\right]b.
    \]
    
    Since $f>0$, when $E_{n}[X_{i}X_{i}']$ is positive definite, the weighted matrix $E_{n}[f(\theta X_{i}'1_{m})X_{i}X_{i}']$ is also invertible. One can then work with the rescaled moment condition
    \[
     g\left(\theta\right)=E_{n}\left[f\left(\theta X_{i}'1_{m}\right) X_{i}X_{i}'\right]^{-1}U(\theta 1_{m}),
    \]
    which yields $\sqrt{n} g\left(\theta_{0}\right)=b$.
    When $b$ is exchangeable with bounded fourth moments, Assumption~\ref{assu:random_b}.ii is satisfied.
\end{example}

\section{Identification of hyperparameters \protect\label{sec:Identification-of-hyperparameter}}

This section studies identification of the hyperparameters $\left(\mu,\sigma^{2}\right)$, which are treated as fixed (i.e., conditioned on) throughout.
Recall that the GMM estimator sets $p$ linear combinations of the
sample moments equal to zero. Assumption~\ref{assu:random_b}
implies the scaled moment conditions can be written 
\[
\hat{r}:=\sqrt{n}\hat{g}\left(\hat{\theta}\right)=Mr_{n}+o_{p}\left(1\right)\overset{d}{\rightarrow}Mr,
\]
where the sample and population residual maker matrices were defined
in Section \ref{subsec:Influence-and-sensitivity}. Intuitively, the
process of fitting GMM masks $p$ linear combinations of the  $b_{j}$'s
that lie in the null space of  $M$. The exchangeability assumption
allows one to infer moments of these hidden linear combinations
from the  $m-p$ identified linear combinations that lie in  $M$'s
column space. 

\subsection{Moment equations}

Assumption~\ref{assu:random_b} implies 
\[
E\left[Mr\right]=\mu M1_{m},\quad \mathrm{Var}\left[Mr\right]=M\Sigma M'.
\]
Provided $\left\Vert M1_{m}\right\Vert ^{2}=\left(M1_{m}\right)'\left(M1_{m}\right)>0$,
it follows that 
\begin{equation}
E\left[\frac{\left(M1_{m}\right)'Mr}{\left\Vert M1_{m}\right\Vert ^{2}}\right]=\mu,\quad E\left[\frac{\left\Vert Mr-\mu M1_{m}\right\Vert ^{2}-\mathrm{tr}\left(MVM'\right)}{\mathrm{tr}\left(M'MQ\right)}\right]=\sigma^{2}.\label{eq:identification}
\end{equation}
Note that $\left\Vert M1_{m}\right\Vert ^{-2}\left(M1_{m}\right)'Mr$
is the coefficient from an OLS regression of $Mr$ on $M1_{m}$. The
ratio $\left[\left\Vert Mr-\mu M1_{m}\right\Vert ^{2}-\mathrm{tr}\left(MVM'\right)\right]/\mathrm{tr}\left(M'MQ\right)$
is a degrees-of-freedom adjusted estimator of the variance of the
specification errors. It subtracts the expected sampling noise $\mathrm{tr}\left(MVM'\right)$
from the residual sum of squares, then divides by the effective residual degrees of freedom $\mathrm{tr}\left(M'MQ\right)=\mathrm{tr}\left(MM'\right)-\left\Vert M1_{m}\right\Vert ^{2}/m$.
When $m-p\ge2$, this denominator is strictly positive and the ratio is well-defined. These population moment equations motivate the hyperparameter
estimators that will be proposed in Section \ref{sec:estimation},
which replace $Mr$ with the sample residuals $\hat{r}$ and population
matrices with their sample analogs.

\subsection{Identification failures \protect\label{subsec:Identification-failures}}

Some overidentified models imply $M1_{m}=0$, leading to non-identification
of  $\mu$. This problem arises whenever  $1_{m}$ lies in the column
space of $G$.

\subsubsection{Constant Jacobians}

Recall that $G=-1_{m}$ in Example \ref{exa:mean}, implying the mean
bias loads directly onto the parameter of interest: $E\left[\Lambda r\right]=\mu$.
Here the hyperparameter  $\mu$ is absorbed by the GMM fitting
process because it is collinear with the parameter being estimated.
Consequently, the $J$-test has no power to detect non-zero values
of $\mu$ \parencite{newey1985gmm}.

The same issue arises in the binary choice model of Example~\ref{exa:binarychoice} when the link function $F$ is logistic. Differentiating the rescaled moments at $\theta_{0}$ yields the Jacobian
\[
G=-E_{n}\left[f\left(\theta_{0}X_{i}'1_{m}\right)X_{i}X_{i}'\right]^{-1}E_{n}\left[\frac{f\left(\theta_{0}X_{i}'1_{m}\right)^{2}}{F\left(\theta_{0}X_{i}'1_{m}\right)\left[1-F\left(\theta_{0}X_{i}'1_{m}\right)\right]}X_{i}X_{i}'\right]1_{m},
\]
up to terms that vanish under the drifting DGP. For the logistic link, $f=F\left(1-F\right)$, which yields $G=-1_{m}$. Hence, the mean $\mu$ is unidentified in the logit model. In contrast, link functions for which the ratio $f/[F(1-F)]$ varies with the index generically produce a Jacobian with non-constant entries, preserving identification of $\mu$.

\subsubsection{Intercepts in linear models \protect\label{subsec:Intercepts-in-linear}}

In overidentified linear models, $\mu$ typically remains identified
even when an intercept is included. Consider a $p=2$ extension of Example~\ref{exa:IV_example} where  $\theta=\left(\alpha,\beta\right)'$,
with  $\alpha$ capturing an intercept and $\beta$ a slope. Suppose also that
an intercept is added to the  $m$ fundamental instruments  $Z_{i}$. Thus,
the rescaled moment conditions take the form  $g\left(\theta\right)=S^{-1} E_{n}\left[\left(Y_{i}-\left(1,T_{i}\right)\theta\right)\left(1,Z_{i}'\right)'\right]$, where $S:=E_{n}[\left(1,Z_{i}'\right)'\left(1,Z_{i}'\right)]$ is the second moment matrix of the augmented instruments.
In this case 
\[
G=-S^{-1}\left[\begin{array}{cc}
1 & E_{n}\left[T_{i}\right]\\
E_{n}\left[Z_{i}\right] & E_{n}\left[T_{i}Z_{i}\right]
\end{array}\right] = - \pi,
\]
where $\pi$ is an $(m+1) \times 2$ matrix of first-stage coefficients. The first column of $\pi$ is necessarily a unit vector $(1, 0_m')'$ capturing the first-stage coefficients for the constant vector. Its second column collects the population coefficients from regressing $T_i$ on the augmented instruments: an intercept $\pi_{12}$ and slopes $\pi_{22},\dots,\pi_{m+1,2}$ on the $m$ fundamental instruments. Identification of $\mu$ fails only when these $m$ slopes share a common nonzero value. 

In contrast, identification fails in an extension of Example~\ref{exa:IV_example}
with $\theta=\left(\alpha,\beta\right)'$ and mutually exclusive binary
instruments  $Z_{ij}\in\left\{ 0,1\right\} $ obeying $\sum_{j=1}^{m}Z_{ij}=1$,
a design that arises often in the recent literature on judge IV \parencite{chyn2025examiner}.
To understand the problem, consider the rescaled moment conditions
\[
g\left(\theta\right)=E_{n}\left[Z_{i}Z_{i}'\right]^{-1}E_{n}\left[\left(Y_{i}-\left(1,T_{i}\right)\theta\right)Z_{i}\right].
\]
Letting  $q=E_{n}\left[Z_{i}\right]$, the Jacobian takes the form
\[
G=-\mathrm{diag}\left(q\right)^{-1}\left(E_{n}\left[Z_{i}\right],E_{n}\left[T_{i}Z_{i}\right]\right)=-\left(1_{m},\mathrm{diag}\left(q\right)^{-1}E_{n}\left[T_{i}Z_{i}\right]\right).
\]
From the first column,  $1_{m}$ lies in the span of $G$. Therefore
$M1_{m}=0$, implying $\mu$ is not identified. This failure stems fundamentally from the adding-up constraint $\sum_{j}Z_{ij}=1$: a direct effect common to all judges is constant and is absorbed by the intercept. 

\subsubsection{Which parameters are biased?}

When $\mu$ is not identified, GMM's leading bias term $\mu\Lambda1_{m}$
cannot be consistently estimated. Non-identification of $\mu$ therefore
implies that some linear combination of the entries in  $\hat{\theta}$
has a leading bias of order $O\left(n^{-1/2}\right)$ that cannot
be removed. However, other linear combinations may have no leading
bias. In the judge-IV example,  $\mu$
loads directly on the intercept parameter  $\alpha$. But  $\beta$
is insensitive to  $\mu$, yielding no leading-term bias in the GMM
point estimate $\hat{\beta}$: 
\[
E\left[\Lambda_{\beta}r\right]=\mu\Lambda_{\beta}1_{m}=-\mu\Lambda_{\beta}G_{\alpha}=0.
\]
Here, $\Lambda_{\beta}$ is the row of $\Lambda$ corresponding to
$\beta$ and $G_{\alpha}=-1_{m}$ is the intercept column of $G$.
The final equality uses $\Lambda_{\beta}G_{\alpha}=0$, which follows
from $\Lambda G=-I_{p}$. 

Though GMM exhibits no leading-term bias for $\beta$ in this example,
it does exhibit conditional bias $E\left[\Lambda_{\beta}r\mid b\right]=\Lambda_{\beta}b$.
Averaging over the exchangeable specification errors, this conditional
bias contributes variance  $\sigma^{2}\Lambda_{\beta}\Lambda_{\beta}'$
to the asymptotic distribution of $\hat{\beta}$: 
\[
\mathrm{Var}\left[\Lambda_{\beta}r\right]=\underbrace{\Lambda_{\beta}V\Lambda_{\beta}'}_{\text{sampling error}}+\underset{\text{specification error}}{\underbrace{\sigma^{2}\Lambda_{\beta}\Lambda_{\beta}'}}.
\]
The specification-error term is $\sigma^{2}\Lambda_{\beta}Q\Lambda_{\beta}'$, which reduces to $\sigma^{2}\Lambda_{\beta}\Lambda_{\beta}'$ because $\Lambda_{\beta}1_{m}=0$.
In this scenario, the job of ``repairing'' GMM simplifies to reducing
the noise in  $\hat{\beta}$ attributable to specification error using
estimates of  $\sigma^{2}$, a task that can be accomplished via empirical
Bayes shrinkage.

\section{The limits of differencing \protect\label{sec:A-differencing-approach}}

Before moving on to empirical Bayes approaches that rely on estimating
the hyperparameters $\left(\mu,\sigma^{2}\right)$, it is useful to
consider what can be achieved when the hyperparameters are treated
as nuisance parameters. A natural strategy for dealing with the exchangeable
structure of the errors in the moment conditions is to difference
their shared mean $\mu$ out. As in Section \ref{sec:Identification-of-hyperparameter},
distributional statements throughout this section are conditional
on $\left(\mu,\sigma^{2}\right)$.

\subsection{Identification}

Let $D$ be an $\left(m-1\right)\times m$ differencing matrix obeying
$D1_{m}=0$ and $DD'=I_{m-1}$ (e.g., a Helmert contrast matrix).
The differenced sample moment conditions $D\hat{g}\left(\theta\right)$,
when evaluated at $\theta_{0}$, obey 
\[
\sqrt{n}D\hat{g}\left(\theta_{0}\right)\overset{d}{\rightarrow}Dr=Db+D\varepsilon.
\]
Under Assumption \ref{assu:random_b}
\[
E\left[Dr\right]=\mu D1_{m}=0,\quad \mathrm{Var}\left(Dr\right)=\sigma^{2}DD'+DVD'=\sigma^{2}I_{m-1}+DVD'.
\]
Thus, differencing removes the mean bias of the moment conditions and consequently the leading-term bias of the resulting GMM estimator, at the potential cost of inflating its variance. In fact, when $b$
is exchangeable, the prior expectation of the differenced violations
vanishes regardless of their scale: $E[Db]=DE[b]=0$. Hence, differencing
yields moment conditions whose superpopulation mean is zero even
under global misspecification of the sort described in Section \ref{subsec:Misspecification-as-contaminatio}.

Mirroring the discussion in Section \ref{subsec:Identification-failures},
one concern is that differencing may undermine identification of $\theta_{0}$
itself (e.g., by leading an intercept to drop out of the original
moment conditions). This concern is ruled out by the condition $M1_{m}\neq0$,
which ensures the differenced moment Jacobian $DG$ has full column
rank $p$. Under this rank condition, $\theta_{0}$ is locally identified from the differenced moments $Dg\left(\theta\right)$. These moments have mean zero at $\theta_{0}$, since $E[Dr]=\mu D1_{m}=0$. Global identification additionally requires that $\theta_{0}$ be the unique such solution.

\subsection{Leading term}

Let $\hat{\theta}_{D}$ denote the estimator obtained from conducting
GMM on the differenced moment conditions with $\left(m-1\right)\times\left(m-1\right)$
weighting matrix $W_{D}$. The sensitivity matrix of the differenced
GMM estimator is $\Lambda_{D}:=-\left(G'D'W_{D}DG\right)^{-1}G'D'W_{D}$.
The arguments above imply that when $M1_{m}\neq0$, 
\[
\sqrt{n}\left(\hat{\theta}_{D}-\theta_{0}\right)=\Lambda_{D}Dr_{n}+o_{p}\left(1\right).
\]
Since $E\left[Dr\right]=0$, the leading-term bias has been removed.
However, the differenced estimator need not converge to a normal limit as
$n$ grows large. 

To see why normality is not ensured in the asymptotic limit, observe
that 
\[
\Lambda_{D}Dr=\underbrace{\Lambda_{D}D\varepsilon}_{\text{sampling variability}}+\underbrace{\Lambda_{D}Db}_{\text{specification errors}}.
\]
The first term, capturing sampling variability, is $N\left(0,\Lambda_{D}DVD'\Lambda_{D}'\right)$.
In contrast, the second term depends on the unknown exchangeable distribution
of the specification errors $b$.

\subsection{Limitations}

Recall that the asymptotics thus far have let $n$ grow large with
the number of moment conditions $m$ fixed. When $m$ is small, the
linear combination $\Lambda_{D}Db$ can be far from normal.
Assumption \ref{assu:random_b} guarantees that this term has
variance $\sigma^{2}\Lambda_{D}\Lambda_{D}'$, that four of its moments
exist, and that it is independent of the first term. However, a large
family of exchangeable distributions satisfies these restrictions. 

In the next section, I develop an asymptotic framework in which the number
of moment conditions grows with the sample size $n$, which allows
me to consistently estimate the hyperparameters $\left(\mu,\sigma^{2}\right)$
at rates suitable for inference on $\theta_{0}$. Estimating the first-step
hyperparameters can provide two advantages beyond inference. The first is
that knowledge of the hyperparameters can enable additional reductions
in the variability of the estimator. In particular, knowledge of $\sigma^{2}$
can be used to develop shrinkage estimators with certain efficiency
advantages. 

Second, the hyperparameters are often interesting in their own right,
providing economically interpretable information about the nature
of the specification errors. As will be illustrated by the empirical
application of Section \ref{sec:empirical}, it is reasonable for
researchers to scrutinize the hyperparameter estimates as a way of
gauging the plausibility of the exchangeability framework on which
the estimators are predicated.

\section{\protect\label{sec:estimation}Estimating hyperparameters}

This section considers how to estimate the hyperparameters  $\left(\mu,\sigma^{2}\right)$
in an environment where  $m$ and $n$ grow jointly, while  $p$ remains
fixed. Intuitively, these results leverage the high degree of overidentification
 $m-p$ found in many empirical studies, which provide opportunities
to detect overdispersion in the empirical moment conditions attributable
to misspecification.

The results in this section are established under a joint asymptotic
framework in which $m$ grows more slowly than $\sqrt{n}$. This asymptotic regime,
which was also considered by \textcite{newey1990efficient} and \textcite{chernozhukov2025plausible},
allows me to state explicit rate conditions under which the feasible
estimators approximate their oracle counterparts. I collect the regularity
conditions used throughout below, using $\left\Vert \cdot\right\Vert _{2}$
to denote the spectral operator norm.
\begin{assumption}
\label{assu:regime}(Asymptotic Regime and Regularity). The following
conditions hold as  $n\to\infty$:

i)  $m=m(n)\to\infty$ with  $m^{2}/n\to0$,  and  $p$ is fixed.

ii) $\|W\|_{2}=1$ and $\|V\|_{2}=1$ (normalizations). The row sums
of $W$ and $V$ are uniformly bounded: $\max_{1\le j\le m}\sum_{k}|W_{jk}|=O(1)$
and $\max_{1\le j\le m}\sum_{k}|V_{jk}|=O(1)$.

iii) The first-stage estimators satisfy, uniformly in  $m$: $\|\hat{M}-M\|_{2}=O_{p}(\sqrt{m/n})$,
 $\|\hat{V}-V\|_{2}=O_{p}(\sqrt{m/n})$,  $\|\hat{\Lambda}-\Lambda\|_{2}=O_{p}(1/\sqrt{n})$,
and the scaled linearization remainder $\sqrt{n}\,[\hat{g}(\hat{\theta})-\hat{g}(\theta_{0})-\hat{G}(\hat{\theta}-\theta_{0})]$ has norm $O_{p}(\sqrt{m/n})$.

iv) $MVM'$ has rank $m-p$, and its smallest nonzero eigenvalue, denoted $\lambda_{\min}^{+}(MVM')$, satisfies $\lambda_{\min}^{+}(MVM')\cdot n/m\to\infty$.

v) The entries of $G$ are uniformly bounded, $\sup_{j,k}|G_{jk}|<\infty$, and $\left\Vert M\right\Vert _{2}=O(1)$.

vi) $m^{-1}\left\Vert M1_{m}\right\Vert ^{2}\to\kappa\in(0,\infty)$.

vii) $V_{n}:=\mathrm{Var}(\varepsilon_{n})$ exists with $\left\Vert V_{n}-V\right\Vert _{2}=O(\sqrt{m/n})$,
and there is a constant $C$ not depending on $n$ such that $\mathrm{Var}(\varepsilon_{n}'M'M\varepsilon_{n})\le C\cdot\mathrm{tr}((M'MV)^{2})$ for all $n$.
\end{assumption}
Condition (i) specifies that the number of moment conditions grows
with the sample size but at a rate slower than $\sqrt{n}$, ensuring
that first-stage estimation errors remain asymptotically negligible.
When moment conditions are added at a faster rate, a many-instruments
bias emerges that compromises both GMM and the sensitivity-based
corrections that will be proposed. Thus, the asymptotic results that
follow apply to use cases where models are moderately overidentified,
but not so complex that techniques for finite-dimensional modeling
break down. In Section \ref{sec:empirical}, I explore a variant of
a specification considered by \textcite{angrist1991does} that fits
these criteria. 

The first part of condition (ii) is without loss of generality: GMM
is invariant to rescaling of either the moment conditions $\hat g(\theta)$ or the weighting matrix $W$ by a scalar. These scalars set the operator norms to one but leave the condition numbers unchanged. The order of $V$'s smallest eigenvalue $\lambda_{\min}(V)$, and hence of $\lambda_{\min}^+(MVM')$, is a feature of the design, not the normalization. Accordingly, the eigenvalue conditions below are stated as rates. The bounded row-sum conditions hold
automatically when $W$ and $V$ are diagonal and more generally
when each of their rows has sparse or rapidly decaying entries. For
$V$, this condition controls pairwise dependence among the moment errors.

Condition (iii) requires that the first-stage estimators converge
uniformly in $m$. The rate $O_{p}(\sqrt{m/n})$ for the $m\times m$ matrix $\hat{V}$ is delivered
by standard concentration inequalities under sub-Gaussian tails or
bounded support. Appendix~\ref{app:verify} verifies the parametric rate required of $\hat{\Lambda}$ in a cell-IV design, where the rows of $\Lambda$ average the first-stage errors across the $m$ moments. The rate required of $\hat{M}$ follows from the identity $M=I_{m}+G\Lambda$. When $\hat{G}$ converges at the rate of $\hat{V}$, the error $\hat{M}-M=(\hat{G}-G)\hat{\Lambda}+G(\hat{\Lambda}-\Lambda)$ is $O_{p}(\sqrt{m/n})$ in operator norm, via $\|G\|_{2}=O(\sqrt{m})$ from condition (v). Finally, the linearization remainder condition ensures
that the scaled residuals $\hat{r}=\sqrt{n}\hat{g}(\hat{\theta})$
are well approximated by $\hat{M}\sqrt{n}\hat{g}(\theta_{0})$, which holds automatically for linear moment conditions
and follows from uniformly bounded Hessians in the nonlinear case.

Condition (iv) keeps the residualized moment covariance nondegenerate as the number of
moments grows. The rank-$(m-p)$ requirement makes $MVM'$ positive definite on the residualized subspace $\mathrm{col}(M)$ but does not require $\Sigma$ to be positive definite elsewhere. Because $M\Sigma M'=MVM'+\sigma^{2}MQM'$ and the second term is positive semidefinite, $M\Sigma M'$ has the same rank as $MVM'$ and its smallest nonzero eigenvalue is at least as large, for every $\sigma^{2}\ge0$, including $\sigma^{2}=0$.

Condition (v) requires uniformly bounded entries of the Jacobian $G$,
ensuring that individual moment conditions do not exert disproportionate
influence on the parameter estimates, and additionally requires that
the projection matrix $M$ act as a bounded operator.
Condition (vi) requires that $1_{m}$ not lie asymptotically in the
column space of $G$, ensuring identification of the mean specification
error $\mu$.

Condition (vii) has two clauses. The first requires that the finite-sample covariance $V_{n}=\mathrm{Var}(\varepsilon_{n})$ exists and converges to its limit $V$ in operator norm at the rate $\sqrt{m/n}$ of the first-stage estimators in condition (iii). The second clause bounds the variance of the quadratic form $\varepsilon_{n}'M'M\varepsilon_{n}$ that arises in the consistency argument for $\widehat{\sigma^{2}}$ (Lemma~\ref{lem:sigma-consistency}). Under i.i.d.\ moment contributions with finite fourth moments this bound follows from standard moment computations on quadratic forms. With weakly dependent or martingale-difference contributions it follows from analogous moment-cumulant bounds.

In the remainder of this section, I will establish consistency of
estimators of the hyperparameters $\left(\mu,\sigma^{2}\right)$, which are treated as fixed throughout.
When Assumption \ref{assu:random_b} is invoked jointly with
Assumption \ref{assu:regime}, it is understood to hold at each $m$
in the sequence $m\left(n\right)$, with a fourth-moment bound that does not depend on $m$. The consistency arguments in this
section rely on the quadratic-form concentration of Lemma~\ref{lem:perm-quadratic}, while the empirical Bayes results of Section \ref{sec:bias-estimation} rely on the
consistency of $\widehat{\sigma^{2}}$ established below. In Section
\ref{sec:distribution-theory}, I leverage Assumption \ref{assu:random_b}
to establish asymptotic normality of both estimators.

\subsection{Estimating $\mu$ \protect\label{subsec:Estimating-mu}}

Suppose that $\left\Vert \hat{M}1_{m}\right\Vert>0$. As discussed in Section \ref{subsec:Identification-failures}, this
condition fails if a linear combination of the columns of $\hat{G}$
equals $1_{m}$. A plug-in estimator of $\mu$, motivated by \eqref{eq:identification},
is the regression coefficient 
\[
\hat{\mu}=\left\Vert \hat{M}1_{m}\right\Vert ^{-2} \left(\hat{M}1_{m}\right)'\hat{r}.
\]

Under Assumption~\ref{assu:regime}.iii, the first-stage estimator  $\hat{M}$
converges to its population counterpart, while Assumption~\ref{assu:regime}.vi guarantees that $\left\Vert M1_{m}\right\Vert ^{2}$ is of order $m$ (i.e., that $\kappa>0$). The
following Lemma establishes the convergence rate of $\hat \mu$ when $m$ and  $n$ grow large.
\begin{lem}
\label{lem:mu_consistency}Under Assumptions~\ref{assu:random_b}
and \ref{assu:regime},
\[
\hat{\mu}-\mu=O_{p}\left(\frac{1}{\sqrt{m}}\right)+O_{p}\left(\sqrt{\frac{m}{n}}\right).
\]
\end{lem}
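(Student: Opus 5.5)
The plan is to compare $\hat\mu$ with the oracle $\tilde\mu:=\Vert M1_m\Vert^{-2}(M1_m)'Mr_n$, which uses the population $M$ and the unresidualized scaled moments $r_n=b+\varepsilon_n$. I will then bound the two pieces $\tilde\mu-\mu$ and $\hat\mu-\tilde\mu$ separately.

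\textbf{The oracle term.} Write $a:=M'M1_m/\Vert M1_m\Vert^2$. Then $\tilde\mu = a'b + a'\varepsilon_n$, and $a'1_m=1$ gives $a'b-\mu = a'Qb$.

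By Lemma~\ref{lem:cond-moments}, conditional on $(\mu,\sigma^2)$ we have $E[a'Qb]=0$ and
\[
\mathrm{Var}(a'Qb)=\sigma^2 a'Qa\le\sigma^2\Vert a\Vert^2 .
\]
Assumption~\ref{assu:regime}.v and .vi bound $\Vert a\Vert^2$:
\[
\Vert a\Vert^2\le \Vert M\Vert_2^2\,\Vert M1_m\Vert^2/\Vert M1_m\Vert^4 = O(1/m).
\]
Chebyshev then gives $a'Qb=O_p(m^{-1/2})$. This uses only second moments; the fourth-moment bound guarantees that $\sigma^2$ is finite.

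For the noise, $b\perp\varepsilon_n$, so $a'\varepsilon_n$ has variance $a'V_na\le\Vert V_n\Vert_2\Vert a\Vert^2=O(1/m)$. The operator norm of $V_n$ is bounded by Assumption~\ref{assu:regime}.ii and .vii. Its mean satisfies $|a'E\varepsilon_n|\le\Vert a\Vert\,\Vert E\varepsilon_n\Vert=o(m^{-1/2})$. Hence $\tilde\mu-\mu=O_p(m^{-1/2})$.

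\textbf{The feasible-minus-oracle term.} Write $\hat r=\hat M r_n+R_n$, where $R_n$ is the linearization remainder with $\Vert R_n\Vert=O_p(\sqrt{m/n})$ by Assumption~\ref{assu:regime}.iii. Expanding
\[
\hat\mu=\Vert\hat M1_m\Vert^{-2}(\hat M1_m)'(\hat M r_n+R_n)
\]
gives three effects:
\begin{itemize}
\item[(a)] replacing $M$ by $\hat M$ in the denominator;
\item[(b)] replacing $M$ by $\hat M$ in the numerator;
\item[(c)] the remainder contribution $(\hat M1_m)'R_n/\Vert\hat M1_m\Vert^2$.
\end{itemize}

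For (a), $\Vert(\hat M-M)1_m\Vert\le\Vert\hat M-M\Vert_2\sqrt m=O_p(m/\sqrt n)$. Since $\Vert M1_m\Vert\asymp\sqrt m$, this gives $\Vert\hat M1_m\Vert^2/\Vert M1_m\Vert^2=1+O_p(\sqrt{m/n})$, using $m^2/n\to0$.

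For (c), the bound is $\sqrt m\cdot O_p(\sqrt{m/n})/m=O_p(\sqrt{m/n})$.

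\textbf{The main obstacle is (b).} The numerator changes by $1_m'(\hat M'\hat M-M'M)r_n$. A naive Cauchy--Schwarz bound uses $\Vert r_n\Vert=O_p(\sqrt m)$, which holds since $\Vert b\Vert^2=O_p(m)$ by bounded second moments and $E\Vert\varepsilon_n\Vert^2=\mathrm{tr}V_n=O(m)$. Together with $\Vert\hat M'\hat M-M'M\Vert_2=O_p(\sqrt{m/n})$, it gives $\sqrt m\cdot\sqrt{m/n}\cdot\sqrt m$. Dividing by $m$ yields $O_p(\sqrt{m/n})$, which is exactly the claimed rate.

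So the crude bound suffices. The argument must still handle the fact that $\hat M$ depends on the data, and in particular on $\varepsilon_n$. The operator-norm bound is deterministic given $\hat M$, so this dependence is harmless. Finally, I will check that the denominator perturbation, multiplied by $\tilde\mu=O_p(1)$, also contributes $O_p(\sqrt{m/n})$. Collecting all terms gives
\[
\hat\mu-\mu=O_p(m^{-1/2})+O_p(\sqrt{m/n}).
\]
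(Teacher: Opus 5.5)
Your proposal is correct and follows the paper's own route: you compare $\hat\mu$ to the oracle $(M1_m)'Mr_n/\Vert M1_m\Vert^2$, you bound the oracle error by Chebyshev using the conditional variance $\sigma^2 Q$ and $\Vert M'M1_m\Vert^2=O(m)$, and you bound the feasibility error with operator-norm/Cauchy--Schwarz estimates for the $\hat M$ substitution in numerator and denominator and for the remainder, exactly as in Lemma~\ref{lem:building-blocks}(i)--(iii). The one step you state too quickly is $\hat r=\hat M r_n+R_n$: Assumption~\ref{assu:regime}.iii alone only gives $\hat r=r_n+\hat G\sqrt n(\hat\theta-\theta_0)+\rho_n$, so you need the first-order condition (which gives $\hat M\hat r=\hat r$) together with $\hat M\hat G=0$ to conclude $R_n=\hat M\rho_n$, whose norm is then $O_p(\sqrt{m/n})$ because $\Vert\hat M\Vert_2=O_p(1)$.
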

\begin{proof}
See appendix.
\end{proof}
The first term captures the error of an infeasible estimator that
uses the population $M$ and residual $r_{n}$, rather than $\hat{M}$ and $\hat{r}$, to estimate $\mu$. This term shrinks
at a $1/\sqrt{m}$ rate because it averages across $m$ moment conditions.
The second term captures the feasibility cost of replacing these population
quantities with their sample analogs, primarily the replacement of $M$ by $\hat{M}$, which converges to
$M$ in operator norm at rate $\sqrt{m/n}$ by Assumption \ref{assu:regime}.iii.
Thus,  $\hat{\mu}$ is consistent for $\mu$ when $m$ grows more
slowly than  $n$. Under Assumption \ref{assu:regime}.i, $m^2/n\to0$, implying the second term is dominated by the first.
\begin{rem}[Weak identification of $\mu$]
\label{rem:strong_id}The requirement $\kappa>0$ in Assumption~\ref{assu:regime}.vi places $\hat{\mu}$ in an asymptotic regime where its target $\mu$ is \emph{strongly identified}. An interesting question for future work is how to characterize the behavior of $\hat{\mu}$ in the setting where $\Vert M1_m \Vert^2=O\left(1\right)$, under which the hyperparameter is \emph{weakly identified}.
\end{rem}

\subsection{Estimating  $\sigma^{2}$}

From \eqref{eq:identification},  a plug-in estimator of  $\sigma^{2}$
takes the form 
\[
\widetilde{\sigma^{2}}=\frac{\left\Vert \hat{r}-\hat{\mu}\hat{M}1_{m}\right\Vert ^{2}-\mathrm{tr}\left(\hat{M}\hat{V}\hat{M}'\right)}{\mathrm{tr}\left(\hat{M}'\hat{M}Q\right)},
\]
 where $\hat{V}$ is an estimator of $V$ satisfying condition (iii) of Assumption~\ref{assu:regime}. Centering on the fitted $\hat{\mu}$ leaves $\widetilde{\sigma^{2}}$ with a small downward bias of order $O(1/m)$, negligible relative to its sampling error. To ensure the
estimate is non-negative, one can rely on the truncated estimator
\[
\widehat{\sigma^{2}}=\max\left\{ 0,\widetilde{\sigma^{2}}\right\} .
\]
The max operator introduces a mild upward bias near $\sigma^{2}=0$ of the same order as the estimator's sampling error.

\subsubsection{Connection to the J-statistic}

The estimator  $\widetilde{\sigma^{2}}$ bears a close connection
to the usual  $J$-statistic. Assuming the variance estimator $\hat{V}$
is nonsingular, this statistic can be written 
\[
\hat{J}=\hat{r}'\hat{V}^{-1}\hat{r},
\]
 where $\hat{r}=\sqrt{n}\hat{g}\left(\hat{\theta}\right)$ and $\hat{\theta}$
is constructed using $\hat{W}=\hat{V}^{-1}$. 

In the special case where  $\hat{W}=\hat{V}=I$,  one can write 
\[
\widetilde{\sigma^{2}}=\frac{\hat{J}-\left(m-p\right)-\hat{\mu}^{\,2}\,1_{m}'\hat{M}'\hat{M}1_{m}}{m-p-\tfrac{1}{m}1_{m}'\hat{M}'\hat{M}1_{m}}.
\]
Provided that the true moment covariance is also the identity ($V=I$), $m-p$ is the large-$n$ expected value of $\hat{J}$ under correct
specification. The term $-\hat{\mu}^{\,2}\,1_{m}'\hat{M}'\hat{M}1_{m}$
subtracts the additional noncentrality in $\hat{J}$ that would arise if every specification error equaled
$\hat{\mu}$. Hence, in this case, $\widetilde{\sigma^{2}}$ captures overdispersion in the moment conditions attributable to specification
error.

Because  $\widetilde{\sigma^{2}}$ weights moment deviations equally,
this intuition does not extend directly to other choices of  $\hat{W}$
and  $\hat{V}$, as the  $J$-statistic weights moments by their precision.
However, if one restricts attention to the optimally weighted case
in which  $\hat{W}=\hat{V}^{-1}$,  the following alternative estimator
of  $\sigma^{2}$ is also consistent, provided $\lambda_{\min}(V)$ is bounded away from zero:
\[
\widetilde{\sigma_{J}^{2}}:=\frac{\hat{J}-\left(m-p\right)-\hat{\mu}_W^{\,2}\,1_{m}'\hat{M}'\hat{V}^{-1}\hat{M}1_{m}}{\mathrm{tr}(\hat{V}^{-1}\hat{M}\hat{M}')-\tfrac{1}{m}1_{m}'\hat{M}'\hat{V}^{-1}\hat{M}1_{m}},
\]
where $\hat \mu_W:=(1_{m}' \hat{M}' \hat W \hat{M}1_{m}) ^{-1} 1_{m}' \hat{M}' \hat W \hat{r} $ is the $\hat W$-weighted analog of $\hat \mu$. Note that when  $\hat{V}=I$,  $\widetilde{\sigma_{J}^{2}}=\widetilde{\sigma^{2}}$. As with $\widetilde{\sigma^{2}}$, negative values are possible and the truncation $\max\{0,\widetilde{\sigma_{J}^{2}}\}$ would need to be applied in practice.

While the connection
of  $\widetilde{\sigma_{J}^{2}}$ to optimally-weighted GMM is appealing,
weighting squared moment deviations by  $\hat{V}^{-1}$ need not yield
uniformly more precise estimates of  $\sigma^{2}$ than  $\widetilde{\sigma^{2}}$.
Moreover, inverse variance weighting can yield finite-sample biases
due to correlation between the weights and moment conditions \parencite{altonji1996small,newey2004higher}.

\subsubsection{Consistency}

The following Lemma establishes the convergence rate of the variance
estimator under joint asymptotics.
\begin{lem}
\label{lem:sigma-consistency}Under Assumptions~\ref{assu:random_b}
and \ref{assu:regime}, 
\[
\widehat{\sigma^{2}}-\sigma^{2}=O_{p}\left(\frac{1}{\sqrt{m}}\right)+O_{p}\left(\sqrt{\frac{m}{n}}\right).
\]
\end{lem}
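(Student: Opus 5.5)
The plan is to prove the rate first for the untruncated estimator $\widetilde{\sigma^{2}}$ and then pass to the truncated one. Truncation is harmless: $\sigma^{2}\ge0$ and $x\mapsto\max\{0,x\}$ is $1$-Lipschitz, so $|\widehat{\sigma^{2}}-\sigma^{2}|\le|\widetilde{\sigma^{2}}-\sigma^{2}|$. I will compare $\widetilde{\sigma^{2}}$ with the oracle quantity
\[
\sigma^{2}_{o}:=\frac{\Vert Mr_{n}-\mu M1_{m}\Vert^{2}-\mathrm{tr}(MVM')}{\mathrm{tr}(M'MQ)}.
\]
The error then splits into an oracle term $\sigma^{2}_{o}-\sigma^{2}$ and a feasibility term $\widetilde{\sigma^{2}}-\sigma^{2}_{o}$.

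\textbf{Denominator.} Assumption~\ref{assu:regime}.iv gives $\mathrm{tr}(MM')\ge(m-p)\lambda_{\min}^{+}(MM')$. Since $M$ is idempotent of rank $m-p$, its nonzero singular values are at least $1$, so $\mathrm{tr}(MM')\ge m-p$. By Assumption~\ref{assu:regime}.v, $\Vert M\Vert_2=O(1)$, hence $\mathrm{tr}(MM')\le (m-p)\Vert M\Vert_2^2=O(m)$. Also $\Vert M1_m\Vert^2/m=O(1)$. Together these give $\mathrm{tr}(M'MQ)\asymp m$, bounded below by a multiple of $m$. The denominator difference satisfies $|\mathrm{tr}(\hat M'\hat MQ)-\mathrm{tr}(M'MQ)|\le 2m\Vert\hat M-M\Vert_2(\Vert M\Vert_2+o_p(1))=O_p(m\sqrt{m/n})$. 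This is relatively $O_p(\sqrt{m/n})$, so the denominator can be swapped at that cost.

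\textbf{Oracle numerator.} Write $Mr_n-\mu M1_m=Mu+M\varepsilon_n$ with $u:=b-\mu1_m=Qb$. Expanding the square gives three pieces.
\begin{itemize}
\item[(a)] $\Vert Mu\Vert^{2}=u'M'Mu$. Conditional on $(\mu,\sigma^{2})$ it has mean $\sigma^{2}\mathrm{tr}(M'MQ)$ by Lemma~\ref{lem:cond-moments}.
\item[(b)] $\varepsilon_n'M'M\varepsilon_n-\mathrm{tr}(MVM')$. Its mean is $\mathrm{tr}(M(V_n-V)M')+O(\Vert E\varepsilon_n\Vert^2)$, which is $O(m\sqrt{m/n})+o(1)$ by Assumption~\ref{assu:regime}.vii. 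Its variance is at most $C\,\mathrm{tr}((M'MV)^2)=O(m)$, so it is $O_p(\sqrt m)+O(m\sqrt{m/n})$.
\item[(c)] The cross term $2u'M'M\varepsilon_n$. By independence of $b$ and $\varepsilon_n$, its variance is $O(E\Vert M'Mu\Vert^2\Vert V_n\Vert_2)=O(m)$, giving $O_p(\sqrt m)$.
\end{itemize}
For (a) I will invoke the permutation quadratic-form concentration of Lemma~\ref{lem:perm-quadratic}. Conditional on the multiset of $b_j$'s, exchangeability makes $u$ a uniformly random permutation of a fixed centered vector with $\Vert u\Vert^{2}=(m-1)\sigma^{2}$ and fourth moments bounded uniformly. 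The variance of $u'Au$ for $A=M'M$ with $\Vert A\Vert_2=O(1)$ is then $O(\Vert A\Vert_F^2\cdot m^{-1}\sum_j u_j^4)=O(m)$. The mean of $u'Au$ is exactly $\sigma^2\mathrm{tr}(AQ)$. Dividing by the denominator of order $m$, all three pieces give $\sigma^{2}_{o}-\sigma^{2}=O_p(1/\sqrt m)+O_p(\sqrt{m/n})$.

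\textbf{Feasibility.} Write $\hat r-\hat\mu\hat M1_m=(\hat M r_n-\mu\hat M1_m)-(\hat\mu-\mu)\hat M1_m+R$, where $R$ is the linearization remainder with $\Vert R\Vert=O_p(\sqrt{m/n})$ (Assumption~\ref{assu:regime}.iii). The oracle vector has norm $O_p(\sqrt m)$. The following differences are each controlled:
\begin{itemize}
\item $\Vert(\hat M-M)(r_n-\mu1_m)\Vert\le\Vert\hat M-M\Vert_2\Vert u+\varepsilon_n\Vert=O_p(\sqrt{m/n}\sqrt m)$.
\item $\Vert\hat M1_m\Vert|\hat\mu-\mu|=O_p(\sqrt m)\cdot[O_p(m^{-1/2})+O_p(\sqrt{m/n})]$ by Lemma~\ref{lem:mu_consistency}.
\item $\Vert R\Vert=O_p(\sqrt{m/n})$.
\end{itemize}
Using $|\Vert x\Vert^2-\Vert y\Vert^2|\le\Vert x-y\Vert(\Vert x\Vert+\Vert y\Vert)$, the squared norm changes by $O_p(\sqrt m)+O_p(m\sqrt{m/n})$. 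The $\hat\mu$ piece is better than this crude bound suggests, since $\hat\mu$ is nearly the projection coefficient and the cross term nearly cancels. This is the source of the stated $O(1/m)$ bias, and the crude bound already suffices here.

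For the trace term, $|\mathrm{tr}(\hat M\hat V\hat M')-\mathrm{tr}(MVM')|\le m\,\Vert\hat M\hat V\hat M'-MVM'\Vert_2=O_p(m\sqrt{m/n})$, using the rates in Assumption~\ref{assu:regime}.iii and $\Vert M\Vert_2,\Vert V\Vert_2=O(1)$. Dividing everything by the order-$m$ denominator yields the claim.

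\textbf{Main obstacle.} I expect the hard step to be the variance bound on $u'M'Mu$ under mere exchangeability. The $u_j$ are not independent, and conditioning on $(\mu,\sigma^2)$ leaves higher-order dependence. This is exactly where Lemma~\ref{lem:perm-quadratic} is needed, applied conditionally on the order statistics of $b$ and then averaged using the uniform fourth-moment bound.
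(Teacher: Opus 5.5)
Your proposal is correct and matches the paper's proof in all essentials. Both use an oracle/first-stage split, a three-way expansion of the centered quadratic form, an order-$m$ denominator, and the $1$-Lipschitz truncation step. Within the expansion, the specification piece is handled by Lemma~\ref{lem:perm-quadratic}, the noise piece by Assumption~\ref{assu:regime}.vii, and the cross term by a conditional-variance bound.

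The only difference is bookkeeping. You center the oracle at the true $\mu$ and absorb $\hat\mu-\mu$ into the feasibility step via Lemma~\ref{lem:mu_consistency}, where the crude bound $O_p(1)\cdot O_p(\sqrt m)$ suffices. The paper instead centers at $\mu_{\text{oracle}}$ and uses Lemma~\ref{lem:building-blocks}(ii). Both give the same rate.
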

\begin{proof}
See appendix.
\end{proof}
As in Lemma \ref{lem:mu_consistency}, the rate is decomposable
into two terms. The first term is attributable to the performance
of an infeasible estimator using the oracle estimate of $\mu$ described after Lemma~\ref{lem:mu_consistency} along with the population
$M$ and $V$ matrices. The second term captures the first-stage estimation
errors $\hat{M}-M$ and $\hat{V}-V$, along with the deviation of $\hat{\mu}$ from its oracle counterpart. 

Write $\hat{\Sigma}-\Sigma=(\hat{V}-V)+(\widehat{\sigma^{2}}-\sigma^{2})Q$. Assumption~\ref{assu:regime}.iii
controls the first term and Lemma~\ref{lem:sigma-consistency} the second. Both vanish as $m\to\infty$ and $m^{2}/n\to0$. Since $\|Q\|_{2}=1$, it follows immediately that $\hat{\Sigma}=\hat{V}+\widehat{\sigma^{2}}Q$ is consistent
for $\Sigma$.
\begin{rem}[Hyperparameter consistency with $m=o(n)$]
\label{rem:weaker-rate}While Assumption~\ref{assu:regime} stipulates
that  $m^{2}/n\to0$, the rates in Lemmas~\ref{lem:mu_consistency}
and~\ref{lem:sigma-consistency} require only  $m/n\to0$, under
which both the oracle error  $O_{p}(1/\sqrt{m})$ and the first-stage
error  $O_{p}(\sqrt{m/n})$ vanish. The stronger condition  $m^{2}/n\to0$
is driven by the  $m\times m$ matrix estimation problems that arise
with the shrinkage estimator studied in Section~\ref{sec:bias-estimation},
for which the  $O(m^{2})$ entries of  $\hat{M}$ must be estimated
precisely enough that they contribute downstream estimation error
of order $o_{p}\left(1/\sqrt{n}\right)$.
\end{rem}

\section{Bias correction and shrinkage \protect\label{sec:bias-estimation}}

In this section, I introduce two estimators that repair the damage misspecification does to GMM. The first estimator removes the leading-order bias by using the hyperparameter estimate $\hat \mu$. The second estimator uses the estimated hyperparameter $\widehat{\sigma^2}$ to form an empirical Bayes shrinkage prediction of the noise in the bias-corrected estimator. Subtracting this prediction from the bias-corrected estimator reduces its sensitivity to any remaining idiosyncratic specification errors. Each estimator is shown to retain GMM's $\sqrt{n}$ convergence rate despite reliance on plug-in hyperparameters. I show that the second estimator weakly improves on the first and provide conditions characterizing when the improvement is strict.

\subsection{Bias-corrected estimator}
Recall from Section \ref{sec:local_misspecification} that the leading misspecification bias of $\hat{\theta}$ is $\mu\Lambda1_{m}/\sqrt{n}$. Plugging in $\hat{\mu}$ for $\mu$ and $\hat{\Lambda}$ for $\Lambda$, then subtracting this leading term, yields the bias-corrected (BC) estimator
\[
\hat{\theta}_{BC}:=\hat{\theta}-\frac{\hat{\mu}}{\sqrt{n}}\cdot\hat{\Lambda}1_{m}.
\]

The BC estimator is a close cousin of the differencing estimator introduced in Section \ref{sec:A-differencing-approach}. Rather than differencing the leading term out, the BC estimator estimates it and subtracts it. When the moment conditions are linear and identity weights are used $(W=I_m, W_D=I_{m-1})$, the estimators $\hat \theta_{BC}$ and $\hat \theta_D$ are numerically equivalent.

The BC estimator is also closely related to a two-step minimum distance approach that \textcite{angrist2009mostly} term ``visual IV'' (VIV), wherein the reduced-form coefficients associated with dummy instruments are regressed on their corresponding first-stage estimates. The conventional TSLS fit is a weighted least squares regression of the reduced forms on the first stages, constrained to go through the origin. In contrast, the VIV estimator described below allows for an intercept, which should be approximately zero if the exclusion restriction is satisfied.

\begin{example}[BC as Visual IV \protect \label{exa:VIV}] 
    Continuing the overidentified IV setting of Example~\ref{exa:IV_example}, suppose the instruments $Z_{i}\in\mathbb{R}^{m}$ are mutually exclusive cell dummies. In this case, the first-stage and reduced-form cell means can be written
    \[
        \hat{\pi} = \hat{W}^{-1}\sum_{i} Z_{i} T_{i}, \quad
        \hat{\delta} = \hat{W}^{-1}\sum_{i} Z_{i} Y_{i}, \quad
        \hat{W} := \sum_{i} Z_{i} Z_{i}',
    \]
    where $\hat{W}$ is the diagonal matrix of cell counts. The cell-dummy TSLS estimator
    \[
        \hat{\theta} = (\hat{\pi}'\hat{W}\hat{\pi})^{-1}\hat{\pi}'\hat{W}\hat{\delta}
    \]
    is the cell-size--weighted least squares slope of the reduced forms $\hat{\delta}$ on the first stages $\hat{\pi}$, constrained to pass through the origin. The sample sensitivity matrix of TSLS is $\hat{\Lambda} = (\hat{\pi}'\hat{W}\hat{\pi})^{-1}\hat{\pi}'\hat{W}$, while the vector $\hat{\Lambda}1_{m} = (\hat{\pi}'\hat{W}\hat{\pi})^{-1}\hat{\pi}'\hat{W}1_{m}$ gives the through-origin WLS coefficient of $1_{m}$ on $\hat{\pi}$. 
    
    The omitted-variables bias formula relates the through-origin slope $\hat \theta$ to its intercept-inclusive counterpart $\hat \theta_{\mathrm{VIV}}$,
    \[
        \hat{\theta} = \hat{\theta}_{\mathrm{VIV}} + \frac{\hat{\mu}_{W}}{\sqrt{n}}\,\hat{\Lambda}1_{m},
    \]
    where $\hat{\theta}_{\mathrm{VIV}}$ is the slope and $\hat{\mu}_{W}/\sqrt{n}$ the intercept of the WLS regression of $\hat{\delta}$ on $(1_{m},\hat{\pi})$ with weight $\hat{W}$. The VIV intercept $\hat{\mu}_{W}$ agrees with $\hat \mu$ numerically when the cell counts are equal (i.e., when $\hat W \propto I_m$). In that case, the BC estimator $\hat \theta_{BC}$ also coincides with the slope $\hat{\theta}_{\mathrm{VIV}}$ of the VIV regression. Under unequal cell counts, the gap between the two estimators can be written
\[
\hat{\theta}_{BC}-\hat{\theta}_{\mathrm{VIV}}=\frac{\hat{\mu}_{W}-\hat{\mu}}{\sqrt{n}}\,\hat{\Lambda}1_{m}.
\]
Hence, the gap is entirely attributable to the two intercept estimates and scales with the vector $\hat{\Lambda}1_{m}$.
\end{example}

The connection between BC and VIV extends to nonlinear models but requires working with linearizations of the moment conditions.

\begin{example}[BC as VIV in a binary choice model \protect\label{exa:binarychoice_BC}]
    Recall the binary choice model of Example~\ref{exa:binarychoice}, with rescaled moments $g(\theta)=E_{n}[f(\theta X_{i}'1_{m})X_{i}X_{i}']^{-1}U(\theta1_{m})$ obeying $\sqrt{n}\,g(\theta_{0})=b$. The first-order expansion of these moment conditions around $\theta_{0}$ takes an IV-like form. Coordinate $j$ of the expansion can be written 
    \[
        g_{j}(\theta)= \delta_{j}-\theta\,\pi_{j}+o(n^{-1/2}),
    \]
    where $\pi_j$ is the $j$th entry in the $m$-vector of population first stages, evaluated under the limiting law $P_{0}$ of Section~\ref{subsec:Misspecification-as-contaminatio},
    \[
         \pi:=-G=E_{0}[f(\theta_{0}X_{i}'1_{m}) X_{i}X_{i}']^{-1}E_{0}\!\left[\frac{f(\theta_{0}X_{i}'1_{m})^{2}}{F(\theta_{0}X_{i}'1_{m})(1-F(\theta_{0}X_{i}'1_{m}))}X_{i}X_{i}'\right]1_{m},
    \]
    and $\delta_{j}=g_{j}(\theta_{0})+\theta_{0}\pi_{j}$ is the $j$th reduced form. This expansion holds uniformly over the $O_{p}(n^{-1/2})$ neighborhood of $\theta_{0}$ that contains $\hat{\theta}$.
    
    One can form a VIV-like estimator by replacing $E_{0}$ with the corresponding sample averages and $\theta_{0}$ with the GMM estimate $\hat{\theta}$. Doing so yields estimated first stages $\hat{\pi}$ and reduced forms $\hat{\delta}=\hat{g}(\hat{\theta})+\hat{\theta}\,\hat{\pi}$. 
    As in Example~\ref{exa:VIV}, one can think of GMM as fitting a line to a scatter of moments via GLS and the bias correction as allowing for an intercept. With identity weights ($W=I_m$), the through-origin slope that arises from regressing $\hat{\delta}_{j}$ on $\hat{\pi}_{j}$ agrees with the GMM estimate $\hat{\theta}$ up to $o_p(n^{-1/2})$. Including an intercept yields a slope that agrees with $\hat{\theta}_{BC}$ to the same order. In contrast to Example~\ref{exa:VIV}, these equivalences are first-order rather than exact because the binary-choice moment is nonlinear and the line is fit to its linearization.
\end{example}

\subsection{Convergence rate of the BC estimator \protect\label{subsec:bc-rate}}

The following regularity conditions ensure that $\hat{\theta}_{BC}$
is well behaved as $n$ grows large.
\begin{assumption}
\label{assu:rate}(Rate conditions). The Jacobian $G$ and weighting
matrix $W$ satisfy:

i) $\left\Vert \Lambda1_{m}\right\Vert =O(1)$.

ii) $\left\Vert \Lambda\right\Vert _{2}=O(1)$.
\end{assumption}
Condition (i) bounds the magnitude of the bias correction
associated with a unit perturbation of all the moments. Condition
(ii) imposes an operator-norm bound on the GMM sensitivity matrix
$\Lambda$, ensuring that no single moment exerts unbounded influence
on the parameter estimate.

The following proposition establishes that $\hat{\theta}_{BC}$ converges
to $\theta_{0}$ at the usual parametric rate.
\begin{prop}
\label{prop:bc-rate}Under Assumptions~\ref{assu:random_b},
\ref{assu:regime}, and \ref{assu:rate}, the feasible bias-corrected
estimator satisfies 
\[
\hat{\theta}_{BC}-\theta_{0}=O_{p}\left(\frac{1}{\sqrt{n}}\right).
\]
\end{prop}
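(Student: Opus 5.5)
The plan is to decompose the scaled error into a GMM term and a correction term, and bound each separately:
\[
\sqrt{n}\left(\hat{\theta}_{BC}-\theta_{0}\right)=\sqrt{n}\left(\hat{\theta}-\theta_{0}\right)-\hat{\mu}\,\hat{\Lambda}1_{m}.
\]
The goal is to show both pieces on the right are $O_{p}(1)$. In the joint regime of Assumption~\ref{assu:regime}, the first piece needs care, since the usual fixed-$m$ limit $\Lambda r$ cannot simply be invoked. Instead I will work from the linearization
\[
\sqrt{n}(\hat{\theta}-\theta_{0})=\hat{\Lambda}r_{n}+\text{remainder},
\]
obtained from the first-order condition $\hat{G}'\hat{W}\hat{g}(\hat{\theta})=0$ together with the linearization remainder bound in Assumption~\ref{assu:regime}.iii. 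That bound controls the remainder by $\|\hat{\Lambda}\|_{2}\cdot O_{p}(\sqrt{m/n})=o_{p}(1)$.

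The first piece then reduces to bounding $\hat{\Lambda}r_{n}$. Write
\[
\hat{\Lambda}r_{n}=\Lambda r_{n}+(\hat{\Lambda}-\Lambda)r_{n},
\]
and expand $\Lambda r_{n}=\mu\Lambda 1_{m}+\Lambda(b-\mu1_{m})+\Lambda\varepsilon_{n}$. The three terms are handled as follows.
\begin{itemize}
\item The term $\mu\Lambda1_{m}$ is $O(1)$ by Assumption~\ref{assu:rate}.i, since $\mu$ is conditioned on and fixed.
\item By Lemma~\ref{lem:cond-moments}, $\Lambda(b-\mu1_{m})$ has conditional mean zero and variance $\sigma^{2}\Lambda Q\Lambda'$. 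Its trace is at most $p\,\sigma^{2}\|\Lambda\|_{2}^{2}=O(1)$ by Assumption~\ref{assu:rate}.ii, so Chebyshev gives $O_{p}(1)$.
\item $\Lambda\varepsilon_{n}$ has second moment $\mathrm{tr}(\Lambda V_{n}\Lambda')+\|\Lambda E[\varepsilon_{n}]\|^{2}$. This is bounded using $p\|\Lambda\|_{2}^{2}\|V_{n}\|_{2}$, with $\|V_{n}\|_{2}\le 1+O(\sqrt{m/n})$ from Assumptions~\ref{assu:regime}.ii and vii, together with $\|E[\varepsilon_{n}]\|\to0$ from Assumption~\ref{assu:random_b}.i.
\item For the cross term, $\|(\hat{\Lambda}-\Lambda)r_{n}\|\le\|\hat{\Lambda}-\Lambda\|_{2}\|r_{n}\|$. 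Here $\|\hat{\Lambda}-\Lambda\|_{2}=O_{p}(n^{-1/2})$ by Assumption~\ref{assu:regime}.iii. Also $E\|r_{n}\|^{2}\lesssim\|b\|^{2}+\mathrm{tr}(V_{n})+\|E\varepsilon_{n}\|^{2}=O(m)$, using bounded fourth (hence second) moments of $b_{j}$ and $\mathrm{tr}(V_{n})\le m\|V_{n}\|_{2}$. The product is therefore $O_{p}(\sqrt{m/n})=o_{p}(1)$.
\end{itemize}

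For the correction term, Lemma~\ref{lem:mu_consistency} gives $\hat{\mu}=\mu+o_{p}(1)=O_{p}(1)$. I will then write $\hat{\Lambda}1_{m}=\Lambda1_{m}+(\hat{\Lambda}-\Lambda)1_{m}$. The first part is $O(1)$ by Assumption~\ref{assu:rate}.i. The second has norm at most $\|\hat{\Lambda}-\Lambda\|_{2}\sqrt{m}=O_{p}(\sqrt{m/n})=o_{p}(1)$. Hence $\hat{\mu}\hat{\Lambda}1_{m}=O_{p}(1)$. Combining the two pieces gives $\sqrt{n}(\hat{\theta}_{BC}-\theta_{0})=O_{p}(1)$, which is the claim.

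The main obstacle is justifying the linearization $\sqrt{n}(\hat{\theta}-\theta_{0})=\hat{\Lambda}r_{n}+o_{p}(1)$ uniformly as $m$ grows. The condition in Assumption~\ref{assu:regime}.iii is stated for $\hat{g}(\hat{\theta})-\hat{g}(\theta_{0})-\hat{G}(\hat{\theta}-\theta_{0})$. Premultiplying by $\hat{G}'\hat{W}$ and using the first-order condition gives
\[
\sqrt{n}(\hat{\theta}-\theta_{0})=\hat{\Lambda}\bigl(r_{n}+\text{rem}\bigr).
\]
The remainder must then be bounded through $\|\hat{\Lambda}\|_{2}=O_{p}(1)$, which follows from Assumption~\ref{assu:rate}.ii and Assumption~\ref{assu:regime}.iii. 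This route avoids any need to bound $\|\hat{G}\|_{2}$, which is of order $\sqrt{m}$, separately. If the remainder bound implicitly requires a preliminary consistency of $\hat{\theta}$, I would take that from the standard GMM consistency argument already presupposed in Section~\ref{sec:GMM-preliminaries}.
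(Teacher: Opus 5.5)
Your proof is correct and follows essentially the same route as the paper. You linearize through the first-order condition to get $\sqrt{n}(\hat\theta-\theta_0)=\hat\Lambda(r_n+\rho_n)$, control $\|\hat\Lambda\|_2$ and $\|\hat\Lambda 1_m\|$ by the triangle inequality, bound the population leading term by second moments, and dispose of $(\hat\Lambda-\Lambda)r_n$ as an $O_p(n^{-1/2})\cdot O_p(\sqrt{m})$ product. The only difference is cosmetic: the paper writes $r_n=\mu 1_m+\eta_n$ so that $\mu\hat\Lambda 1_m$ cancels against the correction, leaving $(\hat\mu-\mu)\hat\Lambda 1_m$, whereas you bound the uncorrected GMM error and the correction separately, which needs only $\hat\mu=O_p(1)$.
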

\begin{proof}
See appendix.
\end{proof}
The BC adjustment $-(\hat{\mu}/\sqrt{n})\hat{\Lambda}1_{m}$ is $O_{p}(1/\sqrt{n})$, the order of the sampling noise, because $\hat{\mu}=O_{p}(1)$ by Lemma~\ref{lem:mu_consistency} and $\|\hat{\Lambda}1_{m}\|=O_{p}(1)$ as shown in the proof. Since $\hat{\mu}-\mu=o_{p}(1)$, replacing $\mu$ with $\hat{\mu}$ perturbs the adjustment by only $o_{p}(1/\sqrt{n})$, ensuring asymptotic equivalence with an estimator using the infeasible correction $-(\mu/\sqrt n)\hat\Lambda 1_m$. 

\subsection{Empirical Bayes estimator \protect\label{subsec:eb-shrinkage}}

The BC estimator removes the leading bias of GMM but leaves a first-order
error with predictable structure. Removing this predictable component of the error can improve precision.

The proof of Lemma~\ref{lem:mu_consistency} establishes that $\hat{\mu}=w'r_{n}+o_{p}(1)$, where $w:=\left\Vert M1_{m}\right\Vert ^{-2}M'M1_{m}$ is a vector of population weights.
Consequently, the BC estimation error can be written
\[
\sqrt{n}(\hat{\theta}_{BC}-\theta_{0})=\Lambda r_{n}-\hat{\mu}\Lambda1_{m}+o_{p}(1)=\Lambda M_{\mu}r_{n}+o_{p}(1)\overset{d}{\rightarrow}\Lambda M_{\mu}r,
\]
where $M_{\mu}:=I_{m}-1_{m}w'$ is the population centering matrix. Since $w'1_m=1$, $M_\mu$ annihilates the constant vector ($M_{\mu}1_{m}=0$). The BC estimation error is governed by the centered moment $\dot{r}:=M_{\mu}r$.
The GMM fitting process masks this quantity, revealing only $\ddot{r} :=M \dot{r}$, which has mean zero and covariance $M\Sigma_{\mu}M'$ with $\Sigma_{\mu}:=M_{\mu}\Sigma M_{\mu}'$.  

The best linear predictor (BLP) of $\dot{r}$ given $\ddot{r}$ is $\Pi\ddot{r}$, where
\[
    \Pi:=\Sigma_{\mu}M'(M\Sigma_{\mu}M')^{+}M.
\]
The $+$ superscript denotes the Moore--Penrose inverse, which is required here because $M\Sigma_{\mu}M'$ is singular with rank $m-p-1$. Since the residual maker $M$ is idempotent, $\Pi M=\Pi$ and $\Pi\ddot{r}=\Pi\dot{r}=\Pi M_{\mu}r$. If $r$ is normally distributed, then $\Pi\ddot{r}$ is also the minimum mean squared error predictor of $\dot{r}$. 

The BLP $\Pi\ddot{r}$ can be thought of as the fitted values from an infeasible regression of $\dot{r}$ on $\ddot{r}$ with no intercept. As noted by \textcite{stigler19901988}, these fitted values offer a shrunken, less variable image of the dependent variable $\dot{r}$. In this sense, $\Pi$ is a shrinkage operator. The residual vector $\dot{r}-\Pi\ddot{r}$ from this regression has a smaller variance matrix than $\dot{r}$ whenever $\Sigma_\mu M'\neq0$ (i.e., whenever $\ddot{r}$ and $\dot{r}$ are correlated). 

The corresponding BLP of the leading-term error $\Lambda\dot{r}$ is $\Lambda\Pi\ddot{r}$. Applying the population predictor to the finite-sample centered moment $M_{\mu}r_{n}$ and subtracting the result, scaled by $n^{-1/2}$, from the BC estimator yields an infeasible corrected estimator
\[
\hat{\theta}_{BC}-\frac{1}{\sqrt{n}}\Lambda\Pi M_{\mu}r_{n}=\theta_{0}+\frac{1}{\sqrt{n}}\Lambda(I_{m}-\Pi)M_{\mu}r_{n}+o_{p}(1/\sqrt{n}).
\]
Since $\dot{r}-\Pi\ddot{r}=(I_{m}-\Pi)M_{\mu}r$ has a weakly smaller variance matrix than $\dot{r}$, this infeasible estimator has weakly lower first-order variance than $\hat{\theta}_{BC}$.

To construct a feasible version of this estimator, I replace population quantities with sample analogues and follow the empirical Bayes principle of plugging in the estimated hyperparameters $(\hat \mu, \widehat{\sigma^2})$ for their unknown counterparts. The matrix $M_{\mu}$ is replaced with $\hat{M}_{\mu}:=I_{m}-1_{m}\hat{w}'$, where
$\hat{w}:=\hat{M}'\hat{M}1_{m}/\|\hat{M}1_{m}\|^{2}$. Likewise, $\Sigma_{\mu}$ is replaced with $\hat{\Sigma}_{\mu}:=\hat{M}_{\mu}\hat{\Sigma}\hat{M}_{\mu}'$,
where $\hat{\Sigma}=\hat{V}+\widehat{\sigma^{2}}Q$ estimates the total
moment variance. Finally, the vector $M_{\mu}r$ has feasible counterpart $\hat{r}-\hat{\mu}1_{m}$, where $\hat{r}$ estimates the masked moment $Mr$. Hence the plug-in BLP is $\hat{\Pi}(\hat{r}-\hat{\mu}1_{m})$, where
\[
\hat{\Pi}:=\hat{\Sigma}_{\mu}\hat{M}'(\hat{M}\hat{\Sigma}_{\mu}\hat{M}')^{+}\hat{M}.
\]
The trailing $\hat{M}$ in $\hat{\Pi}$ accounts for the effects of the masking.

The plug-in predictor of the BC estimator's first-order error is $\hat{\Lambda}\hat{\Pi}(\hat{r}-\hat{\mu}1_{m})/\sqrt{n}$. Subtracting the predicted first-order error from
the BC estimate yields the empirical Bayes (EB) estimator
\[
\hat{\theta}_{EB}:=\hat{\theta}_{BC}-\frac{1}{\sqrt{n}}\hat{\Lambda}\hat{\Pi}(\hat{r}-\hat{\mu}1_{m}).
\]
The EB estimator is a regression adjustment of BC, or equivalently, of its influence function. Intuitively, removing the predictable component of the BC estimator's first-order error should improve its precision. I formalize this intuition in Section \ref{subsec:ordering}.

\begin{rem}[EB when $\mu$ is unidentified]\label{rem:eb-pure-shrinkage}
When $\|\hat{M}1_{m}\|^{2}\approx0$, the mean $\mu$ is unidentified. Once $\|\hat{M}1_{m}\|^{2}$ falls below a small threshold I set $\hat{\mu}:=0$ and $\hat{w}:=0$, replacing $\hat{M}_{\mu}$ by $I_{m}$. The BC estimator then coincides with GMM. The EB estimator remains well-defined, with shrinkage operator $\hat{\Pi}=\hat{\Sigma}\hat{M}'(\hat{M}\hat{\Sigma}\hat{M}')^{+}\hat{M}$, yielding a pure shrinkage correction with no mean adjustment. The theoretical results below will assume $\mu$ is identified.
\end{rem}

\subsection{Convergence rate of the EB estimator}

The EB estimator requires additional regularity conditions to deal with the possibility that some moment directions become weakly identified after centering. These conditions concern the behavior of the shrinkage operator under different hypothetical levels $s\ge0$ of the specification-error variance. 

Define $\Sigma_{\mu}(s):=M_{\mu}(V+sQ)M_{\mu}'$ and $\Pi(s):=\Sigma_{\mu}(s)M'(M\Sigma_{\mu}(s)M')^{+}M$. Let $\hat{\Sigma}_{\mu}(s)$ and $\hat{\Pi}(s)$ be their sample analogues, built from $\hat{M}$, $\hat{M}_{\mu}$, and $\hat{V}+sQ$. Thus, $\Pi=\Pi(\sigma^{2})$ and $\hat{\Pi}=\hat{\Pi}(\widehat{\sigma^{2}})$. The following assumption restricts these objects.
\begin{assumption}[Empirical Bayes regularity]\label{assu:eb-reg} The functions
$\Sigma_{\mu}(s)$, $\Pi(s)$, and $\hat{\Pi}(s)$ satisfy:

i) $\mathrm{tr}\big(((M\Sigma_{\mu}(0)M')^{+})^{2}\big)=O(m)$;

ii) $E\big[\sup_{s\ge0}\|(\hat{\Pi}(s)-\Pi(s))M_{\mu}r_{n}\|^{2}\big]=O(m^{2}/n)$ and $E\big[\sup_{s\ge0}\|(\hat{\Pi}(s)-\Pi(s))\Sigma_{\mu}^{1/2}\|_{F}^{2}\big]=O(m^{2}/n)$;

iii) $\sup_{s\ge0}\|(I_{m}-\Pi(s))M_{\mu}\|_{2}=O(1)$.
\end{assumption}
Condition (i) bounds the sum of the squared inverse eigenvalues of the centered covariance $M\Sigma_{\mu}(s)M'$ to order $m$. This caps the weak identification left after centering: a growing number of mildly weak directions is allowed, but no single direction is allowed to vanish faster than $m^{-1/2}$. The condition is stated at $s=0$, where it binds. 

Condition (ii) requires the sample projection to track its population counterpart in mean square, uniformly in the variance argument $s$. The uniformity is needed because the feasible projection is evaluated at the estimate $\widehat{\sigma^{2}}$. The condition holds when the first-stage error avoids the weakly identified directions. The second part of the condition states the same requirement for the covariance $\Sigma_{\mu}$ of $M_{\mu}r_{n}$. Condition (iii) ensures the population shrinkage residual operator is bounded, again uniformly in the variance argument.

The following proposition establishes the approximation error of the
plug-in predictor $\hat{\Pi}(\hat{r}-\hat{\mu}1_{m})$ and the convergence rate of the empirical
Bayes estimator.
\begin{prop}
\label{prop:eb-rate}Under Assumptions~\ref{assu:random_b},
\ref{assu:regime}, \ref{assu:rate}, and \ref{assu:eb-reg}:\\
 (i) The feasible predictor satisfies
\[
\frac{1}{\sqrt{m}}\|\hat{\Pi}(\hat{r}-\hat{\mu}1_{m})-\Pi M_{\mu}r_{n}\|=O_{p}\left(\frac{1}{\sqrt{m}}\right)+O_{p}\left(\sqrt{\frac{m}{n}}\right).
\]
\\
(ii) The empirical Bayes estimator satisfies
\[
\hat{\theta}_{EB}-\theta_{0}=O_{p}\left(\frac{1}{\sqrt{n}}\right).
\]
\end{prop}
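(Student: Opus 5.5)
Part (i) is handled by a telescoping decomposition, and part (ii) is then read off from part (i) together with Proposition~\ref{prop:bc-rate}. Write
\[
\hat{\Pi}(\hat{r}-\hat{\mu}1_{m})-\Pi M_{\mu}r_{n}
= \hat{\Pi}\big[(\hat{r}-\hat{\mu}1_{m})-M_{\mu}r_{n}\big]
+\big(\hat{\Pi}(\widehat{\sigma^{2}})-\Pi(\widehat{\sigma^{2}})\big)M_{\mu}r_{n}
+\big(\Pi(\widehat{\sigma^{2}})-\Pi(\sigma^{2})\big)M_{\mu}r_{n}.
\]
Each term is bounded in norm and divided by $\sqrt{m}$.

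\emph{Second term.} Take the supremum over $s\ge0$ and apply Assumption~\ref{assu:eb-reg}.ii with Markov's inequality. This gives $O_p(m/\sqrt{n})$ in norm, hence $O_p(\sqrt{m/n})$ after scaling.

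\emph{First term.} Use $\hat{r}=\hat{M}r_{n}+O_p(\sqrt{m/n})$ from Assumption~\ref{assu:regime}.iii, together with $\hat{\Pi}\hat{M}=\hat{\Pi}$. Since $\hat{M}1_m$ need not vanish, it is cleaner to write
\[
\hat{r}-\hat{\mu}1_{m} = \hat{M}_{\mu}\hat{r} + \big[(\hat{w}'\hat{r}-\hat{\mu})1_m\big]
\]
and verify $\hat{w}'\hat{r}=\hat{\mu}$ by definition of $\hat{\mu}$. The difference then reduces to:
\begin{itemize}
\item $(\hat{M}-M)r_{n}$, of norm $O_p(\sqrt{m/n})\cdot O_p(\sqrt{m})$, because $\|r_{n}\|=O_p(\sqrt{m})$ by bounded fourth moments and $\|V\|_2=1$;
\item $(\hat{\mu}-w'r_{n})1_{m}$, which is handled through the proof of Lemma~\ref{lem:mu_consistency}: there $\hat{\mu}-w'r_n=O_p(\sqrt{m/n})$, contributing $\sqrt{m}\cdot O_p(\sqrt{m/n})$;
\item the linearization remainder.
\end{itemize}
Multiplying by $\hat{\Pi}$ needs $\|\hat{\Pi}\hat{M}_\mu\|_2=O_p(1)$. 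This follows from Assumption~\ref{assu:eb-reg}.iii (since $\Pi$ differs from $I-(I-\Pi)$) and from the Frobenius part of Assumption~\ref{assu:eb-reg}.ii, which controls $\hat{\Pi}-\Pi$. After dividing by $\sqrt{m}$, this yields $O_p(\sqrt{m/n})$.

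\emph{Third term.} This term carries the $1/\sqrt{m}$ rate through $\widehat{\sigma^{2}}-\sigma^{2}=O_p(m^{-1/2})+O_p(\sqrt{m/n})$ from Lemma~\ref{lem:sigma-consistency}. I would show that $s\mapsto\Pi(s)M_\mu r_n$ is Lipschitz in $s$ with constant $O_p(\sqrt{m})$. Differentiate $\Pi(s)=\Sigma_\mu(s)M'A(s)^{+}M$ with $A(s)=M\Sigma_\mu(s)M'$. Because $A(s)$ has constant rank $m-p-1$ on a fixed range, the pseudoinverse is differentiable with derivative $-A^{+}(MM_\mu QM_\mu'M')A^{+}$. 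This gives
\[
\Pi'(s)=(I-\Pi(s))M_{\mu}QM_{\mu}'M'A(s)^{+}M .
\]
Bound it using $\|(I-\Pi(s))M_\mu\|_2=O(1)$ (Assumption~\ref{assu:eb-reg}.iii), $\|M\|_2=O(1)$, and $A(s)^{+}\preceq A(0)^{+}$ on the range. For the latter, Assumption~\ref{assu:eb-reg}.i gives $\|A(0)^{+}\|_F=O(\sqrt m)$, so that
\[
\|\Pi'(s)M_\mu r_n\|\le C\,\|A(0)^{+}MM_\mu r_n\| .
\]
Then $E\|A(0)^{+}MM_\mu r_n\|^2$ is bounded via $\operatorname{tr}(A(0)^{+2}\,\cdot)$ times the $O(1)$-operator-norm covariance plus the mean part ($MM_\mu 1=0$ kills it). This yields $O(m)$ and hence norm $O_p(\sqrt m)$. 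The main obstacle is exactly here: making the derivative bound uniform over $s$ between $\sigma^2$ and $\widehat{\sigma^{2}}$, and handling the truncation at $0$. I would use the monotonicity $A(s)\succeq A(0)$ to get a bound independent of $s$, then apply the mean value theorem. The result is $O_p(\sqrt m)\cdot|\widehat{\sigma^2}-\sigma^2|$, i.e.\ $O_p(1)+O_p(m/\sqrt n)$, and after dividing by $\sqrt m$ the claimed rates.

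\emph{Part (ii).} Write
\[
\hat\theta_{EB}-\theta_0=(\hat\theta_{BC}-\theta_0)-n^{-1/2}\hat\Lambda\hat\Pi(\hat r-\hat\mu 1_m).
\]
The first piece is $O_p(n^{-1/2})$ by Proposition~\ref{prop:bc-rate}. For the second, split $\hat\Lambda\hat\Pi(\cdot)=\Lambda\Pi M_\mu r_n+\Lambda[\hat\Pi(\cdot)-\Pi M_\mu r_n]+(\hat\Lambda-\Lambda)\hat\Pi(\cdot)$.
\begin{itemize}
\item $\Lambda\Pi M_\mu r_n$ has bounded second moment. Its variance is $\Lambda\Pi\Sigma_\mu\Pi'\Lambda'\preceq\Lambda\Sigma_\mu\Lambda'$ by the BLP property, and $\Lambda\Sigma_\mu\Lambda'=O(1)$ by Assumption~\ref{assu:rate} and $\|\Lambda 1_m\|=O(1)$.
\item The error term is bounded by $\|\Lambda\|_2$ times part (i) times $\sqrt m$, i.e.\ $O_p(1)+O_p(m/\sqrt n)=O_p(1)$ under $m^2/n\to0$.
\item The last term is $O_p(n^{-1/2})\cdot O_p(\sqrt m)=o_p(1)$.
\end{itemize}
Hence the correction is $O_p(n^{-1/2})$, and $\hat\theta_{EB}-\theta_0=O_p(n^{-1/2})$.
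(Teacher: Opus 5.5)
Your skeleton is the paper's: the same three-way split of $\hat{\Pi}(\hat{r}-\hat{\mu}1_{m})-\Pi M_{\mu}r_{n}$, the same Markov argument for the first-stage piece via Assumption~\ref{assu:eb-reg}.ii, the same derivative of $\Pi(s)$, and essentially the same part (ii). Two steps of part (i), however, do not go through as justified.

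\textbf{The linearization term.} Your bound on $\hat{\Pi}\bigl[(\hat{r}-\hat{\mu}1_{m})-M_{\mu}r_{n}\bigr]$ rests on $\|\hat{\Pi}\|_{2}=O_{p}(1)$, which you attribute to the Frobenius half of Assumption~\ref{assu:eb-reg}.ii. That condition only controls $(\hat{\Pi}(s)-\Pi(s))\Sigma_{\mu}^{1/2}$. Turning it into an operator-norm bound on $\hat{\Pi}-\Pi$ requires inverting $\Sigma_{\mu}^{1/2}$ on its range, i.e.\ an eigenvalue floor such as $\lambda_{\min}(V)\ge c_{V}$. That floor is not a hypothesis of this proposition; it is imposed only later, e.g.\ in Corollary~\ref{corr:bc-eb-clt}(ii).

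The missing observation is that $\hat{M}_{\mu}'\hat{M}'\hat{M}1_{m}=0$ places $\hat{M}1_{m}$ in the null space of $\hat{A}(s):=\hat{M}\hat{\Sigma}_{\mu}(s)\hat{M}'$. Hence $\hat{\Pi}1_{m}=0$, even though $\hat{M}1_{m}\neq0$. Together with $\hat{\Pi}\hat{M}=\hat{\Pi}$ and $\hat{r}=\hat{M}r_{n}+u_{n}$, this makes your first term exactly $\hat{\Pi}u_{n}$, where $u_{n}:=\hat{r}-\hat{M}r_{n}$ has norm $O_{p}(\sqrt{m/n})$. The $\hat{\mu}$ piece and every $\hat{M}$-versus-$M$ discrepancy vanish identically. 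A crude bound then suffices: $\|\hat{\Pi}u_{n}\|\le\|\hat{\Sigma}_{\mu}\hat{M}'\|_{2}\,\sup_{s\ge0}\|\hat{A}(s)^{+}\|_{2}\,\|u_{n}\|$. You still have to prove $\sup_{s}\|\hat{A}(s)^{+}\|_{2}=O_{p}(\sqrt{m})$. It follows from four ingredients:
\begin{enumerate}
\item $\lambda_{\min}^{+}(A(0))\gtrsim m^{-1/2}$, by Assumption~\ref{assu:eb-reg}.i;
\item Weyl's inequality with $\|\hat{A}(0)-A(0)\|_{2}=O_{p}(\sqrt{m/n})=o_{p}(m^{-1/2})$;
\item the rank bound $\mathrm{rank}(\hat{A}(0))\le m-p-1$;
\item monotonicity in $s$.
\end{enumerate}
Without the annihilation, this crude bound applied to your $(\hat{\mu}-w'r_{n})1_{m}$ piece gives only $O_{p}(m^{3/2}/\sqrt{n})$, which would also break part (ii).

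\textbf{The $\widehat{\sigma^{2}}$ term.} The uniformity step fails. With $v:=MM_{\mu}r_{n}$, you correctly reduce to bounding $\|A(s)^{+}v\|$ along the segment. Also, $A(s)\succeq A(0)$ on the common range does give $A(s)^{+}\preceq A(0)^{+}$ and $\|A(s)^{+}\|_{2}\le\|A(0)^{+}\|_{2}$. It does not give $\|A(s)^{+}v\|\le\|A(0)^{+}v\|$, because the Loewner order is not preserved by squaring. For example, $A(0)=\mathrm{diag}(1,\epsilon)$ and $A(s)=A(0)+s1_{2}1_{2}'$ give, for the first basis vector $e_{1}$, $\|A(s)^{-1}e_{1}\|\to\sqrt{2}>1=\|A(0)^{-1}e_{1}\|$ as $\epsilon\to0$.

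What monotonicity legitimately yields is $\|A(s)^{+}v\|^{2}\le\|A(s)^{+}\|_{2}\,v'A(s)^{+}v\le\|A(0)^{+}\|_{2}\,v'A(0)^{+}v=O_{p}(m^{3/2})$. This loses a factor $m^{1/4}$ and leaves the term at $O_{p}(m^{1/4})$ instead of $O_{p}(1)$. That breaks both the rate in (i) and the $O_{p}(1)$ bound on $\hat{L}-L$ needed in (ii).

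The paper instead anchors at $s=\sigma^{2}$ rather than $s=0$. There, $E\|A(\sigma^{2})^{+}v\|^{2}$ is essentially $\mathrm{tr}(A(\sigma^{2})^{+})=O(m)$, by Cauchy--Schwarz and Assumption~\ref{assu:eb-reg}.i. Motion along the segment is controlled by the logarithmic-derivative bound $\bigl|\tfrac{d}{ds}\|A(s)^{+}v\|\bigr|\le\|A(s)^{+}\|_{2}\,\|\partial A(s)/\partial s\|_{2}\,\|A(s)^{+}v\|\le C\sqrt{m}\,\|A(s)^{+}v\|$. Gr\"{o}nwall over a segment of length $|\widehat{\sigma^{2}}-\sigma^{2}|=O_{p}(m^{-1/2})$ then costs only a factor $e^{O_{p}(1)}$. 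The anchor matters: integrating the same inequality from $s=0$ would produce $e^{C\sqrt{m}\,\sigma^{2}}$.
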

\begin{proof}
See appendix.
\end{proof}
Part (i) bounds the approximation error of the feasible predictor relative to an oracle that knows the population matrices $\left(\Pi,M_{\mu}\right)$ and the random vector $r_n$. This error converges at the same rate as the hyperparameters described in Lemmas~\ref{lem:mu_consistency} and \ref{lem:sigma-consistency}. As in those results, the approximation error scaled by $m^{-1/2}$ vanishes when $m$ grows more slowly than $n$.

Part (ii) establishes that $\hat{\theta}_{EB}$ converges at the parametric
rate, matching the feasible bias-corrected estimator of
Proposition~\ref{prop:bc-rate}. The hyperparameter errors $\hat{\mu}-\mu$
and $\widehat{\sigma^{2}}-\sigma^{2}$ are $o_{p}(1)$ by
Lemmas~\ref{lem:mu_consistency} and \ref{lem:sigma-consistency}. Since $\hat{M}_{\mu}'\hat{M}'\hat{M}1_{m}=0$, the vector $\hat{M}1_{m}$ lies in the null space of $\hat{M}\hat{\Sigma}_{\mu}\hat{M}'$. Consequently, $\hat{\Pi}1_{m}=0$ and the EB adjustment reduces to $-\hat{\Lambda}\hat{\Pi}\hat{r}/\sqrt{n}$. 
The $1/\sqrt{n}$ scaling ensures these hyperparameter errors do not compromise the
parametric rate. 

\subsection{Variance ordering \protect \label{subsec:ordering}}

Define the leading-term risk of an estimator
as the MSE of the leading linear term in its $\sqrt{n}$-scaled estimation
error. Part (i) of the
following proposition establishes that the EB estimator weakly improves on the leading-term risk of the BC estimator. Part (ii) characterizes the leading term of the EB estimator under a mild nondegeneracy condition.
\begin{prop}
\label{prop:variance-ordering}Under Assumptions~\ref{assu:random_b} and \ref{assu:regime}:\\
(i) The bias-corrected and empirical Bayes estimators have leading-term
risks $V_{BC}=\Lambda\Sigma_{\mu}\Lambda'$ and
$V_{EB}=\Lambda(I_{m}-\Pi)\Sigma_{\mu}(I_{m}-\Pi)'\Lambda'$. The difference
\[
V_{BC}-V_{EB}=\Lambda\Sigma_{\mu}M'(M\Sigma_{\mu}M')^{+}M\Sigma_{\mu}\Lambda'
\]
is positive semidefinite. The leading-term risk improvement
is nonzero if and only if $M\Sigma_{\mu}\Lambda'\neq0$. When $p=1$,
nonzeroness and positive definiteness coincide, and positive definiteness
in general requires $m\geq2p+1$.\\
(ii) If, in addition, $\lambda_{\min}(V)>0$, then the leading term of $\hat \theta_{EB}$ equals the influence function of efficient GMM applied to the centered moments $M_{\mu}g(\theta)$:
\[
\Lambda(I_{m}-\Pi)M_{\mu}=-\left(G'\Sigma_{\mu}^{+}G\right)^{-1}G'\Sigma_{\mu}^{+}M_{\mu},
\]
and consequently
\[
V_{EB}=\left(G'\Sigma_{\mu}^{+}G\right)^{-1}=\left(G'\Sigma^{-1}G-\frac{G'\Sigma^{-1}1_{m}1_{m}'\Sigma^{-1}G}{1_{m}'\Sigma^{-1}1_{m}}\right)^{-1}.
\]
Neither expression depends on the weighting matrix $W$.
\end{prop}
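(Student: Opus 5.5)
The plan is to work entirely with the leading linear terms identified in Sections~\ref{subsec:bc-rate}--\ref{subsec:eb-shrinkage}: $\sqrt n(\hat\theta_{BC}-\theta_0)=\Lambda M_\mu r_n+o_p(1)$ and $\sqrt n(\hat\theta_{EB}-\theta_0)=\Lambda(I_m-\Pi)M_\mu r_n+o_p(1)$. The second follows from Proposition~\ref{prop:eb-rate}(i), once the $\hat\Lambda$-versus-$\Lambda$ substitution is handled as in Proposition~\ref{prop:bc-rate}. I take these expansions as the definition of the leading terms, so leading-term risk is just the second moment of $\Lambda M_\mu r$ and of $\Lambda(I_m-\Pi)M_\mu r$.

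\textbf{Part (i).} First, the mean of $M_\mu r$ is $\mu M_\mu 1_m=0$. By Lemma~\ref{lem:cond-moments} and $b\perp\varepsilon$, its variance is $M_\mu\Sigma M_\mu'=\Sigma_\mu$, which gives $V_{BC}$ and $V_{EB}$.

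Next, write $A:=M\Sigma_\mu M'$ and expand
\[
(I-\Pi)\Sigma_\mu(I-\Pi)'=\Sigma_\mu-\Pi\Sigma_\mu-\Sigma_\mu\Pi'+\Pi\Sigma_\mu\Pi'.
\]
Using $\Pi\Sigma_\mu=\Sigma_\mu M'A^{+}M\Sigma_\mu$ and $A^{+}AA^{+}=A^{+}$, the three correction terms collapse to $-\Sigma_\mu M'A^{+}M\Sigma_\mu$. This yields the displayed difference, which is positive semidefinite because $A^{+}\succeq0$.

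For the nonzero criterion, set $C:=M\Sigma_\mu\Lambda'$. Its columns lie in $\mathrm{range}(M\Sigma_\mu^{1/2})=\mathrm{range}(A)$, where $A^{+}$ is positive definite. Hence $C'A^{+}C=0$ if and only if $C=0$. When $p=1$ the difference is a scalar, so nonzero and positive definite coincide. Positive definiteness in general needs $\mathrm{rank}(C)=p$. Since $\mathrm{rank}(C)\le\mathrm{rank}(A)=m-p-1$, this forces $m\ge 2p+1$.

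\textbf{Part (ii).} This is the main step, and I would argue via Gauss--Markov. Let $L:=\Lambda(I-\Pi)M_\mu$ and $L^{*}:=-(G'\Sigma_\mu^{+}G)^{-1}G'\Sigma_\mu^{+}M_\mu$. The argument proceeds in four steps.

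\emph{Step 1: algebraic identities.} From $\Lambda G=-I_p$ we get $MG=0$. Then $w'G=0$, so $M_\mu G=G$, and $\Pi G=0$. Hence $LG=-I_p$.

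\emph{Step 2: $L^{*}$ is well defined.} When $\lambda_{\min}(V)>0$, $\Sigma$ is positive definite and $\mathrm{range}(\Sigma_\mu)=\mathrm{range}(M_\mu)\ni G$. Therefore $G'\Sigma_\mu^{+}\Sigma_\mu=G'$ and $L^{*}G=-I_p$. This is the step I expect to be most delicate, because of the oblique $M_\mu$ and the pseudo-inverse, and it requires verifying $G'\Sigma_\mu^{+}G$ is invertible.

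\emph{Step 3: both maps are orthogonal to the observable innovation $MM_\mu r$.} For $L$: $\mathrm{Cov}(LM_\mu r,MM_\mu r)=\Lambda(\Sigma_\mu M'-\Sigma_\mu M'A^{+}A)=0$. For $L^{*}$: the covariance is proportional to $G'\Sigma_\mu^{+}\Sigma_\mu M'=(MG)'=0$.

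\emph{Step 4: conclude.} The difference $L-L^{*}$ annihilates $G$. On $\mathrm{range}(\Sigma_\mu)$, the annihilator of $G$ among maps of the centered moments is spanned by rows of $MM_\mu$, so $L-L^{*}=KMM_\mu$ on that range for some $K$. Then
\[
KAK'=\mathrm{Cov}\big((L-L^{*})r,\,KMM_\mu r\big)=0,
\]
so $(L-L^{*})\Sigma_\mu^{1/2}=0$. Since $r$ enters only through $M_\mu r\in\mathrm{range}(\Sigma_\mu)$, the two influence functions coincide. Consequently $V_{EB}=L^{*}\Sigma L^{*\prime}=(G'\Sigma_\mu^{+}G)^{-1}$, and neither $L^{*}$ nor $V_{EB}$ involves $W$.

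\textbf{Final formula.} I would interpret centered GMM as efficient GMM on the uncentered moments with an extra nuisance column $1_m$ and parameter $\mu$. Applying the partitioned inverse of $[G,1_m]'\Sigma^{-1}[G,1_m]$ and profiling out $\mu$ gives $G'\Sigma^{-1}G-G'\Sigma^{-1}1_m1_m'\Sigma^{-1}G/(1_m'\Sigma^{-1}1_m)$. Equating this with $G'\Sigma_\mu^{+}G$ amounts to checking that both are the information for $\theta$ in $r\sim(\mu1_m+G\cdot,\Sigma)$ after eliminating $\mu$. I would verify this directly using $\Sigma_\mu^{+}$ restricted to $\mathrm{range}(M_\mu)$ and the identity $M_\mu G=G$.
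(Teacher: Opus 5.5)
Your part (i) follows the paper's argument. Steps 1--4 of part (ii) validly establish $\Lambda(I_m-\Pi)M_\mu=L^*$ and $V_{EB}=(G'\Sigma_\mu^+G)^{-1}$, but by a somewhat different route. The paper proves the lower bound $B\Sigma B'\succeq(G'\Sigma_\mu^+G)^{-1}$ over the class $\{B:BG=-I_p,\ B1_m=0\}$. It then shows via $B=\Lambda_{EB}+\tilde C MM_\mu$ that $\Lambda_{EB}$ is the unique minimizer, and checks that $L^*$ attains the bound. You skip the bound and show directly that $L$ and $L^*$ are both uncorrelated with $MM_\mu r$ while their difference is a linear function of $MM_\mu r$. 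That is shorter, but it only identifies two matrices, and it leaves the profiled formula and the $W$-invariance still to be argued. Invertibility of $G'\Sigma_\mu^+G$ in Step 2 is immediate, since $\Sigma_\mu^+$ is positive definite on $\mathrm{range}(\Sigma_\mu)\supseteq\mathrm{col}(G)$ and $G$ has full column rank.

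\textbf{The gap is at the end.} The claim ``neither $L^*$ nor $V_{EB}$ involves $W$'' is not true by inspection. We have $M_\mu=I_m-1_mw'$ with $w=M'M1_m/\|M1_m\|^2$, and $M=I_m+G\Lambda$ with $\Lambda=-(G'WG)^{-1}G'W$. So $M_\mu$, $\Sigma_\mu$ and $\Sigma_\mu^+$ all change with $W$. Only the \emph{values} of $L^*$ and $V_{EB}$ are invariant, and that must be proved. The profiled formula is likewise only announced, and no direct matrix identity is available. Write $P:=\Sigma^{-1}-\Sigma^{-1}1_m1_m'\Sigma^{-1}/(1_m'\Sigma^{-1}1_m)$. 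Then $\Sigma_\mu^+$ has range $\mathrm{range}(M_\mu)=\mathrm{null}(w')$, whereas $P$ has range $1_m^{\perp}$, so $\Sigma_\mu^+\neq P$ unless $w\propto 1_m$.

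The missing observation is that $P$ is a generalized inverse of $\Sigma_\mu$:
\begin{itemize}
\item $P1_m=0$ gives $PM_\mu=P=M_\mu'P$.
\item Combined with $\Sigma P\Sigma=\Sigma-1_m1_m'/(1_m'\Sigma^{-1}1_m)$, this yields $\Sigma_\mu P\Sigma_\mu=\Sigma_\mu$.
\end{itemize}
For $u,v\in\mathrm{range}(\Sigma_\mu)$, the quantity $u'Xv$ does not depend on the choice of generalized inverse $X$. The columns of $G$ and of $M_\mu$ lie in that range, so you may replace $\Sigma_\mu^+$ by $P$ throughout. This gives $L^*=-(G'PG)^{-1}G'PM_\mu=-(G'PG)^{-1}G'P$, which is the $G$-block of the GLS rule on $[G\ 1_m]$ and visibly free of $W$. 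It also gives $V_{EB}=(G'PG)^{-1}$, which is exactly the displayed profiled expression.

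Alternatively, follow the paper. Apply your Step-4 decomposition to an arbitrary $B$ in the $W$-free class to get $B\Sigma B'=V_{EB}+KAK'$, where the cross term vanishes by your Step 3. Then $\Lambda_{EB}$ is the unique minimizer of a $W$-free problem, and Gauss--Markov identifies it with the GLS rule.

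A minor point: Proposition~\ref{prop:eb-rate}(i) gives only $\|\hat\Pi(\hat r-\hat\mu1_m)-\Pi M_\mu r_n\|=O_p(1)+O_p(m/\sqrt n)$. So it does not deliver an $o_p(1)$ remainder for $\hat\theta_{EB}$; that requires the extra conditions of Corollary~\ref{corr:bc-eb-clt}(ii). Since the proposition concerns leading terms, simply take them as defined, as the paper does.
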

\begin{proof}
See appendix.
\end{proof}
The risk improvement offered by the EB estimator is governed by the matrix $M\Sigma_{\mu}\Lambda'$. Writing $\Sigma_{\mu}=M_{\mu}VM_{\mu}'+\sigma^{2}M_{\mu}M_{\mu}'$ splits this matrix into a noise term and a dispersion term. Each of these terms provides a potential source of risk improvement. For the dispersion term to contribute to EB's advantage over BC, the model must be misspecified ($\sigma^{2}>0$) with $M\Sigma_{\mu}\Lambda'\neq0$. 

Since the noise term survives at $\sigma^{2}=0$, the proposition implies EB can improve on BC even under correct specification. Part (ii) of the proposition shows EB is asymptotically equivalent to efficient GMM applied to the centered moment conditions $M_{\mu}g(\theta)$ at every $\sigma^{2}\ge0$, regardless of the weighting matrix $W$. As a result, inefficient weighting inflates the leading-term variance of GMM and BC but not EB, widening the scope for risk improvements. 

Another potential source of improvements under correct specification stems from the structure of the moment noise matrix $V$. With efficient GMM weighting, the EB risk improvement vanishes under homoscedastic moment noise ($V\propto I_m$), while heteroscedastic moment noise generally leaves it nonzero. This pattern reflects that the choice $W=V^{-1}$ is suboptimal for the centered moments unless the noise is homoscedastic. 

A closely related variance ordering underlies the inadmissibility result of \textcite{brown1990ancillarity}, who considered the problem of estimating the intercept of a linear regression with random (ancillary) controls. In that context, as here, exploiting high-dimensional ancillary controls can reduce the risk of a low-dimensional target, even a scalar one. Indeed, the centered sample moment $\hat{r}-\hat{\mu}1_{m}$ is asymptotically ancillary for $\theta_0$, providing a formal connection with the framework of \textcite{brown1990ancillarity}.

\begin{rem}[Uniform integrability and moments]
Proposition~\ref{prop:variance-ordering} ranks $\hat{\theta}_{BC}$ and
$\hat{\theta}_{EB}$ by leading-term risk. When the errors $\sqrt{n}(\hat{\theta}_{EB}-\theta_{0})$ and
$\sqrt{n}(\hat{\theta}_{BC}-\theta_{0})$ are uniformly square-integrable, the two
estimators can be ranked in limiting MSE.
\end{rem}

\begin{rem}[Linear models and global misspecification]\label{rem:linear-global-short}
    When $g$ is affine in $\theta$ and the weighting matrix does not depend on $\theta_0$, the representations
    $\hat\theta_{BC}-\theta_0=\hat\Lambda\hat M_\mu\hat g(\theta_0)$ and
    $\hat\theta_{EB}-\theta_0=\hat\Lambda(I_m-\hat\Pi)\hat M_\mu\hat g(\theta_0)$ hold exactly. Since $\hat M_\mu 1_m=0$, both errors are invariant to common shifts $\hat g(\theta_0)\mapsto
    \hat g(\theta_0)+c\,1_m$. Consequently the common-mean component of the misspecification bias is
    purged, as with the differencing estimator of Section~\ref{sec:A-differencing-approach}. The variance ordering of Proposition~\ref{prop:variance-ordering} then holds exactly, regardless
    of the scale of the specification errors. However, the
    residual idiosyncratic bias is of the same order as that scale. Hence, consistency for $\theta_{0}$ still requires the specification errors to vanish as $n$ grows large.
    \end{rem}

\section{\protect\label{sec:distribution-theory}Distribution theory}

This section sharpens the results of Propositions~\ref{prop:bc-rate}
and~\ref{prop:eb-rate} to distributional convergence, which provides
a basis for misspecification-aware inference. The distributional convergence
results are established in Section \ref{subsec:CLTs}. Section \ref{subsec:Inference} discusses standard error estimation and inference. Section \ref{subsec:Diagnostic} develops two specification tests
that can be used to assess the plausibility of the exchangeability assumption.

\subsection{Central limit theorems \protect\label{subsec:CLTs}}

Define the centered moment vector $\eta_{n}:=r_{n}-\mu1_{m}=(b-\mu1_{m})+\varepsilon_{n}$. Both the BC and EB estimators studied below are, to leading order, studentized linear functions of $\eta_{n}$ formed from a centered sensitivity matrix. I first state a general central limit
theorem for any such estimator, then specialize it to the two of interest.

\subsubsection{General result}
Consider a generic estimator $\hat{\theta}_{\star}$ whose scaled error admits a linear expansion
$\sqrt{n}(\hat{\theta}_{\star}-\theta_{0})=\Lambda_{\star}\eta_{n}+\tilde{R}_{\star}$, where $\Lambda_{\star}$
is a $p\times m$ sensitivity matrix with $\Lambda_{\star}1_{m}=0$ and $\tilde{R}_{\star}$ is a remainder.
Write $V_{\star}:=\Lambda_{\star}\Sigma \Lambda_{\star}'$ for the variance of the leading term $\Lambda_{\star}\eta_{n}$.
The following assumption restricts how this leading-term variance grows with the sample size.
\begin{assumption}
\label{assu:lindeberg}The sensitivity matrix $\Lambda_{\star}$ satisfies $\Lambda_{\star}1_{m}=0$. Its
leading-term variance $V_{\star}$ and the realized variance $\sigma^{2}$ satisfy:

i) $V_{\star}$ and $\Lambda_{\star}V\Lambda_{\star}'$ are positive definite for sufficiently large $m$, and $\left\Vert \Lambda_{\star}\right\Vert _{2}=O(1)$.

ii) $(\Lambda_{\star}V\Lambda_{\star}')^{-1/2}\Lambda_{\star}\varepsilon_{n}\overset{d}{\rightarrow}N(0,I_{p})$.

iii) $\max_{1\le j\le m}\Lambda_{\star,j}'V_{\star}^{-1}\Lambda_{\star,j}\to0$,
where $\Lambda_{\star,j}$ denotes the $j$-th column of $\Lambda_{\star}$.

iv) $\sigma^{2}=O_{p}(1)$, and $\lambda_{\min}(V_{\star})$ is bounded away from zero, where $\lambda_{\min}\left(\cdot\right)$ denotes the smallest eigenvalue of the input matrix.
\end{assumption}
Condition (i) ensures that studentization is possible. Condition (ii)
is a high-level triangular-array central limit theorem for the noise vector $\Lambda_{\star}\varepsilon_{n}$, standardized by its own variance to a fixed $N(0,I_{p})$ limit. With i.i.d.\ moment contributions obeying the standard finite moment requirements
for cross-sectional GMM \parencite[e.g.,][]{newey1994large}, the
condition can be delivered by appeal to the multivariate Lindeberg--Feller
CLT. Under weakly dependent or time-series moment contributions, alternative
central limit theorems for mixingales or martingale-difference sequences
can deliver the result under analogous moment conditions. Condition
(iii) requires that no single moment dominate the studentized sensitivity
matrix $V_{\star}^{-1/2}\Lambda_{\star}$. This condition can equivalently be written
as $\max_{j}\Vert V_{\star}^{-1/2}\Lambda_{\star,j}\Vert^{2}\to0$, which
is the standard Lindeberg-type requirement for a central limit theorem
on weighted sums of exchangeable variables. 

Condition (iv) requires the realized variance to be bounded in probability and the leading-term variance to be nondegenerate. The eigenvalue floor ensures that first-stage noise, which enters the remainders at order $\sqrt{m/n}$, is asymptotically negligible after studentization. Because $\sigma^{2}\Lambda_{\star}Q\Lambda_{\star}'$ is positive semidefinite, $\lambda_{\min}(V_{\star})\ge\lambda_{\min}(\Lambda_{\star}V\Lambda_{\star}')$ at every realized variance. When $\Lambda_{\star}$ does not depend on the realized variance, condition (iii) and the eigenvalue floor in condition (iv) can be checked using the nonrandom matrix $\Lambda_{\star}V\Lambda_{\star}'$ even though $V_{\star}$ is random.

The following theorem shows that any estimator satisfying these conditions (with a negligible remainder) is asymptotically normal after studentization.
\begin{thm}
\label{thm:clt}Suppose $\sqrt{n}(\hat{\theta}_{\star}-\theta_{0})=\Lambda_{\star}\eta_{n}+\tilde{R}_{\star}$,
where $\Lambda_{\star}$ satisfies Assumption~\ref{assu:lindeberg} and $V_{\star}^{-1/2}\tilde{R}_{\star}=o_{p}(1)$.
Then under Assumptions~\ref{assu:random_b} and \ref{assu:regime},
\[
\sqrt{n}\,V_{\star}^{-1/2}\left(\hat{\theta}_{\star}-\theta_{0}\right)\overset{d}{\rightarrow}N\left(0,I_{p}\right).
\]
\end{thm}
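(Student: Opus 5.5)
The plan is to decompose the leading term into its two independent pieces and handle each with a separate limit theorem. Since $\Lambda_{\star}1_{m}=0$, we have $\Lambda_{\star}\eta_{n}=\Lambda_{\star}b+\Lambda_{\star}\varepsilon_{n}$, with $b$ entering only through $\Lambda_{\star}(b-\mu1_{m})$. Conditional on $(\mu,\sigma^{2})$, write $A_{n}:=\Lambda_{\star}(b-\mu1_{m})$ for the specification-error term and $B_{n}:=\Lambda_{\star}\varepsilon_{n}$ for the sampling term. By Assumption~\ref{assu:random_b}.i these are independent, and by Lemma~\ref{lem:cond-moments} $A_{n}$ has mean zero and variance $\sigma^{2}\Lambda_{\star}Q\Lambda_{\star}'=\sigma^{2}\Lambda_{\star}\Lambda_{\star}'$. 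The remainder is handled by Slutsky: $V_{\star}^{-1/2}\tilde R_{\star}=o_{p}(1)$ by hypothesis. It therefore suffices to show $V_{\star}^{-1/2}(A_{n}+B_{n})\overset{d}{\rightarrow}N(0,I_{p})$.

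The argument runs through characteristic functions. Fix $t\in\mathbb{R}^{p}$ and write $a_{n}:=V_{\star}^{-1/2}t$. By independence, $E[e^{i a_{n}'(A_{n}+B_{n})}]$ factors into the characteristic function of $a_{n}'A_{n}$ times that of $a_{n}'B_{n}$. Because $\|\Lambda_{\star}V\Lambda_{\star}'\|$ and $V_{\star}$ may not converge, I will argue along subsequences. After passing to a subsequence on which $\Lambda_{\star}V\Lambda_{\star}'$ and $\sigma^{2}\Lambda_{\star}\Lambda_{\star}'$, suitably normalized by $V_{\star}^{-1/2}$, converge to limits $P_{1}$ and $P_{2}$ with $P_{1}+P_{2}=I_{p}$, the target becomes showing the two factors converge to $e^{-t'P_{1}t/2}$ and $e^{-t'P_{2}t/2}$ respectively.

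The sampling factor converges by Assumption~\ref{assu:lindeberg}.ii. One caveat is that $\mathrm{Var}(\varepsilon_{n})=V_{n}$ rather than $V$, and $E[\varepsilon_n]$ is only asymptotically zero; Assumption~\ref{assu:regime}.vii together with $\Vert E\varepsilon_n\Vert\to0$ and the eigenvalue floor in Assumption~\ref{assu:lindeberg}.iv makes these discrepancies negligible.

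The specification-error factor is the main obstacle. Conditional on $(\mu,\sigma^{2})$, and since the proof of Lemma~\ref{lem:cond-moments} shows exchangeability is preserved, $b-\mu1_{m}$ is exchangeable. Conditioning further on the order statistics of $b$, it becomes a uniformly random permutation of a fixed centered vector $c$ with $\|c\|^{2}=(m-1)\sigma^{2}$. Then $a_{n}'A_{n}=\sum_{j}(\Lambda_{\star}'a_{n})_{j}c_{\pi(j)}$, a linear permutation statistic, and Hoeffding's combinatorial CLT applies: it is asymptotically normal with variance $\frac{1}{m-1}\|\Lambda_{\star}'a_n\|^2\|c\|^2=\sigma^2 a_n'\Lambda_\star\Lambda_\star' a_n$, provided the Hoeffding--Motoo Lindeberg condition holds. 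That condition is essentially $\max_{j}(\Lambda_{\star}'a_{n})_{j}^{2}/\|\Lambda_{\star}'a_{n}\|^{2}\cdot\max_{k}c_{k}^{2}/\|c\|^{2}\to0$.

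Verifying this condition is the step I expect to be hardest. The first factor is controlled by Assumption~\ref{assu:lindeberg}.iii. For the second, the bounded fourth moments give $\max_{k}c_{k}^{2}=O_{p}(\sqrt{m})$ (via $\max_k c_k^4\le\sum_k c_k^4=O_p(m)$), while $\|c\|^{2}\asymp m\sigma^{2}$. This requires either $\sigma^{2}$ bounded below or a separate treatment when $\sigma^{2}\to0$, in which case $P_{2}=0$ and the factor tends to $1$ trivially by Chebyshev, since its variance vanishes. If the second factor alone does not suffice, I would use the product form of the Lindeberg bound directly, with iii controlling the weights. Finally, I would average the conditional characteristic-function limit over the order statistics by dominated convergence, then combine the two factors and apply Cramér--Wold to conclude.
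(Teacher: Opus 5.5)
Your architecture matches the paper's proof. You split into independent specification and noise pieces, take the noise piece from Assumption~\ref{assu:lindeberg}.ii, and treat the specification piece as a linear permutation statistic conditional on the order statistics of $b$. You then use a second-moment bound when the realized variance is small and integrate the conditional characteristic function.

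The step you flag as hardest, however, fails as you first state it. Write $a:=\Lambda_{\star}'a_{n}$ for the weights. The condition $\frac{\max_{j}a_{j}^{2}}{\sum_{j}a_{j}^{2}}\cdot\frac{\max_{k}c_{k}^{2}}{\sum_{k}c_{k}^{2}}\to0$ does not imply asymptotic normality, because Hoeffding's max-form condition carries an extra factor of $m$. For example, let $a$ and $c$ each have about $\sqrt{m}$ nonzero entries equal to $\pm1$ in balanced numbers. Your product is then $1/m$, yet the statistic is a sum of a roughly Poisson$(1)$ number of random signs and is not asymptotically normal. With the factor $m$ restored, your bounds give $m\cdot o(1)\cdot O_{p}(m^{-1/2})$, which need not vanish because Assumption~\ref{assu:lindeberg}.iii supplies no rate. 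The bound $\max_{k}c_{k}^{2}\le(\sum_{k}c_{k}^{4})^{1/2}$ discards exactly the information you need.

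Your fallback is therefore not optional, and it is the paper's route. H\'ajek's form of the combinatorial CLT requires only that the two max-ratios vanish separately, which your bounds give when $\sigma^{2}$ is bounded below. It also requires a Lindeberg condition on $d_{jk}\propto a_{j}c_{k}$, which closes through fourth moments:
\[
\frac{1}{m}\sum_{j,k:\,|d_{jk}|>\omega}d_{jk}^{2}\le\frac{m\sum_{j}a_{j}^{4}\sum_{k}c_{k}^{4}}{\omega^{2}\left(\sum_{j}a_{j}^{2}\right)^{2}\left(\sum_{k}c_{k}^{2}\right)^{2}}\le\frac{\max_{j}a_{j}^{2}}{\sum_{j}a_{j}^{2}}\cdot\frac{m\sum_{k}c_{k}^{4}}{\omega^{2}(m-1)^{2}\sigma^{4}}.
\]
The second factor is $O_{p}(1)$ because $\sum_{k}c_{k}^{4}=O_{p}(m)$.

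Separately, $\sigma^{2}$ is random and $n$-dependent, so ``bounded below or tending to zero'' is not an exhaustive dichotomy. Conditioning on fixed values of $(\mu,\sigma^{2})$ would also need a conditional fourth-moment bound that the assumptions do not provide (Remark~\ref{rem:pointwise-conditional}). The paper instead fixes a threshold $h>0$. On $\{\sigma^{2}>h\}$ it shows the indicator-weighted H\'ajek conditions vanish in probability, hence almost surely along further subsequences. On $\{\sigma^{2}\le h\}$ it bounds the effect of the specification piece on the characteristic function by $O(\sqrt{h}+h)$. It then lets $h\downarrow0$ after $n\to\infty$. Your subsequence device should be recast in that form.
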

\begin{proof}
See appendix.
\end{proof}
The theorem establishes that the leading term
$\Lambda_{\star}\eta_{n}$ of such an estimator is asymptotically normal. This term splits into a noise part and a specification part.
The noise part $\Lambda_{\star}\varepsilon_{n}$ is covered by condition (ii). The specification
part $\Lambda_{\star}(b-\mu1_{m})$ is covered by a central limit theorem for permutations when the realized variance is bounded away from zero, and is negligible after studentization when the realized variance shrinks to zero. The permutation argument uses the exchangeability and bounded fourth moments of Assumption~\ref{assu:random_b} and the no-dominant-column condition of Assumption~\ref{assu:lindeberg}.iii.

Note that the statistic in Theorem~\ref{thm:clt} is self-normalizing---the studentizer $V_{\star}$ is formed at the realized $\sigma^{2}$---and the limit is unconditional, with the probability integrating over both $b$ and $\varepsilon_{n}$. The normal approximation does not deteriorate near the correctly specified boundary: the limit is $N(0,I_{p})$ regardless of whether the realized variance drifts toward zero.

\subsubsection{Specialization to BC and EB}
The BC and EB estimators both admit leading-term expansions of the sort described by Theorem~\ref{thm:clt}. The BC
sensitivity matrix is $\Lambda_{BC}:=\Lambda M_{\mu}$ and the EB sensitivity matrix is
$\Lambda_{EB}:=\Lambda(I_{m}-\Pi)M_{\mu}$, with leading-term variances
$V_{BC}:=\Lambda_{BC}\Sigma\Lambda_{BC}'$ and $V_{EB}:=\Lambda_{EB}\Sigma\Lambda_{EB}'$.
Both satisfy $\Lambda_{BC}1_{m}=\Lambda_{EB}1_{m}=0$.

Since $V_{BC}=\Lambda_{BC}V\Lambda_{BC}'+\sigma^{2}\Lambda_{BC}\Lambda_{BC}'$ retains the pure sampling-noise term $\Lambda_{BC}V\Lambda_{BC}'$ at every $\sigma^{2}$, it is nondegenerate whenever the noise term is. Establishing nondegeneracy of the EB variance is subtler, as shrinkage (unlike bias correction) removes variance. The following lemma shows that $V_{EB}$ is also nondegenerate at any level of the realized variance and for any admissible weighting.
\begin{lem}\label{lem:eb-floor}
Suppose Assumptions~\ref{assu:random_b} and \ref{assu:regime} hold, $\lambda_{\min}(V)\ge c_{V}>0$, and $G'V^{-1}G=O(1)$. Then, for every $\sigma^{2}\ge0$ and every weighting matrix admitted by Assumption~\ref{assu:regime},
\[
\lambda_{\min}(V_{EB})\ge\frac{1}{\lambda_{\max}(G'V^{-1}G)}.
\]
The bound is free of $\sigma^{2}$ and of the weighting matrix, and it is bounded away from zero uniformly in $m$.
\end{lem}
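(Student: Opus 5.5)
The plan is to read the bound off the closed form for $V_{EB}$ in Proposition~\ref{prop:variance-ordering}(ii), and then compare matrices in the Loewner order. Since $\lambda_{\min}(V)\ge c_{V}>0$ and $\sigma^{2}Q$ is positive semidefinite for every $\sigma^{2}\ge0$, the matrix $\Sigma=V+\sigma^{2}Q$ is positive definite. So the hypothesis $\lambda_{\min}(V)>0$ of Proposition~\ref{prop:variance-ordering}(ii) holds, and I would invoke that result to write
\[
V_{EB}=\left(G'\Sigma^{-1}G-\frac{G'\Sigma^{-1}1_{m}1_{m}'\Sigma^{-1}G}{1_{m}'\Sigma^{-1}1_{m}}\right)^{-1}.
\]
Because this expression does not involve $W$, the bound will automatically hold for every weighting matrix admitted by Assumption~\ref{assu:regime}. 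That takes care of the uniformity-in-$W$ claim.

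Next I would establish the chain
\[
G'\Sigma_{\mu}^{+}G\preceq G'\Sigma^{-1}G\preceq G'V^{-1}G .
\]
The first inequality holds because the subtracted term is $aa'/(1_{m}'\Sigma^{-1}1_{m})$ with $a:=G'\Sigma^{-1}1_{m}$. This is positive semidefinite, since $1_{m}'\Sigma^{-1}1_{m}>0$. The second inequality follows from $\Sigma\succeq V\succ0$, since matrix inversion is operator-antitone on positive definite matrices; conjugating by $G$ preserves the ordering.

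I would also check that $A:=G'\Sigma_{\mu}^{+}G$ is positive definite, so that $V_{EB}=A^{-1}$ is well defined. Here I would use Proposition~\ref{prop:variance-ordering}(ii) together with the identification condition $M1_{m}\neq0$ from Assumption~\ref{assu:regime}.vi. That condition says no column combination of $G$ is proportional to $1_{m}$, so the Cauchy--Schwarz inequality in the $\Sigma^{-1}$ inner product is strict for every direction $G x\neq0$. Given $0\prec A\preceq G'V^{-1}G$, inversion again reverses the order, yielding $V_{EB}=A^{-1}\succeq(G'V^{-1}G)^{-1}$. Taking smallest eigenvalues then gives
\[
\lambda_{\min}(V_{EB})\ge\lambda_{\min}\big((G'V^{-1}G)^{-1}\big)=1/\lambda_{\max}(G'V^{-1}G).
\]
None of these steps uses the value of $\sigma^{2}$, so the bound is free of $\sigma^{2}$. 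Finally, since $G'V^{-1}G=O(1)$ and $p$ is fixed, $\lambda_{\max}(G'V^{-1}G)$ is bounded, so the floor is bounded away from zero uniformly in $m$.

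The main obstacle is the positive definiteness of $G'\Sigma_{\mu}^{+}G$ and the legitimacy of the downdate identity when $\Sigma_{\mu}$ is singular. Both are inherited from Proposition~\ref{prop:variance-ordering}(ii), whose proof I am taking as given. If extra care is needed there, I would first check that $\Sigma_{\mu}^{+}=\Sigma^{-1}-\Sigma^{-1}1_{m}1_{m}'\Sigma^{-1}/(1_{m}'\Sigma^{-1}1_{m})$ holds when restricted to the range of $G$. That identity is the only place where the centering by $M_{\mu}$ matters.
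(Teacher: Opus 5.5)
Your proposal is correct and takes essentially the same route as the paper: invoke the profiled closed form of Proposition~\ref{prop:variance-ordering}(ii), note that the rank-one downdate is positive semidefinite and that $\Sigma\succeq V$, then use order reversal under inversion. The differences are cosmetic. You chain $G'\Sigma_{\mu}^{+}G\preceq G'\Sigma^{-1}G\preceq G'V^{-1}G$ before inverting once, whereas the paper inverts at each step. You also give an explicit strict Cauchy--Schwarz argument, using $1_{m}\notin\mathrm{col}(G)$, for positive definiteness of the inverted matrix, which the paper simply inherits from Proposition~\ref{prop:variance-ordering}(ii).
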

\begin{proof}
See appendix.
\end{proof}
The bound, which follows from standard Gauss--Markov reasoning, holds for every weighting matrix $W$, not only the efficient one. Thus, the EB estimator satisfies the nondegeneracy requirement of Assumption~\ref{assu:lindeberg}.iv.

The following corollary gives conditions under which the BC and EB expansions have an asymptotically negligible remainder, in which case Theorem~\ref{thm:clt} applies.
\begin{cor}
\label{corr:bc-eb-clt}Suppose Assumptions~\ref{assu:random_b}, \ref{assu:regime},
and \ref{assu:rate} hold.
\begin{enumerate}
\item[(i)] If $\Lambda_{BC}$ satisfies Assumption~\ref{assu:lindeberg}, then
$\sqrt{n}\,V_{BC}^{-1/2}(\hat{\theta}_{BC}-\theta_{0})\overset{d}{\rightarrow}N(0,I_{p})$.

\item[(ii)] If Assumption~\ref{assu:eb-reg} holds, $\lambda_{\min}(V)\ge c_{V}$ for a constant $c_{V}>0$, and $\Lambda_{EB}$ satisfies Assumption~\ref{assu:lindeberg}, then
$\sqrt{n}\,V_{EB}^{-1/2}(\hat{\theta}_{EB}-\theta_{0})\overset{d}{\rightarrow}N(0,I_{p})$.
\end{enumerate}
\end{cor}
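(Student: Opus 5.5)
Both parts reduce to Theorem~\ref{thm:clt}. Since Assumption~\ref{assu:lindeberg} is imposed on $\Lambda_{BC}$ and $\Lambda_{EB}$, the only remaining task is to write $\sqrt{n}(\hat{\theta}_{\star}-\theta_{0})=\Lambda_{\star}\eta_{n}+\tilde{R}_{\star}$ and show $V_{\star}^{-1/2}\tilde{R}_{\star}=o_{p}(1)$. Assumption~\ref{assu:lindeberg}.iv bounds $\lambda_{\min}(V_{\star})$ away from zero, so $\|V_{\star}^{-1/2}\|_{2}=O_{p}(1)$ and it suffices to prove $\tilde{R}_{\star}=o_{p}(1)$. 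The leading terms have the right form because $M_{\mu}1_{m}=0$ and $\Pi1_{m}=0$. Hence $\Lambda M_{\mu}r_{n}=\Lambda M_{\mu}\eta_{n}$ and $\Lambda(I_{m}-\Pi)M_{\mu}r_{n}=\Lambda_{EB}\eta_{n}$.

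\emph{BC.} Write
\[
\sqrt{n}(\hat{\theta}_{BC}-\theta_{0})=\sqrt{n}(\hat{\theta}-\theta_{0})-\hat{\mu}\hat{\Lambda}1_{m}.
\]
The GMM expansion gives $\sqrt{n}(\hat{\theta}-\theta_{0})=\Lambda r_{n}+o_{p}(1)$. The proof of Lemma~\ref{lem:mu_consistency} gives $\hat{\mu}=w'r_{n}+o_{p}(1)$, with the remainder of order $O_{p}(\sqrt{m/n})$. The correction therefore decomposes as
\[
\hat{\mu}\hat{\Lambda}1_{m}=(w'r_{n})\Lambda1_{m}+(\hat{\mu}-w'r_{n})\Lambda1_{m}+\hat{\mu}(\hat{\Lambda}-\Lambda)1_{m}.
\]
The middle term is $o_{p}(1)\cdot O(1)$ by Assumption~\ref{assu:rate}.i. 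For the last term, $\hat{\mu}=O_{p}(1)$ and $\|(\hat{\Lambda}-\Lambda)1_{m}\|\le\|\hat{\Lambda}-\Lambda\|_{2}\sqrt{m}=O_{p}(\sqrt{m/n})=o_{p}(1)$ by Assumption~\ref{assu:regime}.iii and i. The first term combines with $\Lambda r_{n}$ to give $\Lambda M_{\mu}r_{n}=\Lambda_{BC}\eta_{n}$. This proves (i).

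\emph{EB.} Add the shrinkage term. Since $\hat{\Pi}1_{m}=0$, the EB adjustment equals $\hat{\Lambda}\hat{\Pi}\hat{r}$, and I decompose it as
\[
\hat{\Lambda}\hat{\Pi}\hat{r}-\Lambda\Pi M_{\mu}r_{n}=(\hat{\Lambda}-\Lambda)\hat{\Pi}(\hat{r}-\hat{\mu}1_{m})+\Lambda\bigl[\hat{\Pi}(\hat{r}-\hat{\mu}1_{m})-\Pi M_{\mu}r_{n}\bigr].
\]
The first piece is bounded using $\|\hat{\Lambda}-\Lambda\|_{2}=O_{p}(n^{-1/2})$ and $\|\hat{\Pi}(\hat{r}-\hat{\mu}1_{m})\|=O_{p}(\sqrt{m})$, the latter from Proposition~\ref{prop:eb-rate}(i) plus $\|\Pi M_{\mu}r_{n}\|=O_{p}(\sqrt m)$. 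This gives $O_{p}(\sqrt{m/n})=o_{p}(1)$.

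The second piece is the main obstacle. Proposition~\ref{prop:eb-rate}(i) bounds its norm only by $O_{p}(1)+O_{p}(m/\sqrt{n})$, and multiplying by $\|\Lambda\|_{2}=O(1)$ does not give $o_{p}(1)$. I will therefore exploit the $p$-dimensional projection through $\Lambda$ and split the error into three sources: $\hat{\Pi}(s)-\Pi(s)$ at fixed $s$, the plug-in $\widehat{\sigma^{2}}$ versus $\sigma^{2}$, and $\hat{r}-\hat{\mu}1_{m}$ versus $M M_{\mu}r_{n}$.

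\begin{itemize}
\item For the first source, Assumption~\ref{assu:eb-reg}.ii gives a Frobenius/mean-square bound of order $m/\sqrt{n}$, uniform in $s$. I will instead control $\Lambda(\hat{\Pi}(s)-\Pi(s))M_{\mu}r_{n}$ directly. Since $\Lambda$ averages over $m$ coordinates and its rows have norm $O(1)$, I expect the first-stage errors in $\hat{M}$ and $\hat{V}$, which enter through $\hat{\Pi}$, to contribute $O_{p}(m/\sqrt{n})$ to the $p$-vector. Under $m^{2}/n\to0$ this is $o_{p}(1)$, which is exactly where Remark~\ref{rem:weaker-rate}'s rate is needed.
\item For the plug-in error, I will differentiate $s\mapsto\Lambda\Pi(s)M_{\mu}r_{n}$. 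Assumption~\ref{assu:eb-reg}.i,iii together with $\lambda_{\min}(V)\ge c_{V}$ bound this derivative by $O_{p}(1)$. Hence the error is $O_{p}(|\widehat{\sigma^{2}}-\sigma^{2}|)=o_{p}(1)$ by Lemma~\ref{lem:sigma-consistency}.
\item For the residual error, $\hat{r}-M r_{n}=O_{p}(\sqrt{m/n})\cdot O_{p}(\sqrt m)$ in norm, and $\hat{\mu}-w'r_{n}$ is handled as in the BC case.
\end{itemize}

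Combining the three sources gives $\tilde{R}_{EB}=o_{p}(1)$, and Theorem~\ref{thm:clt} yields (ii).
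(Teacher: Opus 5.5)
Your architecture is the paper's: reduce both parts to Theorem~\ref{thm:clt}, use the floor in Assumption~\ref{assu:lindeberg}.iv to drop $V_{\star}^{-1/2}$, and show $\tilde R_\star=o_p(1)$. The BC part matches the paper essentially line by line. For EB, you correctly isolate the $\widehat{\sigma^{2}}$-correction $\Lambda(\Pi(\widehat{\sigma^{2}})-\Pi(\sigma^{2}))M_{\mu}\eta_{n}$ as the only piece that Proposition~\ref{prop:eb-rate} leaves at $O_p(1)$. Your first bullet needs no extra work: Assumption~\ref{assu:eb-reg}.ii with Markov and $\|\Lambda\|_2=O(1)$ already gives $O_p(m/\sqrt n)$.

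\textbf{The gap: your second bullet is asserted, and ordinary norm bounds do not give it.} This bullet is the whole content of part (ii). The derivative is $\Lambda(I_m-\Pi(s))M_\mu M_\mu'M'A(s)^{+}MM_\mu\eta_n$, with $A(s):=M\Sigma_\mu(s)M'$. Assumption~\ref{assu:eb-reg}.i gives only $\|A(s)^{+}\|_2=O(\sqrt m)$ and $\|A(s)^{+}MM_\mu\eta_n\|=O_p(\sqrt m)$. So the derivative is $O_p(\sqrt m)$ and the correction is $O_p(1)$, which is exactly the bound you were trying to beat.

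The missing idea is how the floor $\lambda_{\min}(V)\ge c_V$ enters. Set $\Lambda_{EB}(s):=\Lambda(I_m-\Pi(s))M_\mu$ and $K(s):=M_\mu'M'A(s)^{+}MM_\mu$. The projected derivative factors as $\Lambda_{EB}(s)K(s)\eta_n$. Moreover $K(s)(V+sQ)K(s)=K(s)$, so $K(s)\succeq c_VK(s)^2$ and $\|K(s)\|_2\le 1/c_V$. Hence the $p$-vector has bounded covariance even though $A(s)^{+}$ is of order $\sqrt m$. The paper uses exactly this at $s=\sigma^2$, via $K\Sigma K=K$ and the trace bound $p/c_V^2$.

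\textbf{Two further steps are also missing.}

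First, $\widehat{\sigma^{2}}$ is data-dependent, so a pointwise-in-$s$ bound does not yield $O_p(|\widehat{\sigma^{2}}-\sigma^{2}|)$; you need control on the random segment between $\sigma^2$ and $\widehat{\sigma^{2}}$. The paper differentiates the integrand once more and bounds that derivative by $O_p(m^{3/4})$, using $\|A(s)^{+}\|_2=O(\sqrt m)$ and the floor. The integrand is then $O_p(m^{1/4})$ on the segment, and the correction is $O_p(m^{-1/4})=o_p(1)$. Alternatively, since $\|K(s)\|_2\le 1/c_V$ holds for every $s\ge0$, you could condition on the order statistics, integrate over a deterministic window of width $C/\sqrt m$ around $\sigma^2$, and apply Fubini.

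Second, your third bullet pushes $\hat r-Mr_n$, of norm $O_p(m/\sqrt n)$, through $\hat\Pi$. That requires $\|\hat\Pi\|_2=O_p(1)$, which you do not establish. The crude bound $\|\hat A(s)^{+}\|_2=O_p(\sqrt m)$ gives $O_p(m^{3/2}/\sqrt n)$, which need not vanish when $m^2/n\to0$. The paper avoids this through $\hat\Pi\hat M=\hat\Pi$ and $\hat\Pi1_m=0$, so only $u_n=\hat M\rho_n$ (norm $O_p(\sqrt{m/n})$) is projected. Alternatively, $\sup_{s}\|\hat\Pi(s)\|_2=O_p(1)$ follows from the second part of Assumption~\ref{assu:eb-reg}.ii, the floor, and Assumption~\ref{assu:eb-reg}.iii.
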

\begin{proof}
See appendix.
\end{proof}
Part (ii) adds two conditions. Assumption~\ref{assu:eb-reg} controls the first-stage error in the shrinkage operator. The eigenvalue floor $\lambda_{\min}(V)\ge c_{V}$ controls the $\widehat{\sigma^{2}}$-correction. Since $\Sigma-V=\sigma^{2}Q$ is positive semidefinite, the same floor holds for $\lambda_{\min}(\Sigma)$ at every realized variance. The proof also uses the floor on $\lambda_{\min}(V_{EB})$ in Assumption~\ref{assu:lindeberg}.iv, which Lemma~\ref{lem:eb-floor} delivers uniformly in $\sigma^{2}\ge0$ when $G'V^{-1}G=O(1)$. Both conditions can be checked using sample analogs.

\subsection{Inference \protect\label{subsec:Inference}}

Feasible inference can be conducted using plug-in versions of the leading-term variances.
With $\hat{M}_{\mu}$ and $\hat{\Pi}$ as defined in Section~\ref{sec:bias-estimation},
let $\hat{\Lambda}_{BC}:=\hat{\Lambda}\hat{M}_{\mu}$
and $\hat{\Lambda}_{EB}:=\hat{\Lambda}\left(I_{m}-\hat{\Pi}\right)\hat{M}_{\mu}$.
The feasible variance estimators are $\hat{V}_{BC}:=\hat{\Lambda}_{BC}\hat{\Sigma}\hat{\Lambda}_{BC}'$
and $\hat{V}_{EB}:=\hat{\Lambda}_{EB}\hat{\Sigma}\hat{\Lambda}_{EB}'$,
with $\hat{\Sigma}:=\hat{V}+\widehat{\sigma^{2}}Q$.
\begin{cor}
\label{corr:feasible-inference}Under the assumptions of Corollary~\ref{corr:bc-eb-clt}(i),
\[
\sqrt{n}\,\hat{V}_{BC}^{-1/2}\left(\hat{\theta}_{BC}-\theta_{0}\right)\overset{d}{\rightarrow}N\left(0,I_{p}\right).
\]
Under the assumptions of Corollary~\ref{corr:bc-eb-clt}(ii),
\[
\sqrt{n}\,\hat{V}_{EB}^{-1/2}\left(\hat{\theta}_{EB}-\theta_{0}\right)\overset{d}{\rightarrow}N\left(0,I_{p}\right).
\]
\end{cor}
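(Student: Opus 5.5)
The plan is to reduce each statement to Corollary~\ref{corr:bc-eb-clt} by showing that the feasible variance estimators are ratio-consistent. Concretely, I aim for
\[
V_{\star}^{-1/2}\hat{V}_{\star}V_{\star}^{-1/2}\overset{p}{\rightarrow}I_{p},\qquad \star\in\{BC,EB\}.
\]
Given this, write
\[
\sqrt{n}\,\hat{V}_{\star}^{-1/2}(\hat{\theta}_{\star}-\theta_{0})=\bigl(\hat{V}_{\star}^{-1/2}V_{\star}^{1/2}\bigr)\,\sqrt{n}\,V_{\star}^{-1/2}(\hat{\theta}_{\star}-\theta_{0}).
\]
The first factor tends to $I_{p}$ in probability, by continuity of the matrix square root and inverse at $I_{p}$. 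The second factor is $N(0,I_{p})$ in the limit by Corollary~\ref{corr:bc-eb-clt}, so Slutsky's lemma finishes the argument. Because $\lambda_{\min}(V_{\star})$ is bounded away from zero by Assumption~\ref{assu:lindeberg}.iv (and, for EB, by Lemma~\ref{lem:eb-floor}), ratio-consistency reduces to the absolute statement $\|\hat{V}_{\star}-V_{\star}\|_{2}=o_{p}(1)$.

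For this absolute statement I use the telescoping decomposition
\[
\hat{V}_{\star}-V_{\star}=(\hat{\Lambda}_{\star}-\Lambda_{\star})\hat{\Sigma}\hat{\Lambda}_{\star}'+\Lambda_{\star}(\hat{\Sigma}-\Sigma)\hat{\Lambda}_{\star}'+\Lambda_{\star}\Sigma(\hat{\Lambda}_{\star}-\Lambda_{\star})'.
\]
Each term is bounded using the following facts:
\begin{itemize}
\item $\|\hat{\Sigma}-\Sigma\|_{2}\le\|\hat{V}-V\|_{2}+|\widehat{\sigma^{2}}-\sigma^{2}|=o_{p}(1)$, by Assumption~\ref{assu:regime}.iii and Lemma~\ref{lem:sigma-consistency}, using $\|Q\|_{2}=1$.
\item $\|\Sigma\|_{2}\le1+\sigma^{2}=O_{p}(1)$.
\item $\|\Lambda_{\star}\|_{2}=O(1)$, by Assumption~\ref{assu:lindeberg}.i.
\end{itemize}
It therefore suffices to show $\|\hat{\Lambda}_{\star}-\Lambda_{\star}\|_{2}=o_{p}(1)$.

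For BC, expand
\[
\hat{\Lambda}\hat{M}_{\mu}-\Lambda M_{\mu}=(\hat{\Lambda}-\Lambda)\hat{M}_{\mu}-\Lambda1_{m}(\hat{w}-w)'.
\]
The vector $1_{m}$ has norm $\sqrt{m}$, so neither $\|M_{\mu}\|_{2}$ nor $\|1_{m}\|\,\|\hat{w}-w\|$ is trivially bounded. I handle the two pieces separately:
\begin{itemize}
\item For the first piece, $\|\hat{\Lambda}1_{m}\|=O_{p}(1)$ (as shown in the proof of Proposition~\ref{prop:bc-rate}) and $\|\hat{w}\|=O(m^{-1/2})$ by Assumption~\ref{assu:regime}.v--vi. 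Hence $(\hat{\Lambda}-\Lambda)\hat{M}_{\mu}=(\hat{\Lambda}-\Lambda)-(\hat{\Lambda}-\Lambda)1_{m}\hat{w}'$. Its norm is $O_{p}(n^{-1/2})+O_{p}(\sqrt{m/n})\,O(m^{-1/2})=o_{p}(1)$.
\item For the second piece, $\|\Lambda1_{m}\|=O(1)$ by Assumption~\ref{assu:rate}.i. Perturbing the numerator $M'M1_{m}$ and the denominator $\|M1_{m}\|^{2}$ gives $\|\hat{w}-w\|=O_{p}(\sqrt{m/n}\cdot\sqrt{m}/m)=O_{p}(n^{-1/2})$.
\end{itemize}

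For EB I additionally need $\|\Lambda(\hat{\Pi}-\Pi)M_{\mu}\|_{2}=o_{p}(1)$, with $\hat{\Pi}=\hat{\Pi}(\widehat{\sigma^{2}})$. I split this as $\hat{\Pi}(\widehat{\sigma^{2}})-\Pi(\widehat{\sigma^{2}})$ plus $\Pi(\widehat{\sigma^{2}})-\Pi(\sigma^{2})$.

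The first piece is the one I expect to be the main obstacle. Assumption~\ref{assu:eb-reg}.ii is stated in Frobenius norm at order $m^{2}/n$ and uniformly in $s$, so I invoke it directly:
\[
\|(\hat{\Pi}(s)-\Pi(s))\Sigma_{\mu}^{1/2}\|_{F}=O_{p}(m/\sqrt{n})=o_{p}(1).
\]
This controls the relevant bilinear form $\Lambda(\hat{\Pi}-\Pi)\Sigma_{\mu}(\cdot)'\Lambda'$, since $\|\Lambda\|_{2}=O(1)$. So I would bound the variance difference directly through such $\Sigma_{\mu}^{1/2}$-weighted products, rather than through $\|\hat{\Lambda}_{\star}-\Lambda_{\star}\|_{2}$ alone.

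For the second piece, I use continuity of $s\mapsto\Pi(s)$. I would write $\Pi(s)$ through $(M\Sigma_{\mu}(s)M')^{+}$ on the fixed rank-$(m-p-1)$ subspace. The resolvent identity then gives a difference of order $|\widehat{\sigma^{2}}-\sigma^{2}|$ times $\|(M\Sigma_{\mu}(0)M')^{+}\|_{2}$-type factors. Assumption~\ref{assu:eb-reg}.i, together with the floor $\lambda_{\min}(V)\ge c_{V}$, controls these factors, and Assumption~\ref{assu:eb-reg}.iii bounds $(I_{m}-\Pi(s))M_{\mu}$ uniformly. Combined with the rate $O_{p}(m^{-1/2})+O_{p}(\sqrt{m/n})$ from Lemma~\ref{lem:sigma-consistency}, this should give $o_{p}(1)$ after multiplication by $\Lambda$.

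Finally, I assemble the three telescoped terms and confirm $\|\hat{V}_{EB}-V_{EB}\|_{2}=o_{p}(1)$. Slutsky's lemma then gives the EB statement exactly as for BC.
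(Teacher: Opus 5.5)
Your reduction to $\|\hat{V}_{\star}-V_{\star}\|_{2}=o_{p}(1)$ through the eigenvalue floors and Slutsky's lemma matches the paper, and so does your BC argument.

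The gap is in the EB step you treat as routine: the $\widehat{\sigma^{2}}$-dependence of the shrinkage operator. The first-stage piece, which you flag as the main obstacle, is disposed of directly by the Frobenius clause of Assumption~\ref{assu:eb-reg}.ii. Write $A(s):=M\Sigma_{\mu}(s)M'$. The resolvent identity on the common range from Lemma~\ref{lem:rank-centered} gives exactly
\[
\Pi(\widehat{\sigma^{2}})-\Pi(\sigma^{2})=(\widehat{\sigma^{2}}-\sigma^{2})\,\bigl(I_{m}-\Pi(\sigma^{2})\bigr)M_{\mu}M_{\mu}'M'A(\widehat{\sigma^{2}})^{+}M .
\]
By Assumption~\ref{assu:eb-reg}.iii, its operator norm is therefore of order $|\widehat{\sigma^{2}}-\sigma^{2}|\,\|A(0)^{+}\|_{2}$. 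Assumption~\ref{assu:eb-reg}.i alone gives only $\|A(0)^{+}\|_{2}\le\mathrm{tr}((A(0)^{+})^{2})^{1/2}=O(\sqrt{m})$. Lemma~\ref{lem:sigma-consistency} gives only $|\widehat{\sigma^{2}}-\sigma^{2}|=O_{p}(m^{-1/2})$. The product is $O_{p}(1)$, not $o_{p}(1)$: the slow rate of $\widehat{\sigma^{2}}$ exactly offsets the weak-direction blow-up that Assumption~\ref{assu:eb-reg}.i permits. You also cite the floor $\lambda_{\min}(V)\ge c_{V}$, but you do not say how it improves the bound. The floor constrains $V$, not the possible shrinkage of directions by $MM_{\mu}$, so as written your ``should give $o_p(1)$'' is unsupported.

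The paper avoids bounding $\partial\Pi(s)/\partial s$ in operator norm. Instead it bounds the studentized, $\Sigma^{1/2}$-weighted operator $V_{EB}^{-1/2}\Lambda(\partial\Pi(s)/\partial s)\Sigma^{1/2}$. At $s=\sigma^{2}$ this equals $V_{EB}^{-1/2}\Lambda_{EB}K\Sigma^{1/2}$ with $K:=M_{\mu}'M'A(\sigma^{2})^{+}MM_{\mu}$. The identity $K\Sigma K=K$ together with the floor makes this $O(1)$. An $O(\sqrt{m})$ derivative bound then controls its variation over the $O_{p}(m^{-1/2})$ segment.

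Your operator-norm route can be rescued, and is then simpler than the paper's, but you must supply the fact that makes the floor act on $A(s)^{+}$.
\begin{itemize}
\item Because $M$ is idempotent, it is the identity on $\mathrm{col}(M)$. Hence $MM_{\mu}=(I_{m}-P)M$ is the identity on $\mathrm{col}(M)\cap(M1_{m})^{\perp}$, a subspace of dimension $m-p-1=\mathrm{rank}(MM_{\mu})$.
\item Every nonzero singular value of $MM_{\mu}$ is therefore at least one.
\item Since $A(s)\succeq A(0)\succeq c_{V}(MM_{\mu})(MM_{\mu})'$ on a common range, $\|A(s)^{+}\|_{2}\le1/c_{V}$ for all $s\ge0$. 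The display above is then $O_{p}(m^{-1/2})$.
\end{itemize}

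The same floor is needed for the first-stage piece, where you do not invoke it. Let $\delta\Pi:=\hat{\Pi}(\widehat{\sigma^{2}})-\Pi(\widehat{\sigma^{2}})$. Then $\delta\Pi=\delta\Pi M_{\mu}$, so $\delta\Pi\Sigma_{\mu}\delta\Pi'=\delta\Pi\Sigma\delta\Pi'\succeq c_{V}\,\delta\Pi\delta\Pi'$. This gives $\|\delta\Pi\|_{2}\le c_{V}^{-1/2}\sup_{s}\|(\hat{\Pi}(s)-\Pi(s))\Sigma_{\mu}^{1/2}\|_{F}=O_{p}(m/\sqrt{n})$. With this bound your original plan of showing $\|\hat{\Lambda}_{EB}-\Lambda_{EB}\|_{2}=o_{p}(1)$ also goes through.

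Finally, your decomposition of $\hat{\Lambda}_{EB}-\Lambda_{EB}$ omits the term $(\hat{\Lambda}-\Lambda)\hat{\Pi}$. It is $O_{p}(n^{-1/2})$ once you note $\sup_{s}\|\Pi(s)\|_{2}\le\|M_{\mu}\|_{2}+\sup_{s}\|(I_{m}-\Pi(s))M_{\mu}\|_{2}=O(1)$.
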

\begin{proof}
See appendix.
\end{proof}
The corollary adds no conditions to those of Corollary~\ref{corr:bc-eb-clt}. The proof replaces $V_{\star}$ by its plug-in estimate and shows that the estimation errors in $\hat{\Lambda}_{\star}$ and $\hat{\Sigma}$ are negligible after studentization. The eigenvalue floor in Assumption~\ref{assu:lindeberg}.iv does the work. Lemma~\ref{lem:eb-floor} supplies the EB floor. For BC, $V_{BC}-\sigma^{2}\Lambda_{BC}\Lambda_{BC}'=\Lambda_{BC}V\Lambda_{BC}'$ is positive semidefinite at every $\sigma^{2}\ge0$, and $V_{BC}$ is bounded away from zero whenever $V$ is bounded away from zero on the row space of $\Lambda_{BC}$. The same floors give $\|V_{BC}^{-1/2}\Lambda_{BC}\|_{2}=O(1)$ and $\|V_{EB}^{-1/2}\Lambda_{EB}\|_{2}=O(1)$, which control the $\widehat{\sigma^{2}}$ contribution to the variance estimates.

Asymptotic $100(1-\tau)\%$ confidence intervals for the $k$-th
component of $\theta_{0}$ take the form $\hat{\theta}_{BC,k}\pm z_{1-\tau/2}\sqrt{\hat{V}_{BC,kk}/n}$
and $\hat{\theta}_{EB,k}\pm z_{1-\tau/2}\sqrt{\hat{V}_{EB,kk}/n}$,
respectively. Under the regularity conditions of Corollary~\ref{corr:feasible-inference}, these intervals satisfy
\[
\begin{aligned}\Pr\!\left[\theta_{0,k}\in\hat{\theta}_{BC,k}\pm z_{1-\tau/2}\sqrt{\hat{V}_{BC,kk}/n}\,\Big|\,\mu,\sigma^{2}\right] & \overset{p}{\rightarrow}1-\tau,\\
\Pr\!\left[\theta_{0,k}\in\hat{\theta}_{EB,k}\pm z_{1-\tau/2}\sqrt{\hat{V}_{EB,kk}/n}\,\Big|\,\mu,\sigma^{2}\right] & \overset{p}{\rightarrow}1-\tau,
\end{aligned}
\]
for each $k=1,\ldots,p$ and $\tau\in(0,1)$, where the probability
is taken over the joint distribution of $(b,\varepsilon_{n})$ given
$(\mu,\sigma^{2})$. Conditional coverage is random because the hyperparameters are computed from $b$. The convergence holds in probability: hyperparameter values at which coverage departs from $1-\tau$ can exist but have vanishing probability. Averaging over the hyperparameters gives unconditional coverage of $1-\tau$. The contributions $\widehat{\sigma^{2}}\hat{\Lambda}_{BC}\hat{\Lambda}_{BC}'$
and $\widehat{\sigma^{2}}\hat{\Lambda}_{EB}\hat{\Lambda}_{EB}'$ to the
variances make these intervals \textit{misspecification-aware}: they
capture the conditional cross-moment variability in $b$ that standard
GMM intervals ignore.
\begin{rem}[Uniform conditional coverage]\label{rem:pointwise-conditional}
Theorem~\ref{thm:clt} delivers conditional coverage in probability. Coverage at every value of $(\mu,\sigma^{2})$ requires additional restrictions ruling out the possibility that a large $\sigma^{2}$ reflects a single large specification error. A simple sufficient condition is the conditional moment bound $E[(b_{j}-\mu)^{4}\mid\mu,\sigma^{2}]\le C$ for some constant $C$. 
\end{rem}

\subsection{Specification tests}\label{subsec:Diagnostic}

It is useful to have a diagnostic for whether the exchangeability restriction
of Assumption~\ref{assu:random_b}.ii is plausible in a given application. The machinery developed thus far suggests two simple tests. 

\subsubsection{A Hausman-type test}
A first test for violations of exchangeability examines whether the shrinkage adjustment has the magnitude one would expect from exchangeable specification errors. Under the conditions of Corollary~\ref{corr:bc-eb-clt}(ii), the difference between the EB and BC estimators can be written:
\[
\hat{\theta}_{EB}-\hat{\theta}_{BC}=-\frac{1}{\sqrt{n}}\hat{\Lambda}\hat{\Pi}\left(\hat{r}-\hat{\mu}1_{m}\right) = -\frac{1}{\sqrt{n}} \Lambda\Pi M_{\mu}\eta_{n} + o_p(1/\sqrt{n}).
\]
Exchangeability restricts the term $M_{\mu}\eta_{n}$ to have mean zero, since $E[M_{\mu}b\mid\mu,\sigma^{2}]=\mu M_{\mu}1_{m}=0$. The leading-term variance of this contrast is the difference term of Proposition~\ref{prop:variance-ordering}(i):
\[
\Lambda\Pi\Sigma_{\mu}\Pi'\Lambda'=V_{BC}-V_{EB}.
\]
Hence, as in \textcite{hausman1978specification}, the leading-term variance of the difference equals the difference of the variances. This property follows from the EB estimator's representation in Proposition~\ref{prop:variance-ordering}(ii) as efficient GMM on the centered moment conditions.

These observations motivate the Hausman-type test statistic
\begin{align} \label{stat:Hausman}
    n\left(\hat{\theta}_{EB}-\hat{\theta}_{BC}\right)'\left(\hat{V}_{BC}-\hat{V}_{EB}\right)^{-1}\left(\hat{\theta}_{EB}-\hat{\theta}_{BC}\right),
\end{align}
where $\hat{V}_{BC}$ and $\hat{V}_{EB}$ are the variance estimators of Section~\ref{subsec:Inference}. The estimated variance difference is positive semidefinite by construction. When Assumption~\ref{assu:lindeberg} holds for the difference sensitivity $-\Lambda\Pi M_{\mu}$, Theorem~\ref{thm:clt} implies the statistic is approximately $\chi^{2}(p)$ under exchangeability. The test requires $V_{BC}-V_{EB}$ to be nondegenerate.

The statistic in \eqref{stat:Hausman} isolates the restriction that EB exploits beyond BC. Under exchangeability the shrinkage adjustment subtracts predictable noise and nothing else. In contrast, when the entries of $E[b]$ are not all equal, the shrinkage adjustment shifts the difference by $-\Lambda\Pi M_{\mu}E[b]/\sqrt{n}$ to leading order. Hence, the null being tested is $H_0:\Lambda\Pi M_{\mu}E[b]=0$, while the alternative is $H_1:\Lambda\Pi M_{\mu}E[b]\neq 0$. Note that DGPs in $H_1$ will not induce a leading-term bias in $\hat \theta_{BC}$ if $\Lambda M_{\mu}E[b]=0$. 

While \eqref{stat:Hausman} offers a computationally convenient diagnostic, the family of violations entertained by $H_1$ is rather narrow, indicating that this test will often be unable to detect interesting violations of exchangeability. I therefore consider a second test of exchangeability based on observed moment features.

\subsubsection{A test using moment features}
Exchangeability implies that the specification errors $b_{j}$ bear no systematic
relationship to observed features of the moment conditions that vary across $j$.
For example, in cell-based IV designs the cells typically differ in size. A significant covariance between the specification errors and cell size would therefore violate the implication of Assumption~\ref{assu:random_b}.ii that $E[b]$ is permutation invariant. The
distribution theory of Section~\ref{subsec:CLTs} suggests a simple test of this
sort of restriction.

Let $\Xi$ be an $m\times d$ matrix collecting $d$ observed moment-level features,
its $j$-th row holding the features of moment $j$, centered so that
$\Xi'1_{m}=0$. Under Assumption~\ref{assu:random_b}, conditioning on the features
leaves
\[
E\left[\Xi'b\mid\Xi\right]=\mu\,\Xi'1_{m}=0.
\]
The sample counterpart is $\Xi'\tilde{r}$, where
$\tilde{r}:=\hat{r}-\hat{\mu}\hat{M}1_{m}=\hat{M}\hat{M}_{\mu}\hat{r}$ is the
residual $\hat{r}$ after removing the estimated mean along $\hat{M}1_{m}$. This
vector should be close to zero under exchangeability. 

Theorem~\ref{thm:clt} and
the arguments used in its proof imply that $\Xi'\tilde{r}$ is asymptotically
normal with variance $\Omega:=\Xi'MM_{\mu}\Sigma M_{\mu}'M'\Xi$ provided that no single moment dominates the weights $\Xi'MM_{\mu}$. This observation motivates the
test statistic
\begin{align}\label{stat:feature}
    \left(\Xi'\tilde{r}\right)'\hat{\Omega}^{-1}\left(\Xi'\tilde{r}\right),\qquad
    \hat{\Omega}:=\Xi'\hat{M}\hat{M}_{\mu}\hat{\Sigma}\hat{M}_{\mu}'\hat{M}'\Xi.    
\end{align}
Under the null this statistic is approximately $\chi^{2}(d)$, and each feature
$a\in\{1,\dots,d\}$ can be assessed individually via the $t$-statistic
$\left(\Xi'\tilde{r}\right)_{a}/\sqrt{\hat{\Omega}_{aa}}$. The variance
$\hat{\Omega}$ treats $\Xi$ as fixed: this is exact when the features are
non-random and a leading-order approximation when they are precisely estimated
design quantities.

\begin{rem}[What the tests cannot detect]\label{rem:test-power}
Because $MM_{\mu}$ annihilates $\mathrm{span}(1_{m},G)$, the statistic $\Xi'\tilde{r}$ is blind to any component of the specification error lying in the span of the constant and the columns of the Jacobian. The statistic \eqref{stat:Hausman} shares this blind spot, as $M_{\mu}1_{m}=0$, $M_{\mu}G=G$, and $\Pi G=0$. A specification error proportional to a column of
$G$ is observationally equivalent to a shift in $\theta$. Hence, in the linear IV setting, neither test has power to detect a linear relationship between the specification errors and the first stage $\pi$.
\end{rem}

\section{Monte Carlo Evidence \protect\label{sec:monte-carlo}}

This section presents Monte Carlo evidence on the finite-sample performance
of the estimators developed above. The simulations are based on the
overidentified instrumental variables model of Example \ref{exa:IV_example}.
The results detail the relative performance of the GMM, BC, and EB
estimators under local misspecification. 

\subsection{Simulation design\protect\label{subsec:Simulation-design}}

For $i=1 \dots n$, the data generating process (DGP) is 
\begin{align*}
Y_{i} & =\alpha_{0}+\beta_{0}T_{i}+\frac{1}{\sqrt{n}}\left(\sum_{j=1}^{m}b_{j}Z_{ij}+b_{m+1}\right)+e_{i},\\
T_{i} & =\mathbf{1}\{Z_{i}^{\prime}\pi^{*}+\nu_{i}>0\},\quad e_{i}=\sigma_{e}\left(\rho\nu_{i}+\sqrt{1-\rho^{2}}\eta_{i}\right),\\
\left(b_{1},\dots,b_{m+1}\right)' & \sim t_{5}(\bar{\mu}\mathbf{1}_{m+1},\,\tfrac{3}{5}\bar{\sigma}^{2}I_{m+1}),\quad\nu_{i}\overset{\mathrm{i.i.d.}}{\sim}N(0,1),\quad\eta_{i}\overset{\mathrm{i.i.d.}}{\sim}N(0,1),\\
 & \quad\quad\quad\quad\quad\quad\quad\nu_{i}\perp\eta_{i}\perp Z_{i}\perp\left(b_{1},\dots,b_{m+1}\right).
\end{align*}
The parameter  $\rho=\mathrm{Corr}(e_{i},\nu_{i})$ governs the degree
of endogeneity of  $T_{i}$, while the scale parameter  $\sigma_{e}$
controls the variance of the structural error. The vector $Z_{i}=(Z_{i1},\ldots,Z_{im})'$
is comprised of mutually exclusive binary instruments $Z_{ij}\in\{0,1\}$
obeying  $Z_{ij}Z_{i\ell}=0$ for  $j\neq\ell$. I specify
\[
\left(\begin{array}{c}
1-\sum_{j=1}^{m}Z_{ij}\\
Z_{i}
\end{array}\right)\sim\text{Multinomial}\left(1,\,\left(\begin{array}{c}
1-\sum_{j=1}^{m}q_{j}\\
q
\end{array}\right)\right),\quad q_{j}>0\text{ for \ensuremath{j=1,\dots,m}}.
\]

This design can be thought of as a setting where a single fundamental
instrument $\sum_{j=1}^{m}Z_{ij}$ has been fully interacted with
 $m$ group indicators. The term  $b_{m+1}$ in the DGP is a common baseline direct effect. Thus, the complete instrument set for the parameter vector $(\alpha_{0},\beta_{0})$ would be $\left(1,Z_{i}'\right)'=\left(1,Z_{i1},\ldots,Z_{im}\right)'$. To simplify the analysis, I will treat the intercept $\alpha_{0}$ as nuisance and target the slope $\beta_{0}=\theta_0$, making $p=1$.

Note that, unlike the judge-IV design discussed in Section \ref{subsec:Intercepts-in-linear}
--- where every observation belongs to exactly one of the $m$ instrument
categories and $\sum_{j=1}^{m}Z_{ij}=1$ --- here $\sum_{j=1}^{m}Z_{ij}\in\{0,1\}$
because the instruments are interactions of a binary fundamental instrument
with group indicators. Observations with $\sum_{j=1}^{m}Z_{ij}=0$
break the adding-up constraint.

The excludability
violations $\left(b_{1},\dots,b_{m+1}\right)$ follow a multivariate  $t$ distribution with 5 degrees
of freedom, location  $\bar{\mu}\mathbf{1}_{m+1}$, and scale matrix $\frac{3}{5}\bar{\sigma}^{2}I_{m+1}$, implying covariance matrix $\bar{\sigma}^{2}I_{m+1}$. Therefore, these violations have exactly four moments. The
excludability violations are exchangeable but not  i.i.d.: the diagonal covariance matrix implies the entries are not correlated but allows for dependence at higher moments.

Both $\pi^{*}$ and  $q$ are deterministic functions of  $m$ and are
held fixed across all Monte Carlo replications. Group membership probabilities
are set to quantiles of a log-normal distribution restricted to a fixed range. I first generate  $q_{j}^{\mathrm{raw}}=\exp\!\bigl(\sigma_{q}\,\Phi^{-1}(u_{j})\bigr)$
with $\Phi$ the standard normal CDF and  $\sigma_{q}=0.83$, where the
 $u_{j}$ are evenly spaced over the fixed interval  $[\Phi(-2.231),\Phi(2.231)]$.
I then normalize the raw values so that  $\sum_{j=1}^{m}q_{j}=0.9$.
Because the range is held fixed, the cell-size spread  $\max_{j}q_{j}/\min_{j}q_{j}\approx40{:}1$
is the same for every  $m$. This is an infill design: as  $m$ grows, cells fill in
within a fixed range of sizes rather than spreading toward ever more extreme values.

The $\pi^{*}$ coefficients
use the same quantile scheme. I first generate  $\pi_{j}^{*,\mathrm{raw}}=\exp\!\bigl(\sigma_{\pi^{*}}\,\Phi^{-1}(u_{j})\bigr)$
with  $\sigma_{\pi^{*}}=0.67$, giving a fixed  $\approx20{:}1$ spread. I then rescale
the raw values so that  $\|\pi^{*}\|=3\sqrt{m/40}$,
which preserves average instrument strength as  $m$ grows. To break
correlation with group sizes,  $\pi^{*}$ values are assigned in an interleaved
pattern: odd-indexed groups receive  $\pi^{*}$ values in ascending order
and even-indexed groups in descending order. 

In the baseline parameterization, I set  $\alpha_{0}=1$,  $\beta_{0}=1$,
 $\rho=0.1$,  $\bar{\mu}=8$,  $\bar{\sigma}=8$, and  $\sigma_{e}^{2}=1$.
Recall that $b$ is scaled by  $n^{-1/2}$. Hence, the coefficients
capturing excludability violations have mean  $8/\sqrt{n}$ and variance
 $64/n$. 

\subsection{Moment conditions and estimator}\label{subsec:sim_moment}
To target the slope parameter $\beta_0$, I project out the nuisance intercept by demeaning the outcome, treatment, and instruments in sample to form $\tilde{Y}_{i}$, $\tilde{T}_{i}$, and $\tilde{Z}_{i}=Z_{i}-\hat q$, where $\hat q=n^{-1}\sum_{i=1}^n Z_i$. Continuing the convention introduced in Section \ref{sec:Examples}, let $E_{n}[\cdot]=E[\cdot \mid b_{1},\dots,b_{m+1}]$ denote expectations under the DGP conditional on the specification errors. The population moment condition I leverage for estimation can be written
\[
g\left(\theta\right)=S^{-1}E_{n}[(\tilde{Y}_{i}-\theta\tilde{T}_{i})\tilde{Z}_{i}], \qquad S:=E_{n}[\tilde{Z}_{i}\tilde{Z}_{i}'].
\]
Note that, following Example \ref{exa:IV_example}, I have rescaled by the inverse of the second moment matrix $S$ of residualized instruments, in order to avoid the requirement that this matrix takes a permutation equivariant form. 

It is useful to rewrite this condition as $g\left(\theta\right)=\delta - \theta \pi$, where $\delta:=S^{-1}E_n[\tilde Z_i \tilde Y_i]$ is a vector of reduced form coefficients and $\pi:=S^{-1} E_n[\tilde Z_i \tilde T_i]$ the corresponding vector of first-stage coefficients. Note that 
\[
E_n[\tilde Z_i \tilde Y_i] = \beta_0 E_n[\tilde Z_i \tilde T_i] + Sb / \sqrt{n}, \qquad b:=(b_{1},\dots,b_{m})',
\]
where the baseline violation $b_{m+1}$ was eliminated by demeaning. Thus,
\[
g(\beta_0) = \delta - \beta_0 \pi = b/ \sqrt{n}.
\]
The vector $b$ has a $t_5$ distribution, which is exchangeable with four finite moments. It therefore satisfies Assumption~\ref{assu:random_b}.ii. The realized hyperparameters $(\mu,\sigma^{2})$ targeted by the estimators $(\hat \mu, \widehat{\sigma^{2}})$
are the sample mean and variance of $b$.

The population Jacobian is the $m$-vector $G=-\pi$. Since the entries of $\pi$ are distinct, $1_{m}$ is not in the column space of $G$ and $\mu$ is identified by equation \eqref{eq:identification} at each $m$. The infill design of Section~\ref{subsec:Simulation-design} holds the spread of the first stage fixed as $m$ grows. The normalized residual $m^{-1}\left\Vert M1_{m}\right\Vert ^{2}$ therefore stays bounded away from zero, converging to a positive $\kappa$. This confirms Assumption~\ref{assu:regime}.vi.

In conducting GMM estimation, I use the weighting matrix $\hat{W}=\hat S := n^{-1} \sum_{i=1}^n \tilde{Z}_{i}\tilde{Z}_{i}'$, which ensures the estimates are numerically equivalent to TSLS estimation of the residualized system. By the Frisch--Waugh--Lovell theorem, the TSLS estimate of $\beta$ in the residualized system is numerically equivalent to the TSLS slope from  estimation of $(\alpha,\beta)$ using the full instrument set $(1,Z_{i}')'$. Since the structural errors $e_i$ are homoscedastic, this TSLS weighting would be efficient under proper specification.

I verify analytically in Appendix~\ref{app:verify} that the DGP satisfies
the outstanding conditions of Assumptions~\ref{assu:regime} and \ref{assu:rate} required for consistency of the BC and EB estimators when $m$ grows with $n$. The conditions of Assumptions~\ref{assu:eb-reg} and \ref{assu:lindeberg} supporting inference are checked there as well, analytically where possible and numerically otherwise. 

\subsection{Baseline findings}

Table~\ref{tab:mc-main} reports the bias, standard deviation, and
root mean squared error (RMSE) of each estimator in a first simulation
design with  $n=10{,}000$ observations and  $m=40$ cell moments.
All results report estimation performance for the slope coefficient
 $\beta$ and are averaged across  $1{,}000$ Monte Carlo replications.
The feasible estimators use the hyperparameter estimates  $\hat{\mu}$
and  $\widehat{\sigma^{2}}$, while the oracle estimators use the true
hyperparameters $\left(\mu,\sigma^{2}\right)$ based on the realized specification errors but continue to rely on the estimated matrices  $\hat{G}$,  $\hat{V}$, and  $\hat{W}$. 

\begin{table}[H]
\caption{Estimator Performance ( $m=40$,  $n=10,000$) \protect\label{tab:mc-main}}

\begin{centering}
\begin{tabular}{lcccc}
\toprule
Estimator & Bias & Std & RMSE & Rej.\tabularnewline
\midrule
GMM & 0.220 & 0.185 & 0.287 & 0.386\tabularnewline
BC & 0.040 & 0.251 & 0.254 & 0.087\tabularnewline
EB & 0.038 & 0.239 & 0.242 & 0.088\tabularnewline
Oracle BC & 0.024 & 0.181 & 0.182 & 0.043\tabularnewline
Oracle EB & 0.023 & 0.178 & 0.180 & 0.046\tabularnewline
\bottomrule
\end{tabular}
\par\end{centering}
{\small\emph{Notes:}}{\small{} Bias, standard deviation, and RMSE of
 $\hat{\beta}$ across  $1{,}000$ replications,  $n=10{,}000$,  $m=40$.
Feasible estimators use the hyperparameter estimates described in
Section~\ref{sec:estimation}. ``Oracle'' estimators use the true
 $\left(\mu,\sigma^{2}\right)$ with sample matrices  $\hat{G}$,
 $\hat{V}$,  $\hat{W}$. Mean simulated  $J=90.7$ with  $39$ degrees
of freedom. The population leading-term risk ratio of EB to
BC, $V_{EB}/V_{BC}=1-\Lambda\Sigma_{\mu}M'(M\Sigma_{\mu}M')^{+}M\Sigma_{\mu}\Lambda'/(\Lambda\Sigma_{\mu}\Lambda')=0.936$,
is computed exactly from the design parameters. Rej. is the rejection rate of a two-sided
Wald test of the null hypothesis that  $\beta_{0}=1$ at the 5\% nominal
level.}{\small\par}
\end{table}

By Proposition~\ref{prop:variance-ordering}, the EB estimator improves
on the leading-term risk of BC whenever $M\Sigma_{\mu}\Lambda'\neq0$. With $p=1$, the ratio of the EB and BC leading-term risks is governed by a quadratic form in this quantity:
\[
\frac{V_{EB}}{V_{BC}}=1-\frac{\Lambda\Sigma_{\mu}M'(M\Sigma_{\mu}M')^{+}M\Sigma_{\mu}\Lambda'}{\Lambda\Sigma_{\mu}\Lambda'}.
\]
Evaluating this expression using the population matrices $(\Lambda,M,\Sigma_{\mu})$ yields 0.936, indicating a modest efficiency advantage of EB over BC in this design. The mean simulated  $J$-statistic is
90.7, far exceeding the 5\% critical value of 54.57. Faced with this
DGP, the overidentification test will tend to reject correct specification.

The sizeable specification errors present in this DGP lead the uncorrected
GMM estimator to exhibit a bias of 0.220 that accounts for the bulk
of its RMSE of 0.287. The bias-corrected estimator  $\hat{\beta}_{BC}$
reduces the bias to 0.040 and achieves an RMSE of 0.254, 12\% below
GMM. The feasible shrinkage estimator  $\hat{\beta}_{EB}$ improves
on this, attaining an RMSE of 0.242---16\% below GMM and about 5\%
below BC---by removing the predictable component of the bias-corrected
estimation error. The ratio of EB to BC Monte Carlo variances is $(0.239/0.251)^2=0.907$, close to the leading-term variance ratio of 0.936. Evidently, the asymptotic approximation provides an empirically useful guide to finite-sample behavior under this DGP.

The oracle estimators illustrate the gains available with
known hyperparameters: oracle BC reaches an RMSE of 0.182 and oracle
EB 0.180. Both oracle estimators exhibit small biases that capture the remaining TSLS many-instruments bias that would be present even under proper specification. The narrow gap in RMSE between the two oracle estimators
suggests that shrinkage yields limited gains when $\mu$ is already known. I document below that this advantage grows when the specification errors are more dispersed.

The ``Rej.'' column of Table~\ref{tab:mc-main} reports the rejection
rate of a two-sided Wald test of the null hypothesis that  $\beta_{0}=1$
at the 5\% nominal level. Thus, rejections correspond here to type
I errors. The oracle
BC and EB rejection rates are based on the following infeasible variance
matrices 
\[
V_{OBC}=\hat{\Lambda}\left(\hat{V}+\sigma^{2}Q\right)\hat{\Lambda}',\quad V_{OEB}=\hat{\Lambda}\left(I_{m}-\hat{\Pi}(\sigma^{2})\right)\left(\hat{V}+\sigma^{2}Q\right)\left(I_{m}-\hat{\Pi}(\sigma^{2})\right)'\hat{\Lambda}',
\]
which use the sample sensitivity matrix  $\hat{\Lambda}$, the sample
moment-covariance matrix  $\hat{V}$, and the true $\sigma^{2}$. Here $\hat{\Pi}(\sigma^{2})$ is the sample projection evaluated at the true variance rather than at $\widehat{\sigma^{2}}$. Because the oracle uses the true $\mu$, the outer covariance $\hat{V}+\sigma^{2}Q$ is not centered by $\hat{M}_{\mu}$. The projection $\hat{\Pi}(\sigma^{2})$ itself is built with $\hat{M}_{\mu}$, as defined before Assumption~\ref{assu:eb-reg}.

The standard GMM Wald test over-rejects dramatically, with
a rejection rate of 38.6\%. This overrejection stems both from the
estimator's bias and use of the conventional sandwich variance estimator
that ignores misspecification. In contrast, the oracle BC and EB rejection rates are 4.3\% and 4.6\% respectively,
both close to the nominal level. The feasible BC and EB tests over-reject
modestly---8.7\% and 8.8\% respectively---reflecting the finite-sample cost of
estimating the hyperparameters  $\mu$ and  $\sigma^{2}$.

Appendix~\ref{app:differenced} reports parallel results for estimators
that difference the mean specification error away rather than estimate
it, implementing the strategy of Section~\ref{sec:A-differencing-approach},
along with the MBTSLS estimator of \textcite{kolesar2015identification}. Appendix Table \ref{tab:appc-differenced} shows that differenced GMM is nearly equivalent to levels BC and differenced EB is nearly equivalent to levels EB.  MBTSLS is biased in levels
because the nonzero mean $\bar{\mu}$ violates the orthogonality condition
it relies on. Applying MBTSLS to the differenced moment conditions yields the lowest bias of any estimator, which reflects that differenced MBTSLS removes not only the leading-term bias targeted by BC but the higher-order many-instruments bias. However, this extra bias reduction comes at a cost: the differenced MBTSLS estimator's standard deviation is nearly 50\% above that of EB, putting it at a severe RMSE disadvantage relative to both corrected estimators. 

\subsection{Altering the misspecification hyperparameters}

Table \ref{tab:mc-rmse-vs-mu} examines how the relative performance of the estimators, as measured by RMSE, varies with the marginal mean specification error  $\bar{\mu}$ while holding $\bar{\sigma}=8$ and all other parameters fixed. Theoretically,
the BC, Oracle BC, EB, and Oracle EB estimators are invariant to $\bar{\mu}$ in large samples. A unit shift in $\bar{\mu}$ moves $\hat{\mu}$ by one in expectation and shifts the GMM estimate by $(1/\sqrt{n})\hat{\Lambda}1_{m}$. The bias correction $(\hat{\mu}/\sqrt{n})\hat{\Lambda}1_{m}$ removes this shift, and the shrinkage correction is likewise unaffected. Only the uncorrected GMM estimator's RMSE genuinely depends
on $\bar{\mu}$. The mild variation visible in the corrected-estimator
rows of the table reflects a mix of finite-sample deviations from asymptotic invariance and Monte Carlo noise.

\begin{table}[H]
    \caption{RMSE by $\bar{\mu}$ ( $\bar{\sigma}=8$,  $n=10{,}000$,  $m=40$)
    \protect\label{tab:mc-rmse-vs-mu}}
    
    \begin{centering}
    \begin{tabular}{lccccc}
    \toprule
     & $\bar{\mu}=0$ & $\bar{\mu}=2$ & $\bar{\mu}=4$ & $\bar{\mu}=8$ & $\bar{\mu}=16$\tabularnewline
    \midrule
    GMM & 0.188 & 0.200 & 0.222 & 0.287 & 0.459\tabularnewline
    BC & 0.256 & 0.256 & 0.255 & 0.254 & 0.246\tabularnewline
    EB & 0.241 & 0.243 & 0.241 & 0.242 & 0.237\tabularnewline
    Oracle BC & 0.185 & 0.186 & 0.182 & 0.182 & 0.181\tabularnewline
    Oracle EB & 0.179 & 0.181 & 0.179 & 0.180 & 0.179\tabularnewline
    \midrule
    $J$-stat & 88.6 & 90.8 & 92.7 & 90.7 & 99.5\tabularnewline
    $V_{EB}/V_{BC}$ & 0.936 & 0.936 & 0.936 & 0.936 & 0.936\tabularnewline
    \bottomrule
    \end{tabular}
    \par\end{centering}
    {\small\emph{Notes: }}{\small RMSE of $\hat{\beta}$ across 1,000 Monte
    Carlo replications with $m=40$,  $n=10{,}000$, and $\bar{\sigma}=8$
    held fixed. The final two rows report the mean simulated $J$-statistic
    and the population leading-term risk ratio of EB to BC,
    $V_{EB}/V_{BC}=1-\Lambda\Sigma_{\mu}M'(M\Sigma_{\mu}M')^{+}M\Sigma_{\mu}\Lambda'/(\Lambda\Sigma_{\mu}\Lambda')$,
    computed exactly from the design parameters.}{\small\par}
\end{table}

At $\bar{\mu}=0$ the systematic component of the bias is absent. Oracle BC removes the realized mean without shrinking and barely improves on GMM, with an RMSE of 0.185. The further drop to 0.179 under Oracle EB reflects the advantages of shrinkage using the true $\sigma^2$. The feasible corrections pay the cost of
estimating the hyperparameters. This leaves GMM (0.188) ahead of BC (0.256)
and EB (0.241) by a wide margin. In contrast, at  $\bar{\mu}=8$,
the feasible corrections dominate: EB achieves 0.242 versus 0.287 for GMM.
At  $\bar{\mu}=16$, GMM degrades to 0.459 while BC and EB stay around
0.24--0.25. When the systematic bias is large, the corrected methods offer substantial improvements in RMSE.

Table~\ref{tab:mc-rmse-vs-sig2} examines how the relative performance
of the estimators depends on the marginal dispersion
$\bar{\sigma}$ of the specification errors while holding  $m=40$,
 $\bar{\mu}=8$, and all other parameters fixed. The mean simulated  $J$-statistic rises from 41.8 at  $\bar{\sigma}=0$, which is below the 5\% critical value of 54.57,
to 238.9 at  $\bar{\sigma}=16$. The final row reports the ratio of the EB and BC leading-term risks, $V_{EB}/V_{BC}$, computed exactly from the design parameters. The ratio is stable near 0.95 for $\bar{\sigma}\le8$ and falls to 0.861 at $\bar{\sigma}=16$, reflecting growing asymptotic gains from shrinkage as dispersion rises.

\begin{table}[H]
    \caption{RMSE by $\bar{\sigma}$ ( $\bar{\mu}=8$,  $n=10{,}000$,  $m=40$)
    \protect\label{tab:mc-rmse-vs-sig2}}
    
    \begin{centering}
    \begin{tabular}{lccccc}
    \toprule
     & $\bar{\sigma}=0$ & $\bar{\sigma}=2$ & $\bar{\sigma}=4$ & $\bar{\sigma}=8$ & $\bar{\sigma}=16$\tabularnewline
    \midrule
    GMM & 0.264 & 0.266 & 0.272 & 0.287 & 0.345\tabularnewline
    BC & 0.192 & 0.196 & 0.208 & 0.254 & 0.395\tabularnewline
    EB & 0.188 & 0.192 & 0.204 & 0.242 & 0.345\tabularnewline
    Oracle BC & 0.147 & 0.149 & 0.157 & 0.182 & 0.275\tabularnewline
    Oracle EB & 0.153 & 0.155 & 0.161 & 0.180 & 0.247\tabularnewline
    \midrule
    $J$-stat & 41.8 & 45.3 & 55.7 & 90.7 & 238.9\tabularnewline
    $V_{EB}/V_{BC}$ & 0.943 & 0.947 & 0.953 & 0.936 & 0.861\tabularnewline
    \bottomrule
    \end{tabular}
    \par\end{centering}
    {\small\emph{Notes: }}{\small RMSE of $\hat{\beta}$ across 1,000 Monte
    Carlo replications with $m=40$,  $n=10{,}000$, and $\bar{\mu}=8$
    held fixed. The column $\bar{\sigma}=8$ is the baseline. The final
    two rows report the mean simulated $J$-statistic and the population
    leading-term risk ratio of EB to BC,
    $V_{EB}/V_{BC}=1-\Lambda\Sigma_{\mu}M'(M\Sigma_{\mu}M')^{+}M\Sigma_{\mu}\Lambda'/(\Lambda\Sigma_{\mu}\Lambda')$,
    computed exactly from the design parameters.}{\small\par}
\end{table}

At $\bar{\sigma}=0$ (homogeneous specification error), the feasible
estimators are essentially tied (EB 0.188 vs.~BC 0.192), as are the oracle estimators (OEB 0.153 vs.~OBC 0.147). Recall from Section \ref{subsec:ordering} that with no idiosyncratic overdispersion to exploit and efficient weighting,
the precision gains from shrinkage derive entirely from heteroscedastic moment noise. The final row puts $V_{EB}/V_{BC}$ at 0.943, about a 3\% reduction in RMSE, consistent with the near tie. As the marginal variance
$\bar{\sigma}^2$ of specification errors grows, the signal from the specification errors
strengthens relative to sampling noise, making shrinkage increasingly
beneficial. At  $\bar{\sigma}=16$, the EB estimator has roughly 13\% lower RMSE than BC. In contrast, the asymptotic approximations suggest the RMSE ratio of these estimators should be driven entirely by their leading-term variance ratio, yielding a reduction of $1-\sqrt{0.861}\approx7\%$. Thus, the leading term underestimates the advantages of EB in the extreme heterogeneity design with $\bar{\sigma}=16$.

The RMSE of GMM increases monotonically with $\bar{\sigma}$. The advantage of BC over GMM decreases in $\bar{\sigma}$, as larger specification errors raise the precision cost of the first-order bias correction. At $\bar{\sigma}=16$, GMM dominates BC because the precision costs eventually outweigh the gains of bias correction. The EB estimator dampens this precision cost and is the overall best-performing feasible estimator in Table~\ref{tab:mc-rmse-vs-sig2}. However, in the presence of extreme idiosyncratic dispersion ($\bar{\sigma}=16$) its RMSE is tied with GMM.

Appendix Table \ref{tab:appc-sweeps} reports corresponding results for differenced estimators and MBTSLS. Differenced GMM and EB track levels BC and EB
closely at every setting of the hyperparameters. Evidently, estimating $\mu$
costs about as much as eliminating it. Differenced MBTSLS is dominated by differenced EB at all hyperparameter values, reflecting the more severe precision costs of correcting for higher-order many-instruments bias.

\subsection{Asymptotic behavior}\label{subsec:asymptotic}

To assess the rate predictions of Propositions~\ref{prop:bc-rate}
and \ref{prop:eb-rate}, I vary  $n$ and  $m$ jointly, setting  $m$
equal to  $n^{0.4}$ rounded to the nearest integer. This choice ensures
that  $m^{2}/n\to0$ as required by Assumption~\ref{assu:regime}.
Table~\ref{tab:mc-bias-vs-n} reports  the bias and RMSE of all five
estimators across seven sample sizes from  $n=5{,}000$ to  $n=500{,}000$. Since the specification errors are scaled by $1/\sqrt{n}$, all of the estimators are consistent. It is convenient then to multiply both the bias and RMSE by $\sqrt{n}$ so that differences in estimator performance do not collapse as the sample size grows. 

The  scaled bias of GMM grows modestly with $n$ but levels off around $25$, indicating a limiting $\sqrt{n}$ rate. In contrast, the scaled  biases of BC and EB gradually drift towards zero, falling from about  $4$  at  $n=5{,}000$  to roughly  $3$
at  $n=500{,}000$. The oracle estimators do slightly better than their feasible counterparts but exhibit the same drift towards zero bias. These gradual drops in the scaled bias of the corrected estimators reflect the many-instruments bias of TSLS that would be present even under proper specification. Since this bias is of order $m/n$, the scaled bias reported in the table is of order $m/\sqrt{n}=n^{-0.1}$.

\begin{table}[H]
    \centering
    \caption{Bias and RMSE as Functions of  $n$ with  $m=\mathrm{round}(n^{0.4})$\protect\label{tab:mc-bias-vs-n}}
    
    \begin{centering}
    \setlength{\tabcolsep}{3pt}%
    \begin{tabular}{rccccccccc ccccc}
    \toprule 
     &  &  &  & \multicolumn{5}{c}{$\sqrt{n}\times$Bias} &  & \multicolumn{5}{c}{$\sqrt{n}\times$RMSE}\tabularnewline
    \midrule 
    $n$ & $m$ & $V_{EB}/V_{BC}$ &  & GMM & BC & EB & OBC & OEB &  & GMM & BC & EB & OBC & OEB\tabularnewline
    5,000 & 30 & 0.924 &  & 21.3 & 4.2 & 4.5 & 3.4 & 3.5 &  & 28.8 & 25.0 & 23.8 & 19.1 & 18.6\tabularnewline
    10,000 & 40 & 0.936 &  & 22.0 & 4.0 & 3.8 & 2.4 & 2.3 &  & 28.7 & 25.4 & 24.2 & 18.2 & 18.0\tabularnewline
    20,000 & 53 & 0.945 &  & 24.0 & 4.7 & 4.2 & 2.9 & 2.3 &  & 29.5 & 24.2 & 23.2 & 17.3 & 17.3\tabularnewline
    50,000 & 76 & 0.953 &  & 24.5 & 3.7 & 3.4 & 2.4 & 2.0 &  & 30.3 & 23.4 & 22.3 & 17.7 & 17.4\tabularnewline
    100,000 & 100 & 0.956 &  & 24.5 & 2.3 & 2.4 & 1.7 & 1.7 &  & 29.9 & 23.4 & 22.9 & 17.1 & 17.5\tabularnewline
    200,000 & 132 & 0.958 &  & 25.5 & 3.6 & 3.3 & 2.5 & 2.2 &  & 30.6 & 22.7 & 22.2 & 16.9 & 17.4\tabularnewline
    500,000 & 190 & 0.959 &  & 25.6 & 3.2 & 3.1 & 2.2 & 2.0 &  & 30.3 & 22.3 & 21.8 & 16.3 & 16.7\tabularnewline
    \bottomrule
    \end{tabular}
    \par\end{centering}
    \raggedright{\small\emph{Notes:}}{\small{} Entries are  $\sqrt{n}$  times the bias and RMSE of  $\hat{\beta}$ across
    1,000 replications per grid point. The population leading-term risk
    ratio $V_{EB}/V_{BC}$ is computed exactly from the design parameters.}{\small\par}
\end{table}

The scaled RMSE of GMM rises slightly before stabilizing around 30. The rise reflects its scaled bias, which grows with $n$ even as the scaled variance falls. In contrast, the scaled RMSE of the corrected estimators edges down as $n$ grows before stabilizing. The bias of these estimators is already small, so this decline primarily reflects a falling scaled variance. The leveling off in RMSE of the corrected estimators is
consistent with the parametric-rate predictions of Propositions \ref{prop:bc-rate}
and \ref{prop:eb-rate}.

GMM has the highest RMSE at every sample size. By Proposition~\ref{prop:variance-ordering}, the risk ratios predict that EB improves on BC throughout the grid: $V_{EB}/V_{BC}$ rises from 0.924 at $m=30$ to 0.959 at $m=190$. Consistent with this prediction, EB exhibits lower RMSE than BC at all sample sizes, with a gap that narrows as $m$ grows.  The oracle estimators have the lowest RMSEs, and the two perform comparably to each other at large sample sizes.

Figure~\ref{fig:mc-rej-vs-m} plots rejection rates across the Table~\ref{tab:mc-bias-vs-n}
grid. GMM's bias leads to over-rejection at all sample sizes, hovering
between 38\% and 41\% type I error rates as  $m$ grows large. In
contrast, both the BC and EB rejection rates quickly approach the nominal
5\% level as $m$ grows large. This finding is consistent with Appendix Table \ref{tab:appb-diagnostics}, which verifies that the BC and EB estimators of $\beta$ satisfy the no-dominant-column condition of Assumption~\ref{assu:lindeberg}.iii. The oracle BC and EB rejection rates also hover near the nominal
5\% level across all $m$, with OBC and OEB exhibiting rejection rates of 5.3\% and 4.3\% respectively at $m=190$.

\begin{figure}[H]
    \begin{centering}
    \caption{Wald test rejection rates vs. $m$ with $m=\mathrm{round}(n^{0.4})$\protect\label{fig:mc-rej-vs-m}}
    \includegraphics[width=0.85\textwidth]{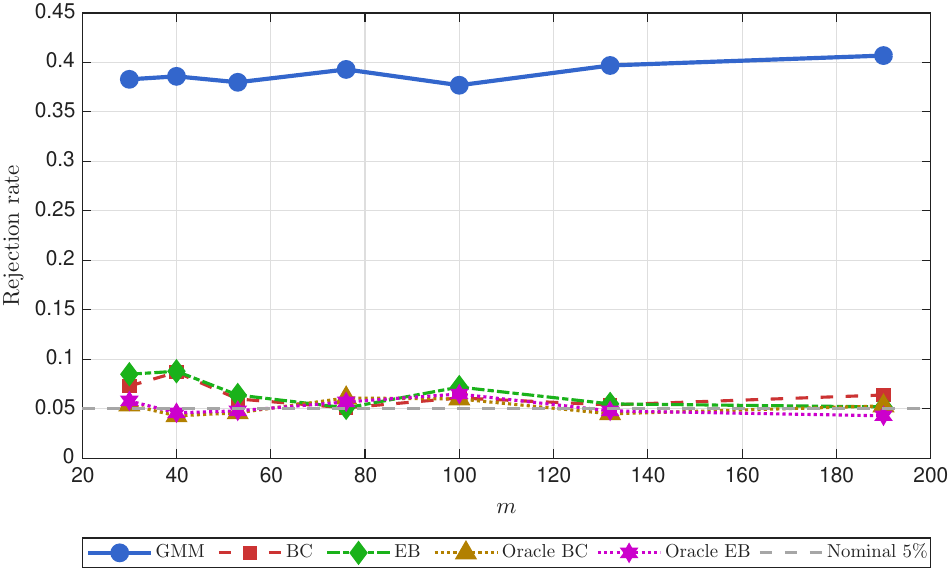}
    \par\end{centering}
    \raggedright{\small\emph{Notes:}}{\small{} Figure depicts rejection
    rate of Wald test that $\beta_{0}=1$ for each estimator. Nominal
    level of each test is 5\%. 1,000 simulations used per grid point.}{\small\par}
\end{figure}

\section{Empirical Application: Returns to Schooling \protect\label{sec:empirical}}

In this section, I revisit the work of \textcite{angrist1991does},
who used quarter-of-birth (QOB) indicators as instruments for years
of education in a log weekly wage equation to estimate the returns
to schooling. Table VII of their paper reports TSLS estimates based
on interactions between QOB and state-of-birth (SOB) as excluded instruments.
The same specification also excludes interactions between QOB and
year-of-birth (YOB), yielding 180 instruments in total. 

I will simplify this instrument set in two respects with the dual
aims of ensuring exchangeability and avoiding potential biases that
can arise under many-instrument asymptotics. First, I collapse the
three QOB indicators into a single $\mathbf{1}\left\{ \text{QOB}>1\right\} $
indicator for being born in one of the last three quarters, interacted
with SOB. \textcite{angrist1991does} work with the same binarized
version of QOB in their Table III, which they show yields similar
results to specifications leveraging the complete set of QOB indicators
as instruments via TSLS. Second, I drop the season of birth interactions
with YOB. This yields 51 instruments (one per state). 

A key identifying assumption of \textcite{angrist1991does}'s study
is that season of birth is correlated with earnings only through its
causal effect on educational attainment. This claim was questioned
early on by \textcite{bound1995problems}, who considered the potential
effects of small violations of QOB exogeneity. Later work by \textcite{buckles2013season}
provided evidence that economically disadvantaged families are more
likely to have children in the first quarter of the year, suggesting
the average earnings of children born in later quarters would be higher
even if their years of schooling were equalized. Since \textcite{angrist1991does}
establish that individuals born in later quarters tend to complete
more schooling, the family background differences highlighted by \textcite{buckles2013season}
may lead QOB-based IV estimates to overstate the returns to schooling
when instruments have a positive first stage. In what follows, I investigate
whether the BC and EB estimators reach similar conclusions about the
likely bias in TSLS.

\subsection{Estimation sample and moment conditions}

The estimation sample consists of men from the 1940--1949 birth cohorts
in the 1980 Census 5\% Public Use Micro Sample (PUMS), yielding $n=486{,}926$
observations. The instruments are $m=51$ mutually exclusive indicators
of the form $Z_{ij}=\mathbf{1}\left\{ \text{QOB}_{i}>1,\,\text{SOB}_{i}=j\right\} $
for each state $j=1,\dots,51$. Each instrument flags whether
individual $i$ was born in a non-Q1 quarter in a particular state.
Note that $\sqrt{n}\approx698$ and $m^{2}/n\approx0.005$, which
suggests the asymptotic regime of Assumption \ref{assu:regime}.i
is easily satisfied.

I consider four specifications of controls $X_{i}$ mirroring those
entertained by \textcite{angrist1991does}: (i) SOB dummies only,
(ii) SOB dummies with age (in quarters) and age-squared, (iii) SOB
dummies with individual-level covariates (indicators for race, marital
status, SMSA residence, and eight census region dummies), and (iv)
SOB dummies with all covariates plus age and age-squared. Rather than
include a constant, I always use 51 state-of-birth dummies that sum
to one. 

The endogenous variable is completed years of schooling $T_{i}$.
The scalar parameter of interest is $\theta$, which measures the
returns to an additional year of schooling. Since the controls are
nuisance parameters, I partial them out. Thus, in all four specifications
the degree of overidentification is $m-p=50$. As the simulations
of the previous section demonstrate, this level of overidentification
can yield sufficiently accurate hyperparameter estimates for corrected
estimators to generate non-trivial improvements over GMM.

Let $Y_{i}$ denote log weekly earnings, and let $\tilde{Y}_{i}$,
$\tilde{T}_{i}$, and $\tilde{Z}_{i}$ denote the residuals from OLS
regressions of $Y_{i}$, $T_{i}$, and $Z_{i}$ on $X_{i}$. I work with
sample moment conditions of the form
\[
\hat{g}\left(\theta\right)=\hat{S}^{-1}\left(\frac{1}{n}\sum_{i=1}^{n}\left(\tilde{Y}_{i}-\theta\tilde{T}_{i}\right)\tilde{Z}_{i}\right),\quad \hat{S}=\frac{1}{n}\sum_{i}\tilde{Z}_{i}\tilde{Z}_{i}'.
\]
As explained in Section \ref{subsec:sim_moment}, rescaling by the inverse of the second moment matrix of residualized instruments $\hat{S}$ ensures that these moment conditions capture excludability violations when evaluated at the true parameter $\theta_0$. Here, the excludability violations capture direct effects of being born outside
the first quarter in state $j$, net of the controls $X_{i}$. Exchangeability
across states of these direct effects is plausible ex ante: the family-composition mechanism
of \textcite{buckles2013season} could operate similarly in each state.

The sample Jacobian is the rescaled first-stage moment $\hat{G}=-\hat{S}^{-1}(\frac{1}{n}\sum_{i}\tilde{T}_{i}\tilde{Z}_{i})$. Since non-Q1 births have more education on average, most of its entries are negative.
The vector $1_{m}$ is in the column space of $\hat{G}$ only if the
first stages are common across states. Fortunately, cross-state variation
in compulsory schooling laws yields substantial variation in first-stage strength
across states, which was \textcite{angrist1991does}'s motivation for
studying SOB interactions.

\subsection{OLS and TSLS Results}

Table~\ref{tab:ak91-replication} reports OLS and TSLS estimates
for the non-Q1~$\times$~SOB instrument design. Without age controls,
the TSLS estimates of the return to education are around 4\%, below
the OLS estimates of approximately 5\%. Adding age controls raises
the TSLS estimates sharply, to 12--14\%, well above OLS. This sensitivity
to the inclusion of age reflects the mechanical correlation between
age and QOB: non-Q1 individuals are younger on average. \textcite{rosenzweig2000natural}
argued that this mechanical dependence led to excludability difficulties
as younger workers have less labor market experience and therefore
lower potential wages. The sizable increase in the TSLS estimates
after controlling for the age quadratic is consistent with this view. 

\begin{table}[H]
\caption{TSLS Estimates: Returns to Education with non-Q1~$\times$~SOB Instruments
\protect\label{tab:ak91-replication}}

\begin{centering}
\begin{tabular}{lcccc}
\toprule 
 & SOB & SOB + age & SOB + cov & SOB + cov + age\tabularnewline
\midrule 
OLS & 0.0535 & 0.0555 & 0.0495 & 0.0513\tabularnewline
 & (0.0004) & (0.0004) & (0.0003) & (0.0003)\tabularnewline
TSLS & 0.0424 & 0.1413 & 0.0360 & 0.1240\tabularnewline
 & (0.0188) & (0.0246) & (0.0193) & (0.0251)\tabularnewline
$J$ {[}d.f.{]} & 67.32 {[}50{]} & 46.64 {[}50{]} & 58.66 {[}50{]} & 45.65 {[}50{]}\tabularnewline
$p$-value & 0.052 & 0.609 & 0.188 & 0.649\tabularnewline
\bottomrule
\end{tabular}
\par\end{centering}
{\small\emph{Notes:}}{\small{} Sample is 486,926 men from the 1940--1949
birth cohorts in the 1980 Census 5\% PUMS. Instruments are 51 $\mathbf{1}\left\{ \text{QOB}>1\right\} $~$\times$~SOB
interactions. Heteroscedasticity-robust standard errors in parentheses.
 $J$-statistic degrees of freedom in brackets.}{\small\par}
\end{table}

In contrast to the Monte Carlo simulations, these data exhibit minimal
evidence of misspecification. The $J$-test borderline rejects correct
specification in one specification without age controls ($p=0.052$)
and fails to reject in the other ($p=0.188$). Both specifications
with age controls clearly fail to reject ($p=0.609$ and $p=0.649$).
I will show that misspecification-aware point estimates and standard
errors nonetheless offer additional insight in this environment.

To anchor the subsequent misspecification-aware analysis to these
TSLS specifications, I weight GMM by $\hat{W}=\hat{S}$. Because the
moments are already rescaled by $\hat{S}^{-1}$, this weight reproduces
the TSLS point estimates exactly and is efficient under correct specification
and homoscedasticity of the earnings errors. I estimate moment uncertainty
with the heteroscedasticity-robust sandwich estimator $\hat{V}$,
computed from the outer product of the rescaled moment contributions
to account for estimation of the controls.

\subsection{Scrutinizing exchangeability}\label{subsec:Scrutinizing}

To assess the plausibility of exchangeability in these data, I apply the
specification test in \eqref{stat:feature} of Section~\ref{subsec:Diagnostic} using as features the centered
inverse-root cell frequency $\hat{q}_{j}^{-1/2}$, together with centered
indicators for the four Census regions of state $j$ (Northeast, Midwest, South,
West). The first feature tests whether the specification errors covary with the
noise of the state cell moments---whose standard deviation scales with
$\hat{q}_{j}^{-1/2}$---as in Example~\ref{exa:mean}. The region indicators test
whether the direct effects of season of birth are systematically larger in some
parts of the country than others, which would arise, for example, if the
family-composition channel of \textcite{buckles2013season} operates with
systematically different intensity across regions.

\begin{table}[H]
    \caption{Exchangeability diagnostics\protect\label{tab:exch-diagnostic}}
    
    \begin{centering}
    \begin{tabular}{lcccc}
    \toprule
     & SOB & SOB + age & SOB + cov & SOB + cov + age\tabularnewline
    \midrule
    Cell noise ($\hat{q}_{j}^{-1/2}$) & $-0.60$ & $-0.49$ & $-0.68$ & $-0.52$\tabularnewline
    Northeast & 0.44 & 0.36 & 0.42 & 0.35\tabularnewline
    Midwest & $-0.76$ & $-0.68$ & $-0.87$ & $-0.74$\tabularnewline
    South & 0.29 & 0.27 & 0.65 & 0.50\tabularnewline
    West & 0.11 & 0.11 & $-0.06$ & $-0.02$\tabularnewline
    \midrule
    $\chi^{2}(4)$ & 1.04 & 0.76 & 1.50 & 0.94\tabularnewline
    $p$-value & 0.90 & 0.94 & 0.83 & 0.92\tabularnewline
    \bottomrule
    \end{tabular}
    \par\end{centering}
    {\small\emph{Notes:}}{\small{} The first five rows report individual feature
    $t$-statistics $(\Xi'\tilde{r})_{a}/\sqrt{\hat{\Omega}_{aa}}$. The cell-noise
    feature is the centered inverse-root cell frequency $\hat{q}_{j}^{-1/2}$, a proxy
    for the noise scale of the state cell moment. The four region indicators, likewise centered, identify the Census region of state $j$. The joint statistic
    $(\Xi'\tilde{r})'\hat{\Omega}^{-1}(\Xi'\tilde{r})\sim\chi^{2}(4)$ uses the
    cell-noise feature and three region contrasts (West omitted, since the four
    region indicators are collinear). Sample and instruments as in
    Table~\ref{tab:ak91-replication}.}{\small\par}
\end{table}
 
Recall from Remark~\ref{rem:test-power} that the
test cannot detect violations involving features perfectly aligned with first-stage strength. Regressing $\hat{q}_{j}^{-1/2}$ on the constant and the first-stage coefficients
$-\hat{G}$ leaves 94--97\% of its variance in the residual depending on the control specification. Corresponding regressions with each region indicator as an outcome leave 75--99\% of the variance in the residual. Thus,
basing the test on these features probes for violations in directions that should yield non-trivial power.

Table~\ref{tab:exch-diagnostic} reports the results. The cell-noise feature
shows a mild negative association that is far from significant in every
specification ($|t|\le0.68$). No region indicator approaches significance
(the largest is $0.87$ in absolute value). The joint $\chi^{2}(4)$ statistic
lies well below its degrees of freedom in all four specifications, and none
rejects at conventional levels ($p\ge0.83$). The data thus provide no evidence against exchangeability.

\subsection{Misspecification-aware estimates}

Table~\ref{tab:ak91-bceb} presents estimates of the hyperparameters
$\left(\mu,\sigma^{2}\right)$ alongside the GMM, BC, and EB estimates. Again, the GMM point estimates and standard errors coincide with the
TSLS estimates reported in Table~\ref{tab:ak91-replication} by construction.
The reported $J$-statistics, however, differ slightly from those
in Table~\ref{tab:ak91-replication}. Both statistics are quadratic forms of the type $\hat{r}'\hat{V}^{-1}\hat{r}$ evaluated at the TSLS estimate, with heteroscedasticity-consistent variance matrix $\hat{V}$. The statistics reported here use the
FWL-residualized excluded-instrument moments $\hat{r}=\sqrt{n}\hat{g}(\hat{\theta})$. The Table~\ref{tab:ak91-replication}
statistics instead use the full unresidualized system, including the
control block.
Under the null of correct specification and conditional homoscedasticity, a
setting in which the TSLS weighting is asymptotically efficient, the two are
asymptotically equivalent.

Consistent with strong identification of $\mu$ (Assumption~\ref{assu:regime}.vi), $\hat{\kappa}:=m^{-1}\|\hat{M}1_{m}\|^{2}$
is far from zero, ranging across the four specifications from $0.68$
to $0.92$. Table~\ref{tab:ak91-bceb} also reports the empirical Lindeberg
conditions that Theorem~\ref{thm:clt} requires for the BC and EB
estimators. The BC values range
from $0.0006$ to $0.0015$, while the EB values range from $0.0010$ to $0.0047$. All of these values are very small, suggesting that the
no-dominant-column condition of Assumption~\ref{assu:lindeberg}.iii holds for each estimator. Large
values would call into question the validity of the
misspecification-aware standard errors.

\begin{table}[H]
\caption{Bias-Corrected and Empirical Bayes Estimates: Returns to Education
\protect\label{tab:ak91-bceb}}

\begin{centering}
\begin{tabular}{lcccc}
\toprule 
 & SOB & SOB + age & SOB + cov & SOB + cov + age\tabularnewline
\midrule
\emph{Hyperparameters} &  &  &  & \tabularnewline
$\hat{\mu}$ & $-5.39$ & 4.49 & $-3.67$ & 4.75\tabularnewline
$\widehat{\sigma^{2}}$ & 50.97 & 0.00 & 0.10 & 0.00\tabularnewline
 &  &  &  & \tabularnewline
\midrule
\emph{Point estimates} &  &  &  & \tabularnewline
GMM & 0.0424 & 0.1413 & 0.0360 & 0.1240\tabularnewline
 & (0.0188) & (0.0246) & (0.0193) & (0.0251)\tabularnewline
BC & 0.0876 & 0.1108 & 0.0678 & 0.0927\tabularnewline
 & (0.0364) & (0.0318) & (0.0341) & (0.0314)\tabularnewline
EB & 0.0846 & 0.1103 & 0.0769 & 0.0935\tabularnewline
 & (0.0310) & (0.0270) & (0.0252) & (0.0271)\tabularnewline
 &  &  &  & \tabularnewline
\midrule
\emph{Diagnostics} &  &  &  & \tabularnewline
$J$ {[}d.f.{]} & 66.21 {[}50{]} & 46.14 {[}50{]} & 57.76 {[}50{]} & 45.14 {[}50{]}\tabularnewline
$\hat{\kappa}$ & 0.675 & 0.884 & 0.708 & 0.924\tabularnewline
$\max_{j}\hat{\Lambda}_{BC,j}'\hat{V}_{BC}^{-1}\hat{\Lambda}_{BC,j}$ & 0.0006 & 0.0012 & 0.0008 & 0.0015\tabularnewline
$\max_{j}\hat{\Lambda}_{EB,j}'\hat{V}_{EB}^{-1}\hat{\Lambda}_{EB,j}$ & 0.0010 & 0.0039 & 0.0046 & 0.0047\tabularnewline
Hausman $p$-value & 0.88 & 0.98 & 0.69 & 0.96\tabularnewline
\bottomrule
\end{tabular}
\par\end{centering}
{\small\emph{Notes:}}{\small{} Sample and instruments as in Table~\ref{tab:ak91-replication}.
All 51 non-Q1~$\times$~SOB moment conditions are treated as potentially
misspecified. GMM uses weighting matrix $\hat{W}=\hat{S}$
on $\hat{S}^{-1}$-rescaled moments, reproducing the TSLS estimator.
GMM standard errors are heteroscedasticity-robust. BC and EB standard errors are the misspecification-aware estimators of Section~\ref{subsec:Inference}, which utilize the same robust $\hat{V}$. Standard errors in parentheses. The Hausman row reports the $p$-value from comparing the test statistic in \eqref{stat:Hausman} to a $\chi^{2}(1)$ distribution.}{\small\par}
\end{table}

The estimated mean specification error $\hat{\mu}$ is negative in
the specifications without age controls, consistent with the experience
channel highlighted by \textcite{rosenzweig2000natural}: non-Q1 individuals
have less potential labor market experience, producing a negative
direct effect on earnings. With age controls,  $\hat{\mu}$ turns
positive, consistent with the family-composition mechanism of \textcite{buckles2013season}:
once potential experience differences are absorbed, the remaining
direct effect of non-Q1 birth reflects better family backgrounds,
which raise earnings. 

On the original log-wage scale, these estimates correspond to a mean
per-state direct effect $\hat{\mu}/\sqrt{n}$ ranging from $-0.0077$ to
$0.0068$ across the four specifications, implying exclusion violations
of modest magnitude (between $0.53\%$ and $0.77\%$ in absolute
terms). Aggregating across the 51 states, the bias correction $-(\hat{\mu}/\sqrt{n})\hat{\Lambda}1_{m}$
ranges from $-0.031$ to $+0.045$ across the four specifications.
In the SOB specification, for instance, $\hat{\mu}=-5.39$ generates
a BC adjustment of approximately $+0.045$ that moves the schooling
return from $0.042$ (TSLS) to $0.088$ (BC). 

Consistent with the $J$-statistics falling below their degrees of
freedom, the estimated variance $\widehat{\sigma^{2}}$ is zero in both
specifications with age controls, indicating no excess dispersion
beyond the mean shift once experience differences are absorbed. In
the specifications without age controls, the $J$-statistics exceed
their expected values but $\widehat{\sigma^{2}}$ is only meaningfully positive in the SOB-only design. Thus, there is
some heterogeneity in the direct effects across states but it also seems to be captured by the controls other than age. 

Without age controls,  $\hat{\mu}<0$ and the BC and EB corrections
adjust upward. In the specifications with age controls,  $\hat{\mu}>0$
and the BC and EB corrections adjust downward. In all specifications,
the corrected estimates converge toward the 0.07--0.11 range, substantially
narrowing the cross-specification dispersion relative to GMM/TSLS.
While the corrections consistently shift estimates in the direction
of OLS, they overshoot in the specifications without age controls,
where the estimated misspecification is largest. 

Comparing the difference between the BC and EB point estimates to the difference in their squared standard errors yields the Hausman-type test of \eqref{stat:Hausman}, reported in the bottom row of Table~\ref{tab:ak91-bceb}. Consistent with the earlier findings of Section \ref{subsec:Scrutinizing}, the test fails to reject in all four specifications. Evidently, shrinkage adjustments of this magnitude are to be expected under exchangeability given the hyperparameter estimates.

The BC standard
errors are larger than those of TSLS because they account
for two additional sources of uncertainty: the estimation of $\hat{\mu}$
and (when $\widehat{\sigma^{2}}>0$) the cross-moment dispersion of specification
errors. The increases are especially stark in the SOB-only design, where both contributions are present. 
As noted in Section \ref{subsec:ordering}, EB can improve on BC even when $\widehat{\sigma^{2}}=0$, particularly when a suboptimal weighting matrix is used or the moments are heteroscedastic. 
The EB standard errors are 14--26\% smaller than those of BC, including in the specifications with $\widehat{\sigma^{2}}=0$, reflecting the non-trivial moment heteroscedasticity driven by cell-size variation. Notably, the EB standard errors are only slightly larger than TSLS in these specifications, indicating that the ultimate precision costs of bias correction are low in this application when combined with shrinkage.
  
On net, the BC and EB corrections are consistent with season of birth
having direct effects on earnings whose sign depends on whether age
controls are included: the direct effects are negative when age controls
are omitted but positive when age controls are included. The corrections
bring TSLS estimates into closer agreement across specifications,
yielding returns to schooling in the range of 7--11\% per year. This
relative stability of the corrected estimates is reassuring, suggesting
that the misspecification-aware methods can detect and mitigate the
influence of omitted-variables bias. Likewise, the elevated standard
errors of the BC and EB estimators provide a more honest
assessment of the composite uncertainty faced by researchers.

\subsection{Visual IV}

As a final exercise, it is useful to compare the results of Table~\ref{tab:ak91-bceb} to those that would have emerged from a simpler visual IV diagnostic of the form discussed in Example~\ref{exa:VIV}. Define the reduced-form and first-stage coefficient vectors
\[
    \hat \delta = \hat S^{-1} \left(\frac{1}{n}\sum_{i=1}^n \tilde Z_i \tilde Y_i\right) \qquad \hat{\pi} = \hat S^{-1} \left(\frac{1}{n}\sum_{i=1}^n \tilde Z_i \tilde T_i\right).
\]
The reduced-form entries $\hat{\delta}_{j}$ give
the covariate-adjusted log-wage gap for non-Q1 births in state $j$, while the
first stages $\hat{\pi}_{j}$ give the adjusted schooling gap in state $j$. Figure~\ref{fig:ak91-viv} plots a scatter of these objects.

\begin{figure}[H]
    \begin{centering}
    \caption{Visual IV: reduced forms versus first stages across states\protect\label{fig:ak91-viv}}
    \par\end{centering}
    \centering{}\includegraphics[width=0.85\textwidth]{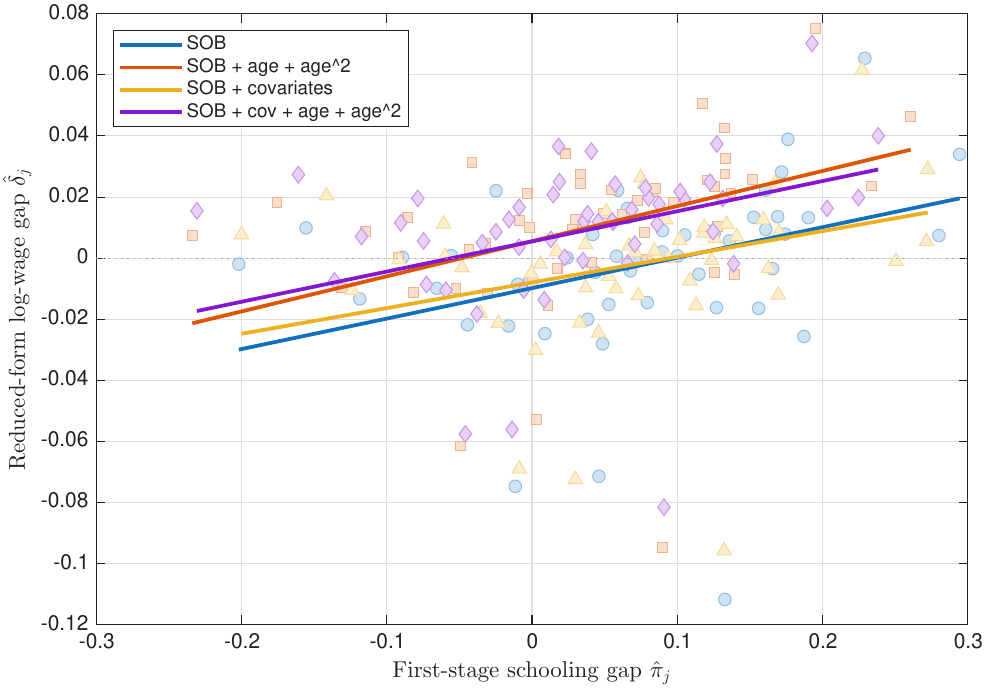}
    \par\medskip{}
    \begin{centering}
    \begin{tabular}{lcccc}
    \toprule
     & SOB & SOB + age & SOB + cov & SOB + cov + age\tabularnewline
    \midrule
    Intercept $\hat{\mu}_{W}/\sqrt{n}$ & $-$0.0098 & 0.0055 & $-$0.0080 & 0.0055\tabularnewline
     & (0.0029) & (0.0023) & (0.0027) & (0.0021)\tabularnewline
    Implied $\hat{\mu}_{W}$ & $-$6.85 & 3.87 & $-$5.55 & 3.82\tabularnewline
     & (2.05) & (1.61) & (1.89) & (1.48)\tabularnewline
    Slope $\hat{\theta}_{\mathrm{VIV}}$ & 0.0998 & 0.1150 & 0.0841 & 0.0988\tabularnewline
     & (0.0256) & (0.0247) & (0.0249) & (0.0242)\tabularnewline
    \bottomrule
    \end{tabular}
    \par\end{centering}
    \raggedright{\small\emph{Notes:}}{\small{} Each marker is a state ($m=51$). The
    horizontal axis is the first-stage coefficient $\hat{\pi}_{j}$ and the vertical axis the
    reduced-form coefficient $\hat{\delta}_{j}$, both residualized against the controls of the
    corresponding specification in Table~\ref{tab:ak91-bceb}.
    Lines are GLS fits of
    $\hat{\delta}_{j}$ on $\hat{\pi}_{j}$ with weight $\hat{S}$, allowing a non-zero intercept. The slope $\hat{\theta}_{\mathrm{VIV}}$
    approximates the bias-corrected return and the intercept $\hat{\mu}_{W}/\sqrt{n}$ the
    mean exclusion violation. GLS standard errors in parentheses.}{\small\par}
\end{figure}

A through-origin GLS fit of reduced forms to first stages with weighting matrix $\hat{S}$ reproduces the TSLS slope exactly. In contrast, Figure~\ref{fig:ak91-viv} overlays
GLS fits that allow for a non-zero intercept. The fitted intercept
$\hat{\mu}_{W}/\sqrt{n}$ estimates the weighted mean direct effect of non-Q1 birth on earnings. As in Table~\ref{tab:ak91-bceb}, the VIV indicates these effects are
negative in the specifications without age controls and positive once they are included. The implied $\hat{\mu}_{W}$ are of comparable magnitude to the $\hat{\mu}$ estimates in Table~\ref{tab:ak91-bceb} and share their signs across all four specifications.

The fitted slopes give the visual-IV returns $\hat{\theta}_{\mathrm{VIV}}$, which are modestly larger than the BC estimates of Table~\ref{tab:ak91-bceb}. Like the BC estimates, the VIV returns are less dispersed across control specifications than the TSLS estimates, ranging from 8--12\% per year. However, the VIV standard errors are smaller than those of the BC estimator in Table~\ref{tab:ak91-bceb} and comparable to those of TSLS, which is to be expected given that the GLS standard errors assume the model is properly specified. 

In this case, the impression one takes away from the VIV diagnostic about the excludability of the instruments and the magnitude of returns to schooling aligns closely with the findings from the more sophisticated BC and EB methods. A tentative conclusion is that plotting VIV fits including an intercept offers a useful diagnostic for gauging the average direction of exclusion violations. However, the standard errors from such methods fail to incorporate uncertainty due to idiosyncratic specification error, potentially leading to substantial overstatement of precision even after accounting for mean biases.

\section{Conclusion \protect\label{sec:Conclusion}}

Econometric models are almost always misspecified. The methods developed here leverage overidentifying restrictions to
correct GMM estimates of parameters of interest and their standard
errors for exchangeable misspecification. As evidenced by the SOB+age specification of Table \ref{tab:ak91-bceb},
these corrections can be non-trivial even when the realized $J$-statistics
are small. While the $J$-statistic distributes power across local
alternatives in $m-p$ directions, the hyperparameters $\left(\mu,\sigma^{2}\right)$
used in the correction pool information across these dimensions into
a low-dimensional summary. 

As the simulations and empirical application illustrate, the proposed
correction methods are not entirely automatic. The researcher must
have some sense of which aspect of the moment conditions is misspecified.
Moreover, the moment conditions must share a common structure that
makes exchangeability plausible. As the examples of Section \ref{sec:Examples}
demonstrate, the choice of moment scaling determines the exchangeability
model and is therefore a substantive modeling decision; alternative
scalings may yield different corrections. This reflects the general
observation that misspecification can only be studied relative to
larger encompassing models that are plausibly operative in the setting
under study \parencite{armstrong2025misspecification}. 

When a plausible scaling has been chosen, the empirical Bayes methods
proposed here offer a useful supplement to GMM estimates. Adoption
of these methods may carry the added benefit of improving research
transparency by reducing selective reporting of $J$-statistics and
other model diagnostics. Rather than fretting about statistical model
rejections, researchers should focus their efforts on repairing their
estimates of flawed, but ultimately useful, econometric models.

\printbibliography

\clearpage

\appendix

\section{Proofs\protect\label{app:proofs}}

\subsection*{Notation and preliminary bounds}

Throughout the appendix, $\|\cdot\|_{2}$ and $\|\cdot\|_{F}$ denote the operator and Frobenius
norms. All stochastic orders are taken under the joint asymptotics of Assumption~\ref{assu:regime},
with $m\to\infty$ and $m^{2}/n\to0$. I repeatedly make use of three elementary inequalities that are valid for
conformable matrices:
\[
\|AB\|_{F}\le\|A\|_{2}\|B\|_{F},\qquad |\mathrm{tr}(A'B)|\le\|A\|_{F}\|B\|_{F},\qquad
\|A\|_{F}\le\sqrt{m}\,\|A\|_{2}.
\]
For symmetric positive semidefinite $B$, I also use $|\mathrm{tr}(AB)|\le\|A\|_{2}\,\mathrm{tr}(B)$
and its specialization $\mathrm{tr}(A^{2})\le\|A\|_{2}\,\mathrm{tr}(A)$.

Assumptions~\ref{assu:regime}.ii and \ref{assu:regime}.v make the population matrices bounded operators:
$\|M\|_{2}$, $\|V\|_{2}$, and $\|W\|_{2}$ are $O(1)$. Their $m\times m$ traces are therefore $O(m)$. By
Assumption~\ref{assu:regime}.vi, the centering vector satisfies $\|M1_{m}\|^{2}=\kappa m(1+o(1))$
with $\kappa\in(0,\infty)$, hence $\|M1_{m}\|=O(\sqrt{m})$. Assumption~\ref{assu:regime}.iii
supplies the first-stage rates $\|\hat{M}-M\|_{2}=O_{p}(\sqrt{m/n})$ and $\|\hat{V}-V\|_{2}=O_{p}(\sqrt{m/n})$. These rates
yield $\|\hat{M}\|_{2}=O_{p}(1)$ and $\|\hat{M}\|_{F}=O_{p}(\sqrt{m})$. They also give
$\|(\hat{M}-M)1_{m}\|\le\sqrt{m}\,\|\hat{M}-M\|_{2}$, which is $O_{p}(m/\sqrt{n})$. Hence,
$\|\hat{M}1_{m}\|=O_{p}(\sqrt{m})$.

Finally, the centered moment $\eta_{n}:=r_{n}-\mu1_{m}$ has variance $\Sigma_{n}=V_{n}+\sigma^{2}Q$ with $\|\Sigma_{n}\|_{2}=O(1)$. Hence $E\|\eta_{n}\|^{2}=\mathrm{tr}(\Sigma_{n})=O(m)$ and $\|\eta_{n}\|=O_{p}(\sqrt{m})$. Write $r_{n}=\mu1_{m}+\eta_{n}$ with $|\mu|=O_{p}(1)$. Then $\|r_{n}\|=O_{p}(\sqrt{m})$. Likewise $\|Mr_{n}\|=O_{p}(\sqrt{m})$, using $\|M1_{m}\|=O(\sqrt{m})$ and $\|M\|_{2}=O(1)$. The centering matrix satisfies $\|M_{\mu}\|_{2}=O(1)$. These bounds are used below without further comment.

\begin{lemapp}[Oracle and first-stage building blocks]
\label{lem:building-blocks}
Under Assumptions~\ref{assu:random_b} and \ref{assu:regime}, the oracle mean estimator
$\mu_{\text{oracle}}:=(M1_{m})'Mr_{n}/\|M1_{m}\|^{2}$ and the masked residual
$u_{n}:=\hat{r}-\hat{M}r_{n}$ satisfy
\begin{enumerate}
\item[(i)] $\mu_{\text{oracle}}-\mu=O_{p}(1/\sqrt{m})$;
\item[(ii)] $\hat{\mu}-\mu_{\text{oracle}}=O_{p}(\sqrt{m/n})$;
\item[(iii)] $u_{n}=\hat{M}\rho_{n}$, where $\rho_{n}$ is the linearization remainder of
Assumption~\ref{assu:regime}.iii; hence $\|u_{n}\|=O_{p}(\sqrt{m/n})$.
\end{enumerate}
\end{lemapp}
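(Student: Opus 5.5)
For part (i), I will decompose $Mr_{n}=\mu M1_{m}+M\eta_{n}$ with $\eta_{n}=(b-\mu1_{m})+\varepsilon_{n}$. This gives
\[
\mu_{\text{oracle}}-\mu=\frac{(M1_{m})'M\eta_{n}}{\|M1_{m}\|^{2}},
\]
and the plan is a second-moment bound conditional on $(\mu,\sigma^{2})$.

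The numerator splits into a specification part and a noise part. For the specification part, Lemma~\ref{lem:cond-moments} gives mean zero and variance $\sigma^{2}(M1_{m})'MQM'(M1_{m})\le\sigma^{2}\|M\|_{2}^{2}\|M1_{m}\|^{2}$. For the noise part, independence of $b$ and $\varepsilon_{n}$ (Assumption~\ref{assu:random_b}.i) gives second moment
\[
(M1_{m})'MV_{n}M'(M1_{m})+\bigl((M1_{m})'ME[\varepsilon_{n}]\bigr)^{2}.
\]
The first piece is $O(m)$ by $\|V_{n}\|_{2}=O(1)$ (Assumption~\ref{assu:regime}.vii) together with $\|M\|_{2}=O(1)$. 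The second piece is $o(m)$, since $\|E[\varepsilon_{n}]\|\to0$ and $\|M1_{m}\|=O(\sqrt m)$. Since $\sigma^{2}=O_{p}(1)$ under the bounded fourth moments, Chebyshev's inequality gives a numerator of order $O_{p}(\sqrt m)$. Assumption~\ref{assu:regime}.vi puts the denominator at order $\kappa m$, so the ratio is $O_{p}(1/\sqrt m)$.

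Part (iii) is pure algebra and I will do it before (ii). Write $\hat{r}=\sqrt{n}\hat g(\hat\theta)=\sqrt{n}\hat g(\theta_{0})+\sqrt{n}\hat G(\hat\theta-\theta_{0})+\rho_{n}$. The first-order condition $\hat G'\hat W\hat g(\hat\theta)=0$ gives $\sqrt n(\hat\theta-\theta_0)=\hat\Lambda(r_{n}+\rho_{n})$. Substituting back,
\[
\hat r=(I_{m}+\hat G\hat\Lambda)(r_{n}+\rho_{n})=\hat M(r_{n}+\rho_{n}),
\]
so $u_{n}=\hat M\rho_{n}$. Here I take $\hat G$ as the Jacobian used in the remainder definition; if it is evaluated at a mean-value point instead, the discrepancy can be absorbed into $\rho_{n}$. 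The norm bound then follows from $\|\hat M\|_{2}=O_{p}(1)$ and Assumption~\ref{assu:regime}.iii.

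For part (ii), I will write $\hat\mu=(\hat M1_{m})'\hat M r_{n}/\|\hat M1_{m}\|^{2}+(\hat M1_{m})'u_{n}/\|\hat M1_{m}\|^{2}$. The second term is bounded using $\|\hat M1_{m}\|\ge\|M1_{m}\|-O_{p}(m/\sqrt n)$, which is of exact order $\sqrt m$ because $m^{2}/n\to0$. This gives a bound of $O_{p}(\|u_{n}\|/\sqrt m)=o_{p}(\sqrt{m/n})$.

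The first term minus $\mu_{\text{oracle}}$ is the main obstacle. The two crude bounds are $\|(\hat M-M)1_{m}\|=O_{p}(m/\sqrt n)$ and $\|r_{n}\|=O_{p}(\sqrt m)$. Multiplying them naively and dividing by $m$ gives only $O_{p}(m/\sqrt n)$, which is too slow. I will avoid this by splitting $r_{n}=\mu1_{m}+\eta_{n}$.

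The $\mu1_{m}$ component contributes exactly $\mu$ to both ratios, so it cancels in the difference. What remains is the difference of the maps $x\mapsto(\hat M1_{m})'\hat M x/\|\hat M1_{m}\|^{2}$ and its population analogue, applied to $\eta_{n}$. This difference is a linear functional of $\eta_{n}$ with coefficient vector
\[
a=\hat M'\hat M1_{m}/\|\hat M1_{m}\|^{2}-M'M1_{m}/\|M1_{m}\|^{2}.
\]
Expanding, $\|a\|=O_{p}(\|\hat M-M\|_{2}\sqrt m/m)=O_{p}(\sqrt{1/n})$, since each perturbation contributes operator-norm error times $O(\sqrt m)/m$. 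The difficulty is that $a$ is random and correlated with $\eta_{n}$, so I cannot simply use $E[(a'\eta_{n})^{2}]=a'\Sigma_{n}a$.

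The fallback is Cauchy--Schwarz: $|a'\eta_{n}|\le\|a\|\|\eta_{n}\|=O_{p}(n^{-1/2})\cdot O_{p}(\sqrt m)=O_{p}(\sqrt{m/n})$. This is exactly the claimed rate, so no independence between the first-stage estimates and $\eta_{n}$ is needed. Combining the pieces yields $\hat\mu-\mu_{\text{oracle}}=O_{p}(\sqrt{m/n})$.
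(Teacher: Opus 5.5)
Your proof is correct and follows the paper's argument. Part (i) uses the same Chebyshev bound, part (iii) reaches the same identity $u_n=\hat M\rho_n$ (you solve the first-order condition for $\sqrt n(\hat\theta-\theta_0)$, whereas the paper uses $\hat M\hat r=\hat r$ and $\hat M\hat G=0$), and part (ii) uses the same Cauchy--Schwarz bounds. One correction: the ``naive'' bound you dismiss is not too slow, since $O_p(m/\sqrt n)\cdot O_p(\sqrt m)/m=O_p(\sqrt{m/n})$ is exactly what the paper uses, so your cancellation of the $\mu 1_m$ component (i.e., $a'1_m=0$) is valid but unnecessary.
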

\begin{proof} To avoid redundancy, I prove these claims in a different order than they are presented.

\textbf{(i)} Write $Mr_{n}=\mu M1_{m}+M(b-\mu1_{m})+M\varepsilon_{n}$. Then
\[
\mu_{\text{oracle}}-\mu=\frac{(M'M1_{m})'(b-\mu1_{m})}{\|M1_{m}\|^{2}}
+\frac{(M'M1_{m})'\varepsilon_{n}}{\|M1_{m}\|^{2}}.
\]
The first numerator is mean-zero. The second has mean $(M'M1_{m})'E[\varepsilon_{n}]$, bounded by $\|M'M1_{m}\|\,\|E[\varepsilon_{n}]\|=o(\sqrt{m})$ under Assumption~\ref{assu:random_b}.i. Both have variance $O(m)$. Note that
\[
\begin{aligned}
\mathrm{Var}\bigl((M'M1_{m})'(b-\mu1_{m})\bigr)&=\sigma^{2}(M'M1_{m})'Q(M'M1_{m})\le\sigma^{2}\|M'M1_{m}\|^{2}=O(m),\\
\mathrm{Var}\bigl((M'M1_{m})'\varepsilon_{n}\bigr)&=(M'M1_{m})'V_{n}(M'M1_{m})=O(m),
\end{aligned}
\]
where I have used $\|M'M1_{m}\|^{2}=O(m)$ and $\|V_{n}\|_{2}=O(1)$. By Chebyshev, each numerator is $O_{p}(\sqrt{m})$.
Dividing by $\|M1_{m}\|^{2}=\kappa m(1+o(1))$ gives the claim.

\textbf{(iii)} The first-order condition $\hat{G}'\hat{W}\hat{r}=0$ gives $\hat{M}\hat{r}=\hat{r}$.
Combined with $\hat{M}\hat{G}=0$, this yields
$u_{n}=\hat{r}-\hat{M}r_{n}=\hat{M}\bigl(\hat{r}-r_{n}-\hat{G}\sqrt{n}(\hat{\theta}-\theta_{0})\bigr)
=\hat{M}\rho_{n}$. Assumption~\ref{assu:regime}.iii then bounds
$\|u_{n}\|\le\|\hat{M}\|_{2}\|\rho_{n}\|=O_{p}(\sqrt{m/n})$.

\textbf{(ii)} The numerator difference decomposes exactly as
\[
(\hat{M}1_{m})'\hat{r}-(M1_{m})'Mr_{n}
=\bigl((\hat{M}-M)1_{m}\bigr)'Mr_{n}+(\hat{M}1_{m})'(\hat{M}-M)r_{n}+(\hat{M}1_{m})'u_{n}.
\]
By Cauchy--Schwarz and the bounds established above, the first term is
at most $\|(\hat{M}-M)1_{m}\|\,\|Mr_{n}\|$, which is $O_{p}(m^{3/2}/\sqrt{n})$. The second term is of the same order.
The third term is at most $\|\hat{M}1_{m}\|\,\|u_{n}\|$, which is $O_{p}(m/\sqrt{n})$. The denominator error
$\|\hat{M}1_{m}\|^{2}-\|M1_{m}\|^{2}$, which is $O_{p}(m^{3/2}/\sqrt{n})$, contributes the same order through
$(M1_{m})'Mr_{n}=O_{p}(m)$. Dividing by $\|\hat{M}1_{m}\|^{2}=\kappa m(1+o_{p}(1))$ gives the claim.
\end{proof}

The next lemma controls quadratic forms in the centered specification errors. It supplies the
concentration used in the consistency arguments directly from exchangeability and bounded fourth
moments, with no separate regularity condition. The exact permutation moments of such quadratic forms are classical \parencite{kendall1977advanced,mantel1967detection}. The lemma repackages them as an order bound.
\begin{lemapp}[Permutation quadratic forms]
\label{lem:perm-quadratic}
Let $A$ be a symmetric $m\times m$ matrix with $\|A\|_{2}\le C$ and $\mathrm{tr}(A^{2})\ge1$.
Under Assumption~\ref{assu:random_b}, conditional on the order statistics of $b$,
\[
E\left[(b-\mu1_{m})'A(b-\mu1_{m})\right]=\sigma^{2}\,\mathrm{tr}(AQ)
\]
and
\[
\mathrm{Var}\left[(b-\mu1_{m})'A(b-\mu1_{m})\right]\le C'\Big(\sigma^{4}+\tfrac{1}{m}\textstyle\sum_{l=1}^{m}(b_{l}-\mu)^{4}\Big)\mathrm{tr}(A^{2}),
\]
where $C'$ depends only on $C$. Since bounded fourth moments give $m^{-1}\sum_{l}(b_{l}-\mu)^{4}=O_{p}(1)$,
\[
(b-\mu1_{m})'A(b-\mu1_{m})=\sigma^{2}\,\mathrm{tr}(AQ)+O_{p}\!\left(\sqrt{\mathrm{tr}(A^{2})}\right).
\]
\end{lemapp}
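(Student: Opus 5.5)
Conditional on the order statistics of $b$, exchangeability makes $b$ a uniformly random permutation of a fixed multiset $\{a_1,\dots,a_m\}$, where $a_l:=b_{(l)}$. Write $c_l:=a_l-\mu$, so that $\sum_l c_l=0$, $\sum_l c_l^2=(m-1)\sigma^2$ and $\sum_l c_l^4=:m\,\kappa_4$; all of these are fixed under the conditioning. The plan is to compute the first two permutation moments of $T:=\sum_{j,k}A_{jk}c_{\pi(j)}c_{\pi(k)}$ exactly, with $\pi$ uniform on $S_m$, and then bound the variance.

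\textbf{Mean.} The first permutation moments are
\[
E[c_{\pi(j)}^2]=\tfrac{m-1}{m}\sigma^2,\qquad E[c_{\pi(j)}c_{\pi(k)}]=\tfrac{1}{m(m-1)}\Bigl(\bigl(\textstyle\sum_l c_l\bigr)^2-\sum_l c_l^2\Bigr)=-\tfrac{\sigma^2}{m}\quad (j\neq k).
\]
Hence $E[cc']=\sigma^2 Q$ and $E[T]=\sigma^2\mathrm{tr}(AQ)$. This is the argument of Lemma~\ref{lem:cond-moments}, now carried out at the finer conditioning on the order statistics.

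\textbf{Variance.} First reduce to a matrix with zero row sums. Since $c'1_m=0$, we have $c'Ac=c'(QAQ)c$, so replace $A$ by $B:=QAQ$. Then $\|B\|_2\le C$ and $\mathrm{tr}(B^2)\le\mathrm{tr}(A^2)$, because $\mathrm{tr}((QAQ)^2)=\|QAQ\|_F^2\le\|A\|_F^2$. Next split $T=\sum_j B_{jj}c_{\pi(j)}^2+\sum_{j\neq k}B_{jk}c_{\pi(j)}c_{\pi(k)}=:T_1+T_2$ and bound $\mathrm{Var}(T)\le 2\mathrm{Var}(T_1)+2\mathrm{Var}(T_2)$.

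\textbf{Bounding $T_1$.} This is a linear permutation statistic in the values $c_l^2$. The standard Hoeffding formula gives
\[
\mathrm{Var}(T_1)=\tfrac{1}{m-1}\textstyle\sum_j(B_{jj}-\bar B)^2\sum_l(c_l^2-\bar{c^2})^2\le \tfrac{1}{m-1}\sum_j B_{jj}^2\cdot m\kappa_4,
\]
which is $O\bigl(\kappa_4\,\mathrm{tr}(B^2)\bigr)$, using $\sum_j B_{jj}^2\le\mathrm{tr}(B^2)$.

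\textbf{Bounding $T_2$: the main obstacle.} For $T_2$, expand $E[T_2^2]$ as a sum over index quadruples $(j,k,j',k')$. Classify the terms by the number of distinct indices among $\{j,k,j',k'\}$, which is $2$, $3$ or $4$. The required mixed permutation moments are:
\[
E[c_{\pi(1)}^2c_{\pi(2)}^2],\qquad E[c_{\pi(1)}^2c_{\pi(2)}c_{\pi(3)}],\qquad E[c_{\pi(1)}c_{\pi(2)}c_{\pi(3)}c_{\pi(4)}].
\]
Each follows from power-sum identities using $\sum_l c_l=0$. The first is $O(\sigma^4+\kappa_4/m)$, the second is $O((\sigma^4+\kappa_4)/m)$, and the third is $O((\sigma^4+\kappa_4/m)/m^2)$.

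The $2$-index terms contribute $O(\sigma^4+\kappa_4)\sum_{j\ne k}B_{jk}^2$. The $3$- and $4$-index terms carry small moments but involve sums such as $\sum_{j,k,k'}B_{jk}B_{jk'}$ and $(\sum_{j\ne k}B_{jk})^2$. These sums are controlled by the zero-row-sum property of $B$. Since $\sum_{k\ne j}B_{jk}=-B_{jj}$, the $3$-index sums collapse to $\sum_j B_{jj}^2\le\mathrm{tr}(B^2)$, and $\sum_{j\ne k}B_{jk}=-\mathrm{tr}(B)$. Without this cancellation a naive bound would produce $m^2$ or $m^3$ factors that the moments cannot offset, so the bookkeeping of these terms is the delicate part of the proof.

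The $4$-index term leaves a residual of order $(\sigma^4+\kappa_4/m)\,\mathrm{tr}(B)^2/m^2$. By Cauchy--Schwarz, $\mathrm{tr}(B)^2\le m\,\mathrm{tr}(B^2)$, so this residual is at most $(\sigma^4+\kappa_4)\,\mathrm{tr}(B^2)/m$, which is also of the required order. As a cross-check, these are exactly the classical Mantel/Kendall variance formulas, which could simply be cited and then bounded. Collecting terms gives
\[
\mathrm{Var}(T)\le C'(\sigma^4+\kappa_4)\,\mathrm{tr}(A^2).
\]

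\textbf{Conclusion.} Bounded fourth moments of $b$ give $E[\kappa_4]=O(1)$ and, similarly, $\sigma^2=O_p(1)$, so $\kappa_4=O_p(1)$ by Markov's inequality. Apply conditional Chebyshev given the order statistics. The bound $\Pr\bigl(|T-\sigma^2\mathrm{tr}(AQ)|>M\sqrt{\mathrm{tr}(A^2)}\bigr)\le C'(\sigma^4+\kappa_4)/M^2$ then integrates to a tight bound, which yields $T=\sigma^2\mathrm{tr}(AQ)+O_p\bigl(\sqrt{\mathrm{tr}(A^2)}\bigr)$. The assumption $\mathrm{tr}(A^2)\ge1$ only serves to absorb constant-order remainders.
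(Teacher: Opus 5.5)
Your proof is correct and shares the paper's skeleton. Both condition on the order statistics so that the centered values are uniformly permuted, use the two-point permutation moments for the mean, split into diagonal and off-diagonal parts, and apply Hoeffding's exact variance to the linear diagonal statistic.

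The routes diverge on the off-diagonal part. The paper takes the citation route you mention in passing. It applies Mantel's exact permutation variance formula to $A$ itself, written in terms of $A_0=\sum_{j\ne k}A_{jk}$, $A_1=\sum_{j\ne k}A_{jk}^2$ and $A_2=\sum_j(\sum_{k\ne j}A_{jk})^2$. It bounds the $A_0$ and $A_2$ terms through the operator norm, e.g.\ $A_0^2\le 4C^2m^2$, and then absorbs the resulting $O(C^2)$ constants using $\mathrm{tr}(A^2)\ge1$; this is why its $C'$ depends on $C$. You instead pass to $B=QAQ$, whose zero row sums turn $A_2$ into $\sum_jB_{jj}^2$ and $A_0$ into $-\mathrm{tr}(B)$, and then expand $E[T_2^2]$ by hand. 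Combined with $\mathrm{tr}(B)^2\le m\,\mathrm{tr}(B^2)$, this gives a bound with an absolute constant that uses neither $\|A\|_2\le C$ nor $\mathrm{tr}(A^2)\ge1$. The price is rederiving what the paper obtains by citation.

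Two small corrections to your commentary. First, over distinct triples the three-index sum collapses to $\sum_jB_{jj}^2-\sum_{j\ne k}B_{jk}^2$, not $\sum_jB_{jj}^2$; its absolute value is still at most $\mathrm{tr}(B^2)$, so nothing breaks. Second, the zero-row-sum reduction is a convenience, not a necessity. Working with $A$ directly, the Cauchy--Schwarz bounds $A_2\le(m-1)A_1$ and $A_0^2\le mA_2\le m(m-1)A_1$ already offset the extra powers of $m$, given the moment orders you state. The three-index term is then $O(\sigma^4+\kappa_4)A_1$ and the four-index term is $O(\sigma^4+\kappa_4/m)A_1$. Your claim that a bound without the cancellation would fail is therefore overstated. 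What actually removes the dependence on $C$ is using Cauchy--Schwarz rather than operator-norm bounds, with or without the projection.
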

\begin{proof}
Because $\mu$ and $\sigma^{2}$ are symmetric functions of $b$, conditioning on the order statistics
makes $b-\mu1_{m}$ a uniformly random permutation of the realized centered values $c=(c_{1},\dots,c_{m})$,
which satisfy $\sum_{l}c_{l}=0$ and $S_{2}:=\sum_{l}c_{l}^{2}=(m-1)\sigma^{2}$; write
$S_{4}:=\sum_{l}c_{l}^{4}$, and set $\tilde{c}_{j}:=c_{\phi(j)}$, where $\phi$ is a uniformly random permutation of $\{1,\dots,m\}$. The moments of a uniform sampling
without replacement are exact symmetric functions of the power sums $\sum_{l}c_{l}^{r}$
\parencite{kendall1977advanced}. Since $\sum_{l}c_{l}=0$,
\[
E[\tilde{c}_{j}^{2}]=\tfrac{S_{2}}{m},\qquad E[\tilde{c}_{j}\tilde{c}_{k}]=-\tfrac{S_{2}}{m(m-1)}\quad(j\ne k).
\]

\noindent\textbf{Mean.} Collecting these,
\[
E\Big[\textstyle\sum_{j,k}A_{jk}\tilde{c}_{j}\tilde{c}_{k}\Big]
=\tfrac{S_{2}}{m}\mathrm{tr}(A)-\tfrac{S_{2}}{m(m-1)}\big(1_{m}'A1_{m}-\mathrm{tr}(A)\big)
=\tfrac{S_{2}}{m-1}\big(\mathrm{tr}(A)-\tfrac{1}{m}1_{m}'A1_{m}\big)=\sigma^{2}\,\mathrm{tr}(AQ),
\]
using $S_{2}/(m-1)=\sigma^{2}$ and $\mathrm{tr}(A)-\tfrac{1}{m}1_{m}'A1_{m}=\mathrm{tr}(AQ)$.

\noindent\textbf{Variance.} Split $\Psi:=\sum_{j,k}A_{jk}\tilde{c}_{j}\tilde{c}_{k}=\Psi_{d}+\Psi_{o}$ with
$\Psi_{d}:=\sum_{j}A_{jj}\tilde{c}_{j}^{2}$ and $\Psi_{o}:=\sum_{j\ne k}A_{jk}\tilde{c}_{j}\tilde{c}_{k}$. Then $\mathrm{Var}(\Psi)\le2\mathrm{Var}(\Psi_{d})+2\mathrm{Var}(\Psi_{o})$. The diagonal part is a
linear statistic in the values $\{c_{l}^{2}\}$, and its permutation variance is therefore exactly
\[
\mathrm{Var}(\Psi_{d})=\tfrac{1}{m-1}\Big(\textstyle\sum_{j}(A_{jj}-\bar{A})^{2}\Big)\Big(\sum_{l}(c_{l}^{2}-\tfrac{S_{2}}{m})^{2}\Big)\le\tfrac{S_{4}}{m-1}\,\mathrm{tr}(A^{2}),
\]
with $\bar{A}:=m^{-1}\sum_{j}A_{jj}$, since $\sum_{j}(A_{jj}-\bar{A})^{2}\le\mathrm{tr}(A^{2})$ and
$\sum_{l}(c_{l}^{2}-S_{2}/m)^{2}\le S_{4}$. The off-diagonal part is a \textcite{mantel1967detection} statistic, a bilinear form $\sum_{j\ne k}A_{jk}c_{\phi(j)}c_{\phi(k)}$ in two symmetric matrices with zero diagonals. For $m\ge4$, its exact permutation variance is a polynomial in the sums $A_{0}:=\sum_{j\ne k}A_{jk}$, $A_{1}:=\sum_{j\ne k}A_{jk}^{2}$, and $A_{2}:=\sum_{j}(\sum_{k\ne j}A_{jk})^{2}$ and in the analogous sums for the matrix $(c_{k}c_{l})_{k\ne l}$, which equal $-S_{2}$, $S_{2}^{2}-S_{4}$, and $S_{4}$ because $\sum_{l}c_{l}=0$. Substituting these values into Mantel's formula gives
\begin{align*}
\mathrm{Var}(\Psi_{o})&=\frac{2A_{1}(S_{2}^{2}-S_{4})}{m(m-1)}+\frac{4(A_{2}-A_{1})(2S_{4}-S_{2}^{2})}{m(m-1)(m-2)}\\
&\quad+\frac{(A_{0}^{2}-4A_{2}+2A_{1})(3S_{2}^{2}-6S_{4})}{m(m-1)(m-2)(m-3)}-\frac{A_{0}^{2}S_{2}^{2}}{m^{2}(m-1)^{2}}.
\end{align*}
To bound the four terms, use $A_{1}\le\mathrm{tr}(A^{2})$, the Cauchy--Schwarz bound $A_{2}\le(m-1)A_{1}$, the operator-norm bounds $A_{2}\le\|A-\mathrm{diag}(A)\|_{2}^{2}\,m\le4C^{2}m$ and $A_{0}^{2}\le mA_{2}\le4C^{2}m^{2}$, the identity $S_{2}=(m-1)\sigma^{2}$, and $\mathrm{tr}(A^{2})\ge1$. The first term is at most $2\sigma^{4}\mathrm{tr}(A^{2})$. The second is at most $4mA_{1}(2S_{4}+S_{2}^{2})/(m(m-1)(m-2))\le(12S_{4}/m+8\sigma^{4})\mathrm{tr}(A^{2})$. The third and fourth are each at most a constant multiple of $C^{2}(\sigma^{4}+S_{4}/m)$, hence of $C^{2}(\sigma^{4}+S_{4}/m)\mathrm{tr}(A^{2})$. Collecting the pieces gives the stated bound with $C'$ depending only on $C$. For $m\le3$, the crude bound $\mathrm{Var}(\Psi)\le E[\Psi^{2}]\le C^{2}S_{2}^{2}\le4C^{2}\sigma^{4}\mathrm{tr}(A^{2})$ suffices.

\noindent\textbf{In probability.} Assumption~\ref{assu:random_b}.ii gives
$E[S_{4}]=\sum_{l}E[(b_{l}-\mu)^{4}]=O(m)$, giving $S_{4}/m=O_{p}(1)$. Since $\sigma^{2}=O_{p}(1)$ by Assumption~\ref{assu:random_b}.ii, the variance bound is $O_{p}(\mathrm{tr}(A^{2}))$, and Chebyshev gives
$\Psi-\sigma^{2}\mathrm{tr}(AQ)=O_{p}(\sqrt{\mathrm{tr}(A^{2})})$.
\end{proof}

\subsection*{Proof of Lemma~\ref{lem:mu_consistency}}
\begin{proof}
Combine parts (i) and (ii) of Lemma~\ref{lem:building-blocks}:
$\hat{\mu}-\mu=(\hat{\mu}-\mu_{\text{oracle}})+(\mu_{\text{oracle}}-\mu)
=O_{p}(\sqrt{m/n})+O_{p}(1/\sqrt{m})$.
Under $m^{2}/n\to0$, $\sqrt{m/n}=o(1/\sqrt{m})$, and the oracle term dominates.
\end{proof}

\subsection*{Proof of Lemma~\ref{lem:sigma-consistency}}
\begin{proof}
Decompose the error into an oracle component and a first-stage component. The oracle component uses
the population $M$ and $V$ with random $b$ and $\varepsilon_{n}$:
\[
\sigma_{\text{oracle}}^{2}=\frac{\|Mr_{n}-\mu_{\text{oracle}}M1_{m}\|^{2}-\mathrm{tr}(MVM')}{\mathrm{tr}(M'MQ)},
\]
with $\mu_{\text{oracle}}$ as in Lemma~\ref{lem:building-blocks}. The centered moment $\eta_{n}=(b-\mu1_{m})+\varepsilon_{n}$ has conditional mean $a_{n}:=E[\varepsilon_{n}]$, with $\|a_{n}\|\to0$ by Assumption~\ref{assu:random_b}.i and $E[b\mid\mu,\sigma^{2}]=\mu1_{m}$ from Section~\ref{sec:local_misspecification}. Its variance is $\Sigma_{n}=V_{n}+\sigma^{2}Q$. In the numerator of $\sigma_{\text{oracle}}^{2}$, the mean $a_{n}$ enters the quadratic form $\eta_{n}'M'M\eta_{n}$ only through $a_{n}'M'Ma_{n}=o(1)$ and the cross term $2a_{n}'M'M(\eta_{n}-a_{n})$, whose variance $4a_{n}'M'M\Sigma_{n}M'Ma_{n}$ is $o(1)$. Both are negligible against the $O_{p}(\sqrt{m})$ stochastic error established below. I therefore treat $\eta_{n}$ as mean-zero. The denominator $\mathrm{tr}(M'MQ)=\mathrm{tr}(M'M)-\tfrac{1}{m}\|M1_{m}\|^{2}$ grows at rate $m$, since $\sum_{i,j}(M_{ij}-M_{ji})^{2}\ge0$ and idempotency of $M$ give $\mathrm{tr}(M'M)\ge\mathrm{tr}(M^{2})=\mathrm{tr}\,M=m-p$, while $\mathrm{tr}(M'M)=O(m)$ and $\tfrac{1}{m}\|M1_{m}\|^{2}=O(1)$.

\noindent\textbf{Oracle error.} Since
$Mr_{n}-\mu_{\text{oracle}}M1_{m}=M\eta_{n}-(\mu_{\text{oracle}}-\mu)M1_{m}$,
\[
\|Mr_{n}-\mu_{\text{oracle}}M1_{m}\|^{2}
=\eta_{n}'M'M\eta_{n}-2(\mu_{\text{oracle}}-\mu)\,1_{m}'M'M\eta_{n}+(\mu_{\text{oracle}}-\mu)^{2}\|M1_{m}\|^{2}.
\]
By Lemma~\ref{lem:building-blocks}(i), $\mu_{\text{oracle}}-\mu=O_{p}(1/\sqrt{m})$. The linear form
$1_{m}'M'M\eta_{n}$ is mean zero with variance $(M'M1_{m})'\Sigma_{n}(M'M1_{m})=O(m)$, hence
$1_{m}'M'M\eta_{n}=O_{p}(\sqrt{m})$. With $\|M1_{m}\|^{2}=O(m)$, the last two terms are each $O_{p}(1)$. The
numerator is therefore governed by
$\eta_{n}'M'M\eta_{n}=(b-\mu1_{m})'M'M(b-\mu1_{m})+2(b-\mu1_{m})'M'M\varepsilon_{n}+\varepsilon_{n}'M'M\varepsilon_{n}$. I
bound the three terms separately.

Because $\|M'M\|_{2}=O(1)$ and $\mathrm{tr}((M'M)^{2})=O(m)$, Lemma~\ref{lem:perm-quadratic}
applies with $A=M'M$ and gives $(b-\mu1_{m})'M'M(b-\mu1_{m})=\sigma^{2}\mathrm{tr}(M'MQ)+O_{p}(\sqrt{m})$.
For $\varepsilon_{n}'M'M\varepsilon_{n}$, Assumption~\ref{assu:regime}.vii and Chebyshev give a
deviation of $O_{p}(\sqrt{\mathrm{tr}((M'MV)^{2})})=O_{p}(\sqrt{m})$ from the mean
$\mathrm{tr}(M'MV_{n})$. That mean differs from $\mathrm{tr}(MVM')$ by
$|\mathrm{tr}(M'M(V_{n}-V))|\le\|V_{n}-V\|_{2}\mathrm{tr}(M'M)=O(m^{3/2}/\sqrt{n})=o(\sqrt{m})$, using the convergence $\|V_{n}-V\|_{2}=O(\sqrt{m/n})$ of Assumption~\ref{assu:regime}.vii. Therefore
$\varepsilon_{n}'M'M\varepsilon_{n}=\mathrm{tr}(MVM')+O_{p}(\sqrt{m})$. For the cross term, condition
on $b$ and use $b\perp\varepsilon_{n}$. The conditional variance is
$4(b-\mu1_{m})'M'MV_{n}M'M(b-\mu1_{m})$, which is at most $4\|M'MV_{n}M'M\|_{2}\|b-\mu1_{m}\|^{2}=O(1)\cdot(m-1)\sigma^{2}=O_{p}(m)$ because $\|b-\mu1_{m}\|^{2}=(m-1)\sigma^{2}$ by the definition of the realized variance. Chebyshev then gives a cross term of $O_{p}(\sqrt{m})$.
Collecting,
\[
\eta_{n}'M'M\eta_{n}=\sigma^{2}\mathrm{tr}(M'MQ)+\mathrm{tr}(MVM')+O_{p}(\sqrt{m}).
\]
The $\mathrm{tr}(MVM')$ term cancels the noise subtraction in the numerator. Dividing by
$\mathrm{tr}(M'MQ)$, which grows at rate $m$, yields $\sigma_{\text{oracle}}^{2}=\sigma^{2}+O_{p}(1/\sqrt{m})$.

\noindent\textbf{First-stage error.} Substituting $(\hat{M},\hat{V},\hat{\mu})$ for
$(M,V,\mu_{\text{oracle}})$ changes the numerator and denominator. For the denominator,
$\mathrm{tr}(\hat{M}'\hat{M}Q)-\mathrm{tr}(M'MQ)=\mathrm{tr}((\hat{M}'\hat{M}-M'M)Q)$. Write
$\hat{M}'\hat{M}-M'M=(\hat{M}-M)'M+M'(\hat{M}-M)+(\hat{M}-M)'(\hat{M}-M)$. Since $\|\hat{M}-M\|_{F}\le\sqrt{m}\,\|\hat{M}-M\|_{2}=O_{p}(m/\sqrt{n})$, each term has Frobenius norm $O_{p}(m/\sqrt{n})$, giving
$|\mathrm{tr}((\hat{M}'\hat{M}-M'M)Q)|\le\|\hat{M}'\hat{M}-M'M\|_{F}\|Q\|_{F}=O_{p}(m^{3/2}/\sqrt{n})$,
using $\|Q\|_{F}=\sqrt{m-1}$. Hence
$\mathrm{tr}(\hat{M}'\hat{M}Q)=\mathrm{tr}(M'MQ)(1+O_{p}(\sqrt{m/n}))$ and
$\mathrm{tr}(\hat{M}'\hat{M}Q)^{-1}=O_{p}(1/m)$.

For the squared-residual term, recall the masked residual $u_{n}=\hat{r}-\hat{M}r_{n}$ of Lemma~\ref{lem:building-blocks}(iii), which has norm $O_{p}(\sqrt{m/n})$. The difference
\[
\hat{r}-\hat{\mu}\hat{M}1_{m}-(Mr_{n}-\mu_{\text{oracle}}M1_{m})
=(\hat{M}-M)(r_{n}-\mu_{\text{oracle}}1_{m})-(\hat{\mu}-\mu_{\text{oracle}})\hat{M}1_{m}+u_{n}
\]
has norm $O_{p}(m/\sqrt{n})$. Each term on the right is at most $O_{p}(\sqrt{m/n})\cdot O_{p}(\sqrt{m})$, by
$\|r_{n}-\mu_{\text{oracle}}1_{m}\|=O_{p}(\sqrt{m})$, $\|\hat{M}1_{m}\|=O_{p}(\sqrt{m})$, and
Lemma~\ref{lem:building-blocks}(ii). The change in a squared norm equals twice the inner product with the error plus the squared norm of the error. Since $\|Mr_{n}-\mu_{\text{oracle}}M1_{m}\|=O_{p}(\sqrt{m})$, the squared norm changes by $O_{p}(m^{3/2}/\sqrt{n})+O_{p}(m^{2}/n)=O_{p}(m^{3/2}/\sqrt{n})$.
The trace correction $\mathrm{tr}(\hat{M}\hat{V}\hat{M}')-\mathrm{tr}(MVM')$ expands into three terms.
Each is bounded by a Frobenius factor $O_{p}(m/\sqrt{n})$ times $O_{p}(\sqrt{m})$, hence is
$O_{p}(m^{3/2}/\sqrt{n})$.

\noindent\textbf{Combining.} Write
\[
\mathrm{num}_{\text{feas}}:=\|\hat{r}-\hat{\mu}\hat{M}1_{m}\|^{2}-\mathrm{tr}(\hat{M}\hat{V}\hat{M}'),\qquad
\mathrm{num}_{\text{oracle}}:=\|Mr_{n}-\mu_{\text{oracle}}M1_{m}\|^{2}-\mathrm{tr}(MVM').
\]
The untruncated feasible ratio is $\widetilde{\sigma^{2}}=\mathrm{num}_{\text{feas}}/\mathrm{tr}(\hat{M}'\hat{M}Q)$.
The difference $\mathrm{num}_{\text{feas}}-\mathrm{num}_{\text{oracle}}$ is the squared-residual
change minus the trace correction, each $O_{p}(m^{3/2}/\sqrt{n})$, hence is itself
$O_{p}(m^{3/2}/\sqrt{n})$. Substituting this and the denominator change into the identity
\[
\widetilde{\sigma^{2}}-\sigma_{\text{oracle}}^{2}
=\frac{\mathrm{num}_{\text{feas}}-\mathrm{num}_{\text{oracle}}}{\mathrm{tr}(\hat{M}'\hat{M}Q)}
-\sigma_{\text{oracle}}^{2}\cdot\frac{\mathrm{tr}(\hat{M}'\hat{M}Q)-\mathrm{tr}(M'MQ)}{\mathrm{tr}(\hat{M}'\hat{M}Q)},
\]
with $\sigma_{\text{oracle}}^{2}=O_{p}(1)$, shows that both terms equal
$O_{p}(m^{3/2}/\sqrt{n})\cdot O_{p}(1/m)=O_{p}(\sqrt{m/n})$. Adding the oracle error gives $\widetilde{\sigma^{2}}-\sigma^{2}=O_{p}(1/\sqrt{m})+O_{p}(\sqrt{m/n})$. Finally, because $\sigma^{2}\ge0$,
$|\max(0,\widetilde{\sigma^{2}})-\sigma^{2}|\le|\widetilde{\sigma^{2}}-\sigma^{2}|$, and truncation
preserves the rate:
\[
\widehat{\sigma^{2}}-\sigma^{2}=O_{p}(1/\sqrt{m})+O_{p}(\sqrt{m/n}).
\]
\end{proof}

\subsection*{Proof of Proposition \ref{prop:bc-rate}}
\begin{proof}
Under Assumptions~\ref{assu:random_b}, \ref{assu:regime},
and~\ref{assu:rate}, Lemma~\ref{lem:mu_consistency} gives
$\hat{\mu}-\mu=O_{p}(1/\sqrt{m})+O_{p}(\sqrt{m/n})=o_{p}(1)$. By
Assumption~\ref{assu:regime}.iii, $\|\hat{\Lambda}-\Lambda\|_{2}=O_{p}(1/\sqrt{n})$.
Assumption~\ref{assu:rate}.ii gives $\|\Lambda\|_{2}=O(1)$.
The triangle inequality then gives $\|\hat{\Lambda}\|_{2}=O_{p}(1)$. Likewise, Assumption~\ref{assu:rate}.i and $\|(\hat{\Lambda}-\Lambda)1_{m}\|\le\sqrt{m}\,\|\hat{\Lambda}-\Lambda\|_{2}=O_{p}(\sqrt{m/n})$ give $\|\hat{\Lambda}1_{m}\|\le\|\Lambda1_{m}\|+O_{p}(\sqrt{m/n})=O_{p}(1)$.

\noindent\textbf{Linearization.}
From Assumption~\ref{assu:regime}.iii, the linearization remainder
$\rho_{n}:=\hat{r}-r_{n}-\hat{G}\sqrt{n}(\hat{\theta}-\theta_{0})$
satisfies $\|\rho_{n}\|=O_{p}(\sqrt{m/n})$. Premultiplying by $\hat{\Lambda}$
and using the first-order condition $\hat{G}'\hat{W}\hat{r}=0$, hence $\hat{\Lambda}\hat{r}=0$,
together with $\hat{\Lambda}\hat{G}=-I_{p}$ gives 
\[
\sqrt{n}(\hat{\theta}-\theta_{0})=\hat{\Lambda}r_{n}+\hat{\Lambda}\rho_{n}.
\]
Here $r_{n}=\mu1_{m}+\eta_{n}$, with $\eta_{n}$ the centered moment.

\noindent\textbf{Combining.}
Substituting this expansion into the definition of the bias-corrected estimator $\hat{\theta}_{BC}=\hat{\theta}-(\hat{\mu}/\sqrt{n})\hat{\Lambda}1_{m}$ gives
\[
\sqrt{n}(\hat{\theta}_{BC}-\theta_{0})=\hat{\Lambda}\eta_{n}-(\hat{\mu}-\mu)\hat{\Lambda}1_{m}+\hat{\Lambda}\rho_{n}.
\]
Decompose the first term as $\hat{\Lambda}\eta_{n}=\Lambda\eta_{n}+(\hat{\Lambda}-\Lambda)\eta_{n}$.
The leading piece $\Lambda\eta_{n}$ is $O_{p}(1)$. Its mean $\Lambda E[\varepsilon_{n}]$ has norm $o(1)$ by Assumption~\ref{assu:random_b}.i, and its covariance is $\Lambda\Sigma_{n}\Lambda'$ with $\Sigma_{n}=V_{n}+\sigma^{2}Q$ by $b\perp\varepsilon_{n}$ from the same assumption. Because $\Lambda$ is nonrandom, $E\|\Lambda\eta_{n}\|^{2}=\|\Lambda E[\varepsilon_{n}]\|^{2}+\mathrm{tr}(\Lambda\Sigma_{n}\Lambda')\le o(1)+p\|\Lambda\|_{2}^{2}\|\Sigma_{n}\|_{2}=O(1)$, and Markov's inequality gives the $O_{p}(1)$ bound. The second piece
satisfies $\|(\hat{\Lambda}-\Lambda)\eta_{n}\|\le\|\hat{\Lambda}-\Lambda\|_{2}\|\eta_{n}\|=O_{p}(1/\sqrt{n})\cdot O_{p}(\sqrt{m})=o_{p}(1)$
under $m/n\to0$. The remaining terms are also negligible:
$|\hat{\mu}-\mu|\|\hat{\Lambda}1_{m}\|=o_{p}(1)\cdot O_{p}(1)=o_{p}(1)$, and
$\|\hat{\Lambda}\rho_{n}\|\le\|\hat{\Lambda}\|_{2}\|\rho_{n}\|=O_{p}(\sqrt{m/n})=o_{p}(1)$.
Hence $\hat{\theta}_{BC}-\theta_{0}=O_{p}(1/\sqrt{n})$.
\end{proof}

Several proofs below rely on the following rank identity for the centered moment covariance.
\begin{lemapp}[Rank of centered covariance]\label{lem:rank-centered}
Under Assumptions~\ref{assu:regime}.iv and \ref{assu:regime}.vi,
\[
MM_{\mu}=(I_{m}-\|M1_{m}\|^{-2}M1_{m}1_{m}'M')M,
\]
which deletes the single direction $M1_{m}$ from $\mathrm{col}(M)$. Hence $\mathrm{rank}(MM_{\mu})=m-p-1$, and $M\Sigma_{\mu}(s)M'=(MM_{\mu})(V+sQ)(MM_{\mu})'$ has rank $m-p-1$ and range $\mathrm{col}(MM_{\mu})$ for every $s\ge0$.
\end{lemapp}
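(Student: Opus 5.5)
The plan is to verify the displayed identity by direct algebra, then read off the rank and range facts from it using Assumption~\ref{assu:regime}.iv. For the identity, recall $M_{\mu}=I_{m}-1_{m}w'$ with $w=\|M1_{m}\|^{-2}M'M1_{m}$. Then
\[
MM_{\mu}=M-M1_{m}w'=M-\|M1_{m}\|^{-2}M1_{m}1_{m}'M'M=\bigl(I_{m}-\|M1_{m}\|^{-2}M1_{m}1_{m}'M'\bigr)M .
\]
This is well defined because $\|M1_{m}\|^{2}>0$ by Assumption~\ref{assu:regime}.vi. The prefactor $P_{1}:=I_{m}-\|M1_{m}\|^{-2}M1_{m}(M1_{m})'$ is the orthogonal projection onto the orthogonal complement of $M1_{m}$. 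So $MM_{\mu}=P_{1}M$.

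For the rank, I would first note that $\mathrm{col}(M)$ has dimension $m-p$, since $M$ is an idempotent oblique projection with $\mathrm{tr}M=m-p$. Its range is therefore $\mathrm{col}(M)$, and rank equals trace for idempotents. Since $M1_{m}\in\mathrm{col}(M)$ and is nonzero, $P_{1}$ restricted to $\mathrm{col}(M)$ has kernel exactly $\mathrm{span}(M1_{m})$. It follows that $\mathrm{col}(P_{1}M)=P_{1}\,\mathrm{col}(M)=\mathrm{col}(M)\cap(M1_{m})^{\perp}$, which has dimension $m-p-1$. This gives $\mathrm{rank}(MM_{\mu})=m-p-1$, and it is precisely the sense in which the direction $M1_{m}$ is deleted from $\mathrm{col}(M)$.

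For the covariance, write $M\Sigma_{\mu}(s)M'=MM_{\mu}(V+sQ)M_{\mu}'M'=(MM_{\mu})(V+sQ)(MM_{\mu})'$. Its range is contained in $\mathrm{col}(MM_{\mu})$, so it suffices to show that its rank is at least $m-p-1$. This is the step I expect needs care, because $V+sQ$ is only positive semidefinite and Assumption~\ref{assu:regime}.iv controls $MVM'$ rather than $V$ itself. I would use two observations:
\begin{itemize}
\item $sQ\succeq0$ implies $(MM_{\mu})(V+sQ)(MM_{\mu})'\succeq (MM_{\mu})V(MM_{\mu})'=P_{1}(MVM')P_{1}$.
\item By Assumption~\ref{assu:regime}.iv, $MVM'$ is PSD with rank $m-p$, and its range is contained in $\mathrm{col}(M)$, which also has dimension $m-p$. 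Hence $MVM'$ is positive definite on $\mathrm{col}(M)$.
\end{itemize}
Now take a vector $x\in\mathrm{col}(MM_{\mu})$ with $x'P_{1}MVM'P_{1}x=0$. Since $x\in(M1_{m})^{\perp}$, we have $P_{1}x=x\in\mathrm{col}(M)$. Positive definiteness on $\mathrm{col}(M)$ then forces $x=0$. So the PSD matrix $P_{1}MVM'P_{1}$ is positive definite on the $(m-p-1)$-dimensional space $\mathrm{col}(MM_{\mu})$, and therefore has rank at least $m-p-1$. The matrix $M\Sigma_{\mu}(s)M'$ dominates it and has range inside $\mathrm{col}(MM_{\mu})$. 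Its rank is therefore exactly $m-p-1$ and its range equals $\mathrm{col}(MM_{\mu})$, for every $s\ge0$.
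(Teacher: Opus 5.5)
Your proposal is correct and follows essentially the same route as the paper: derive $MM_{\mu}=P_{1}M$ with $P_{1}$ the orthogonal projector off $M1_{m}$, count dimensions on $\mathrm{col}(M)$, and show that $(MM_{\mu})(V+sQ)(MM_{\mu})'$ is positive definite on $\mathrm{col}(MM_{\mu})$ (using $sQ\succeq0$) while its range lies inside that subspace. The only differences are minor: you obtain $\mathrm{rank}(M)=m-p$ from idempotence and $\mathrm{tr}\,M=m-p$ rather than from Assumption~\ref{assu:regime}.iv, and you spell out the $P_{1}x=x$ step that the paper leaves implicit.
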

\begin{proof}
Assumption~\ref{assu:regime}.vi gives $M1_{m}\neq0$, so $M_{\mu}=I_{m}-1_{m}w'$ with $w=M'M1_{m}/\|M1_{m}\|^{2}$ is well defined. Then $MM_{\mu}=M-\|M1_{m}\|^{-2}M1_{m}1_{m}'M'M=(I_{m}-P)M$, with $P=\|M1_{m}\|^{-2}(M1_{m})(M1_{m})'$ the orthogonal projector onto $\mathrm{span}(M1_{m})$. By Assumption~\ref{assu:regime}.iv, $\mathrm{rank}(M)=\mathrm{rank}(MVM')=m-p$; equivalently, $MVM'$ is positive definite on $\mathrm{col}(M)$, of dimension $m-p$. Deleting the single nonzero direction $M1_{m}\in\mathrm{col}(M)$ lowers the rank by one, hence $\mathrm{rank}(MM_{\mu})=m-p-1$. Because $sQ$ is positive semidefinite, $(MM_{\mu})(V+sQ)(MM_{\mu})'$ stays positive definite on $\mathrm{col}(MM_{\mu})$ for every $s\ge0$. Its range lies in $\mathrm{col}(MM_{\mu})$ and has the same dimension, hence equals it. Therefore $M\Sigma_{\mu}(s)M'$ has rank $m-p-1$ and range $\mathrm{col}(MM_{\mu})$ for every $s\ge0$, including $s=0$.
\end{proof}

\subsection*{Proof of Proposition~\ref{prop:eb-rate}}
\begin{proof}
Let $L:=\Pi M_{\mu}r_{n}$ be the infeasible predictor of the centered moment $M_{\mu}r_{n}$ and $\hat{L}:=\hat{\Pi}(\hat{r}-\hat{\mu}1_{m})$ its feasible counterpart. The vector $L$ is
the finite-sample analog of the predictor $\Pi\ddot{r}$ of Section~\ref{subsec:eb-shrinkage}. Since $M_{\mu}1_{m}=0$, $M_{\mu}r_{n}=M_{\mu}\eta_{n}$. Since $\hat{M}_{\mu}'\hat{M}'\hat{M}1_{m}=0$ by the definition of $\hat{w}$, the vector $\hat{M}1_{m}$ lies in the null space of $\hat{M}\hat{\Sigma}_{\mu}\hat{M}'$ and of its pseudoinverse. Hence $\hat{\Pi}1_{m}=0$ and $\hat{L}=\hat{\Pi}\hat{r}$. By Lemma~\ref{lem:building-blocks}(iii), $\hat{r}=\hat{M}r_{n}+u_{n}$ with $\|u_{n}\|=O_{p}(\sqrt{m/n})$. Since $\hat{\Pi}\hat{M}=\hat{\Pi}$, $\hat{\Pi}1_{m}=0$, and $r_{n}-M_{\mu}\eta_{n}=(\mu+w'\eta_{n})1_{m}$, it follows that $\hat{\Pi}\hat{r}=\hat{\Pi}M_{\mu}\eta_{n}+\hat{\Pi}u_{n}$. Therefore
\[
\hat{L}-L=(\hat{\Pi}(\widehat{\sigma^{2}})-\Pi(\widehat{\sigma^{2}}))M_{\mu}\eta_{n}+(\Pi(\widehat{\sigma^{2}})-\Pi(\sigma^{2}))M_{\mu}\eta_{n}+\hat{\Pi}u_{n}.
\]
The first term collects the first-stage errors in $\hat{M}$ and $\hat{V}$, evaluated at the estimated variance. The second isolates the error in $\widehat{\sigma^{2}}$ and involves population matrices only. The third is the projection of the linearization remainder $u_{n}$. I bound the three terms in turn.

\noindent\textbf{First term.}
By Assumption~\ref{assu:eb-reg}.ii and Markov's inequality, $\sup_{s\ge0}\|(\hat{\Pi}(s)-\Pi(s))M_{\mu}\eta_{n}\|=O_{p}(m/\sqrt{n})$. Evaluating at $s=\widehat{\sigma^{2}}$ shows that the first term is $O_{p}(m/\sqrt{n})$.

\noindent\textbf{Second term.}
Lemma~\ref{lem:sigma-consistency} and Assumption~\ref{assu:regime}.i give $|\widehat{\sigma^{2}}-\sigma^{2}|=O_{p}(1/\sqrt{m})$. Write $A(s):=M\Sigma_{\mu}(s)M'$. By Lemma~\ref{lem:rank-centered}, $A(s)$ has rank $m-p-1$ and range $\mathrm{range}(MM_{\mu})$ for every $s\ge0$. Since $\partial\Sigma_{\mu}(s)/\partial s=M_{\mu}QM_{\mu}'=M_{\mu}M_{\mu}'$ by $M_{\mu}1_{m}=0$, the derivative $\partial A(s)/\partial s=MM_{\mu}M_{\mu}'M'$ has range inside $\mathrm{range}(A(s))$. The difference $A(s)-A(0)=sMM_{\mu}M_{\mu}'M'$ is positive semidefinite on the common range. The nonzero eigenvalues of $A(s)$ therefore increase with $s$, and those of $A(s)^{+}$ decrease. In particular, $\|A(s)^{+}\|_{2}\le\|A(0)^{+}\|_{2}\le\mathrm{tr}((A(0)^{+})^{2})^{1/2}=O(\sqrt{m})$ and $\mathrm{tr}((A(s)^{+})^{2})\le\mathrm{tr}((A(0)^{+})^{2})=O(m)$ for every $s\ge0$, by Assumption~\ref{assu:eb-reg}.i.

Differentiating $A(s)^{+}$ uses the differential of the Moore--Penrose inverse \parencite[Theorem~8.5]{magnus2019matrix}, whose constant-rank hypothesis holds by Lemma~\ref{lem:rank-centered}. That differential has two further terms carrying the projectors $I_{m}-A(s)A(s)^{+}$ and $I_{m}-A(s)^{+}A(s)$, and both annihilate $\partial A(s)/\partial s$. Hence $\partial A(s)^{+}/\partial s=-A(s)^{+}(\partial A(s)/\partial s)A(s)^{+}$. Differentiating $\Pi(s)=\Sigma_{\mu}(s)M'A(s)^{+}M$ by the product rule, and regrouping the factor $\Sigma_{\mu}(s)M'A(s)^{+}M$ in the second piece as $\Pi(s)$, then gives
\[
\partial\Pi(s)/\partial s=(I_{m}-\Pi(s))M_{\mu}M_{\mu}'M'A(s)^{+}M.
\]
The derivative at $s=0$ is one-sided. This suffices because both $\sigma^{2}$ and $\widehat{\sigma^{2}}$ are nonnegative. Integrating along the segment between $\sigma^{2}$ and $\widehat{\sigma^{2}}$ gives
\[
\|(\Pi(\widehat{\sigma^{2}})-\Pi(\sigma^{2}))M_{\mu}\eta_{n}\|\le|\widehat{\sigma^{2}}-\sigma^{2}|\,\sup_{s}\|(\partial\Pi(s)/\partial s)M_{\mu}\eta_{n}\|,
\]
where the supremum runs over that segment. Write $v:=MM_{\mu}\eta_{n}$. By Assumption~\ref{assu:eb-reg}.iii and the bounds $\|M_{\mu}\|_{2},\|M\|_{2}=O(1)$, the vector $(\partial\Pi(s)/\partial s)M_{\mu}\eta_{n}=(I_{m}-\Pi(s))M_{\mu}M_{\mu}'M'A(s)^{+}v$ has norm at most a constant times $\|A(s)^{+}v\|$, uniformly in $s$. It remains to bound $\|A(s)^{+}v\|$ on the segment.

The map $s\mapsto A(s)^{+}v$ has derivative $-A(s)^{+}(\partial A(s)/\partial s)A(s)^{+}v$, whose norm is at most
\[
\|A(s)^{+}\|_{2}\,\|MM_{\mu}M_{\mu}'M'\|_{2}\,\|A(s)^{+}v\|=O(\sqrt{m})\cdot\|A(s)^{+}v\|.
\]
The vector $A(s)^{+}v$ is nonzero whenever $v$ is, because $v$ lies in $\mathrm{range}(A(s))$. The logarithmic derivative of $\|A(s)^{+}v\|$ is therefore bounded by $C\sqrt{m}$ in absolute value, for a constant $C$. Integrating from $\sigma^{2}$ to $s$ gives $\|A(s)^{+}v\|\le\|A(\sigma^{2})^{+}v\|\exp(C\sqrt{m}\,|s-\sigma^{2}|)$, and the exponent is $O(\sqrt{m})\cdot O_{p}(1/\sqrt{m})=O_{p}(1)$ on the segment. It remains to show $\|A(\sigma^{2})^{+}v\|=O_{p}(\sqrt{m})$. Write $A:=A(\sigma^{2})$. The vector $A^{+}v$ has covariance $A^{+}(A+\Delta_{n})A^{+}$, where $A+\Delta_{n}$ is the exact covariance $MM_{\mu}\Sigma_{n}M_{\mu}'M'$ of $v$ and $\Delta_{n}:=MM_{\mu}(V_{n}-V)M_{\mu}'M'$. Its mean squared norm is therefore
\[
E\|A^{+}v\|^{2}=\mathrm{tr}(A^{+}AA^{+})+\mathrm{tr}(A^{+}\Delta_{n}A^{+})+\|A^{+}E[v]\|^{2}.
\]
The first term equals $\mathrm{tr}(A^{+})$ because $A^{+}AA^{+}=A^{+}$. Since $E[v]=MM_{\mu}E[\varepsilon_{n}]$, the third term satisfies
\[
\|A^{+}MM_{\mu}E[\varepsilon_{n}]\|^{2}\le\|A^{+}\|_{2}^{2}\|M\|_{2}^{2}\|M_{\mu}\|_{2}^{2}\|E[\varepsilon_{n}]\|^{2}=O(m)\cdot o(1)
\]
by Assumption~\ref{assu:random_b}.i. Cauchy--Schwarz gives $(\mathrm{tr}\,A^{+})^{2}\le\mathrm{rank}(A)\,\mathrm{tr}((A^{+})^{2})\le m\,\mathrm{tr}((A^{+})^{2})=O(m^{2})$, hence $\mathrm{tr}(A^{+})=O(m)$. For the second term, the bounds $\|M\|_{2},\|M_{\mu}\|_{2}=O(1)$ and Assumption~\ref{assu:regime}.vii give $\|\Delta_{n}\|_{2}=O(\sqrt{m/n})$, hence $|\mathrm{tr}(A^{+}\Delta_{n}A^{+})|\le\|\Delta_{n}\|_{2}\,\mathrm{tr}((A^{+})^{2})=O(\sqrt{m/n})\cdot O(m)=o(m)$. Markov's inequality then gives $\|A^{+}v\|=O_{p}(\sqrt{m})$. Collecting the bounds, the second term is $O_{p}(1/\sqrt{m})\cdot O_{p}(\sqrt{m})=O_{p}(1)$.

\noindent\textbf{Third term.}
By Lemma~\ref{lem:building-blocks}(iii), $u_{n}=\hat{M}\rho_{n}$. Since $\hat{M}$ is idempotent, $\hat{M}u_{n}=u_{n}$, and the third term is $\hat{\Pi}u_{n}=\hat{\Sigma}_{\mu}\hat{M}'\hat{A}(\widehat{\sigma^{2}})^{+}u_{n}$, where $\hat{A}(s):=\hat{M}\hat{\Sigma}_{\mu}(s)\hat{M}'$. I next show that $\sup_{s\ge0}\|\hat{A}(s)^{+}\|_{2}=O_{p}(\sqrt{m})$.

The argument uses the first-stage rates $\|\hat{M}-M\|_{2}=O_{p}(\sqrt{m/n})$ and $\|\hat{V}-V\|_{2}=O_{p}(\sqrt{m/n})$ of Assumption~\ref{assu:regime}.iii, together with the rate they induce for $\hat{M}_{\mu}=I_{m}-1_{m}\hat{w}'$. Here $\hat{M}_{\mu}-M_{\mu}=-1_{m}(\hat{w}-w)'$, hence $\|\hat{M}_{\mu}-M_{\mu}\|_{2}=\sqrt{m}\,\|\hat{w}-w\|$. In $\hat{w}=\hat{M}'\hat{M}1_{m}/\|\hat{M}1_{m}\|^{2}$, the numerator error is $\|(\hat{M}'\hat{M}-M'M)1_{m}\|\le\sqrt{m}\,\|\hat{M}'\hat{M}-M'M\|_{2}=O_{p}(m/\sqrt{n})$. The proof of Lemma~\ref{lem:building-blocks}(ii) gives $\|\hat{M}1_{m}\|^{2}=\kappa m(1+o_{p}(1))$ and $\|\hat{M}1_{m}\|^{2}-\|M1_{m}\|^{2}=O_{p}(m^{3/2}/\sqrt{n})$. The denominator error multiplies $\|M'M1_{m}\|/(\|M1_{m}\|^{2}\|\hat{M}1_{m}\|^{2})=O_{p}(m^{-3/2})$. Both pieces of $\hat{w}-w$ are therefore $O_{p}(1/\sqrt{n})$, hence $\|\hat{M}_{\mu}-M_{\mu}\|_{2}=O_{p}(\sqrt{m/n})$. Combining the three rates gives $\|\hat{A}(0)-A(0)\|_{2}=O_{p}(\sqrt{m/n})$, which is $o_{p}(m^{-1/2})$ by Assumption~\ref{assu:regime}.i.

At $s=0$, Assumption~\ref{assu:eb-reg}.i gives $\lambda_{\min}^{+}(A(0))\ge\mathrm{tr}((A(0)^{+})^{2})^{-1/2}\ge c\,m^{-1/2}$ for some $c>0$. Since each eigenvalue moves by at most the operator norm of the perturbation, the $m-p-1$ largest eigenvalues of $\hat{A}(0)$ exceed $c\,m^{-1/2}/2$ with probability approaching one. The remaining eigenvalues are zero, because $\mathrm{rank}(\hat{A}(0))\le\mathrm{rank}(\hat{M}\hat{M}_{\mu})=m-p-1$ by the argument of Lemma~\ref{lem:rank-centered} applied to $\hat{M}$. Hence $\hat{A}(0)$ has range $\mathrm{range}(\hat{M}\hat{M}_{\mu})$ and $\|\hat{A}(0)^{+}\|_{2}=O_{p}(\sqrt{m})$. For $s>0$, $\hat{A}(s)$ has the same range, and $\hat{A}(s)-\hat{A}(0)=s\hat{M}\hat{M}_{\mu}\hat{M}_{\mu}'\hat{M}'$ is positive semidefinite, hence $\lambda_{\min}^{+}(\hat{A}(s))\ge\lambda_{\min}^{+}(\hat{A}(0))$. This proves the bound. With $\|\hat{\Sigma}_{\mu}\hat{M}'\|_{2}=O_{p}(1)$, the third term is $O_{p}(\sqrt{m})\cdot O_{p}(\sqrt{m/n})=O_{p}(m/\sqrt{n})$.

Combining the three terms gives $\|\hat{L}-L\|=O_{p}(1)+O_{p}(m/\sqrt{n})$.
Division by $\sqrt{m}$ gives part (i).

\noindent\textbf{Estimator rate.}
Recall $\hat{\theta}_{EB}=\hat{\theta}_{BC}-\hat{\Lambda}\hat{L}/\sqrt{n}$. The proof of Proposition~\ref{prop:bc-rate}
gives $\sqrt{n}(\hat{\theta}_{BC}-\theta_{0})=\Lambda\eta_{n}+o_{p}(1)$, and $\Lambda\eta_{n}-\Lambda M_{\mu}\eta_{n}=(w'\eta_{n})\Lambda1_{m}=(\mu_{\text{oracle}}-\mu)\Lambda1_{m}=O_{p}(1/\sqrt{m})$ by Lemma~\ref{lem:building-blocks}(i) and Assumption~\ref{assu:rate}.i. Hence $\sqrt{n}(\hat{\theta}_{BC}-\theta_{0})=\Lambda M_{\mu}\eta_{n}+o_{p}(1)$, and
\[
\sqrt{n}(\hat{\theta}_{EB}-\theta_{0})=\Lambda(M_{\mu}\eta_{n}-L)-\hat{\Lambda}(\hat{L}-L)-(\hat{\Lambda}-\Lambda)L+o_{p}(1).
\]
Each term is $O_{p}(1)$. The first equals $\Lambda(I-\Pi)M_{\mu}\eta_{n}=\Lambda_{EB}\eta_{n}$,
by $L=\Pi M_{\mu}\eta_{n}$. This is the leading term of Corollary~\ref{corr:bc-eb-clt}(ii). Its covariance $\Lambda_{EB}\Sigma_{n}\Lambda_{EB}'$ is $O(1)$ because $\|\Lambda_{EB}\|_{2}\le\|\Lambda\|_{2}\|(I-\Pi)M_{\mu}\|_{2}=O(1)$ by Assumptions~\ref{assu:rate}.ii and \ref{assu:eb-reg}.iii, and its mean is $o(1)$. Markov's inequality gives $O_{p}(1)$.
The second satisfies $\|\hat{\Lambda}(\hat{L}-L)\|\le\|\hat{\Lambda}\|_{2}\|\hat{L}-L\|=O_{p}(1)$,
using $\|\hat{L}-L\|=O_{p}(1)+O_{p}(m/\sqrt{n})=O_{p}(1)$ by Assumption~\ref{assu:regime}.i and $\|\hat{\Lambda}\|_{2}=O_{p}(1)$ from the proof of Proposition~\ref{prop:bc-rate}.
The third is $o_{p}(1)$, since $\|\hat{\Lambda}-\Lambda\|_{2}=O_{p}(1/\sqrt{n})$ by Assumption~\ref{assu:regime}.iii
and $\|L\|\le\|\Pi M_{\mu}\|_{2}\|\eta_{n}\|=O_{p}(\sqrt{m})$, where $\|\Pi M_{\mu}\|_{2}\le\|M_{\mu}\|_{2}+\|(I-\Pi)M_{\mu}\|_{2}=O(1)$. Hence $\sqrt{n}(\hat{\theta}_{EB}-\theta_{0})=O_{p}(1)$,
i.e.\ $\hat{\theta}_{EB}-\theta_{0}=O_{p}(1/\sqrt{n})$. This establishes part (ii).
\end{proof}

\subsection*{Proof of Proposition~\ref{prop:variance-ordering}}
\begin{proof}
By Lemma~\ref{lem:rank-centered}, the centered matrix $M\Sigma_{\mu}M'=(MM_{\mu})\Sigma(MM_{\mu})'$ has rank $m-p-1$. Since this matrix is symmetric and positive semidefinite, it is positive definite on its range, which equals $\mathrm{range}(MM_{\mu})$. The columns of $M\Sigma_{\mu}=(MM_{\mu})\Sigma M_{\mu}'$ lie in this range.

\noindent\textbf{Part (i): risk gap.} The $\sqrt{n}$-scaled bias-corrected error is $\Lambda M_{\mu}\eta_{n}$.
Its variance is $V_{BC}=\Lambda M_{\mu}\Sigma M_{\mu}'\Lambda'=\Lambda\Sigma_{\mu}\Lambda'$.
The EB error is $\Lambda(I-\Pi)M_{\mu}\eta_{n}$. Here
$\Pi=\Sigma_{\mu}M'(M\Sigma_{\mu}M')^{+}M$ is the best-linear-predictor
coefficient for the centered moment $M_{\mu}\eta_{n}$ given its masked image
$MM_{\mu}\eta_{n}$, whose covariance is $M\Sigma_{\mu}M'$. The matrix $\Pi$ is idempotent: $\Pi^{2}=\Pi$ follows from the Moore--Penrose property $(M\Sigma_{\mu}M')^{+}(M\Sigma_{\mu}M')(M\Sigma_{\mu}M')^{+}=(M\Sigma_{\mu}M')^{+}$.
It also satisfies $\Pi\Sigma_{\mu}=\Sigma_{\mu}\Pi'=\Sigma_{\mu}M'(M\Sigma_{\mu}M')^{+}M\Sigma_{\mu}$, which follows from the definition of $\Pi$ and the symmetry of $\Sigma_{\mu}$ and of $(M\Sigma_{\mu}M')^{+}$. Thus, $\Pi$ is self-adjoint in the $\Sigma_{\mu}$ inner product.
Hence, the residual covariance is
\[
(I-\Pi)\Sigma_{\mu}(I-\Pi)'=\Sigma_{\mu}-\Sigma_{\mu}M'(M\Sigma_{\mu}M')^{+}M\Sigma_{\mu}.
\]
Therefore
\[
V_{BC}-V_{EB}=\Lambda\Sigma_{\mu}M'(M\Sigma_{\mu}M')^{+}M\Sigma_{\mu}\Lambda'=(M\Sigma_{\mu}\Lambda')'(M\Sigma_{\mu}M')^{+}(M\Sigma_{\mu}\Lambda').
\]
This matrix is positive semidefinite because $(M\Sigma_{\mu}M')^{+}$ is. It is nonzero if and only if
$M\Sigma_{\mu}\Lambda'\neq0$. The forward direction is trivial. For the reverse direction, note that every column of $M\Sigma_{\mu}\Lambda'$ is a combination of columns of $M\Sigma_{\mu}$ and therefore lies in $\mathrm{range}(M\Sigma_{\mu}M')$, where $(M\Sigma_{\mu}M')^{+}$ is positive definite. Hence $a_{j}'(M\Sigma_{\mu}M')^{+}a_{j}>0$ for any nonzero column $a_{j}$. The difference $V_{BC}-V_{EB}$ is positive definite if and only
if the rank of $M\Sigma_{\mu}\Lambda'$ equals $p$. The $p$ columns of $M\Sigma_{\mu}\Lambda'$ lie in
$\mathrm{range}(M\Sigma_{\mu}M')$, a subspace of dimension $m-p-1$. They can be linearly
independent only when $p\le m-p-1$, that is, $m\geq2p+1$.
When $p=1$, the nonzero and positive-definite conditions coincide.

\noindent\textbf{Strict improvement under correct specification.} With $\sigma^{2}=0$ and efficient weighting ($W=V^{-1}$), the matrix in the strict-improvement condition becomes
\[
M\Sigma_{\mu}\Lambda'=M\left[(w'Vw)1_{m}-Vw\right](\Lambda1_{m})',
\]
where I have used $\Sigma_{\mu}=M_{\mu}VM_{\mu}'$ together with $MV\Lambda'=0$ and $w'V\Lambda'=0$, both of which follow from $MG=0$ and $w'G=0$. This matrix vanishes only if $\Lambda1_{m}=0$ or $(w'Vw)1_{m}-Vw\in\mathrm{col}(G)$, the latter holding under homoscedastic noise. Hence, EB strictly improves on BC under correct specification whenever $\Lambda1_{m}\neq0$ and $(w'Vw)1_{m}-Vw\notin\mathrm{col}(G)$. Homoscedastic noise is the leading case in which the latter fails. Heteroscedasticity makes strict improvement typical but does not guarantee it.

\noindent\textbf{Part (ii): efficiency characterization.} Suppose $\lambda_{\min}(V)>0$. Then $\Sigma=V+\sigma^{2}Q$ is positive definite. Consider the class of $p\times m$ matrices $B$ satisfying $BG=-I_{p}$ and $B1_{m}=0$. The EB sensitivity matrix $\Lambda_{EB}=\Lambda(I-\Pi)M_{\mu}$ belongs to this class. Since $MG=0$, both $\Pi G=0$ and $w'G=0$ hold. Therefore $M_{\mu}G=G$ and $\Lambda_{EB}G=\Lambda G=-I_{p}$. The identity $M_{\mu}1_{m}=0$ gives $\Lambda_{EB}1_{m}=0$. Every member of the class satisfies $B=BM_{\mu}$ because $B1_{m}w'=0$. Hence, every member's variance takes the centered form $B\Sigma B'=BM_{\mu}\Sigma M_{\mu}'B'=B\Sigma_{\mu}B'$. The Jacobian $G=M_{\mu}G$ lies in $\mathrm{col}(M_{\mu})$, which equals $\mathrm{col}(\Sigma_{\mu})$ because $\Sigma$ is positive definite. Therefore $\Sigma_{\mu}\Sigma_{\mu}^{+}G=G$, and $G'\Sigma_{\mu}^{+}G$ is positive definite. Because $\Sigma_{\mu}^{+}$ is positive semidefinite, the matrix
\[
\left(B\Sigma_{\mu}+\left(G'\Sigma_{\mu}^{+}G\right)^{-1}G'\right)\Sigma_{\mu}^{+}\left(B\Sigma_{\mu}+\left(G'\Sigma_{\mu}^{+}G\right)^{-1}G'\right)'
\]
is positive semidefinite for every member $B$. Expanding the product using $\Sigma_{\mu}\Sigma_{\mu}^{+}\Sigma_{\mu}=\Sigma_{\mu}$, $\Sigma_{\mu}\Sigma_{\mu}^{+}G=G$, and $BG=-I_{p}$ reduces it to $B\Sigma_{\mu}B'-\left(G'\Sigma_{\mu}^{+}G\right)^{-1}$. No member's variance falls below $\left(G'\Sigma_{\mu}^{+}G\right)^{-1}$.

The minimum is attained at $\Lambda_{EB}$, and only there. For any member $B$, the rows of $B-\Lambda_{EB}$ annihilate $G$ and $1_{m}$. By Lemma~\ref{lem:rank-centered}, $MM_{\mu}$ has rank $m-p-1$ and annihilates both $G$ and $1_{m}$. Because $M1_{m}\neq0$ puts $1_{m}$ outside $\mathrm{col}(G)$, the orthogonal complement of $\mathrm{col}([G\;1_{m}])$ also has dimension $m-p-1$. The rows of $MM_{\mu}$ span this complement. Hence $B=\Lambda_{EB}+\tilde{C}MM_{\mu}$ for some matrix $\tilde{C}$. As noted at the outset, the columns of $M\Sigma_{\mu}$ lie in $\mathrm{range}(M\Sigma_{\mu}M')$. The orthogonal projector $(M\Sigma_{\mu}M')^{+}(M\Sigma_{\mu}M')$ acts as the identity operator on this range: $(M\Sigma_{\mu}M')^{+}(M\Sigma_{\mu}M')M\Sigma_{\mu}=M\Sigma_{\mu}$. Transposing yields $\Pi\Sigma_{\mu}M'=\Sigma_{\mu}M'(M\Sigma_{\mu}M')^{+}(M\Sigma_{\mu}M')=\Sigma_{\mu}M'$. Thus $(I-\Pi)\Sigma_{\mu}M'=0$. The shrinkage residual is uncorrelated with the masked regressor. The cross term $\Lambda_{EB}\Sigma M_{\mu}'M'\tilde{C}'=\Lambda(I-\Pi)\Sigma_{\mu}M'\tilde{C}'$ therefore vanishes, and
\[
B\Sigma B'=V_{EB}+\tilde{C}\left(M\Sigma_{\mu}M'\right)\tilde{C}'.
\]
Thus $\Lambda_{EB}$ attains the class minimum. Any other minimizer has $\tilde{C}(M\Sigma_{\mu}M')\tilde{C}'=0$. Since $M\Sigma_{\mu}M'=(MM_{\mu})\Sigma(MM_{\mu})'$ with $\Sigma$ positive definite, it follows that $\tilde{C}MM_{\mu}=0$. The minimizer is therefore unique. The matrix $-(G'\Sigma_{\mu}^{+}G)^{-1}G'\Sigma_{\mu}^{+}M_{\mu}$ is also a member of the class because $M_{\mu}G=G$ and $M_{\mu}1_{m}=0$. Since $M_{\mu}$ is idempotent, $M_{\mu}\Sigma_{\mu}=\Sigma_{\mu}$, and the Moore--Penrose identity $\Sigma_{\mu}^{+}\Sigma_{\mu}\Sigma_{\mu}^{+}=\Sigma_{\mu}^{+}$ then shows that its variance equals $(G'\Sigma_{\mu}^{+}G)^{-1}$. This variance meets the lower bound, implying the member is a minimizer. Uniqueness delivers the influence identity $\Lambda_{EB}=-(G'\Sigma_{\mu}^{+}G)^{-1}G'\Sigma_{\mu}^{+}M_{\mu}$ and, in turn, $V_{EB}=(G'\Sigma_{\mu}^{+}G)^{-1}$.

Finally, the generalized least squares rule $-[I_{p}\;\,0]\left([G\;1_{m}]'\Sigma^{-1}[G\;1_{m}]\right)^{-1}[G\;1_{m}]'\Sigma^{-1}$ is also a member of the class, since multiplying it by $[G\;1_{m}]$ gives $-[I_{p}\;\,0]$. The matrix $[G\;1_{m}]$ has full column rank because $M1_{m}\neq0$. By the Gauss--Markov theorem, the variance of this rule is the class minimum. That variance is the leading $p\times p$ block of $([G\;1_{m}]'\Sigma^{-1}[G\;1_{m}])^{-1}$. Partitioned inversion yields the profiled expression for this block. The profiled expression involves only $G$, $\Sigma$, and $1_{m}$. It follows that neither closed form depends on the weighting matrix.
\end{proof}

\subsection*{Proof of Theorem~\ref{thm:clt}}
\begin{proof}
I show the studentized leading term $V_{\star}^{-1/2}\Lambda_{\star}\eta_{n}$ converges to $N(0,I_{p})$ by splitting on whether the realized variance exceeds a small threshold. The high-variance event is covered by a permutation central limit theorem. The low-variance event is covered by a second-moment bound.

Write $\tilde{\Lambda}_{\star,j}:=V_{\star}^{-1/2}\Lambda_{\star,j}$ for the studentized columns and $\tilde{\Lambda}_{\star}:=V_{\star}^{-1/2}\Lambda_{\star}$ for the studentized sensitivity, which is well-defined for sufficiently large $m$ by Assumption~\ref{assu:lindeberg}.i. Then $\tilde{\Lambda}_{\star}\Sigma\tilde{\Lambda}_{\star}'=V_{\star}^{-1/2}V_{\star}V_{\star}^{-1/2}=I_{p}$.

I condition on the order statistics of $b$. Because $\mu$ and $\sigma^{2}$ are symmetric functions of $b$, the conditioning holds them fixed. The realized variance depends on $n$ through $b$. I suppress this dependence in the notation. By the exchangeability in Assumption~\ref{assu:random_b}, $b-\mu1_{m}$ is a uniformly random permutation of its realized values, with conditional mean zero and conditional covariance $\sigma^{2}Q$. Since $r_{n}=b+\varepsilon_{n}$ with $b\perp\varepsilon_{n}$, the studentized leading term splits as
\[
\zeta_{n}:=V_{\star}^{-1/2}\Lambda_{\star}\eta_{n}=\zeta_{n}^{b}+\zeta_{n}^{\varepsilon},\qquad \zeta_{n}^{b}:=\tilde{\Lambda}_{\star}(b-\mu1_{m}),\quad \zeta_{n}^{\varepsilon}:=\tilde{\Lambda}_{\star}\varepsilon_{n},
\]
into a specification piece and a noise piece that are independent at each $n$ given the order statistics. Because $\Lambda_{\star}1_{m}=0$, the identity $\Lambda_{\star}Q\Lambda_{\star}'=\Lambda_{\star}\Lambda_{\star}'$ holds. The conditional covariance of the specification piece is therefore $\Sigma_{b}:=\sigma^{2}\tilde{\Lambda}_{\star}\tilde{\Lambda}_{\star}'$. Write $\Sigma_{\varepsilon}:=\tilde{\Lambda}_{\star}V\tilde{\Lambda}_{\star}'$ for the studentized noise covariance formed with the limiting $V$. The two sum to $\Sigma_{b}+\Sigma_{\varepsilon}=\tilde{\Lambda}_{\star}\Sigma\tilde{\Lambda}_{\star}'=I_{p}$ at every $\sigma^{2}\ge0$. Both are positive semidefinite, hence each is bounded above by $I_{p}$.

\noindent\textbf{The noise piece.}
By Assumption~\ref{assu:lindeberg}.ii, $\xi_{n}:=(\Lambda_{\star}V\Lambda_{\star}')^{-1/2}\Lambda_{\star}\varepsilon_{n}\overset{d}{\rightarrow}N(0,I_{p})$. Write $\zeta_{n}^{\varepsilon}=\Gamma_{n}\xi_{n}$ with $\Gamma_{n}:=V_{\star}^{-1/2}(\Lambda_{\star}V\Lambda_{\star}')^{1/2}$, giving $\Gamma_{n}\Gamma_{n}'=\Sigma_{\varepsilon}$. Given the order statistics, $\Gamma_{n}$ is fixed and $\xi_{n}$ retains its unconditional law. Hence the conditional characteristic function of $\zeta_{n}^{\varepsilon}$ at $t$ is $\varphi_{\xi_{n}}(\Gamma_{n}'t)$, where $\varphi_{\xi_{n}}$ is the characteristic function of $\xi_{n}$. The argument $\Gamma_{n}'t$ obeys $\|\Gamma_{n}'t\|^{2}=t'\Sigma_{\varepsilon}t\le\|t\|^{2}$ and stays in a fixed ball. Convergence of $\xi_{n}$ in distribution makes $\varphi_{\xi_{n}}$ converge to the standard normal characteristic function uniformly on that ball, and the normal characteristic function depends only on the norm of its argument. Hence
\[
\varphi_{\xi_{n}}(\Gamma_{n}'t)=\exp\!\big(-\tfrac12\|\Gamma_{n}'t\|^{2}\big)+o(1)=\exp\!\big(-\tfrac12\,t'\Sigma_{\varepsilon}t\big)+o(1).
\]
The display gives the conditional characteristic function of $\zeta_{n}^{\varepsilon}$ at every argument, up to an error that vanishes with $n$. The noise piece enters the rest of the argument only through this display. Its finite-sample covariance $V_{n}$ plays no role.

\noindent\textbf{The event decomposition.}
Fix a threshold $h>0$ and a unit vector $c\in\mathbb{R}^{p}$. Write $\psi_{n}:=E[\exp(ic'\zeta_{n})\mid\text{order statistics}]$ for the conditional characteristic function of the leading term at $c$. Likewise, let $\psi_{n}^{b}$ and $\psi_{n}^{\varepsilon}$ denote the conditional characteristic functions of the specification and noise pieces. The target is $\exp(-\tfrac12\,c'c)$. I split the deviation from the target according to the realized variance:
\[
\psi_{n}-\exp\!\big(-\tfrac12\,c'c\big)=\mathbf{1}\{\sigma^{2}>h\}\big(\psi_{n}-\exp\!\big(-\tfrac12\,c'c\big)\big)+\mathbf{1}\{\sigma^{2}\le h\}\big(\psi_{n}-\exp\!\big(-\tfrac12\,c'c\big)\big).
\]
Throughout, a random quantity vanishes in probability on an event if its product with the indicator of that event vanishes in probability. On the first event the realized variance is bounded away from zero, and a permutation central limit theorem shows that the first product vanishes in probability. On the second event the specification piece is small in mean square, and a direct bound shows that the second product is at most a nonrandom multiple of $\sqrt{h}+h$, up to a term that vanishes with $n$. Sending $h\to0$ after $n\to\infty$ will then give $\psi_{n}\overset{p}{\rightarrow}\exp(-\tfrac12\,c'c)$.

\noindent\textbf{The event $\{\sigma^{2}>h\}$.}
By the Cram\'er--Wold device, consider the scalar permutation statistic $c'\zeta_{n}^{b}=\sum_{j}a_{j}(b_{j}-\mu)$ with weights $a_{j}:=c'\tilde{\Lambda}_{\star,j}$. Because $\Lambda_{\star}1_{m}=0$, both weights and values are centered, and the variance over the uniform permutation is
\[
s_{n}^{2}=\frac{1}{m-1}\Big(\sum_{j}a_{j}^{2}\Big)\Big(\sum_{k}(b_{k}-\mu)^{2}\Big)=\sigma^{2}\,c'\tilde{\Lambda}_{\star}\tilde{\Lambda}_{\star}'c,
\]
using $\sum_{k}(b_{k}-\mu)^{2}=(m-1)\sigma^{2}$. Since $1=c'\tilde{\Lambda}_{\star}\Sigma\tilde{\Lambda}_{\star}'c\le\|\Sigma\|_{2}\,c'\tilde{\Lambda}_{\star}\tilde{\Lambda}_{\star}'c$, the variance satisfies $s_{n}^{2}\ge\sigma^{2}/\|\Sigma\|_{2}$, with $\|\Sigma\|_{2}\le\|V\|_{2}+\sigma^{2}=O_{p}(1)$ by Assumptions~\ref{assu:regime}.ii and~\ref{assu:lindeberg}.iv. On the event, $s_{n}^{2}\ge h/\|\Sigma\|_{2}>0$. The Wald--Wolfowitz--Noether combinatorial central limit theorem \parencite{hoeffding1951combinatorial}, in the form given by \textcite[Theorem~4.1]{hajek1961sampling}, gives $c'\zeta_{n}^{b}/s_{n}\overset{d}{\rightarrow}N(0,1)$, conditional on the order statistics, provided the two ratios
\[
\frac{\max_{j}a_{j}^{2}}{\sum_{j}a_{j}^{2}},\qquad\frac{\max_{k}(b_{k}-\mu)^{2}}{\sum_{k}(b_{k}-\mu)^{2}}
\]
vanish and the standardized products satisfy a Lindeberg condition: for every $\omega>0$,
\[
\frac{1}{m}\sum_{j,k:\,|d_{jk}|>\omega}d_{jk}^{2}\rightarrow0,\qquad d_{jk}:=\frac{a_{j}(b_{k}-\mu)}{\big[\tfrac{1}{m}\big(\sum_{l}a_{l}^{2}\big)\big(\sum_{l}(b_{l}-\mu)^{2}\big)\big]^{1/2}}.
\]
On the event, all three quantities vanish in probability. The first ratio vanishes in probability: $\max_{j}a_{j}^{2}\le\max_{j}\|\tilde{\Lambda}_{\star,j}\|^{2}\to0$ by Assumption~\ref{assu:lindeberg}.iii, while $\sum_{j}a_{j}^{2}\ge1/\|\Sigma\|_{2}$ is bounded away from zero in probability. For the second, the denominator $(m-1)\sigma^{2}$ exceeds $(m-1)h$ on the event. Assumption~\ref{assu:random_b}.ii bounds the fourth moments of the entries of $b$, and $E[\mu^{4}]\le E[b_{1}^{4}]$ by Jensen's inequality and exchangeability, hence $E[(b_{k}-\mu)^{4}]\le8(E[b_{k}^{4}]+E[\mu^{4}])=O(1)$ and $E[\sum_{k}(b_{k}-\mu)^{4}]=O(m)$. Therefore $\max_{k}(b_{k}-\mu)^{2}\le(\sum_{k}(b_{k}-\mu)^{4})^{1/2}=O_{p}(\sqrt{m})$, and the second ratio is $O_{p}(1/\sqrt{m})$ at each fixed $h$. For the Lindeberg condition, the bound $z^{2}\mathbf{1}\{|z|>\omega\}\le z^{4}/\omega^{2}$ gives
\[
\frac{1}{m}\sum_{j,k:\,|d_{jk}|>\omega}d_{jk}^{2}\le\frac{1}{m\omega^{2}}\sum_{j,k}d_{jk}^{4}=\frac{m\big(\sum_{j}a_{j}^{4}\big)\big(\sum_{k}(b_{k}-\mu)^{4}\big)}{\omega^{2}\big(\sum_{j}a_{j}^{2}\big)^{2}\big(\sum_{k}(b_{k}-\mu)^{2}\big)^{2}}\le\frac{\max_{j}a_{j}^{2}}{\sum_{j}a_{j}^{2}}\cdot\frac{m\sum_{k}(b_{k}-\mu)^{4}}{\omega^{2}(m-1)^{2}\sigma^{4}},
\]
using $\sum_{j}a_{j}^{4}\le(\max_{j}a_{j}^{2})\sum_{j}a_{j}^{2}$. On the event, the second factor is $O_{p}(1)$ at each fixed $h$: the fourth-moment bound makes the numerator $O_{p}(m^{2})$, and the denominator exceeds $\omega^{2}(m-1)^{2}h^{2}$. The first factor vanishes in probability as shown above. Hence the Lindeberg sum vanishes in probability on the event.

The H\'ajek result applies to deterministic weights and values. Here the realized values are random. The two ratios and the Lindeberg sum vanish in probability, and only on the event. Recall that a sequence of random variables converges to zero in probability if and only if every subsequence contains a further subsequence converging to zero almost surely. Given any subsequence, extract a further subsequence along which the indicator-weighted ratios and the indicator-weighted Lindeberg sum converge to zero for almost every realization of the order statistics. Fix such a realization. Along the $n$ in the further subsequence at which $\sigma^{2}>h$, the weights and values are deterministic arrays satisfying the conditions of the H\'ajek result. If there are infinitely many such $n$, the result gives $\psi_{n}^{b}-\exp(-\tfrac12\,s_{n}^{2})\to0$ along them, because $\psi_{n}^{b}$ is the characteristic function of $c'\zeta_{n}^{b}/s_{n}$ evaluated at $s_{n}$, characteristic functions converge uniformly on bounded sets, and $s_{n}^{2}=c'\Sigma_{b}c\le1$. If there are finitely many, the indicator is eventually zero. In either case $\mathbf{1}\{\sigma^{2}>h\}\big(\psi_{n}^{b}-\exp(-\tfrac12\,c'\Sigma_{b}c)\big)\to0$ along the further subsequence, almost surely. Since every subsequence contains such a further subsequence, this indicator-weighted difference vanishes in probability. No limits of $\sigma^{2}$ or of the studentized covariances are extracted. Given the order statistics, the two pieces are independent, and the conditional characteristic functions multiply: $\psi_{n}=\psi_{n}^{b}\psi_{n}^{\varepsilon}$. The noise result above gives $\psi_{n}^{\varepsilon}=\exp(-\tfrac12\,c'\Sigma_{\varepsilon}c)+o(1)$. Because $\Sigma_{b}+\Sigma_{\varepsilon}=I_{p}$ at every $n$,
\[
\mathbf{1}\{\sigma^{2}>h\}\big(\psi_{n}-\exp\!\big(-\tfrac12\,c'c\big)\big)\overset{p}{\rightarrow}0.
\]

\noindent\textbf{The event $\{\sigma^{2}\le h\}$.}
Conditional on the order statistics, the specification piece has mean zero and covariance $\Sigma_{b}$, hence
\[
E\big[\|\zeta_{n}^{b}\|^{2}\mid\text{order statistics}\big]=\mathrm{tr}(\Sigma_{b})\le p\,\sigma^{2}\,\|\Lambda_{\star}\|_{2}^{2}/\lambda_{\min}(V_{\star}).
\]
By Assumptions~\ref{assu:lindeberg}.i and~\ref{assu:lindeberg}.iv, the right side is a bounded multiple of $\sigma^{2}$, at most a bounded multiple of $h$ on the event. The bound $|e^{iz}-1|\le|z|$ and the Cauchy--Schwarz inequality give
\[
|\psi_{n}-\psi_{n}^{\varepsilon}|\le E\big[|c'\zeta_{n}^{b}|\mid\text{order statistics}\big]\le\big(\mathrm{tr}(\Sigma_{b})\big)^{1/2}.
\]
The noise result gives $\psi_{n}^{\varepsilon}=\exp(-\tfrac12\,c'\Sigma_{\varepsilon}c)+o(1)$, and $|\exp(-\tfrac12\,c'\Sigma_{\varepsilon}c)-\exp(-\tfrac12\,c'c)|\le\tfrac12\,c'\Sigma_{b}c\le\tfrac12\,\mathrm{tr}(\Sigma_{b})$, using $\Sigma_{\varepsilon}=I_{p}-\Sigma_{b}$. On the event, $\psi_{n}$ therefore differs from $\exp(-\tfrac12\,c'c)$ by at most a bounded multiple of $\sqrt{h}+h$, plus a term that vanishes with $n$.

\noindent\textbf{Combining the events.}
For every fixed $h>0$, the first product in the event decomposition vanishes in probability, and the second is at most a nonrandom multiple of $\sqrt{h}+h$, plus a term that vanishes with $n$. Because $h$ can be chosen arbitrarily small,
\[
\psi_{n}\overset{p}{\rightarrow}\exp\!\big(-\tfrac12\,c'c\big).
\]
The two ratios and the Lindeberg sum are invariant to the scale of $c$, and the low-variance bounds are finite multiples of $\sqrt{h}$ and $h$ at each fixed argument. The conditional characteristic function of $\zeta_{n}$ therefore converges in probability to $\exp(-\tfrac12\|t\|^{2})$ at every $t\in\mathbb{R}^{p}$.

\noindent\textbf{Conclusion.}
The conditional characteristic function is bounded, and convergence in probability upgrades to convergence in mean. Iterated expectations then give $E[\exp(it'\zeta_{n})]\rightarrow\exp(-\tfrac12\|t\|^{2})$ at every $t$, hence $\zeta_{n}\overset{d}{\rightarrow}N(0,I_{p})$. With the hypothesis $V_{\star}^{-1/2}\tilde{R}_{\star}=o_{p}(1)$ and Slutsky's theorem,
\[
\sqrt{n}\,V_{\star}^{-1/2}(\hat{\theta}_{\star}-\theta_{0})\overset{d}{\rightarrow}N(0,I_{p}),
\]
as claimed.

\noindent\textbf{Conditional statements.}
The hyperparameters are functions of the order statistics. Taking conditional expectations given $(\mu,\sigma^{2})$ cannot increase the mean deviation from the limit. Hence $E[\exp(it'\zeta_{n})\mid\mu,\sigma^{2}]$ converges in mean, and in probability, to $\exp(-\tfrac12\|t\|^{2})$. This is convergence in probability of a random conditional characteristic function. It does not by itself deliver convergence at every value of $(\mu,\sigma^{2})$. Remark~\ref{rem:pointwise-conditional} gives a condition that does. More is true when the realized variance vanishes. Under the coarser conditioning the specification piece still has mean zero and covariance $\Sigma_{b}$, by Lemma~\ref{lem:cond-moments}. The noise piece retains its unconditional law, because $b\perp\varepsilon_{n}$. The bound on $\mathrm{tr}(\Sigma_{b})$ is a nonrandom multiple of $\sigma^{2}$ and uses no other feature of the realized values. The low-variance argument therefore applies unchanged along every admissible sequence of hyperparameter values with $\sigma^{2}\to0$.
\end{proof}

\subsection*{Proof of Lemma~\ref{lem:eb-floor}}
\begin{proof}
The requirement $\lambda_{\min}(V)>0$ of Proposition~\ref{prop:variance-ordering}(ii) is covered by the hypothesis $\lambda_{\min}(V)\ge c_{V}$. The proposition gives
\[
V_{EB}=\left(G'\Sigma^{-1}G-\frac{G'\Sigma^{-1}1_{m}1_{m}'\Sigma^{-1}G}{1_{m}'\Sigma^{-1}1_{m}}\right)^{-1}.
\]
The matrices $G'\Sigma^{-1}G$ and $G'V^{-1}G$ are positive definite because $G$ has full column rank, as the definition of $\Lambda$ presumes, and the inverted matrix is positive definite because Proposition~\ref{prop:variance-ordering}(ii) inverts it. I use repeatedly that inversion reverses the positive semidefinite order on positive definite matrices. First, the subtracted matrix is positive semidefinite, hence $G'\Sigma^{-1}G$ minus the inverted matrix is positive semidefinite. Inverting, $V_{EB}-(G'\Sigma^{-1}G)^{-1}$ is positive semidefinite. Second, $\Sigma-V=\sigma^{2}Q$ is positive semidefinite. Inverting, $V^{-1}-\Sigma^{-1}$ is positive semidefinite, which makes $G'V^{-1}G-G'\Sigma^{-1}G$ positive semidefinite. Inverting again, $(G'\Sigma^{-1}G)^{-1}-(G'V^{-1}G)^{-1}$ is positive semidefinite. Combining the two, $\lambda_{\min}(V_{EB})\ge\lambda_{\min}\big((G'V^{-1}G)^{-1}\big)=1/\lambda_{\max}(G'V^{-1}G)$. The hypothesis $G'V^{-1}G=O(1)$ bounds $\lambda_{\max}(G'V^{-1}G)$, hence the right side is bounded away from zero uniformly in $m$. No step depends on $\sigma^{2}$. The weighting matrix enters only through Proposition~\ref{prop:variance-ordering}(ii), whose conclusion is free of it. The bound is therefore uniform in both.
\end{proof}

\subsection*{Proof of Corollary~\ref{corr:bc-eb-clt}}
\begin{proof}
\noindent\textbf{Part (i): bias correction.}
Standard GMM linearization gives $\sqrt{n}(\hat{\theta}-\theta_{0})=\hat{\Lambda}(\mu1_{m}+\eta_{n})+R$,
where $R=\hat{\Lambda}\rho_{n}$ is formed from the moment-equation linearization remainder
$\rho_{n}$ of Assumption~\ref{assu:regime}.iii. Combined with the BC definition
$\hat{\theta}_{BC}:=\hat{\theta}-\hat{\mu}\hat{\Lambda}1_{m}/\sqrt{n}$,
\[
\sqrt{n}(\hat{\theta}_{BC}-\theta_{0})=\hat{\Lambda}\eta_{n}+R-(\hat{\mu}-\mu)\hat{\Lambda}1_{m}.
\]
By definition of $w$, $\mu_{\text{oracle}}-\mu=w'\eta_{n}$, where $\mu_{\text{oracle}}$ is the oracle mean
of Lemma~\ref{lem:building-blocks}, and Lemma~\ref{lem:building-blocks}(ii) gives
$\hat{\mu}-\mu_{\text{oracle}}=O_{p}(\sqrt{m/n})$. Substituting $\hat{\mu}-\mu=w'\eta_{n}+(\hat{\mu}-\mu_{\text{oracle}})$
and using $\Lambda_{BC}=\Lambda-(\Lambda1_{m})w'=\Lambda M_{\mu}$ yields
\[
\sqrt{n}(\hat{\theta}_{BC}-\theta_{0})=\Lambda_{BC}\eta_{n}+\tilde{R}_{BC}.
\]
The remainder $\tilde{R}_{BC}$ collects four pieces: the sensitivity estimation error $(\hat{\Lambda}-\Lambda)\eta_{n}$, the mean-feasibility gap $-(\hat{\mu}-\mu_{\text{oracle}})\Lambda1_{m}$, the cross term $-(\hat{\mu}-\mu)(\hat{\Lambda}-\Lambda)1_{m}$, and the GMM linearization remainder $R$. Each is $O_{p}(\sqrt{m/n})$. The first is bounded by $\|\hat{\Lambda}-\Lambda\|_{2}\|\eta_{n}\|=O_{p}(1/\sqrt{n})\cdot O_{p}(\sqrt{m})$, using Assumption~\ref{assu:regime}.iii. The second is bounded by $|\hat{\mu}-\mu_{\text{oracle}}|\,\|\Lambda1_{m}\|$, with $\|\Lambda1_{m}\|=O(1)$ by Assumption~\ref{assu:rate}.i. The third is bounded by $|\hat{\mu}-\mu|\,\sqrt{m}\,\|\hat{\Lambda}-\Lambda\|_{2}$, with $\hat{\mu}-\mu=o_{p}(1)$ by Lemma~\ref{lem:mu_consistency}. The fourth is bounded by $\|\hat{\Lambda}\|_{2}\|\rho_{n}\|$, with $\|\hat{\Lambda}\|_{2}=O_{p}(1)$ from the proof of Proposition~\ref{prop:bc-rate} and $\|\rho_{n}\|=O_{p}(\sqrt{m/n})$ by Assumption~\ref{assu:regime}.iii. Hence $\|\tilde{R}_{BC}\|=O_{p}(\sqrt{m/n})$. The floor in Assumption~\ref{assu:lindeberg}.iv applied to $\Lambda_{BC}$ gives $\|V_{BC}^{-1/2}\|_{2}=O(1)$, and $m/n\to0$ by Assumption~\ref{assu:regime}.i, hence $V_{BC}^{-1/2}\tilde{R}_{BC}=o_{p}(1)$.
Because $\Lambda_{BC}$ satisfies Assumption~\ref{assu:lindeberg}, Theorem~\ref{thm:clt} applied with
$\Lambda_{\star}=\Lambda_{BC}$ gives $\sqrt{n}\,V_{BC}^{-1/2}(\hat{\theta}_{BC}-\theta_{0})\overset{d}{\rightarrow}N(0,I_{p})$.

\noindent\textbf{Part (ii): empirical Bayes.}
Recall $\hat{\theta}_{EB}=\hat{\theta}_{BC}-\hat{\Lambda}\hat{L}/\sqrt{n}$, with $L:=\Pi M_{\mu}\eta_{n}$ and $\hat{L}:=\hat{\Pi}(\hat{r}-\hat{\mu}1_{m})$ as in the proof of Proposition~\ref{prop:eb-rate}. Subtracting $\hat{\Lambda}\hat{L}$ from the expansion of part~(i) and using $\Lambda_{BC}\eta_{n}-\Lambda L=\Lambda(I_{m}-\Pi)M_{\mu}\eta_{n}=\Lambda_{EB}\eta_{n}$ gives $\sqrt{n}(\hat{\theta}_{EB}-\theta_{0})=\Lambda_{EB}\eta_{n}+\tilde{R}_{EB}$ with
\[
\tilde{R}_{EB}=\tilde{R}_{BC}-(\hat{\Lambda}-\Lambda)L-\hat{\Lambda}(\hat{L}-L).
\]
The proof of Proposition~\ref{prop:eb-rate} decomposes $\hat{L}-L$ exactly into three terms: the first-stage error $(\hat{\Pi}(\widehat{\sigma^{2}})-\Pi(\widehat{\sigma^{2}}))M_{\mu}\eta_{n}$, the $\widehat{\sigma^{2}}$-correction $(\Pi(\widehat{\sigma^{2}})-\Pi(\sigma^{2}))M_{\mu}\eta_{n}$, and the projected linearization remainder $\hat{\Pi}u_{n}$. It shows that the first and third terms are $O_{p}(m/\sqrt{n})$, that the second is $O_{p}(1)$, and that $\|L\|=O_{p}(\sqrt{m})$. I studentize each piece of $\tilde{R}_{EB}$ in turn.

Two bounds recur. The floor in Assumption~\ref{assu:lindeberg}.iv applied to $\Lambda_{EB}$ gives $\|V_{EB}^{-1/2}\|_{2}=O(1)$. Combined with Assumption~\ref{assu:rate}.ii, it also gives $\|V_{EB}^{-1/2}\Lambda\|_{2}\le\lambda_{\min}(V_{EB})^{-1/2}\|\Lambda\|_{2}=O(1)$. The first bound and $m/n\to0$ from Assumption~\ref{assu:regime}.i studentize $\tilde{R}_{BC}$ and $(\hat{\Lambda}-\Lambda)L$, both $O_{p}(\sqrt{m/n})$, to $o_{p}(1)$. The first bound also gives $\|V_{EB}^{-1/2}(\hat{\Lambda}-\Lambda)\|_{2}=O_{p}(1/\sqrt{n})=o_{p}(1/\sqrt{m})$, which studentizes $(\hat{\Lambda}-\Lambda)(\hat{L}-L)$ to $o_{p}(1)$, because $\|\hat{L}-L\|=O_{p}(1)$ by Assumption~\ref{assu:regime}.i. It remains to studentize $\Lambda(\hat{L}-L)$. The second bound handles the first and third terms of $\hat{L}-L$, since $\|V_{EB}^{-1/2}\Lambda\|_{2}\cdot O_{p}(m/\sqrt{n})=o_{p}(1)$ by Assumption~\ref{assu:regime}.i. The $\widehat{\sigma^{2}}$-correction is $O_{p}(1)$ before studentization and requires a more detailed argument.

The studentized correction is $-V_{EB}^{-1/2}\Lambda(\Pi(\widehat{\sigma^{2}})-\Pi(\sigma^{2}))M_{\mu}\eta_{n}$, with $|\widehat{\sigma^{2}}-\sigma^{2}|=O_{p}(1/\sqrt{m})$ by Lemma~\ref{lem:sigma-consistency} and Assumption~\ref{assu:regime}.i. The proof of Proposition~\ref{prop:eb-rate} gives $\partial\Pi(s)/\partial s=(I_{m}-\Pi(s))M_{\mu}M_{\mu}'M'A(s)^{+}M$ with $A(s):=M\Sigma_{\mu}(s)M'$. The correction therefore equals the integral of $-V_{EB}^{-1/2}\Lambda(\partial\Pi(s)/\partial s)M_{\mu}\eta_{n}$ over $s$ between $\sigma^{2}$ and $\widehat{\sigma^{2}}$, and its norm is at most $|\widehat{\sigma^{2}}-\sigma^{2}|$ times the supremum of the integrand's norm on that segment. The argument uses only the rate of $\widehat{\sigma^{2}}$. No limiting distribution for it is required.

At $s=\sigma^{2}$, grouping the projection factors gives
\[
\Lambda(\partial\Pi(s)/\partial s)M_{\mu}=\Lambda_{EB}K,\qquad K:=M_{\mu}'M'(M\Sigma_{\mu}M')^{+}MM_{\mu}.
\]
Because $(M\Sigma_{\mu}M')^{+}$ is the pseudoinverse of $M\Sigma_{\mu}M'=(MM_{\mu})\Sigma(MM_{\mu})'$, the matrix $K$ is positive semidefinite and satisfies $K\Sigma K=K$. The floor $\lambda_{\min}(V)\ge c_{V}$ bounds $K$. Since $\Sigma-V=\sigma^{2}Q$ is positive semidefinite, so is $\Sigma-c_{V}I_{m}$ at every realized variance. Hence $K-c_{V}K^{2}=K(\Sigma-c_{V}I_{m})K$ is positive semidefinite, every eigenvalue $\lambda$ of $K$ satisfies $c_{V}\lambda^{2}\le\lambda$, and $\|K\|_{2}\le1/c_{V}$. The same floor makes $V_{EB}-c_{V}\Lambda_{EB}\Lambda_{EB}'=\Lambda_{EB}(\Sigma-c_{V}I_{m})\Lambda_{EB}'$ positive semidefinite, hence $\|V_{EB}^{-1/2}\Lambda_{EB}\|_{2}\le c_{V}^{-1/2}$ and $\mathrm{tr}(V_{EB}^{-1}\Lambda_{EB}\Lambda_{EB}')\le p/c_{V}$. Combining the two bounds,
\[
\mathrm{tr}(V_{EB}^{-1}\Lambda_{EB}K\Lambda_{EB}')\le c_{V}^{-1}\,\mathrm{tr}(V_{EB}^{-1}\Lambda_{EB}\Lambda_{EB}')\le p/c_{V}^{2}.
\]
The constant $c_{V}$ does not depend on $m$, so these bounds are uniform in the sample size and in the realized variance. The integrand at $s=\sigma^{2}$ is $-V_{EB}^{-1/2}\Lambda_{EB}K\eta_{n}$. Its covariance is $V_{EB}^{-1/2}\Lambda_{EB}K\Sigma_{n}K\Lambda_{EB}'V_{EB}^{-1/2}$. Writing $\Sigma_{n}=\Sigma+(V_{n}-V)$ and using $K\Sigma K=K$, the trace of this covariance equals $\mathrm{tr}(V_{EB}^{-1}\Lambda_{EB}K\Lambda_{EB}')$ plus $\mathrm{tr}(V_{EB}^{-1}\Lambda_{EB}K(V_{n}-V)K\Lambda_{EB}')$. The second trace is at most $\|V_{n}-V\|_{2}\|K\|_{2}\,\mathrm{tr}(V_{EB}^{-1}\Lambda_{EB}K\Lambda_{EB}')=O(\sqrt{m/n})$ by Assumption~\ref{assu:regime}.vii, using that $\|K\|_{2}K-K^{2}$ is positive semidefinite. The mean $V_{EB}^{-1/2}\Lambda_{EB}KE[\varepsilon_{n}]$ has norm at most $c_{V}^{-3/2}\|E[\varepsilon_{n}]\|=o(1)$ by Assumption~\ref{assu:random_b}.i. Markov's inequality gives $\|V_{EB}^{-1/2}\Lambda_{EB}K\eta_{n}\|=O_{p}(1)$.

Away from $s=\sigma^{2}$, write $v:=MM_{\mu}\eta_{n}$ and differentiate the integrand in $s$. Both pieces of the product rule coincide, and the derivative equals
\[
2V_{EB}^{-1/2}\Lambda(I_{m}-\Pi(s))M_{\mu}M_{\mu}'M'A(s)^{+}(\partial A(s)/\partial s)A(s)^{+}v.
\]
By the bound $\|V_{EB}^{-1/2}\Lambda\|_{2}=O(1)$ and Assumption~\ref{assu:eb-reg}.iii, its norm is at most a constant times $\|A(s)^{+}(\partial A(s)/\partial s)A(s)^{+}v\|$. The floor $\lambda_{\min}(V)\ge c_{V}$ makes $c_{V}^{-1}A(0)-\partial A(s)/\partial s=c_{V}^{-1}MM_{\mu}(V-c_{V}I_{m})M_{\mu}'M'$ positive semidefinite, and $A(s)-A(0)$ is positive semidefinite on the common range. Hence
\begin{align*}
\|A(s)^{+}(\partial A(s)/\partial s)^{1/2}\|_{2}^{2}&=\|A(s)^{+}(\partial A(s)/\partial s)A(s)^{+}\|_{2}\\
&\le c_{V}^{-1}\|A(s)^{+}A(s)A(s)^{+}\|_{2}=c_{V}^{-1}\|A(s)^{+}\|_{2}=O(\sqrt{m}),
\end{align*}
while $\|(\partial A(s)/\partial s)^{1/2}A(s)^{+}v\|\le\|MM_{\mu}M_{\mu}'M'\|_{2}^{1/2}\|A(s)^{+}v\|=O_{p}(\sqrt{m})$ by the bound on $\|A(s)^{+}v\|$ in the proof of Proposition~\ref{prop:eb-rate}. The derivative is therefore $O(m^{1/4})\cdot O_{p}(\sqrt{m})=O_{p}(m^{3/4})$, the integrand is $O_{p}(1)+O_{p}(1/\sqrt{m})\cdot O_{p}(m^{3/4})=O_{p}(m^{1/4})$ on the segment, and the studentized correction is $O_{p}(1/\sqrt{m})\cdot O_{p}(m^{1/4})=o_{p}(1)$. Hence $V_{EB}^{-1/2}\tilde{R}_{EB}=o_{p}(1)$.

Because $\Lambda_{EB}$ satisfies Assumption~\ref{assu:lindeberg} and $V_{EB}^{-1/2}\tilde{R}_{EB}=o_{p}(1)$, Theorem~\ref{thm:clt} applied with $\Lambda_{\star}=\Lambda_{EB}$ gives
$\sqrt{n}\,V_{EB}^{-1/2}(\hat{\theta}_{EB}-\theta_{0})\overset{d}{\rightarrow}N(0,I_{p})$.
\end{proof}

\subsection*{Proof of Corollary~\ref{corr:feasible-inference}}
\begin{proof}
For $\star\in\{BC,EB\}$, write $\delta\Lambda_{\star}:=\hat{\Lambda}_{\star}-\Lambda_{\star}$. Expanding $\hat{V}_{\star}=\hat{\Lambda}_{\star}\hat{\Sigma}\hat{\Lambda}_{\star}'$ around $\Lambda_{\star}$ gives the exact identity
\[
\hat{V}_{\star}-V_{\star}=\delta\Lambda_{\star}\hat{\Sigma}\Lambda_{\star}'+\Lambda_{\star}\hat{\Sigma}\delta\Lambda_{\star}'+\delta\Lambda_{\star}\hat{\Sigma}\delta\Lambda_{\star}'+\Lambda_{\star}(\hat{\Sigma}-\Sigma)\Lambda_{\star}'.
\]
Write
\[
a_{\star}:=\|V_{\star}^{-1/2}\delta\Lambda_{\star}\hat{\Sigma}^{1/2}\|_{2},\qquad
b_{\star}:=\|V_{\star}^{-1/2}\Lambda_{\star}(\hat{\Sigma}-\Sigma)\Lambda_{\star}'V_{\star}^{-1/2}\|_{2}.
\]
Since $V_{\star}^{-1/2}\Lambda_{\star}\Sigma\Lambda_{\star}'V_{\star}^{-1/2}=I_{p}$, the factor $\|\hat{\Sigma}^{1/2}\Lambda_{\star}'V_{\star}^{-1/2}\|_{2}^{2}$, which equals $\|V_{\star}^{-1/2}\Lambda_{\star}\hat{\Sigma}\Lambda_{\star}'V_{\star}^{-1/2}\|_{2}$, is at most $1+b_{\star}$. Studentizing the identity therefore gives
\[
\|V_{\star}^{-1/2}(\hat{V}_{\star}-V_{\star})V_{\star}^{-1/2}\|_{2}\le2a_{\star}(1+b_{\star})^{1/2}+a_{\star}^{2}+b_{\star}.
\]
It suffices to show $a_{\star}=o_{p}(1)$ and $b_{\star}=o_{p}(1)$. Two inputs recur. First, $\|\hat{\Sigma}-\Sigma\|_{2}\le\|\hat{V}-V\|_{2}+|\widehat{\sigma^{2}}-\sigma^{2}|=O_{p}(\sqrt{m/n})+O_{p}(1/\sqrt{m})=o_{p}(1)$, using $\|Q\|_{2}=1$, Assumption~\ref{assu:regime}.iii, Lemma~\ref{lem:sigma-consistency}, and Assumption~\ref{assu:regime}.i. Hence $\|\hat{\Sigma}\|_{2}=O_{p}(1)$. Second, $\hat{\Lambda}_{BC}-\Lambda_{BC}=(\hat{\Lambda}-\Lambda)\hat{M}_{\mu}-\Lambda1_{m}(\hat{w}-w)'$ has operator norm $O_{p}(1/\sqrt{n})$, by Assumption~\ref{assu:regime}.iii, $\|\Lambda1_{m}\|=O(1)$ from Assumption~\ref{assu:rate}.i, and $\|\hat{w}-w\|=O_{p}(1/\sqrt{n})$ from the proof of Proposition~\ref{prop:eb-rate}.

\noindent\textbf{Bias-corrected estimator.}
The floor in Assumption~\ref{assu:lindeberg}.iv gives $\lambda_{\min}(V_{BC})^{-1/2}=O(1)$, and with $\|\Lambda_{BC}\|_{2}=O(1)$ from Assumption~\ref{assu:lindeberg}.i, $\|V_{BC}^{-1/2}\Lambda_{BC}\|_{2}\le\lambda_{\min}(V_{BC})^{-1/2}\|\Lambda_{BC}\|_{2}=O(1)$. The term $b_{BC}$ is at most $\|V_{BC}^{-1/2}\Lambda_{BC}\|_{2}^{2}\|\hat{\Sigma}-\Sigma\|_{2}=o_{p}(1)$. The term $a_{BC}$ is at most $\lambda_{\min}(V_{BC})^{-1/2}\|\delta\Lambda_{BC}\|_{2}\|\hat{\Sigma}\|_{2}^{1/2}=O_{p}(1/\sqrt{n})=o_{p}(1)$.

\noindent\textbf{Empirical Bayes estimator.}
Since $\hat{\Pi}1_{m}=0$ and $\Pi1_{m}=0$, the projections satisfy $\hat{\Pi}\hat{M}_{\mu}=\hat{\Pi}$ and $\Pi M_{\mu}=\Pi$. Hence $\hat{\Lambda}_{EB}=\hat{\Lambda}\hat{M}_{\mu}-\hat{\Lambda}\hat{\Pi}$ and $\Lambda_{EB}=\Lambda M_{\mu}-\Lambda\Pi$. With $\delta\Pi:=\hat{\Pi}(\widehat{\sigma^{2}})-\Pi(\widehat{\sigma^{2}})$,
\[
\delta\Lambda_{EB}=(\hat{\Lambda}_{BC}-\Lambda_{BC})-(\hat{\Lambda}-\Lambda)\Pi-\hat{\Lambda}\,\delta\Pi-\hat{\Lambda}\big(\Pi(\widehat{\sigma^{2}})-\Pi(\sigma^{2})\big).
\]
The four pieces are the BC sensitivity and centering errors, the sensitivity error acting through the shrinkage operator, the first-stage error in the shrinkage operator, and its $\widehat{\sigma^{2}}$-dependence. Three further inputs are used. The floor $\lambda_{\min}(V)\ge c_{V}$ holds for $\Sigma$ as well, hence for any matrix $B$, in operator or Frobenius norm, $\|B\hat{\Sigma}^{1/2}\|^{2}\le\|B\Sigma^{1/2}\|^{2}+\|\hat{\Sigma}-\Sigma\|_{2}\|B\|^{2}\le(1+\|\hat{\Sigma}-\Sigma\|_{2}/c_{V})\|B\Sigma^{1/2}\|^{2}$. Each piece may therefore be bounded with $\Sigma^{1/2}$ in place of $\hat{\Sigma}^{1/2}$. Next, $\|V_{EB}^{-1/2}\Lambda_{EB}\|_{2}\le c_{V}^{-1/2}$ and $\|V_{EB}^{-1/2}\Lambda\|_{2}=O(1)$ from the proof of Corollary~\ref{corr:bc-eb-clt}(ii), and $\|V_{EB}^{-1/2}\hat{\Lambda}\|_{2}\le\|V_{EB}^{-1/2}\Lambda\|_{2}+\lambda_{\min}(V_{EB})^{-1/2}\|\hat{\Lambda}-\Lambda\|_{2}=O_{p}(1)$ by the floor in Assumption~\ref{assu:lindeberg}.iv. Finally, $\sup_{s\ge0}\|\Pi(s)\|_{2}\le\|M_{\mu}\|_{2}+\sup_{s\ge0}\|(I_{m}-\Pi(s))M_{\mu}\|_{2}=O(1)$ by Assumption~\ref{assu:eb-reg}.iii, using $\Pi(s)=\Pi(s)M_{\mu}$.

The term $b_{EB}$ is at most $c_{V}^{-1}\|\hat{\Sigma}-\Sigma\|_{2}=o_{p}(1)$. For $a_{EB}$, the first two pieces of $\delta\Lambda_{EB}$ have operator norm $O_{p}(1/\sqrt{n})$, and their contributions are bounded exactly as $a_{BC}$ was. The third piece contributes at most $\|V_{EB}^{-1/2}\hat{\Lambda}\|_{2}\|\delta\Pi\,\hat{\Sigma}^{1/2}\|_{F}$, and $\|\delta\Pi\,\Sigma^{1/2}\|_{F}=\|\delta\Pi\,\Sigma_{\mu}^{1/2}\|_{F}$ because $\delta\Pi=\delta\Pi M_{\mu}$. The second part of Assumption~\ref{assu:eb-reg}.ii and Markov's inequality give $\sup_{s\ge0}\|(\hat{\Pi}(s)-\Pi(s))\Sigma_{\mu}^{1/2}\|_{F}=O_{p}(m/\sqrt{n})$, which covers $s=\widehat{\sigma^{2}}$ even though $\widehat{\sigma^{2}}$ depends on the data. The contribution is $O_{p}(1)\cdot O_{p}(m/\sqrt{n})=o_{p}(1)$ by Assumption~\ref{assu:regime}.i.

The fourth piece is the $\widehat{\sigma^{2}}$-correction. Splitting $\hat{\Lambda}=\Lambda+(\hat{\Lambda}-\Lambda)$, the part with $\hat{\Lambda}-\Lambda$ contributes at most $\lambda_{\min}(V_{EB})^{-1/2}\|\hat{\Lambda}-\Lambda\|_{2}\cdot2\sup_{s\ge0}\|\Pi(s)\|_{2}\,\|\hat{\Sigma}\|_{2}^{1/2}=o_{p}(1)$, again by the floor. The part with $\Lambda$ contributes at most a $1+o_{p}(1)$ multiple of $\|V_{EB}^{-1/2}\Lambda(\Pi(\widehat{\sigma^{2}})-\Pi(\sigma^{2}))\Sigma^{1/2}\|_{2}$. Integrating the derivative $\partial\Pi(s)/\partial s$ of Proposition~\ref{prop:eb-rate}'s proof over $s$ between $\sigma^{2}$ and $\widehat{\sigma^{2}}$ gives
\[
\|V_{EB}^{-1/2}\Lambda(\Pi(\widehat{\sigma^{2}})-\Pi(\sigma^{2}))\Sigma^{1/2}\|_{2}\le|\widehat{\sigma^{2}}-\sigma^{2}|\,\sup_{s}\|V_{EB}^{-1/2}\Lambda(\partial\Pi(s)/\partial s)\Sigma^{1/2}\|_{2},
\]
with the supremum over that segment. At $s=\sigma^{2}$ the operator is $V_{EB}^{-1/2}\Lambda_{EB}K\Sigma^{1/2}$, with $K$ as in the proof of Corollary~\ref{corr:bc-eb-clt}(ii), and because $K\Sigma K=K$,
\[
\|V_{EB}^{-1/2}\Lambda_{EB}K\Sigma^{1/2}\|_{2}^{2}=\|V_{EB}^{-1/2}\Lambda_{EB}K\Lambda_{EB}'V_{EB}^{-1/2}\|_{2}\le\mathrm{tr}(V_{EB}^{-1}\Lambda_{EB}K\Lambda_{EB}')\le p/c_{V}^{2}
\]
by the bound established there. Away from $s=\sigma^{2}$, the derivative of the operator in $s$ is
\[
2V_{EB}^{-1/2}\Lambda(I_{m}-\Pi(s))M_{\mu}M_{\mu}'M'A(s)^{+}(\partial A(s)/\partial s)A(s)^{+}MM_{\mu}\Sigma^{1/2},
\]
as in that proof, with $A(s):=M\Sigma_{\mu}(s)M'$. Its norm is at most a constant times the product of $\|A(s)^{+}(\partial A(s)/\partial s)^{1/2}\|_{2}$ and $\|A(s)^{+}MM_{\mu}\Sigma^{1/2}\|_{2}$, and the first factor is $O(m^{1/4})$ by the floor $\lambda_{\min}(V)\ge c_{V}$, as shown there. For the second factor, $\|A(s)^{+}MM_{\mu}\Sigma^{1/2}\|_{2}^{2}=\|A(s)^{+}A(\sigma^{2})A(s)^{+}\|_{2}$, and the floor makes $(1+|s-\sigma^{2}|/c_{V})A(s)-A(\sigma^{2})$ positive semidefinite on the common range, hence $\|A(s)^{+}A(\sigma^{2})A(s)^{+}\|_{2}\le(1+|s-\sigma^{2}|/c_{V})\|A(s)^{+}\|_{2}=O(\sqrt{m})$. The derivative is therefore $O(m^{1/4})\cdot O(m^{1/4})=O(\sqrt{m})$, the supremum is $O(1)+O_{p}(1/\sqrt{m})\cdot O(\sqrt{m})=O_{p}(1)$, and the fourth piece contributes $O_{p}(1/\sqrt{m})$ to $a_{EB}$. Hence $a_{EB}=o_{p}(1)$.

\noindent\textbf{Conclusion.}
In both cases $\|V_{\star}^{-1/2}(\hat{V}_{\star}-V_{\star})V_{\star}^{-1/2}\|_{2}=o_{p}(1)$. The eigenvalues of $V_{\star}$ are bounded below by Assumption~\ref{assu:lindeberg}.iv and above by $\|\Lambda_{\star}\|_{2}^{2}\|\Sigma\|_{2}=O_{p}(1)$, by Assumptions~\ref{assu:lindeberg}.i, \ref{assu:regime}.ii, and \ref{assu:lindeberg}.iv. Hence $\|\hat{V}_{\star}-V_{\star}\|_{2}=o_{p}(1)$, and with probability approaching one both matrices lie in a compact set of positive definite matrices, on which the inverse square root is uniformly continuous. Therefore $\hat{V}_{\star}^{-1/2}V_{\star}^{1/2}-I_{p}=(\hat{V}_{\star}^{-1/2}-V_{\star}^{-1/2})V_{\star}^{1/2}=o_{p}(1)$. Slutsky's theorem combined with Corollary~\ref{corr:bc-eb-clt} yields both statements.
\end{proof}

\clearpage

\section{Verifying asymptotic assumptions\protect\label{app:verify}}

In this appendix, I verify that the design of Section~\ref{subsec:Simulation-design} satisfies the
outstanding conditions of Assumptions~\ref{assu:regime}--\ref{assu:lindeberg} when $m$ grows with $n$. Assumptions~\ref{assu:regime} and \ref{assu:rate} are verified analytically, as are condition (i) of Assumption~\ref{assu:eb-reg} and conditions (i), (ii), and (iv) of Assumption~\ref{assu:lindeberg}. The remaining conditions of Assumptions~\ref{assu:eb-reg} and \ref{assu:lindeberg} are checked numerically along the sequence $m=\mathrm{round}(n^{0.4})$ considered in Section~\ref{subsec:asymptotic}.

\subsection*{Assumption~\ref{assu:regime}: array regularity}

\subsubsection*{Normalized moment variance}

The residualized cell moments have asymptotic variance $V=\sigma_{e}^{2}S^{-1}$, where
$S=E[\tilde{Z}_{i}\tilde{Z}_{i}']=\mathrm{diag}(q)-qq'$. Although this has the form of a
multinomial covariance, here it is positive definite, because $\sum_{j}q_{j}=0.9<1$:
the Sherman--Morrison identity gives $S^{-1}=\mathrm{diag}(1/q_{j})+10\cdot1_{m}1_{m}'$,
with the factor $10=1/(1-\sum_{j}q_{j})$. The operator norm of $V$ grows in $m$, so I
check Assumption~\ref{assu:regime}.ii on the normalized matrix $V/\|V\|_{2}$. The extreme
cell $j_{\star}$ with $q_{j_{\star}}=q_{\min}$ drives the largest row sum of $S^{-1}$,
$1/q_{\min}+10m$, and its operator norm, $\|S^{-1}\|_{2}\ge(S^{-1})_{j_{\star}j_{\star}}=1/q_{\min}+10$.
Their ratio is at most $1+10mq_{\min}\le10$, using $q_{\min}\le0.9/m$. The normalized row
sums of $V$ are therefore $O(1)$. The weight $W=S$ has absolute row sums $\sum_{k}|S_{jk}|=q_{j}(1.9-2q_{j})\le1.9q_{\max}$. Its operator norm is of the same order, $\|S\|_{2}\ge q_{\max}-q_{\max}^{2}$, from the diagonal entry of the largest cell. The normalized row sums are $O(1)$ as well. Therefore
Assumption~\ref{assu:regime}.ii holds.

\subsubsection*{First-stage convergence}

Assumption~\ref{assu:regime}.iii requires the first-stage estimators to converge at rate $O_{p}(\sqrt{m/n})$. The estimator $\hat{V}$ attains this rate in normalized operator norm. The cell-frequency
errors $\hat{q}_{j}-q_{j}=O_{p}(\sqrt{q_{j}/n})$ scale with cell size. The
near-singular inversion at the extreme cell is therefore controlled whenever
$nq_{\min}\to\infty$, which $m^{2}/n\to0$ ensures. The Jacobian estimate $\hat{G}=-\hat{\pi}$ and the weight $\hat{S}$ are smooth functions of the cell frequencies and sample cross-moments. Hence, the residual maker $\hat{M}=I_{m}-\hat{G}(\hat{G}'\hat{S}\hat{G})^{-1}\hat{G}'\hat{S}$ inherits the rate $\|\hat{M}-M\|_{2}=O_{p}(\sqrt{m/n})$. The sensitivity is the $1\times m$ row $\hat{\Lambda}=-(\hat{G}'\hat{S}\hat{G})^{-1}\hat{G}'\hat{S}$. Its operator norm is its Euclidean norm. The scalar $G'SG$ is bounded away from zero, as shown below. Its plug-in $\hat{G}'\hat{S}\hat{G}$ carries an $O_{p}(1/\sqrt{n})$ stochastic error and an $O(m/n)$ overfitting bias. The bias is $o(1/\sqrt{n})$ under $m^{2}/n\to0$. Hence $(\hat{G}'\hat{S}\hat{G})^{-1}-(G'SG)^{-1}=O_{p}(1/\sqrt{n})$. The row $G'S$ has $j$th entry $q_{j}(G_{j}-q'G)$ and norm $O(1/\sqrt{m})$. Its plug-in error has mean square $O(q_{j}/n)$ entry by entry. The cell weights $q_{j}=O(1/m)$ keep these errors small in the aggregate: $E\|\hat{G}'\hat{S}-G'S\|^{2}=\sum_{j}O(q_{j}/n)=O(1/n)$ because $\sum_{j}q_{j}\le1$, hence $\|\hat{G}'\hat{S}-G'S\|=O_{p}(1/\sqrt{n})$. Combining the scalar and row errors gives $\|\hat{\Lambda}-\Lambda\|_{2}=O_{p}(1/\sqrt{n})$. The shrinking cell weights offset the $\sqrt{m}$ dimension penalty that governs $\hat{V}$ and $\hat{M}$. The residualized moment $g(\theta)=\delta-\theta\pi$ is affine in $\theta$. Hence, its linearization remainder is identically zero.

\subsubsection*{The eigenvalue floor}

For Assumption~\ref{assu:regime}.iv I bound the condition number of $S$. For any $v$, Cauchy--Schwarz gives
\[
v'S v=\sum_{j}q_{j}v_{j}^{2}-\Bigl(\sum_{j}q_{j}v_{j}\Bigr)^{2}\ge\Bigl(1-\sum_{j}q_{j}\Bigr)\sum_{j}q_{j}v_{j}^{2}=0.1\sum_{j}q_{j}v_{j}^{2}\ge0.1\,q_{\min}\|v\|^{2},
\]
so $\lambda_{\min}(S)\ge0.1\,q_{\min}$. Dropping the subtracted term, $v'S v\le\sum_{j}q_{j}v_{j}^{2}\le q_{\max}\|v\|^{2}$, hence $\lambda_{\max}(S)\le q_{\max}$. The condition number is therefore bounded by the fixed cell-size spread, $\lambda_{\max}(S)/\lambda_{\min}(S)\le10\,(q_{\max}/q_{\min})=O(1)$, uniformly in $m$.

With $W=S$ and $MG=0$, the residual maker obeys $MS^{-1}M'=S^{-1}-G(G'S G)^{-1}G'$, which is $S^{-1}$ less a rank-one positive semidefinite matrix. Its eigenvalues therefore interlace those of $S^{-1}$, and the smallest nonzero one obeys $\lambda_{\min}^{+}(MS^{-1}M')\ge\lambda_{\min}(S^{-1})=1/\lambda_{\max}(S)$. Hence the unnormalized $\lambda_{\min}^{+}(MVM')\ge\sigma_{e}^{2}/\lambda_{\max}(S)$, and dividing by $\|V\|_{2}=\sigma_{e}^{2}/\lambda_{\min}(S)$ leaves the normalized floor $\lambda_{\min}^{+}(MVM')\ge\lambda_{\min}(S)/\lambda_{\max}(S)$, bounded away from zero. The rate condition $\lambda_{\min}^{+}(MVM')\cdot n/m\to\infty$ of Assumption~\ref{assu:regime}.iv then holds with room to spare. Because $\sigma^{2}Q$ is positive semidefinite, the same floor carries to $M\Sigma M'$ for every $\sigma^{2}\ge0$.

This floor also reaches the centered covariance $M\Sigma_{\mu}M'$ of Assumption~\ref{assu:eb-reg}. Lemma~\ref{lem:rank-centered} shows $MM_{\mu}=(I_{m}-P)M$, with $P$ the orthogonal projector onto $\mathrm{span}(M1_{m})$. The centered covariance is then $M\Sigma_{\mu}M'=(I_{m}-P)(M\Sigma M')(I_{m}-P)$, the compression of $M\Sigma M'$ to the subspace orthogonal to $M1_{m}\in\mathrm{col}(M)$. This compression drops one direction from the range. The nonzero eigenvalues interlace, and $\lambda_{\min}^{+}(M\Sigma_{\mu}M')\ge\lambda_{\min}^{+}(M\Sigma M')$. The normalized floor carries over. All $m-p-1$ nonzero eigenvalues are bounded away from zero. The squared trace is then $\mathrm{tr}\big(((M\Sigma_{\mu}M')^{+})^{2}\big)=\sum_{j}\lambda_{j}^{-2}\le(m-p-1)/\lambda_{\min}^{+}(M\Sigma_{\mu}M')^{2}=O(m)$, which is Assumption~\ref{assu:eb-reg}.i at $s=0$. The floor on $M\Sigma M'$ holds for every $\sigma^{2}\ge0$, and the nonzero eigenvalues are smallest at $\sigma^{2}=0$, where the condition is stated.

\subsubsection*{The Jacobian and residual maker}

Assumption~\ref{assu:regime}.v bounds the entries of $G$ and requires $\|M\|_{2}=O(1)$. The first-stage covariance is $\mathrm{Cov}_{n}(T_{i},Z_{i})=\mathrm{diag}(q)(\Phi(\pi^{*})-\bar{T}1_{m})$, with $\bar{T}=q'\Phi(\pi^{*})+0.05$, the $0.05$ being the omitted category's mass $0.1$ times its first stage $\tfrac12$. Since $S^{-1}\mathrm{diag}(q)=I+10\cdot1_{m}q'$, the Jacobian is
\[
G=-S^{-1}\mathrm{Cov}_{n}(T_{i},Z_{i})=-\bigl(\Phi(\pi^{*})-\tfrac12 1_{m}\bigr),
\]
with entries in $[-\tfrac12,\tfrac12]$. This gives the bounded-$G$ clause of Assumption~\ref{assu:regime}.v, and Assumption~\ref{assu:rate} for the scalar slope.

It remains to verify that $\|M\|_{2}=O(1)$. The matrix $H:=I-M=G(G'S G)^{-1}G'S$ has rank one, so $\|M\|_{2}=\|H\|_{2}=\|G\|\,\|S G\|/(G'S G)$. Here $\|G\|^{2}=\sum_{j}(\Phi(\pi^{*}_{j})-\tfrac12)^{2}=O(m)$. The $j$th entry of $S G$ is $q_{j}(G_{j}-q'G)$, so $\|S G\|^{2}=\sum_{j}q_{j}^{2}(G_{j}-q'G)^{2}\le(\max_{j}q_{j})\sum_{j}q_{j}(G_{j}-q'G)^{2}=O(1/m)$. The denominator decomposes as $G'S G=\sum_{j}q_{j}(G_{j}-q'G)^{2}+(1-\sum_{j}q_{j})(q'G)^{2}$. The first term is the cell-weighted first-stage variance, bounded away from zero. The second term is nonnegative. Hence $\|M\|_{2}=O(\sqrt{m})\cdot O(1/\sqrt{m})/O(1)=O(1)$.

\subsubsection*{The moment-noise covariance}

Assumption~\ref{assu:regime}.vii has two clauses. The residualized cell-moment noise is
\[
\varepsilon_{n}=n^{-1/2}\sum_{i}S^{-1}e_{i}\tilde{Z}_{i},
\]
built from the Gaussian error $e_{i}=\sigma_{e}(\rho\nu_{i}+\sqrt{1-\rho^{2}}\eta_{i})$ and the bounded demeaned instrument $\tilde{Z}_{i}=Z_{i}-\hat{q}$, with $\hat{q}=n^{-1}\sum_{i}Z_{i}$ the sample cell frequencies. The products $e_{i}Z_{i}$ are i.i.d.\ and mean-zero. The sample demeaning subtracts the common term $\hat{q}$, removing one degree of freedom. A direct calculation gives $\mathrm{Var}(n^{-1/2}\sum_{i}e_{i}\tilde{Z}_{i})=\sigma_{e}^{2}\tfrac{n-1}{n}S$ with $S=\mathrm{diag}(q)-qq'$. The residualized noise inherits this factor: $\mathrm{Var}(\varepsilon_{n})=\sigma_{e}^{2}S^{-1}\tfrac{n-1}{n}$. In the normalization $\|V\|_{2}=1$ of Assumption~\ref{assu:regime}.ii, the limiting covariance is $V\propto\sigma_{e}^{2}S^{-1}$ and $V_{n}=\tfrac{n-1}{n}V$. The difference $V_{n}-V=-V/n$ has norm $\|V\|_{2}/n=1/n$, which is $o(\sqrt{m/n})$. Hence, the first clause holds. The error $e_{i}$ is Gaussian, hence $\varepsilon_{n}$ is Gaussian conditional on the instruments, with covariance $\Sigma_{n}^{Z}=\sigma_{e}^{2}S^{-1}\hat{S}S^{-1}$ for $\hat{S}=n^{-1}\sum_{i}\tilde{Z}_{i}\tilde{Z}_{i}'$. Conditional on the instruments, $\varepsilon_{n}'M'M\varepsilon_{n}$ has exact variance $2\,\mathrm{tr}((M'M\Sigma_{n}^{Z})^{2})$, with no fourth-cumulant term. The large cell entries of $S^{-1}\tilde{Z}_{i}$ enter only through $\Sigma_{n}^{Z}$, which tends to $V$ at the rate of Assumption~\ref{assu:regime}.iii. This conditional variance is the relevant one, since the noise is Gaussian given the instruments. Hence, the second clause holds with $C=2$. This rests on the moments of $\varepsilon_{n}$, not those of the specification errors $b$.

\subsection*{Assumption~\ref{assu:rate}: rate and studentization conditions}

\subsubsection*{The BC estimator}

The residualized scalar moment makes $\hat{\beta}$ single-parameter TSLS on the cell moments. Standard TSLS algebra gives the leading-order forms
\[
V_{BC}\ge\frac{\sigma_{e}^{2}}{\sum_{k}q_{k}(\Phi(\pi^{*}_{k})-\bar{\Phi})^{2}},\qquad\Lambda_{BC,j}=\frac{q_{j}(\Phi(\pi^{*}_{j})-\bar{\Phi})}{\sum_{k}q_{k}(\Phi(\pi^{*}_{k})-\bar{\Phi})^{2}},
\]
where $\bar{\Phi}:=\sum_{k}q_{k}\Phi(\pi^{*}_{k})/\sum_{k}q_{k}$ is the cell-size--weighted
mean first stage, and the demeaning ensures $\Lambda_{BC}1_{m}=0$. The floor on $V_{BC}$ in Assumption~\ref{assu:lindeberg}.iv holds because the scalar $V_{BC}$ is bounded away from zero: its noise component $\sigma_{e}^{2}/\sum_{k}q_{k}(\Phi(\pi^{*}_{k})-\bar{\Phi})^{2}$ is at least $\sigma_{e}^{2}$ because the denominator is at most one. The studentized sensitivity that the proof of Corollary~\ref{corr:feasible-inference} bounds satisfies $\|V_{BC}^{-1/2}\Lambda_{BC}\|_{2}^{2}=\sum_{j}\Lambda_{BC,j}^{2}/V_{BC}\le\max_{j}q_{j}/\sigma_{e}^{2}=O(1/m)$.

\subsubsection*{The EB estimator}

The floor on $V_{EB}$ in Assumption~\ref{assu:lindeberg}.iv follows from a lower bound on the scalar $V_{EB}$, which Lemma~\ref{lem:eb-floor} supplies once its two design conditions are checked. First, $G'V^{-1}G=\sigma_{e}^{-2}G'S G=O(1)$, the entries of $G$ being bounded. Second, the moment covariance $V=\sigma_{e}^{2}S^{-1}$ has $\lambda_{\min}(V)$ bounded away from zero because $S$ is well-conditioned. This is also the floor $\lambda_{\min}(V)\ge c_{V}$ that Corollary~\ref{corr:bc-eb-clt}(ii) imposes. The lemma then gives $V_{EB}\ge1/(G'V^{-1}G)$, bounded away from zero uniformly in $m$ and $\sigma^{2}\ge0$.

\subsection*{Assumptions~\ref{assu:eb-reg} and \ref{assu:lindeberg}: design diagnostics}

The remaining conditions concern the shrinkage operators $\Pi(s)$ and the plug-in errors $\hat{\Pi}(s)-\Pi(s)$. Assumption~\ref{assu:eb-reg} asks that the centered covariance be well conditioned on average in squared trace at $s=0$ (i), that the plug-in track its population value in mean square at rate $m^{2}/n$ uniformly in $s$ (ii), and that the shrinkage residual operator stay bounded uniformly in $s$ (iii). Condition (i) was established analytically in the preceding subsection. The table below supplies numerical checks for the remaining conditions. 

I evaluate the relevant population quantities along the simulation path $m=\mathrm{round}(n^{0.4})$ of Table~\ref{tab:mc-bias-vs-n}. Table~\ref{tab:appb-diagnostics} reports the results. Every entry except the plug-in row is exact. The plug-in row is a Monte Carlo average, as the table notes describe.

\setcounter{table}{0}
\renewcommand{\thetable}{B.\arabic{table}}%
\begin{table}[H]
\centering
\caption{Design diagnostics along the simulation path $m=\mathrm{round}(n^{0.4})$\protect\label{tab:appb-diagnostics}}
\begin{centering}
\small
\setlength{\tabcolsep}{3pt}%
\begin{tabular}{lccccccc}
\toprule
$n$ & 5{,}000 & 10{,}000 & 20{,}000 & 50{,}000 & 100{,}000 & 200{,}000 & 500{,}000\tabularnewline
$m$ & 30 & 40 & 53 & 76 & 100 & 132 & 190\tabularnewline
\midrule
$\mathrm{tr}\big(((M\Sigma_{\mu}(0)M')^{+})^{2}\big)/m$ & 209 & 209 & 210 & 207 & 206 & 205 & 205\tabularnewline
$(n/m^{2})\,E\big[\sup_{s}\|(\hat{\Pi}(s)-\Pi(s))M_{\mu}r_{n}\|^{2}\big]$ & 4.27 & 2.81 & 1.86 & 1.17 & 0.87 & 0.63 & 0.45\tabularnewline
$\max_{s}\|(I_{m}-\Pi(s))M_{\mu}\|_{2}$ & 2.29 & 2.33 & 2.36 & 2.38 & 2.39 & 2.40 & 2.40\tabularnewline
\midrule
$\max_{j}\Lambda_{BC,j}^{2}/V_{BC}\ (\times10^{3})$ & 0.491 & 0.284 & 0.172 & 0.0915 & 0.0560 & 0.0338 & 0.0172\tabularnewline
$\max_{j}\Lambda_{EB,j}^{2}/V_{EB}\ (\times10^{3})$ & 0.889 & 0.477 & 0.273 & 0.136 & 0.0800 & 0.0470 & 0.0232\tabularnewline
\bottomrule
\end{tabular}
\par\end{centering}
\raggedright{\small\emph{Notes:} Each entry is computed from the cell-IV design at the $(n,m)$ pairs of Table~\ref{tab:mc-bias-vs-n}. All rows except the plug-in row are exact. The squared-trace and plug-in rows are reported in the normalization $\|V\|_{2}=1$ of Assumption~\ref{assu:regime}.ii. In the design scale, with $\sigma_{e}=1$ and $V=S^{-1}$, the operator norm $\|V\|_{2}=1/\lambda_{\min}(S)$ equals 390, 513, 673, 955, 1251, 1645, and 2360 along the path. Because $\Pi(s)$ is invariant to the scale of the moments and $M_{\mu}r_{n}$ carries that scale, the squared-trace entries computed in the design scale are multiplied by $\|V\|_{2}^{2}$ and the plug-in entries divided by $\|V\|_{2}$. The remaining rows are scale-free. The squared-trace row is evaluated at $s=0$, where Assumption~\ref{assu:eb-reg}.i is stated. The shrinkage-residual row reports the maximum over the variance argument $s$ on the grid $\{0\}\cup\{1,2,4,\dots,2^{20}\}$ together with the limit $s\to\infty$, at which both projections converge to the projection built from $M_{\mu}M_{\mu}'$ alone. It attains its maximum at $s=0$. The plug-in row reports the expectation of the supremum over the same grid and limit, with the covariance of $M_{\mu}r_{n}$ held at the design value $\sigma^{2}=64$ throughout. The leverage rows are evaluated at $\sigma^{2}=64$. The squared-trace, shrinkage-residual, and leverage rows are direct functions of the population matrices. The plug-in row is simulated. I linearize $\hat{\Pi}(s)$ by the delta method in its estimated inputs, the cell frequencies and the cell means of the treatment, including the omitted cell, and draw these first-stage errors jointly with the centered moment $M_{\mu}r_{n}$ from their first-order Gaussian law, which carries the design's correlation $\rho=0.1$ between first-stage and structural errors. Each of $4{,}000$ draws yields the maximum over $s$, and the row averages them. Monte Carlo standard errors are below two percent of the entries. The Frobenius form in the second part of Assumption~\ref{assu:eb-reg}.ii, computed from the same draws, gives 3.63, 2.32, 1.50, 0.90, 0.62, 0.43, and 0.28. The maximum over $s$ of the expectation, the weaker quantity, gives 3.70, 2.40, 1.61, 1.01, 0.73, 0.55, and 0.40. Entries in rows marked $(\times10^{3})$ have been multiplied by the indicated factor.}{\small\par}
\end{table}

The first three rows of Table~\ref{tab:appb-diagnostics} concern Assumption~\ref{assu:eb-reg}. The scaled squared trace $\mathrm{tr}\big(((M\Sigma_{\mu}(0)M')^{+})^{2}\big)/m$ stays between 205 and 210 along the path, consistent with condition (i) and with the normalized floor established above. The shrinkage residual $\|(I_{m}-\Pi(s))M_{\mu}\|_{2}$ stays below $2.4$ at every $s$ as $m$ grows from $30$ to $190$, the numerical check for condition (iii). The scaled plug-in constant $(n/m^{2})\,E[\sup_{s}\|(\hat{\Pi}(s)-\Pi(s))M_{\mu}r_{n}\|^{2}]$ declines from 4.27 to 0.45, and its Frobenius part, the second part of condition (ii), from 3.63 to 0.28. Both satisfy the stated rate with room to spare. The design's endogeneity adds a mean component to the projection error. The first stage and the moments are estimated from the same observations. The first-stage error in a cell is therefore correlated with that cell's moment noise, and $(\hat{\Pi}(s)-\Pi(s))M_{\mu}r_{n}$ acquires a nonzero mean. The squared norm of that mean is at most about $0.13\,m^{2}/n$ along the entire path, within the rate of condition (ii), and its share of the row grows as the remaining part declines. Its image under the sensitivity row, divided by $\sqrt{n}$, is about $0.09$ standard errors of $\hat{\theta}_{EB}$ at every point of the path. This is the mean of the first-stage remainder in the proof of Corollary~\ref{corr:bc-eb-clt}(ii), which is $O_{p}(m/\sqrt{n})$, and $m/\sqrt{n}=n^{-0.1}$ declines slowly along $m=\mathrm{round}(n^{0.4})$.

Assumption~\ref{assu:lindeberg} asks that the leading-term variances be positive definite (i), that a normal limit hold for the scaled noise (ii), that no single moment dominate (iii), and that $\sigma^{2}=O_{p}(1)$ with $\lambda_{\min}(V_{\star})$ bounded away from zero (iv). Condition (i) follows from the variance bounds above. Condition (iv) holds because the realized variance of the $t_{5}$ specification draws converges in distribution, hence is bounded in probability, and the floors on $V_{BC}$ and $V_{EB}$ established above are uniform in $\sigma^{2}$.

Condition (ii) is the sampling central limit theorem for the noise term. The noise $\Lambda\varepsilon_{n}=n^{-1/2}\sum_{i}e_{i}(\Lambda S^{-1}\tilde{Z}_{i})$ is a scalar sum of i.i.d.\ observation-level contributions. With Gaussian $e_{i}$ it is exactly normal conditional on the instruments. Assumption~\ref{assu:lindeberg}.ii therefore holds.

The last two rows of Table~\ref{tab:appb-diagnostics} report the per-moment leverage $\max_{j}\Lambda_{\star,j}^{2}/V_{\star}$ of Assumption~\ref{assu:lindeberg}.iii for the BC and EB estimators. Both fall from about $5\times10^{-4}$ to $2\times10^{-5}$ across the grid; no single cell moment dominates. For BC the ratio is at most $\max_{j}q_{j}/\sigma_{e}^{2}=O(1/m)$, and the EB ratio falls at the same rate. Assumption~\ref{assu:lindeberg}.iii therefore holds.

\clearpage

\section{Differenced and k-class estimators\protect\label{app:differenced}}

This appendix implements the differencing strategy of Section~\ref{sec:A-differencing-approach} in the Monte Carlo design of Section~\ref{sec:monte-carlo} and compares it with the MBTSLS estimator of \textcite{kolesar2015identification}. Both approaches avoid estimating the mean specification error $\mu$. Differencing eliminates it. MBTSLS relies on an orthogonality condition that fails in this design when $\bar{\mu}\neq0$.

\subsection*{Differenced moment conditions}

Let $D$ denote the $(m-1)\times m$ differencing matrix of Section~\ref{sec:A-differencing-approach}, with $D1_{m}=0$ and $DD'=I_{m-1}$. Every quantity below is invariant to the choice of $D$: any two such matrices differ by an orthogonal rotation of the rows, which cancels from the estimators and their variance estimators. Applying $D$ to the rescaled sample moment condition of Section~\ref{subsec:sim_moment} yields
\[
g_{D}(\theta):=D\hat{g}(\theta)=\delta_{D}-\theta\pi_{D},\qquad\delta_{D}:=D\hat{\delta},\quad\pi_{D}:=D\hat{\pi},
\]
where $\hat{\delta}=\hat{S}^{-1}n^{-1}\sum_{i}\tilde{Z}_{i}\tilde{Y}_{i}$ and $\hat{\pi}=\hat{S}^{-1}n^{-1}\sum_{i}\tilde{Z}_{i}\tilde{T}_{i}$ are the sample reduced-form and first-stage coefficients.

Because $D1_{m}=0$, the differenced specification errors $Db$ have mean zero regardless of $\mu$. Two exact simplifications follow. First, no mean centering is needed: the differenced analogs of $\hat{\mu}$ and $\hat{M}_{\mu}$ are $0$ and $I_{m-1}$. Second, $Q$ acts as the identity on the orthogonal complement of $1_{m}$, so $DQD'=I_{m-1}$: the dispersion of the specification errors contributes $\sigma^{2}I_{m-1}$ to the variance of the differenced moments, as in Section~\ref{sec:A-differencing-approach}.

I weight the differenced moments with $W_{D}:=(D\hat{S}^{-1}D')^{-1}$. Because the level moments were rescaled by $\hat{S}^{-1}$, the matrix $D\hat{S}^{-1}D'$ is the second-moment matrix of the differenced instruments $D\hat{S}^{-1}\tilde{Z}_{i}$. GMM with this weighting therefore reproduces TSLS on the differenced system, continuing the convention of Section~\ref{subsec:sim_moment}. Note that $W_{D}$ differs from $(D\hat{S}D')^{-1}$.

\subsection*{Differenced GMM and EB}

The differenced GMM estimator, its sensitivity, and its residual maker are
\[
\hat{\beta}_{D}=\frac{\pi_{D}'W_{D}\delta_{D}}{\pi_{D}'W_{D}\pi_{D}},\qquad\Lambda_{D}=-(G_{D}'W_{D}G_{D})^{-1}G_{D}'W_{D},\qquad M_{D}=I_{m-1}+G_{D}\Lambda_{D},
\]
with $G_{D}:=-\pi_{D}$. I report two variance estimators for $\hat{\beta}_{D}$. The naive estimator $\Lambda_{D}V_{D}\Lambda_{D}'$ uses $V_{D}:=D\hat{V}D'$, where $\hat{V}$ is the heteroscedasticity-robust moment covariance estimate built from residuals at $\hat{\beta}_{D}$, and ignores misspecification. The misspecification-aware estimator is $\Lambda_{D}\Sigma_{D}\Lambda_{D}'$, where $\Sigma_{D}:=V_{D}+\widehat{\sigma_{D}^{2}}I_{m-1}$ and
\[
\widehat{\sigma_{D}^{2}}:=\max\left(0,\ \frac{r_{D}'r_{D}-\mathrm{tr}(M_{D}V_{D}M_{D}')}{\mathrm{tr}(M_{D}M_{D}')}\right),\qquad r_{D}:=\sqrt{n}\,g_{D}(\hat{\beta}_{D}).
\]
This is the dispersion estimator of Section~\ref{sec:estimation} with $Q$ replaced by $I_{m-1}$ and no mean subtraction.

The differenced EB estimator is the pure-shrinkage construction of Remark~\ref{rem:eb-pure-shrinkage} applied to the differenced system:
\[
\hat{\beta}_{D,EB}:=\hat{\beta}_{D}-\frac{1}{\sqrt{n}}\Lambda_{D}\Pi_{D}r_{D},\qquad\Pi_{D}:=\Sigma_{D}M_{D}'(M_{D}\Sigma_{D}M_{D}')^{+}M_{D}.
\]
There is no bias-correction step and no recentering of $r_{D}$: both collapse because the mean has been differenced out. The misspecification-aware variance estimator is $\Lambda_{D,EB}\Sigma_{D}\Lambda_{D,EB}'$ with $\Lambda_{D,EB}:=\Lambda_{D}(I_{m-1}-\Pi_{D})$. The results below therefore double as a demonstration of the estimator available when $\mu$ is unidentified.

\subsection*{MBTSLS}

The MBTSLS estimator of \textcite{kolesar2015identification} is the k-class estimator
\[
\hat{\beta}_{MB}:=\frac{(1-k)\,n^{-1}\tilde{T}'\tilde{Y}+k\,\hat{\pi}'\hat{S}\hat{\delta}}{(1-k)\,n^{-1}\tilde{T}'\tilde{T}+k\,\hat{\pi}'\hat{S}\hat{\pi}},\qquad k:=\frac{1-1/n}{1-m/n-1/n}.
\]
The terms $\hat{\pi}'\hat{S}\hat{\pi}$ and $\hat{\pi}'\hat{S}\hat{\delta}$ are the TSLS building blocks; the choice of $k$ removes the many-instrument bias. Standard errors use the variance formula in Theorem 2 of \textcite{kolesar2015identification}, with degrees-of-freedom-corrected reduced-form covariance estimates and the residual-concentration plug-in truncated at zero.

Consistency of MBTSLS under excludability violations requires the violations to be orthogonal to the first-stage coefficients. In the rescaled system the leading bias of $\hat{\beta}_{MB}$ is proportional to $\pi'Sb$, and the orthogonality condition requires this cross-moment to vanish in expectation. Since $E[b]=\bar{\mu}1_{m}$ and $b$ is independent of the instruments,
\[
E[\pi'Sb]=\bar{\mu}\,\pi'S1_{m}.
\]
The orthogonality condition fails in this design whenever $\bar{\mu}\neq0$: the multinomial cell structure gives $S1_{m}=(1-\sum_{j}q_{j})\,q$, so $\pi'S1_{m}=0.1\,\pi'q\neq0$. Had the instruments instead summed to one, as in the judge-IV designs of Section~\ref{subsec:Intercepts-in-linear}, demeaning would force $S1_{m}=0$, and orthogonality would reduce to non-correlation between the centered violations $b-\bar{\mu}1_{m}$ and the first stage. Exchangeability delivers this non-correlation in expectation. The realized cross-moment $\pi'S(b-\bar{\mu}1_{m})$ vanishes almost surely when $\bar{\sigma}^{2}=0$. 

The differenced MBTSLS estimator applies the identical construction to the differenced system: $m-1$ replaces $m$ in $k$ and in the degrees-of-freedom corrections, the projected blocks are computed from $(\pi_{D},\delta_{D},W_{D})$, and the unprojected blocks $n^{-1}\tilde{T}'\tilde{T}$ and $n^{-1}\tilde{T}'\tilde{Y}$ are unchanged. Differencing restores the orthogonality condition in expectation: $Db$ has mean zero for every $\bar{\mu}$.

\subsection*{Results}

Table~\ref{tab:appc-differenced} reports baseline performance and Table~\ref{tab:appc-sweeps} reports RMSE across the two hyperparameter sweeps, both computed on the same simulated data sets as Tables~\ref{tab:mc-main}--\ref{tab:mc-rmse-vs-sig2} and with the level MBTSLS estimator included for comparison. Four findings stand out.

First, differencing and estimating the mean are nearly equivalent. Differenced GMM tracks BC (baseline RMSE 0.253 versus 0.254) and differenced EB tracks EB (0.243 versus 0.242), with agreement across both hyperparameter sweeps. Both members of each pair remove the same mean direction, so whether $\mu$ is estimated or eliminated matters little for performance. Valid inference still requires estimating the dispersion, however: the naive standard errors for differenced GMM reject a true null 15.8\% of the time at the 5\% level, while the misspecification-aware standard errors reject 7.1\%.

\setcounter{table}{0}
\renewcommand{\thetable}{C.\arabic{table}}%
\begin{table}[H]
\caption{Differenced and k-class estimators: baseline performance ($\bar{\mu}=8$, $\bar{\sigma}=8$, $n=10{,}000$, $m=40$)\protect\label{tab:appc-differenced}}

\begin{centering}
\begin{tabular}{lccccc}
\toprule
 & Bias & Std & RMSE & Rej. (naive) & Rej. (aware)\tabularnewline
\midrule
Diff. GMM & 0.039 & 0.250 & 0.253 & 0.158 & 0.071\tabularnewline
Diff. EB & 0.033 & 0.241 & 0.243 & -- & 0.080\tabularnewline
Diff. MBTSLS & $-0.011$ & 0.357 & 0.357 & -- & 0.101\tabularnewline
MBTSLS & 0.234 & 0.230 & 0.328 & -- & 0.090\tabularnewline
\bottomrule
\end{tabular}
\par\end{centering}
{\small\emph{Notes:}}{\small{} Same design and Monte Carlo draws as Table~\ref{tab:mc-main}, with 1,000 replications. Rej.\ columns report rejection rates of a two-sided Wald test of $\beta_{0}=1$ at the 5\% nominal level. The naive column uses the heteroscedasticity-robust standard errors that ignore misspecification, reported for differenced GMM. The aware column uses the misspecification-aware standard errors for differenced GMM and EB and the Theorem-2 standard errors of \textcite{kolesar2015identification} for both MBTSLS estimators. The MBTSLS row is estimated in levels; all other rows use the differenced system.}{\small\par}
\end{table}

Second, the mean channel accounts for the entire bias of MBTSLS in this design. In levels, its baseline bias of 0.234 is comparable to GMM's 0.220, and its RMSE rises steeply with $\bar{\mu}$, from 0.231 at $\bar{\mu}=0$ to 0.534 at $\bar{\mu}=16$: the violations share the mean $\bar{\mu}$, so they are not orthogonal to the first stage. Differencing eliminates the bias ($-0.011$ at baseline) and renders the RMSE flat in $\bar{\mu}$. 

Third, the repair is expensive. The differenced MBTSLS standard deviation at baseline is 0.357, against 0.241 for differenced EB, and its RMSE reaches 0.572 at $\bar{\sigma}=16$. Differencing sacrifices part of the identifying variation, and the k-class correction amplifies noise as the first stage weakens. The EB estimator removes the same mean bias at a far smaller variance cost.

\begin{table}[H]
\caption{Differenced and k-class estimators: RMSE by $\bar{\mu}$ and $\bar{\sigma}$ ($n=10{,}000$, $m=40$)\protect\label{tab:appc-sweeps}}

\begin{centering}
\begin{tabular}{lccccc}
\toprule
 & \multicolumn{5}{c}{RMSE by $\bar{\mu}$ ($\bar{\sigma}=8$)}\tabularnewline
\cmidrule{2-6}
 & $\bar{\mu}=0$ & $\bar{\mu}=2$ & $\bar{\mu}=4$ & $\bar{\mu}=8$ & $\bar{\mu}=16$\tabularnewline
\midrule
Diff. GMM & 0.254 & 0.255 & 0.254 & 0.253 & 0.246\tabularnewline
Diff. EB & 0.243 & 0.245 & 0.243 & 0.243 & 0.236\tabularnewline
Diff. MBTSLS & 0.353 & 0.359 & 0.356 & 0.357 & 0.343\tabularnewline
MBTSLS & 0.231 & 0.239 & 0.257 & 0.328 & 0.534\tabularnewline
\midrule
 & \multicolumn{5}{c}{RMSE by $\bar{\sigma}$ ($\bar{\mu}=8$)}\tabularnewline
\cmidrule{2-6}
 & $\bar{\sigma}=0$ & $\bar{\sigma}=2$ & $\bar{\sigma}=4$ & $\bar{\sigma}=8$ & $\bar{\sigma}=16$\tabularnewline
\midrule
Diff. GMM & 0.187 & 0.192 & 0.204 & 0.253 & 0.406\tabularnewline
Diff. EB & 0.189 & 0.193 & 0.205 & 0.243 & 0.348\tabularnewline
Diff. MBTSLS & 0.259 & 0.267 & 0.283 & 0.357 & 0.572\tabularnewline
MBTSLS & 0.296 & 0.299 & 0.306 & 0.328 & 0.407\tabularnewline
\bottomrule
\end{tabular}
\par\end{centering}
{\small\emph{Notes:}}{\small{} Same design and Monte Carlo draws as Tables~\ref{tab:mc-rmse-vs-mu} and \ref{tab:mc-rmse-vs-sig2}, with 1,000 replications per grid point. The MBTSLS row is estimated in levels; all other rows use the differenced system.}{\small\par}
\end{table}

Finally, MBTSLS is competitive with BC and EB in terms of RMSE only when $\bar \mu\leq 4$ and only in its levels form. When $\bar \mu = 0$, levels MBTSLS dominates both corrected estimators because it eliminates higher-order many-instruments biases, which are the only systematic biases present in the design. However, differenced MBTSLS, which also insures against nonzero mean biases, is dominated by both BC and EB under this regime because of the precision costs of removing the many-instruments bias.

\end{document}